\pdfoutput=1
\documentclass[a4paper,UKenglish]{lipics-v2019}
\usepackage{mathtools}
\usepackage{amsmath}
\usepackage{amssymb}
\usepackage{mathtools}
\usepackage{algpseudocode}
\usepackage{latexsym}
\usepackage{lipsum}
\usepackage{wrapfig}



\usepackage{hyperref} 





\usepackage[table]{xcolor}
\usepackage{color}
\usepackage{algorithm}
\usepackage{verbatim}
\usepackage{bm}
\usepackage[utf8]{inputenc}
\usepackage{array}
\usepackage{multirow}
\usepackage{hhline}
\usepackage{mdframed}
\usepackage{graphicx}
\usepackage{enumitem}
\usepackage{xspace}
\usepackage{tikz}
\usepackage{capt-of}
\usepackage{subcaption}

\tikzset{my loop/.style =  {to path={
			\pgfextra{}
			[looseness=12,min distance=10mm]
			\tikz@to@curve@path},font=\sffamily\small
}}

\newcommand{\poly}{{\mathrm{poly}}}
\newcommand{\Time}{{\mathrm{t}}}
\let\oldReturn\Return
\renewcommand{\Return}{\State\oldReturn}

\newcommand{\myvec}[1]{\mathbf{#1}}

\DeclarePairedDelimiter\ceil{\lceil}{\rceil}
\DeclarePairedDelimiter\floor{\lfloor}{\rfloor}

\DeclareMathOperator{\E}{\mathbb E}
\DeclareMathOperator{\union}{\bigcup}

\renewcommand{\tilde}{\widetilde}

\newcommand{\q}{\mathrm{Q}}

\newcommand{\calA}{{\mathcal A}_{\mathrm Q}}
\newcommand{\calB}{{\mathcal B}}
\newcommand{\calC}{{\mathcal C_{\mathrm Q, \Delta}}}
\newcommand{\calD}{{\mathcal D_{\mathrm Q, \Delta}}}

\newcommand{\calS}{{\mathcal T}_{\mathrm Q, \Delta}}

\newcommand{\calSS}{{\mathcal T}_{\mathrm{Q}_{0}, 2}}

\newcommand{\calDD}{{\mathcal D_{\mathrm{Q}_{0}, 2}}}

\newcommand{\calU}{{\mathcal U}}

\newcommand{\set}[1]{\left\{#1\right\}}

\newcommand{\eps}{\epsilon}

\graphicspath{{figs/}}

\newcommand\numberthis{\addtocounter{equation}{1}\tag{\theequation}}

\newcommand{\pack}{\mathsf{PkNum}}
\newcommand{\cover}{\mathsf{CvNum}}

\newcommand{\joinUB}{\mathsf{PrjBnd}}

\newcommand{\Dom}{\mathrm{Dom}}
\newcommand{\Dist}{\mathrm{Dist}}

\newtheorem{problem}[theorem]{Problem}

\newcommand{\DC}{\mathrm{DC}}
\newcommand{\J}{\mathrm{J}_{Q}}
\newcommand{\JJ}{\mathrm{J}_{Q_0}}

\newcommand{\BCQ}{\mathrm{BCQ}}
\newcommand{\DBP}{\mathrm{DBP}}
\newcommand{\AGM}{\mathrm{AGM}}
\newcommand{\cc}{\mathrm{cc}}
\newcommand{\lastpick}{\mathrm{lastpick}}
\newcommand{\topick}{\mathrm{topick}}
\newcommand{\PMB}{\mathrm{PMB}}
\newcommand{\ZPP}{\mathrm{ZPP}}
\newcommand{\NP}{\mathrm{NP}}

\newcommand{\ub}{\mathrm{ub}}
\newcommand{\lb}{\mathrm{lb}}
\newcommand{\LP}{\mathrm{LP}}
\newcommand{\gap}{\mathrm{Gap}}
\newcommand{\s}{n - \Delta + 1}
\newcommand{\RS}{\mathrm{RS}}
\newcommand{\CRT}{\mathrm{CRT}}
\newcommand{\wt}{\mathrm{wt}}

\newcommand{\ED}{\mathrm{ED}}

\DeclareMathOperator*{\argmax}{arg\,max}
\DeclareMathOperator*{\argmin}{arg\,min}

\tikzset{my loop/.style =  {to path={
			\pgfextra{}
			[looseness=3,min distance=3mm]
			\tikz@to@curve@path},font=\sffamily\small
}}  


\title{Covering the Relational Join}

\author{Shi Li}{University at Buffalo}{}{}{}

\author{Sai Vikneshwar Mani Jayaraman}{University at Buffalo}{}{}{}

\author{Atri Rudra}{University at Buffalo}{}{}{}

\authorrunning{Shi Li, Sai Vikneshwar Mani Jayaraman and Atri Rudra}
\begin{document}
\nolinenumbers
\sloppy
\maketitle
\begin{abstract}
In this paper, we initiate a theoretical study of what we call the {\em join covering problem}. We are given a natural join query instance $Q$ on $n$ attributes and $m$ relations $(R_i)_{i \in [m]}$. Let $J_{Q} = \ \Join_{i=1}^m R_i$ denote the join output of $Q$. In addition to $Q$, we are given a parameter $\Delta: 1\le \Delta\le n$ and our goal is to compute the smallest subset $\mathcal{T}_{Q, \Delta} \subseteq J_{Q}$ such that every tuple in $J_{Q}$ is within Hamming distance $\Delta - 1$ from some tuple in $\mathcal{T}_{Q, \Delta}$. The {\em join covering problem} is a fairly general problem that captures two well-studied problems as special cases --  $(a)$ computing the natural join from database theory and $(b)$ constructing a {\em covering code} with {\em covering radius} $\Delta - 1$ from coding theory. 

We start with the combinatorial version of the {\em join covering problem}, where our goal is to determine the {\em worst-case} $|\mathcal{T}_{Q, \Delta}|$ in terms of the structure of $Q$ and value of $\Delta$. One obvious approach to upper bound $|\mathcal{T}_{Q, \Delta}|$ is to exploit a distance property (of Hamming distance) from coding theory and combine it with the worst-case bounds on output size of natural joins ($\AGM$ bound hereon) due to Atserias, Grohe and Marx [SIAM J. of Computing'13]. Somewhat surprisingly, this approach is {\em not} tight even for the case when the input relations have arity at most two. Instead, we show that using the polymatroid degree-based bound of Abo Khamis, Ngo and Suciu [PODS'17] in place of the $\AGM$ bound gives us a tight bound (up to constant factors) on the $|\mathcal{T}_{Q, \Delta}|$ for the arity two case. We prove lower bounds for $|\mathcal{T}_{Q, \Delta}|$ using a well-known class of error-correcting codes called the {\em Reed-Solomon codes} and their number-theoretic variants called the {\em Chinese Remainder Theorem codes}. We can extend our results for the arity two case to general arity with a polynomial gap between our upper and lower bounds. 

Finally, we translate our combinatorial results to algorithmic ones for computing an approximation to $\mathcal{T}_{Q, \Delta}$ with a simple search-to-decision reduction. Our algorithms have runtimes that are no more than polynomially worse compared to the optimal ones.
\end{abstract}

\section{Introduction} \label{sec:intro}
In this paper, we initiate a theoretical study of what we call the {\em join covering problem}. We are given a (multi-)hypergraph $G = (V, E)$\footnote{For simplicity of notation, we will assume that the inputs are hypergraphs instead of (multi-)hypergraphs. The transformation is straightforward, where we collapse all hyperedges on the same subset of vertices into a single one.}, where each vertex $v \in V$ is an attribute with domain $\Dom(v)$. For each hyperedge $e \in E$, we are given a relation $R_{e} \subseteq \prod_{v \in e} \Dom(v)$. We refer to $G$ as a query (multi-)hypergraph and define a join query instance as $\q= (G, (R_e)_{e \in E})$. Let $\J = \ \Join_{e \in E} R_e$ denote the join output of $\q$, where $\J$ is a relation with attributes $V$ and for every tuple $\mathbf{t} \in \J$, we have $\mathbf{t}_{e} \in R_{e}$ for every $e \in E$. Here, $\mathbf{t}_{e}$ denotes the projection of $\mathbf{t}$ onto attributes in $e \subseteq V$. Note that $\J \subseteq \prod_{v \in V} \Dom(v)$ and we define $|V| = n$. In addition to $\q$, we are given an additional input parameter $\Delta: 1\le \Delta\le n$.

We are now ready to formally define the {\em join covering problem}. We start with the notion of a {\em join cover} given $\q$ and $\Delta$.
\begin{definition}[Join Cover] \label{defn:cover}
Given $\q = (G, \{R_e: e \in E\})$ and $\Delta$, a join cover is a subset $\calU \subseteq \J$ such that for every tuple $\mathbf{t} \in \J$, there exists a tuple $\mathbf{t'} \in \calU$ with $\Dist(\mathbf{t},\mathbf{t'}) < \Delta$, where $\Dist$ is some distance metric defined on tuples in $\J$.
\end{definition}
Note that $\calU = \J$ is a valid join cover. We define the join covering problem now.
\begin{problem}[Join Covering problem] \label{prob:covering}
Given $\q = (G, \{R_e: e \in E\})$ and $\Delta$, the goal is to output a join cover $\calS$ such that
\[\calS = \argmin_{\calU : \calU \text{ is a join cover of } \J} |\calU| .\]
In words, $\calS$ is a join cover with the smallest size.  
\end{problem}
Throughout this paper, we assume that $\Dist$ is Hamming distance, which we define as follows. For any pair of distinct tuples $\mathbf{t}, \mathbf{t'}$ in $\J$, we have
\begin{equation} \label{eq:hd}
\Dist(\mathbf{t}, \mathbf{t'}) = |\{ v \in V : \mathbf{t}_{v} \neq \mathbf{t'}_{v}\}|,
\end{equation} 
where $\mathbf{t}_{v}$ (or $\mathbf{t'}_{v}$) denotes $\mathbf{t}$ projected on to attribute $v$. We now illustrate the computation of $\calS$ with an example.
\begin{example} \label{ex:intro}
We are given a join query instance $\q_{0}$ where $G_{0}$ is a $4$-cycle (see Figure~\ref{fig:example}) and the input relations $R_{(1, 2)}, R_{(2, 3)}, R_{(3, 4)}, R_{(4, 1)}$ are given in Table~\ref{table:instance}. Note that $n = 4$ and the attributes could be interpreted as $1$ (`Conference') with $\Dom(1) = \{ICDT\}$, $2$ (`Year') with $\Dom(2) = \{2017, 2018, 2019, 2020\}$, $3$ (`Continent') with $\Dom(3) = \{Europe\}$ and $4$ (`Country') with $\Dom(4) = \{Austria, Denmark, Italy, Portugal\}$. In addition, we are given $\Delta = 2$.
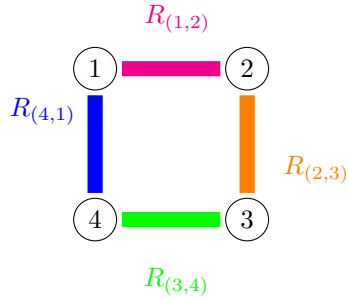
\begin{figure}[!htbp]
\centering
	\begin{tikzpicture}[myn/.style={circle, draw,inner sep=0.1cm,outer sep=2pt}]
	\node[myn] (A) at (0,0) {$1$};
	\node[myn] (B) at (2, 0) {$2$};
	\node[myn] (C) at (2, -2) {$3$};
	\node[myn] (D) at (0, -2) {$4$};
	\draw[line width = 2mm, magenta] (A) -- (B) node [magenta, above left = 7.5pt, sloped = 90] (TextNode) {$R_{(1, 2)}$};
	\draw[line width = 2mm, orange] (B) -- (C) node [orange, above right = 7.5pt, sloped = 90] (TextNode) {$R_{(2, 3)}$};
	\draw[line width = 2mm, green] (C) -- (D) node [green, below right = 12pt, sloped = 90] (TextNode) {$R_{(3, 4)}$};
	\draw[line width = 2mm, blue] (D) -- (A) node [midway, blue, above left = 1pt, sloped = 90] (TextNode) {$R_{(4, 1)}$};
	\end{tikzpicture}
	\caption{$G_{0}$ is a $4$-cycle i.e., a cycle with $4$ vertices. For each edge in $G_{0}$, there is an input relation indexed by it -- $R_{(1, 2)}, R_{(2, 3)}, R_{(3, 4)}$ and $R_{(4, 1)}$.} \label{fig:example}
\end{figure}

\begin{table*}[!htbp]
	\begin{tabular}{llll}
			\begin{tabular}{| c | c |}
				\hline
				\rowcolor{magenta!40} $1$ & $2$ \\
				\hline
				\rowcolor{magenta!40} ICDT & $2017$ \\
				\hline
				\rowcolor{magenta!40} ICDT & $2018$ \\
				\hline
				\rowcolor{magenta!40} ICDT & $2019$ \\
				\hline
				\rowcolor{magenta!40} ICDT & $2020$ \\
				\hline
			\end{tabular}
			\begin{tabular}{| c | c |}
				\hline
				\rowcolor{orange} $2$ & $3$ \\
				\hline
				\rowcolor{orange} $2017$ & Europe \\
				\hline
				\rowcolor{orange} $2018$ & Europe  \\
				\hline
				\rowcolor{orange} $2019$ & Europe \\
				\hline
				\rowcolor{orange} $2020$ & Europe\\
				\hline
			\end{tabular}
			\begin{tabular}{| c | c |}
				\hline
				\rowcolor{green} $3$ & $4$ \\
				\hline
				\rowcolor{green} Europe & Austria \\
				\hline
				\rowcolor{green} Europe & Denmark \\
				\hline
				\rowcolor{green} Europe & Italy \\
				\hline
				\rowcolor{green} Europe &  Portugal \\
				\hline
			\end{tabular}
			\begin{tabular}{| c | c |}
				\hline
				\rowcolor{blue!40} $4$ & $1$ \\
				\hline
				\rowcolor{blue!40} Austria & ICDT\\
				\hline
				\rowcolor{blue!40} Denmark & ICDT\\
				\hline
				\rowcolor{blue!40} Italy & ICDT\\
				\hline
				\rowcolor{blue!40} Portugal &  ICDT\\
				\hline
			\end{tabular}  
		\end{tabular}
		\caption{We are given four relations in order (from left to right): $R_{(1, 2)} = \Dom(1) \times \Dom(2), R_{(2, 3)} = \Dom(2) \times \Dom(3), R_{(3, 4)} = \Dom(3) \times \Dom(4), R_{(4, 1)} = \Dom(4) \times \Dom(1)$. Each relation has $4$ entries.} \label{table:instance}
\end{table*}

Given $\q_{0}$ and $\Delta = 2$, recall that our goal is to compute $\calSS$. We present the join output $\JJ = R_{(1, 2)} \Join R_{(2, 3)} \Join R_{(3, 4)} \Join R_{(4, 1)}$ and a $\calSS$ in Table~\ref{table:cover-join}. We still need to argue that $\calSS$ is a valid solution to the join covering problem. We start by noting that $\calSS \subseteq \JJ$ and for every tuple $\mathbf{t} \in \JJ$, there exists a tuple $\mathbf{t'} \in \calSS$ such that $\Dist(\mathbf{t}, \mathbf{t'}) < 2$. Note that if $|\calSS| < 4$, it will miss one of the `Years' and one of the `Countries'. In particular, this implies there exists a tuple in $\mathbf{t} \in \JJ$, such that for every $\mathbf{t'} \in \calSS$, $\Dist(\mathbf{t}, \mathbf{t'}) \ge 2$ and hence, $\calSS$ cannot be a join cover.
\begin{table*}[!hbt]
		\centering
				\begin{tabular}{ll}
			\begin{tabular}{| c | c | c | c |}
			\hline
			\rowcolor{violet!40} $1$ & $2$ & $3$ & $4$ \\
			\hline
			\rowcolor{violet!40}  ICDT & $2017$ & Europe & Austria \\
			\hline
			\rowcolor{violet!40}  ICDT & $2017$ & Europe & Denmark \\
			\hline
			\rowcolor{green!40}  ICDT & $2017$ & Europe & Italy \\
			\hline
			\rowcolor{violet!40}  ICDT & $2017$ & Europe & Portugal \\
			\hline
			\rowcolor{green!40}  ICDT & $2018$ & Europe & Austria \\
			\hline
			\rowcolor{violet!40}  ICDT & $2018$ & Europe & Denmark \\
			\hline
			\rowcolor{violet!40}  ICDT & $2018$ & Europe & Italy \\
			\hline
			\rowcolor{violet!40}  ICDT & $2018$ & Europe & Portugal \\
			\hline
			\rowcolor{violet!40}  ICDT & $2019$ & Europe & Austria \\
			\hline
			\rowcolor{violet!40}  ICDT & $2019$ & Europe & Denmark \\
			\hline
			\rowcolor{violet!40}  ICDT & $2019$ & Europe & Italy \\
			\hline
			\rowcolor{green!40}  ICDT & $2019$ & Europe & Portugal \\
			\hline
			\rowcolor{violet!40}  ICDT & $2020$ & Europe & Austria \\
			\hline
			\rowcolor{green!40}  ICDT & $2020$ & Europe & Denmark \\
			\hline
			\rowcolor{violet!40}  ICDT & $2020$ & Europe & Italy \\
			\hline
			\rowcolor{violet!40}  ICDT & $2020$ & Europe & Portugal \\
			\hline
		\end{tabular}
	
			\begin{tabular}{| c | c | c | c |}
				\rowcolor{green!40} $1$ & $2$ & $3$ & $4$ \\
				\hline
				\rowcolor{green!40}  ICDT & $2017$ & Europe & Italy \\
				\hline
				\rowcolor{green!40}  ICDT & $2018$ & Europe & Austria \\
				\hline
				\rowcolor{green!40}  ICDT & $2019$ & Europe & Portugal \\
				\hline
				\rowcolor{green!40}  ICDT & $2020$ & Europe & Denmark \\
				\hline
			\end{tabular}
			
		\end{tabular}
		\caption{The first table is join output $\JJ = R_{(1, 2)} \Join R_{(2, 3)} \Join R_{(3, 4)} \Join R_{(4, 1)}$ and the second table is a solution $\calSS$ to Problem~\ref{prob:covering}.} \label{table:cover-join}
	\end{table*}
\end{example}
When $\Delta = 1$, we have $\calS = \J$ and the join covering problem reduces to the problem of computing the natural join. In Example~\ref{ex:intro}, if we set $\Delta = 1$, then $\mathcal{T}_{\q_{0}, 1} = \JJ$ (see $\JJ$ from Table~\ref{table:cover-join}) is a valid solution to the join covering problem. On the other hand, when $\J = \ \Join_{e \in E} R_e = \prod_{v \in V} \Dom(v)$, the join covering problem corresponds to the problem of explicitly constructing what is known as a {\em covering code} with {\em covering radius} $\Delta - 1$~\cite{covering}. In Example~\ref{ex:intro}, we have $\JJ = R_{(1, 2)} \Join R_{(2, 3)} \Join R_{(3, 4)} \Join R_{(4, 1)} = \Dom(1) \times \Dom(2) \times \Dom(3) \times \Dom(4)$ and $\calSS$ is in fact a covering code with covering radius $1$ (see $\JJ$ and $\calSS$ from Table~\ref{table:cover-join}). Thus, the join covering problem is a natural problem and generalizes both the natural join and covering codes, which have been studied extensively in database theory and coding theory for over five decades~\cite{db-book,covering}.

In order to solve the join covering problem, we begin by solving its combinatorial version, which we define here. Given $G$, $N$ and $\Delta$, we would like to compute
\begin{equation} \label{eq:cover-comb}
\cover(G, N, \Delta) = \max_{R_e: |R_e| \le N, e \in E} |\calS|,
\end{equation}
where $\q = (G, \{R_e: e \in E\})$. In words, for a fixed $G$, $N$ and $\Delta$, we are interested in the {\em size of a worst-case} $\calS$. Most of the technical work in this paper is dedicated to proving upper and lower bounds under size constraints on $R_e$s on $\cover(G, N, \Delta)$ (we discuss known limitations in exactly determining $\cover(G, N, \Delta)$ at the end of Section~\ref{sec:contri}). We use the following related combinatorial problem to prove an upper bound on $\cover(G, N, \Delta)$.
\begin{problem} \label{prob:free-vars}
Given $G$, $N$ and $s : 1 \le s \le n$, the goal is to determine
\begin{equation} \label{eq:free-vars}
\joinUB(G, N, s) = \min_{S \subseteq V: |S| = s} \max_{R_e: e \in E, |R_e| \le N} \left | \Join_{e \in E} \pi_{e \cap S}(R_e) \right|
\end{equation}
Here, $\pi_{e \cap S}(R_e)$ denotes the projection of the relation $R_e$ on to attributes in $e \cap S$.
\end{problem}
In particular, we show that for any hypergraph $G$, $N$ and $\Delta$ the following is true (see Theorem~\ref{thm:proj-cover}):
\[\cover(G, N, \Delta) \le \joinUB(G, N, n - \Delta + 1).\]
Throughout the paper, we assume $s = n - \Delta + 1$ unless explicitly stated otherwise. Note that the corresponding algorithmic problem for $\joinUB(G, N, s)$ would involve computing:
\[\argmin_{S \subseteq V: |S| = s}  \left | \Join_{e \in E} \pi_{e \cap S}(R_e) \right|\]
for a given instance $\q = (G, \{R_e: e \in E \})$. Continuing Example~\ref{ex:intro} with $\q_{0}$ and $\Delta = 2$, we need to consider four possibilities for $S$ (since $s = 3$). We show one such query when $S = \{ 1, 2, 3\}$ in Table~\ref{table:ub} below.
\begin{table*}[!hbt]
	\centering
	\begin{tabular}{llll}
		\begin{tabular}{| c | c |}
			\hline
			\rowcolor{magenta!40} $1$ & $2$ \\
			\hline
			\rowcolor{magenta!40} ICDT & $2017$ \\
			\hline
			\rowcolor{magenta!40} ICDT & $2018$ \\
			\hline
			\rowcolor{magenta!40} ICDT & $2019$ \\
			\hline
			\rowcolor{magenta!40} ICDT & $2020$ \\
			\hline
		\end{tabular}
		\begin{tabular}{| c | c |}
			\hline
			\rowcolor{orange} $2$ & $3$ \\
			\hline
			\rowcolor{orange} $2017$ & Europe \\
			\hline
			\rowcolor{orange} $2018$ & Europe  \\
			\hline
			\rowcolor{orange} $2019$ & Europe \\
			\hline
			\rowcolor{orange} $2020$ & Europe\\
			\hline
		\end{tabular}
		\begin{tabular}{| c |}
			\hline
			\rowcolor{green} $3$\\
			\hline
			\rowcolor{green} Europe \\
			\hline
		\end{tabular}
		\begin{tabular}{| c | c | c|}
			\hline
			\rowcolor{red!40} $1$ & $2$ & $3$\\
			\hline
			\rowcolor{red!40} ICDT & $2017$ & Europe \\
			\hline
			\rowcolor{red!40} ICDT & $2018$ & Europe \\
			\hline
			\rowcolor{red!40} ICDT & $2019$ & Europe \\
			\hline
			\rowcolor{red!40} ICDT & $2020$ & Europe \\
			\hline
		\end{tabular}
	\end{tabular}
	\caption{We are given the three input relations in order (from left to right): $R_{(1, 2)}, R_{(2, 3)}, \pi_{3}(R_{(3, 4)})$, followed by the output $\pi_{S}(\JJ) = R_{(1, 2)} \Join R_{(2, 3)} \Join \pi_{3}(R_{(3, 4)})$, where $S = \{1, 2 ,3\}$.} \label{table:ub}
\end{table*}
In addition to the theoretical motivation stated earlier, it turns out that our results for the join covering problem imply bounds and algorithms for certain variants of entity resolution (ER). Further, our results for 
$\joinUB(G, N,s)$ implies bounds and algorithms for determining pivot attributes for faster query execution. We discuss these connections in detail next.

\subsection{Motivation} \label{sec:motivation}
We would like to state here that the practical application of our results to the following two problems is somewhat speculative. Our goal in presenting these applications is to showcase the generality of the join covering problem. We start with connections to specific variants of ER, followed by extending our results to edit distance (another similarity metric). Then, we state a practically relevant version of ER and conclude with connections to faster (join) query execution in distributed computing.
 
\subsubsection{Connection to Specific Variants of ER}
We start by showing how our results for the join covering problem has implications for certain variants of ER. At a high level, ER is the task of {\em covering} a dataset based on a {\em distance metric}. Afrati et al.~\cite{ASMPU12} considered a variant of the ER problem, which is precisely the join covering problem with $G = (V, E = \{R_1\})$ ($R_1$ spans all attributes in $V$) and $\Dist$ is Hamming distance. In particular, they presented algorithms to compute $\calS$ in this setting. In a later work, Altwaijry et al.~\cite{AMK15}4 considered Problem~\ref{prob:covering} in its actual form (i.e., $G$ is a general join query hypergraph) and presented heuristics for computing $\calS$ for different distance metrics. To the best of our knowledge, ours is the first work to study this more general problem from a theoretical standpoint. While Hamming distance might not be the most practically relevant metric for similarity, we believe it is a reasonable choice for a first theoretical study. Note that our results hold only for ER variants that can be invoked using the join covering problem with $\Dist$ as Hamming distance, whose implications we discuss here. The notion of a join cover corresponds to a {\em covering} subset in the ER language. In this paper, we bound the worst-case size of the smallest covering subset (which we denote by $\cover(
G, N, \Delta)$) for a given query. Since computing $\cover(G, N, \Delta)$ exactly is a known hard problem~\cite{CovLB1,CovUB1}, we settle for proving upper and lower bounds for it. For the case when $G$ is a graph, we prove matching upper and lower bounds for $\cover(G, N, \Delta)$ up to a factor of $2^{n}$. Since $n$ is typically treated as a constant in the ER setting as well~\cite{AMK15}, our results are tight (up to constant factors) for this case. For general hypergraphs $G$, we present a weaker result, where our upper and lower bounds for $\cover(G, N, \Delta)$ can differ by a polynomial in the lower bound (which in turn is a polynomial in $N$). We consider the problem of computing a covering subset (not necessarily of the smallest size). When $G$ is a graph, we present an algorithm that compute a covering subset in time cubically worse than the optimal runtime. For general hypergraphs $G$, our algorithm has a runtime polynomially worse than the optimal runtime. 

\subsubsection{Extending Our Results to Edit Distance}
Next, we show that our upper bounds on $\cover(G, N, \Delta)$ hold for Edit distance metric~\cite{edit2,HammingER} as well under some constraints. The Edit distance $\ED(\mathbf{t}, \mathbf{t'})$ between two tuples with length $m$ and $n$ is defined as the minimum number of operations to transform $\mathbf{t}$ into $\mathbf{t'}$ using a predefined set of operations:
\begin{itemize}
\item {Inserting a new symbol at any position $i$ on $\mathbf{t}$. Note that this offsets the sub-vector from $i$ to $m$ by one position to the right. The new length is $m + 1$.}
\item {Deleting an existing symbol at any position $i$ on $\mathbf{t}$. Note that this offsets the sub-vector from $i$ to $m$ by one position to the left. The new length is $m - 1$.}
\item {Substituting an existing symbol at any position $i$ by another symbol. This operation does not affect the position of other vectors.}
\end{itemize}
Note that only if substitution operation is allowed and $m = n$, we have
\[\ED(\mathbf{t}, \mathbf{t'}) = \Dist(\mathbf{t}, \mathbf{t'}),\]
where $\Dist$ is Hamming distance defined by~\eqref{eq:hd}. On the other hand, $m \le n$, we have 
\[\ED(\mathbf{t}, \mathbf{t'}) \le \Dist(\mathbf{t}_{n}, \mathbf{t'}),\]
where $\mathbf{t}_{n}$ is $\mathbf{t}$ padded with $n - m$ dummy symbols that are not present in $\mathbf{t'}$. Thus, upper bounds obtained for $\cover(G, N, \Delta)$ hold for Edit distance as well, although they could be crude when $m \ll n$ but our lower bounds do not hold.

\subsubsection{Practically Relevant ER}
Before going further, we would like to note that the ER variant we consider in this paper does not seem to be very popular in practice. We state a practically relevant toy version here~\cite{AGK06,CGK06}:
\begin{definition}
Given a relation $R_1$ on one attribute $v$ with domain $\Dom(v)$, the goal is to output the smallest $\mathcal{E} \subseteq R_1$ such that for every tuple $\mathbf{d} \in R_1$, there exists a tuple $\mathbf{d'} \in \mathcal{E}$ satisfying $\Dist'(\mathbf{d}, \mathbf{d'}) < \Delta$. Here, every value in $\Dom(v)$ is treated as a vector of a fixed length $\ell$ and the Hamming distance $\Dist'$ is defined as $\Dist'(\mathbf{d}, \mathbf{d'}) =|\{ i \in [\ell] : \mathbf{d}_{i} \neq \mathbf{d'}_{i}\}|$.
\end{definition}
In particular, Hamming distance is used in the above definition when $v$ is a numeric attribute~\cite{HammingER}. Unfortunately, our results do not say anything non-trivial for this case.

\subsubsection{Connection to faster query execution in distributed computing}
Finally, we show the connection between Problem~\ref{prob:free-vars} and determining pivot attributes for faster query execution. The queries of interest here are joins with aggregates. A popular technique employed by real-time systems for (preparation of) execution of these queries in a distributed setting is query-aware partitioning of data, where the input tables are sorted/partitioned on a particular attribute~\cite{partition1,partition2} (which we call a {\em pivot} attribute for simplicity). In particular, this is done so that the records with a specific range of attribute values are physically co-located in memory/in the same machine, which comes in handy during query execution. More recently, there has been a push to partition the input tables simultaneously on multiple pivot attributes instead of one~\cite{fast19}. Let $\mathcal{F} \subseteq V$ be the set of pivot attributes on which the data is going to be partitioned. One of the most important requirements in choosing $\mathcal{F}$ is that the output size of the original query when projected on to variables in $\mathcal{F}$ should be as small as possible. This is needed since this subquery on $\mathcal{F}$ is actually computed and stored, and then used (for pre-filtering tuples) in the computation of the original query. Having a smaller output naturally leads to lesser storage and lesser computational overhead. Our results on computing $\joinUB(G, N, |\mathcal{F}|)$ give meaningful upper bounds for this requirement that can aid the choice of $\mathcal{F}$. Once $\mathcal{F}$ is fixed, the induced join query on $\mathcal{F}$ can be computed in time proportional to $\joinUB(G, N, |\mathcal{F}|)$ using any worst-case optimal join algorithm~\cite{NPRR,NRR13,Leapfrog}. While the metrics used for choosing $\mathcal{F}$ are more nuanced than $\joinUB(G, N, |\mathcal{F}|)$~\cite{muthianp}, it is still a valid upper bound. 

\subsection{Our Contributions} \label{sec:contri}
In this paper, we make progress on solving both the combinatorial and algorithmic versions of the join covering problem for all $G$, $N$, and $\Delta$. We present our results in Table~\ref{tab:intro_results} for all $N$, $\Delta$ and $s = n - \Delta + 1$. We start with our results for bounding $\cover(G, N, \Delta)$. Recall that we upper bound $\cover(G, N, \Delta)$ by $\joinUB(G, N, s)$. For the case when $G$ is a graph, we prove matching upper and lower bounds for $\cover(G, N, \Delta)$ up to a factor of $2^{n}$. Since $n$ is typically treated as a constant in database settings, our results are tight (up to constant factors). For general hypergraphs $G$, we prove a weaker result, where our upper and lower bounds for $\cover(G, N, \Delta)$ can differ by at most
\begin{equation} \label{eq:algo-ub}
\gap(G, N, \Delta) = O\left(2^{n} \cdot \min\left(N \cdot \cover(G, N, \Delta)^{9.37}, N^{O(\log(s))} \cdot \cover(G, N, \Delta)^{2.73}\right) \right).
\end{equation}
We discuss the techniques we use to prove upper and lower bounds on $\cover(G, N, \Delta)$ in detail in Section~\ref{sec:overview}.
 
Our combinatorial results can be seamlessly converted to algorithms for computing an approximation to $\calS$ for any instance $\q = (G, \{R_e : e \in E\})$ using a search to decision reduction. For the case when $G$ is a graph, we present an algorithm that computes a join cover $\calU$ for any instance $\q$ such that $|\calU| \le \joinUB(G, N, s)$ and $\frac{\joinUB(G, N, s)}{\cover(G, N, \Delta)} \le 2^{n}$ in time $O(N \cdot \joinUB(G, N, s) \cdot \Time(\calA))$, where $\Time(\calA)$ is the runtime of an optimal Blackbox algorithm $\calA$ that checks if $\J = \ \Join_{e \in E} R_e$ is empty or not.\footnote{The problem that $\calA$ solves is called the Boolean Conjunctive query in Database literature.} For general hypergraphs $G$, we show an analogous result where a join cover $\calU$ with $|\calU| \le \joinUB(G, N, s)$ can be computed in time $O(N \cdot \joinUB(G, N, s) \cdot \Time(\calA))$ and satisfies $\frac{\joinUB(G, N, s)}{\cover(G, N, \Delta)}$ is at most $\gap(G, N, \Delta)$. Recall that $\cover(G, N, \Delta)$ is a polynomial in $N$. We would like to note here that an optimal algorithm that computes $\cover(G, N, \Delta)$ (if it exists) will take time at least $\Omega(\max(\cover(G, N, \Delta), \Time(\calA)))$ in the RAM model of computation. Otherwise, this would contradict the fact $\calA$ is an optimal algorithm. Thus, the runtime of our algorithms are cubically (in $N$) worse when $G$ is a graph and polynomially (in $N$) worse for general hypergraphs $G$, compared to the optimal ones.

\begin{table*}[hbt!]
	{\small
		\centering
		{
			\hspace{0.5cm} {
				\begin{tabular}{| c | c | c | c | c |}
					\hline
					$G$ & Combinatorial Gap & Ref & Algorithmic Gap & Ref \\
					\hline 
					\rowcolor{green} Simple & $O(1)$ & Thm~\ref{lowerboundtheorem} & $O(N \cdot \min(\cover(G, N, \Delta), \Time(\calA)))$ & Lem~\ref{lemma5.2} \\
					\hline
					\rowcolor{green!20} Hypergraphs & $\gap(G, N, \Delta)$~\eqref{eq:algo-ub} & Thm~\ref{generalarityubtheorem} & $O\left(N \cdot \min(\gap(G, N, \Delta), \Time(\calA)) \right)$ & Lem~\ref{lemma5.3} \\
					\hline
				\end{tabular}
			}
			\caption{The first column denotes the class of graphs under consideration. The second and third columns denote the combinatorial gap given by $\frac{\joinUB(G, N, s)}{\cover(G, N, \Delta)}$ and a pointer to the relevant result. The fourth and final columns denote the algorithmic gap given by the gap in our runtimes vs the optimal ones (with runtime $\Omega(\cover(G, N, \Delta) + \Time(\calA))$) and a pointer to the relevant result.} \label{tab:intro_results}
		}
	}
\end{table*}
We now discuss the gaps in our results and show how we cannot resolve some of them without solving some fundamental problems in computational complexity and coding theory. The gaps in our bounds are twofold -- first, the gaps between the upper and lower bounds we prove for $\cover(G, N, \Delta)$ and second, the problem of exactly computing $\cover(G, N, \Delta)$. We address them one-by-one here. When $G$ is a graph, our upper and lower bounds for $\cover(G, N, \Delta)$ differ by at most a factor of $2^{n}$. As stated earlier, this is reasonable in the database setting since $n$ is constant. However, it turns out that this gap cannot be eliminated unless $\ZPP=\NP$ (see~\cite{AGM}). For general hypergraphs $G$, our upper and lower bounds for $\cover(G, N, \Delta)$ differ by at most a polynomial in $N$. We complement this result by showing that this gap is inherent if we use $\joinUB(G, N, \s)$ to upper bound $\cover(G, N, \Delta)$. In particular, we show that $\frac{\joinUB(G, N, \s)}{\cover(G, N, \Delta)}$ should be at least $N^{1 + \frac{1}{\bar{e}} - \delta}$, where $\bar{e} \approx 2.72$ and $\delta$ is a constant greater than zero. In other words, we cannot hope for a multiplicative factor gap independent of $N$ with this proof technique.

Finally, we discuss the problem of exactly computing $\cover(G, N, \Delta)$. For a given $G$, $N$ and $\Delta$, when $J =\ \Join_{e \in E} R_e = \prod_{v \in V} \Dom(v)$ and all domains are the same, determining $\cover(G, N, \Delta)$ exactly is a very hard problem~\cite{CovLB1,CovUB1} even for specific values of $\Delta$ (corresponds to bounding the size of a {\em covering code} with {\em covering radius} $\Delta - 1$). Further, explicit constructions of covering codes are known for a very small spectrum of covering radius values~\cite{CovUB1}. This is why we settle for proving upper and lower bounds on $\cover(G, N, \Delta)$ and we discuss the techniques used to prove them in Section~\ref{sec:overview}.

\subsection{Paper Organization} 
We begin with a detailed overview of our techniques in Section~\ref{sec:overview} followed by preliminaries in Section~\ref{section2}. We present our general upper bound in Section~\ref{sectionub}. Our results for arity two are in Section~\ref{section3} and general arity are in Section~\ref{sec:gen-arity}. We conclude with our algorithms in Section~\ref{sec:algos}. Due to space constraints, most of the proofs are deferred to the Appendix.
\section{Overview of Our Techniques} \label{sec:overview}
In this section, we present a detailed overview of our techniques to prove our upper and lower bounds for $\cover(G, N, \Delta)$ here.  
\subsection{Upper Bounds} \label{sec:results_ub}
We start by proving the following theorem.
\begin{theorem} \label{thm:proj-cover}
For every $G$, $N$ and $\Delta$ we have
\begin{equation} \label{eq:cover-pack-ub}
\cover(G,N,\Delta) \le \joinUB(G, N, \s).
\end{equation}
\end{theorem}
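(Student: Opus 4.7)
The plan is to prove $\cover(G, N, \Delta) \le \joinUB(G, N, s)$, with $s = n - \Delta + 1$, by exhibiting for \emph{every} instance $\q = (G, \{R_e\}_{e \in E})$ with $|R_e| \le N$ a concrete join cover whose size is bounded by $\joinUB(G, N, s)$. The core idea is a projection-based covering construction: select a set $S \subseteq V$ of $s$ attributes, then represent $\J$ by one preimage per element of $\pi_S(\J)$. Any tuple of $\J$ then automatically matches its representative on all $s$ coordinates of $S$, leaving Hamming disagreement on at most $n - s = \Delta - 1$ coordinates.

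First, I would fix the witness $S^* \subseteq V$ with $|S^*| = s$ attaining the outer minimum in $\joinUB(G, N, s)$; crucially $S^*$ depends only on $G$, $N$, $s$, not on the specific relations, which is exactly why the $\min_{S}$ is placed outside the $\max_{R_e}$ in Problem~\ref{prob:free-vars}. Then, given $\q$, I would build $\calU \subseteq \J$ by picking, for each $\mathbf{u} \in \pi_{S^*}(\J)$, an arbitrary tuple $\mathbf{t}^{(\mathbf{u})} \in \J$ with $\pi_{S^*}(\mathbf{t}^{(\mathbf{u})}) = \mathbf{u}$, and setting $\calU = \{\mathbf{t}^{(\mathbf{u})} : \mathbf{u} \in \pi_{S^*}(\J)\}$.

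To verify $\calU$ is a valid join cover in the sense of Definition~\ref{defn:cover}, take any $\mathbf{t} \in \J$. Let $\mathbf{u} = \pi_{S^*}(\mathbf{t})$; then $\mathbf{t}$ and $\mathbf{t}^{(\mathbf{u})}$ agree on every attribute in $S^*$, so they can differ only on $V \setminus S^*$, giving $\Dist(\mathbf{t}, \mathbf{t}^{(\mathbf{u})}) \le n - s = \Delta - 1 < \Delta$. The size bound then comes from the inclusion $\pi_{S^*}(\J) \subseteq \Join_{e \in E} \pi_{e \cap S^*}(R_e)$: for any $\mathbf{t} \in \J$ we have $\mathbf{t}_e \in R_e$ for every $e$, hence $\pi_{e \cap S^*}(\mathbf{t}) \in \pi_{e \cap S^*}(R_e)$, which witnesses membership of $\pi_{S^*}(\mathbf{t})$ in the projected join. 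Combining with $|R_e| \le N$ and the definition of $\joinUB$,
\[
|\calU| = |\pi_{S^*}(\J)| \le \left|\Join_{e \in E} \pi_{e \cap S^*}(R_e)\right| \le \max_{R_e: |R_e| \le N} \left|\Join_{e \in E} \pi_{e \cap S^*}(R_e)\right| = \joinUB(G, N, s).
\]
Taking the maximum over instances yields the claimed inequality.

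There is no substantive obstacle here; the statement is essentially a counting argument combined with the pigeonhole-style observation that freezing $s$ coordinates forces Hamming distance at most $n - s$. The only point worth being careful about is the order of quantifiers, namely that $S^*$ must be chosen based only on $(G, N, s)$ so that the same representative set works uniformly across all $R_e$ with $|R_e| \le N$; this is handled automatically by using the $S^*$ that attains the outer $\min$ in $\joinUB(G, N, s)$.
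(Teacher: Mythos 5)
Your proof is correct, but it takes a genuinely more direct route than the paper's. The paper introduces a \emph{join packing} as an intermediate object: it defines $\calC \subseteq \J$ to be a maximal set of tuples that are pairwise at Hamming distance $\ge \Delta$, then argues in two steps that (i) a maximal packing is automatically a cover (otherwise a far-away tuple could be added, contradicting maximality), so $\cover(G,N,\Delta) \le \pack(G,N,\Delta)$, and (ii) the packing projects injectively onto any $S$ of size $s = n-\Delta+1$ (two tuples differing in $\ge \Delta$ coordinates must still disagree somewhere after deleting $\Delta - 1$ coordinates), hence $|\calC| = |\pi_S(\calC)| \le |\Join_{e}\pi_{e\cap S}(R_e)|$, giving $\pack \le \joinUB$. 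You bypass the packing entirely and directly \emph{construct} a cover by fixing the minimizing $S^*$ and picking one preimage in $\J$ per element of $\pi_{S^*}(\J)$; agreement on $S^*$ immediately yields distance $\le n - s = \Delta - 1$. Both arguments are sound. The paper's packing route buys a slightly stronger chain $\cover \le \pack \le \joinUB$, and the packing object is the natural dual to the code-based lower-bound constructions used later (RS and CRT codes \emph{are} packings). Your construction, on the other hand, is explicit and more elementary, and it is in fact closer in spirit to the algorithm the paper uses in Section~\ref{sec:algos} (which also starts from a projected join on $S$ and extends tuples one coordinate at a time). Your care about quantifier order — choosing $S^*$ from $(G,N,s)$ alone, before the adversary picks $R_e$ — is exactly the right concern and is handled correctly.
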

We present only an overview here and the formal proof is in Appendix~\ref{sec:proj-cover}. To prove this theorem, we make use of a related object, which we define as follows. 

For any instance $\q$, $N$ and $\Delta$, let $\calC \subseteq \J =\ \Join_{e \in E} R_e$ be the largest subset such that for any pair of distinct tuples $(\mathbf{t}, \mathbf{t'}) \in \calC$, we have $\Dist(\mathbf{t}, \mathbf{t'}) \ge \Delta$. Recall that $\Dist$ is defined by~\eqref{eq:hd}. 

We now claim that $\calC$ is a join cover. This follows from the fact that for any tuple $\mathbf{t} \in \J$, there exists a tuple $\mathbf{c} \in \calC$ such that $\Dist(\mathbf{t}, \mathbf{c}) < \Delta$. In particular, this implies 
\[\cover(G, N, \Delta) = \max_{R_e: |R_e| \le N, e \in E} |\calS| \le \max_{R_e: |R_e| \le N, e \in E} |\calC|.\]
To complete the proof, we will argue 
\[\max_{R_e: |R_e| \le N, e \in E} |\calC| \le \joinUB(G, N, \s).\]
We prove this using a property of Hamming distance, which we describe here. For a fixed subset $S$ of attributes with $|S| = s = \s$, if we project every tuple in $\calC$ on to attributes in $S$, the tuples are still pairwise distinct. In particular, this follows from the fact any pair of tuples in $\calC$ differ in at least $\Delta$ positions. 
Since our argument is independent of the choice of $S$, we have
\begin{align*}
\max_{R_e: e \in E, |R_e| \le N} \min_{S \subseteq V: |S| = s} |\pi_{S}(\calC)| & \le \min_{S \subseteq V: |S| = s} \max_{R_e: e \in E, |R_e| \le N} \left|\Join_{e \in E} \pi_{e \cap S} (R_e) \right| \\
& = \joinUB(G, N, s),
\end{align*}  
where the inequality follows from the fact that $|\pi_{S}(\calC)| \subseteq |J_{S}|$ (since $\calC \subseteq \J)$ for any $S \subseteq V, |S| = s$.

In summary, proving an upper bound on $\cover(G, N, \Delta)$ reduces to the problem of finding the output size of the induced join query on attributes in $S$ and then taking the minimum among all $S \subseteq V, |S| = s$. The induced join query is defined as $\q_{S} = (G_{S} = (S, \{e \cap S: e \in E\}), \{\pi_{e \cap S}(R_e): e \in E \} )$. One obvious upper bound on each such $\q_{S}$ is the $\AGM$ bound~\cite{AGM} and then to compute $\joinUB(G, N, s)$, we can take the minimum $\left|\Join_{e \in E} \pi_{e \cap S} (R_e) \right|$ among all $S \subseteq V$ with $|S| = s$. Somewhat surprisingly, this turns out to be {\em not} tight. The smallest such example that we are aware of is the $4$-cycle $G_0$ from Example~\ref{ex:intro} with $\Delta=2$ and $s = 3$. We present an example of a join query induced on $S_0 = \{1, 2, 3\}$ in Figure~\ref{fig:g_s}, where $G_{S_0} = (\{1, 2, 3\}, \{ (1, 2), (1, 3), (3)\}$. It turns out that the $\AGM$ bound on $G_{S_0}$ is $|R_{(1, 2)}| \cdot |R_{(1, 3)}| \le N^{2}$, where the inequality follows from the fact that $|R_{(1, 2)}|, |R_{(1, 3)}| \le N$. The same bound $O(N^2)$ is true for all other subsets $S, |S| = s$ as well i.e., $\joinUB(G, N, s) \le O(N^2)$.
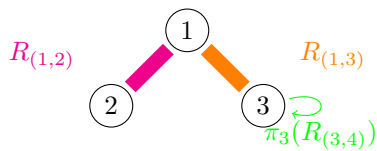
\begin{figure}[!htbp]
	\centering
	\begin{tikzpicture}[myn/.style={circle, draw,inner sep=0.1cm,outer sep=2pt}]
	\node[myn] (A) at (0,0) {$1$};
	\node[myn] (B) at (-1, -1) {$2$};
	\node[myn] (C) at (1, -1) {$3$};
	\draw[line width = 2mm, magenta] (A) -- (B) node [magenta, above left = 7.5pt, sloped = 90] (TextNode) {$R_{(1, 2)}$};
	\draw[line width = 2mm, orange] (A) -- (C) node [orange, above right = 7.5pt, sloped = 90] (TextNode) {$R_{(1, 3)}$};
    \path  (C)   edge[loop right, green] node[below = 2pt]  {$\pi_{3}(R_{(3, 4)})$} (C);
	\end{tikzpicture}
	\caption{$G_{S_0}$ is a star with $3$ vertices and relations $R_{(1, 2)}, R_{(2, 3)}$ and $\pi_{3}(R_{(3, 4)})$. Each relation has size at most $N$.} \label{fig:g_s}
\end{figure}
However, it turns out that we can prove $\joinUB(G, N, s) = O(N^{\frac{3}{2}})$, which we obtain by using the polymatroid bound ($\PMB$ hereon) defined in~\cite{panda,ngo-survey}. For general hypergraphs $G$, we use the AGM bound on $G_{S}$ since most of our arguments for the simple graphs case exploit very specific structural properties of these graphs, which we do not (yet) know to adapt to general hypergraphs. Before we move on to our lower bounds for $\cover(G, N, \Delta)$, we would like to note that when $G$ is a graph, using the degree-based bound of Joglekar and R\'e~\cite{JR-paper} is sufficient to prove our upper bounds. However, we state our results with $\PMB$ since our arguments are cleaner in the $\PMB$ language.

\subsection{Lower Bounds} \label{sec:lb}
For any $G$, $N$ and $\Delta$, if we explicitly construct an instance $\{R_e : |R_e| \le N, e \in E\}$ with a smallest join cover $\calD \subseteq \J =\ \Join_{e \in E} R_e$, then $\cover(G, N, \Delta)$ is lower bounded by $|\calD|$ (by definition). Reconsidering Example~\ref{ex:intro} with $\q_0$, $N = 4$ and $\Delta = 2$, the $R_{e}$'s in Table~\ref{table:instance} with a smallest join cover $\calDD= \calSS \subseteq \JJ$ in Table~\ref{table:cover-join} is one such valid instance. It follows that $\cover(G_{0}, 4, 2) \ge 4$. 

There are two challenges in constructing these instances -- $(1)$ arguing that $\calD \subseteq \J$ is the smallest join cover and $(2)$ showing that $|R_e| \le N$ for every $e \in E$. For $(1)$, we construct a $\J$, where each pair of tuples $\mathbf{t}, \mathbf{t'} \in \J$ satisfies $\Dist(\mathbf{t}, \mathbf{t'} ) \ge \Delta$, implying that $\J$ itself is the smallest join cover. At a high level, the $\J$ we construct is an {\em error-correcting code}, which we define as follows.
\begin{definition} {$(n, M, \Delta)_{q}$-Code}
An  $(n, M, \Delta)$-error correcting code $\mathcal{C}$ satisfies the following conditions. The code $\mathcal{C}$ is a subset of $ \prod_{v \in V} \Dom(v)$ with $|\mathcal{C}| = M$. Each tuple $\mathbf{c} \in \mathcal{C}$ has length $n$ and each pair of tuples $\mathbf{c}, \mathbf{c'} \in \mathcal{C}$ satisfies $\Dist(\mathbf{c}, \mathbf{c'}) \ge \Delta$. $\Delta$ is generally referred to as the minimum distance of $\mathcal{C}$.
\end{definition}

For this construction, we use known ideas on classes of error-correcting codes called Reed-Solomon Codes~\cite{RS} and their number-theoretic variant, Chinese Remainder Theorem codes~\cite{crt}, which we define one-by-one. A Reed Solomon (RS) code $\mathcal{C}$ is of the form $\left(n, M = q^{\s}, \Delta \right)_{q}$ on $G$ (where $q$ is a prime power), assuming all $\Dom(v)$s are the same with $|\Dom(v)| = q$ and $q \ge n$. For most of our lower bounds, we use this code. In particular, we set $q \le \floor{\sqrt{N}}$ \footnote{$n$ is typically treated is as a constant in database settings and $n \ll N$.} and we have $\calD = \mathcal{C}$. For every $e \in E$, we set $R_e = \pi_{e}(\calD)$. Since $q = O(\sqrt{N})$, we have that $|\pi_{e = (u, v)}(\calD)| \le |\pi_{u}(\calD)| \cdot |\pi_{v}(\calD)| \le |\Dom(u)| \cdot |\Dom(v)| = q^{2} \le N$. We are now ready to define RS codes.
\begin{definition}[RS Code] \label{def:rs}
Let $\mathbb{F}_q$ be a finite field\footnote{$\mathbb{F}_{q}$ has $q$ elements: addition, subtraction, multiplication and division (except zero) are all modulo $q$. Note that $q$ is always assumed to be a prime power.}. Let $\alpha_1, \alpha_2, \dots, \alpha_n$ be distinct elements (also called \textit{evaluation points}) from $\mathbb{F}_q$. For $n$ and $\Delta$, we have $\s \le n \le q$. We define an encoding function for the RS code as $E(\mathbf{m}): \mathbb{F}_q^{\s} \rightarrow \mathbb{F}_q^n$ as follows. A message $\mathbf{m} = (m_0, m_1, \dots, m_{n - \Delta})$ with $m_i \in \mathbb{F}_q$ is mapped to a degree $n - \Delta$ polynomial $\mathbf{m} \mapsto f_{m}(X)$, where $f_{m}(X) = \sum_{i = 0}^{n - \Delta}m_{i}X^i$. Note that $f_m(X) \in \mathbb{F}_q[X]$ is a polynomial of degree at most $n - \Delta$. The encoding of $\mathbf{m}$ is then the evaluation of $f_{m}(X)$ at all the $\alpha_i$'s: $\RS(\mathbf{m}) = (f_m(\alpha_1), f_m(\alpha_2), \dots, f_m(\alpha_n))$. The code $\mathcal{C} = \{\RS(\mathbf{m}): \mathbf{m} \in \mathbb{F}_q^{\s} \}$ is constructed on a finite field over $\mathbb{F}_{q}$.
\end{definition}
Note that RS code-based instances work only when $\Dom(v)$ is the same of each $v \in V$. For our arguments, we need instances where $\Dom(v)$ is different for each $v \in V$ and we use Chinese Remainder Theorem (CRT) codes for this purpose (which are number-theoretic counterparts of RS codes). We define a $(n, M, \s)_{(q_1, \dots, q_n)}$-CRT code below.
\begin{definition} [CRT Code] \label{defn:crt}
Let $1 \le \s \le n$ be integers and $q_1 < q_2 < \dots < q_n$ be $n$ distinct primes and $M = \prod_{i = 1}^{\s}q_i$. $\mathbb{Z}_q$ stands for the integers modulo $q$, i.e., the set $\{0, 1, \dots, q - 1\}$. We define an encoding function for the CRT code as $\CRT: \mathbb{Z}_{M} \rightarrow \mathbb{Z}_{q_1} \times \mathbb{Z}_{q_2} \times \dots \times \mathbb{Z}_{q_n}$ as follows. A message $\mathbf{m} = (m_0, m_1, \dots, m_{n - \Delta})$ with $m_i \in \mathbb{Z}_{q_{i + 1}}$ is encoded as $\CRT(\mathbf{m}) = (\mathbf{m} \bmod p_1, \mathbf{m} \bmod q_2, \ldots ,\mathbf{m} \bmod q_n)$. The code $\mathcal{C} = \{\CRT(\mathbf{m}): \mathbf{m} \in \mathbb{F}_q^{\s} \}$ is constructed on a finite field over $\mathbb{F}_{q}$.
\end{definition}
Note that for any RS and CRT code $\mathcal{C}$ (as defined above), each pair of tuples in $\mathcal{C}$ differ in at least $\Delta$ positions.

However, arguing $(2)$ requires bit more care for which we use ideas from lower bounds proved for the $\AGM$ bound~\cite{AGM}. In particular,~\cite{AGM} converts an optimal dual solution to a linear program to a join query instance. The constraints on the linear program ensure that $|R_e| \le N$ for every $e \in E$. In this paper, we use both feasible and optimal solutions to similar linear programs to construct our instances. Finally, we would like to mention that the $\J$'s obtained from our instances are mostly variants of Cartesian products or trivial joins, which is the case with instances in~\cite{AGM} as well.
\section{Notation, Existing Results and Upper Bound for any Hypergraph $G$} \label{section2} \label{sec:query-prelims}
For simplicity of notation, we denote a multiplicative gap of $2^{O(n)}$ in our bounds by $O_{n}$, $\Omega_{n}$ and $\Theta_{n}$ respectively and we ignore polynomial multiplicative gaps in $n$. We use $\bar{e}$ to denote the Euler number $\approx 2.72$ and all our logarithms are base $2$ unless stated otherwise. 

\subsection{Existing Results}
We consider an existing upper bound on the output size of any natural join query called the polymatroid bound ($\PMB$ hereon)~\cite{panda,ngo-survey}. The $\PMB$ bound is a refined upper bound than the $\AGM$ bound~\cite{AGM} and exploits degree constraints along with the size bounds on input relations. While the original paper defines the $\PMB$ bound on $S = V$, for our purposes, we need a bound on a subset $S \subseteq V$ of vertices. We note that this generalization is direct. We start with the notion of a degree constraint: 
\begin{definition} [Degree Constraint] \label{defn:dc}
A degree constraint is a triple $(X, Y, N_{Y|X})$, where $X \subsetneq Y \subseteq V$ and $N_{Y | X} \in \mathbb{N}$. The relation $R_e$ is said to guard the degree constraint $(X, Y, N_{Y|X})$ if $Y \subseteq e$ and $\deg_{e}(A_{Y} | A_{X}) := \max_{\mathbf{t}} |\pi_{A_{Y}}( \{\mathbf{t'} \in R_e: \pi_{A_{X}}(\mathbf{t}') = \mathbf{t} \}) | \le N_{Y | X}$.
\end{definition}
We note that each relation $R_e: e \in E$ is a degree constraint of the form $\deg(\emptyset, e, N)$ since $|R_e| \le N$. For our arguments, we will be considering degree constraints that are guarded by some relation $R_e$. We are now ready to define $\PMB(\q, S)$.
\begin{definition} \label{def:dbpgs}
Given $\q = (G, \{R_{e}: e \in E\})$ and $N$, for each $e \in E$, let $\DC$ denote a finite set of degree constraints of the form $(X, Y, N_{Y | X})$ (including the size bound $N$ on relations). Then, $\PMB(\q, S)$ is computed using the following linear program:
\begin{align*}
& \min \sum_{(X, Y, N_{Y | X}) \in \DC}{\delta_{Y | X} \log_{2}(N_{Y | X})} \\
& \text{s.t. } \sum_{(X, Y, N_{Y | X}) \in \DC, v \in S \cap (Y \setminus X)} \delta_{Y | X} \ge 1 \quad \forall v \in S \\
&  \delta_{Y | X} \ge 0 \quad \forall (X, Y, N_{Y | X}) \in \DC. \numberthis \label{eq:pmb}
\end{align*}
\end{definition}
We would like to note here that\footnote{We would like to note here that the LP~\eqref{eq:pmb} for computing $\PMB(\q, S)$ is valid iff the degree constraint graph defined by $\mathcal{G} = (V = S, E = \{ x \rightarrow y : (x, y) \in X \times (Y \setminus X), \forall (X, Y, N_{Y | X}) \in \DC \})$ is acyclic. Note that $E(\mathcal{G})$ is empty when there are only cardinality constraints. In other cases, we explicitly argue how $\PMB(\q, S)$ is still a valid upper bound using ideas from~\cite{ngo-survey}.} $\PMB(\q, S)$ is defined in the log scale. Next, we point out that $\PMB(\q, S)$ is upper bounded by the AGM bound~\cite{AGM} on the subquery induced by attributes in $S$.
\begin{lemma} \label{dbpagmlemma}
Given $\q = (G, \{R_{e}: e \in E\})$, where $|R_e| \le N$ for $e \in E$ and $S \subseteq V$, we have
\[\max_{R_e: |R_e| \le N, e \in E} \PMB(\q, S) \le \AGM(G, S),\]
where $\AGM(G, S)$ is computed by the following linear program for $(G, S)$:
\begin{align*}
& \min{\sum_{e\in E} x_e \log_{2}(N)} \\
& \text{s.t. } \sum_{e \ni v} x_e \ge 1, v \in S \\
&  x_e  \ge 0,  e \in E. \numberthis \label{eq:ubs}
\end{align*}
\end{lemma}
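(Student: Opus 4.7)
The plan is to prove the inequality by exhibiting a feasible solution to the $\PMB$ linear program whose objective value equals the optimal $\AGM$ objective. Since the $\PMB$ LP is a minimization, any such feasible solution upper bounds $\PMB(\q, S)$, which will yield the claim.

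Concretely, let $(x_e)_{e \in E}$ be any feasible solution to the $\AGM$ LP~\eqref{eq:ubs}. I would construct a candidate $\delta$ for the $\PMB$ LP~\eqref{eq:pmb} by restricting attention to the pure cardinality degree constraints $(\emptyset, e, N)$, which are always contained in $\DC$ since $|R_e| \le N$ for every $e \in E$ (this is the ``trivial'' guarding pointed out right after Definition~\ref{defn:dc}). Specifically, set $\delta_{e \mid \emptyset} := x_e$ for each $e \in E$, and set $\delta_{Y \mid X} := 0$ for all other degree constraints in $\DC$. Note also that since only cardinality constraints have nonzero coefficients, the induced degree-constraint graph $\mathcal{G}$ has no edges, so it is trivially acyclic and the $\PMB$ LP is valid.

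Next I would check feasibility: for any $v \in S$, the only nonzero $\delta_{Y \mid X}$ with $v \in S \cap (Y \setminus X)$ are of the form $\delta_{e \mid \emptyset}$ with $v \in e$, so the covering constraint in~\eqref{eq:pmb} becomes $\sum_{e \ni v} x_e \ge 1$, which is exactly the $\AGM$ covering constraint for $v \in S$, hence satisfied by $x$. Non-negativity of $\delta$ follows from non-negativity of $x$. The $\PMB$ objective evaluated at $\delta$ equals $\sum_{e \in E} x_e \log_2(N)$, which is exactly the $\AGM$ objective evaluated at $x$.

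Therefore, for any instance $\q$ with $|R_e| \le N$, taking $x$ to be an optimal $\AGM$ solution gives $\PMB(\q, S) \le \sum_{e} x_e \log_2(N) = \AGM(G, S)$. Since the right-hand side depends only on $G$, $S$, and $N$ (and not on the specific relations $R_e$), the inequality persists after taking the maximum over all valid instances, completing the proof. I do not anticipate any real obstacle here — the argument is essentially a ``one LP refines the other'' observation where the $\PMB$ LP has strictly more decision variables than the $\AGM$ LP; the only mild subtlety is verifying that the acyclicity caveat of the footnote is satisfied, which is immediate when only cardinality constraints are activated.
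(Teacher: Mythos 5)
Your proof is correct and follows essentially the same route as the paper: both observe that the cardinality constraints $(\emptyset, e, N)$ are always present in $\DC$, so any feasible $\AGM$ solution $(x_e)$ induces a feasible $\PMB$ solution $\delta_{e\mid\emptyset}=x_e$ (with all other $\delta$'s zero) achieving the same objective. Your write-up is a bit more careful than the paper's — it explicitly verifies feasibility and notes the acyclicity of the induced degree-constraint graph, whereas the paper handles this more tersely — but the underlying idea is identical.
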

The proof is in Appendix~\ref{sec:pmb-agm}. We conclude this section by proving an upper bound on $\joinUB(G , N, \s)$ for all hypergraphs based on$\PMB(\q, S)$ and $\AGM(G, S)$. 
\subsection{Upper Bound for any Hypergraph $G$} \label{sectionub}
We prove the following result.
\begin{theorem} \label{upperboundtheorem}
Given $G$, $N$ and $\Delta$, we have
\begin{equation} \label{eq:upperbound}
\joinUB(G , N, \s) \le \min_{S \subseteq V:|S|=\s} \max_{R_e: |R_e| \le N, e \in E} 2^{\PMB(\q,S)} \le 2^{\AGM(G, S)}.
\end{equation}
\end{theorem}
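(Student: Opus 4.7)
The plan is to establish the two inequalities in~\eqref{eq:upperbound} separately. The second inequality is essentially immediate from Lemma~\ref{dbpagmlemma}: for each fixed $S \subseteq V$ with $|S|=\s$, the lemma gives $\max_{R_e:|R_e| \le N} \PMB(\q,S) \le \AGM(G,S)$, so exponentiating and taking $\min_S$ on both sides yields the bound $\min_S \max_{R_e} 2^{\PMB(\q,S)} \le \min_S 2^{\AGM(G,S)}$, which is what we need on the right-hand side. So the real work is in the first inequality.

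For the first inequality, I would fix any $S \subseteq V$ with $|S|=\s$ and any family $(R_e)_{e \in E}$ with $|R_e| \le N$, and argue that $\bigl|\Join_{e \in E} \pi_{e \cap S}(R_e)\bigr| \le 2^{\PMB(\q,S)}$. The key observation is that $\Join_{e \in E} \pi_{e \cap S}(R_e)$ is itself a natural join query on attribute set $S$ with relations $\pi_{e \cap S}(R_e)$, and each of the degree constraints $(X,Y,N_{Y|X}) \in \DC$ guarded by $R_e$ (including the size bound $N$) carries over to $\pi_{e \cap S}(R_e)$: since projection can only shrink conditional degrees, we have $\deg_{\pi_{e\cap S}(R_e)}(A_{Y\cap S} \mid A_{X\cap S}) \le N_{Y|X}$. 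Thus the LP~\eqref{eq:pmb} defining $\PMB(\q,S)$ is exactly the polymatroid LP for the projected query, with the constraint $\sum_{v\in S \cap (Y\setminus X)} \delta_{Y|X} \ge 1$ for every $v\in S$ providing the required entropic coverage of $S$.

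To finish, I would invoke the standard entropic proof from Abo Khamis, Ngo and Suciu (and the survey of Ngo). Letting $\mathbf{T}$ be a uniform random tuple from $\Join_{e \in E} \pi_{e \cap S}(R_e)$, we have $\log_2 \bigl|\Join_{e \in E} \pi_{e \cap S}(R_e)\bigr| = H(\mathbf{T})$. By a chain-rule expansion over any linear ordering of $S$, combined with the inherited bounds $H(T_{Y\cap S} \mid T_{X\cap S}) \le \log_2 N_{Y|X}$ and LP duality on~\eqref{eq:pmb}, one obtains $H(\mathbf{T}) \le \PMB(\q,S)$, as desired. Taking $\max$ over $(R_e)$ and then $\min$ over $S$ gives the first inequality of~\eqref{eq:upperbound}.

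The main obstacle I anticipate is subtle rather than conceptual: verifying that the entropic chain-rule argument indeed carries through when some $\delta_{Y|X}$ correspond to degree constraints whose $Y \setminus X$ is disjoint from $S$ (these variables can be set to $0$ without penalty), and handling the acyclicity caveat flagged in the footnote for cases where the degree constraint graph has cycles. Both issues are addressed by the machinery in~\cite{panda,ngo-survey}, so the proof should ultimately be a clean reduction to their polymatroid bound applied to the projected subquery, followed by an invocation of Lemma~\ref{dbpagmlemma}.
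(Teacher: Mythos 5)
Your proof follows essentially the same approach as the paper's: the second inequality comes from Lemma~\ref{dbpagmlemma}, and the first from the fact that $\PMB(\q,S)$ is a valid upper bound on the (log of the) output size of the projected join, appealing to the entropic machinery of \cite{panda,ngo-survey}. You are actually somewhat more careful than the paper's one-paragraph proof in Appendix~\ref{sec:pmb-ub}, which cites Lemma~\ref{dbpagmlemma} for the claim that $\PMB$ bounds the output size (the lemma only relates $\PMB$ to $\AGM$; the bound itself comes from \cite{panda,ngo-survey}), and in spelling out that the projected relations $\pi_{e\cap S}(R_e)$ inherit the degree constraints and that the acyclicity caveat must be respected.
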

The proof is in Appendix~\ref{sec:pmb-ub} and uses the fact that $\PMB(\q, S)$ is a valid upper bound for any join query instance and $\AGM(G, S)$ is a valid upper bound for any worst-case instance $\q$.
\section{$G$ is a graph} \label{section3}
In this section, we look at the case when $G$ is a graph i.e., all relations in the join query have arity at most two. We start with some preliminaries.
\subsection{Preliminaries}
We define some notions related to $G$ that will be used in our arguments.
\begin{definition} [Star]
A \textit{star} is a tree on $n > 1$ vertices with one internal node and $n - 1$ leaves.
\end{definition}
\begin{definition}[Singleton]
A \textit{singleton} is a graph with one vertex, with each edge incident on it i.e., a `self-loop'. 
\end{definition}
\begin{definition}[Disedge]
We call $G$ a \textit{disedge}, if it is a union of vertex-disjoint edges and \textit{singletons}.
\end{definition}
We define the path length as the number of vertices in a path. 
\begin{definition}[Maximum Matching]
A matching $M$ of $G$ is a subset of edges with no common vertices between them. The largest such $M$ (in terms of the number of edges) is a maximum matching.
\end{definition}
\subsection{Main Theorem} \label{sec:atwo-main-thm}
We prove the following result.
\begin{theorem} \label{lowerboundtheorem}
For any graph $G$, $N$ and $\Delta$, we have
\begin{equation} \label{eq:lowerbound}
\joinUB(G, N, \s) \le O_{n}(\cover(G, N, \Delta)).
\end{equation}
\end{theorem}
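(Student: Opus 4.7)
The plan is to exhibit, for any graph $G$ and parameters $N,\Delta$, an explicit instance $\q = (G, \{R_e : e \in E\})$ with $|R_e|\le N$ whose join output $\J$ is itself an error-correcting code of minimum Hamming distance at least $\Delta$ and of size $|\J| \ge \Omega_n(\joinUB(G, N, \s))$. Once such an instance is in hand, any two distinct tuples of $\J$ already differ in at least $\Delta$ coordinates, so every tuple of $\J$ must appear in any join cover; this yields $\cover(G, N, \Delta) \ge |\J|$ and the claimed inequality.

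First, I would pass to the set $S^\star \subseteq V$ with $|S^\star| = \s$ attaining the minimum in the definition of $\joinUB(G, N, \s)$ and analyze the structure of the induced subquery on $S^\star$. Because $G$ is a graph, each edge $e = (u,v)$ becomes (i) a two-attribute constraint when both endpoints lie in $S^\star$, (ii) a singleton (self-loop) constraint when exactly one endpoint lies in $S^\star$, or (iii) a vacuous constraint otherwise. By Theorem~\ref{upperboundtheorem} and Lemma~\ref{dbpagmlemma} it suffices to match $\AGM(G, S^\star)$ up to a factor of $2^{O(n)}$, so I would work with an optimal fractional edge cover $(x_e^\star)$ of the LP in~\eqref{eq:ubs}. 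A case analysis driven by the graph-theoretic notions introduced in Section~\ref{section3} (star, singleton, disedge, maximum matching) organizes the construction, since the extreme points of the edge-cover polytope on a graph have a well-understood combinatorial structure.

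Second, I would use $(x_e^\star)$ to calibrate either an RS or a CRT code that serves as the join output. In the uniform-domain regime I would pick a prime power $q = \Theta(\sqrt{N})$ (so that $q^2 \le N$), set $|\Dom(v)| = q$ for every $v \in V$, and take $\mathcal{C}$ to be the degree-$(n-\Delta)$ RS code from Definition~\ref{def:rs}, which has size $q^{\s}$ and minimum distance $\Delta$. When the LP favors heterogeneous domain sizes, I would instead pick distinct primes $q_1 < \cdots < q_n$ with $\log q_i$ proportional to the LP contribution of vertex $i$ and invoke the CRT code of Definition~\ref{defn:crt}. In either case I would define $R_e := \pi_e(\mathcal{C})$, so that $\mathcal{C} \subseteq \J$ by construction, and the per-edge size bound $|R_e| \le |\Dom(u)|\cdot |\Dom(v)| \le N$ follows from LP feasibility.

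Finally I would verify $|\mathcal{C}| \ge \Omega_n(\joinUB(G,N,\s))$. The size of $\mathcal{C}$ is a product of domain sizes whose logarithms track the LP objective up to integer rounding; the $2^{O(n)}$ slack in the definition of $\Omega_n$ simultaneously absorbs (a) rounding of fractional LP weights to valid prime powers (or prime sets), (b) the gap between $\PMB$ and $\AGM$ permitted by Lemma~\ref{dbpagmlemma}, and (c) the constant in $q = \Theta(\sqrt{N})$. The distance property of RS/CRT codes then guarantees that any two codewords in $\mathcal{C}$ differ in at least $\Delta$ coordinates, giving $\cover(G, N, \Delta) \ge |\mathcal{C}| \ge \Omega_n(\joinUB(G, N, \s))$.

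The main obstacle is carrying out the case analysis uniformly. Graphs admit very different LP optima (pure matchings, stars, and hybrids thereof), and the real work lies in arguing that one chosen RS/CRT instantiation simultaneously respects the per-edge budget, delivers minimum distance exactly $\Delta$, and matches the LP objective within a $2^{O(n)}$ factor across \emph{all} of these structures. The CRT construction is what I expect to carry the heterogeneous cases where the LP weights differ substantially across vertices, while the pure RS construction handles the symmetric ones; the delicate step is tying both back to the same edge-cover LP so that the size lower bound on $|\mathcal{C}|$ tracks $\joinUB(G,N,\s)$ up to the allowed slack.
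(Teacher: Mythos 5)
Your high-level strategy---build an explicit instance $\q$ whose join output is itself a distance-$\Delta$ code, so that the entire output must appear in any join cover, then show the code is large enough to match the upper bound---is the paper's blueprint (Section~\ref{sec:lb} and Lemmas~\ref{lemma2}--\ref{lemma7}). Your split between a uniform-domain RS instantiation and a heterogeneous-domain CRT instantiation calibrated from a half-integral edge-cover dual is also exactly what the paper does (RS codes with $q=\Theta(\sqrt N)$ or $q=\Theta(N)$ in rows~1--5 of Table~\ref{tab:results}, CRT codes weighted by the half-integral $\AGM$ dual in row~6), so the building blocks you reach for are the right ones.

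The gap is in what you try to match the code against. You assert that ``by Theorem~\ref{upperboundtheorem} and Lemma~\ref{dbpagmlemma} it suffices to match $\AGM(G, S^\star)$ up to a factor of $2^{O(n)}$,'' but this is exactly the step the paper disproves: the central observation of the paper (abstract; Section~\ref{sec:results_ub}; row~3 of Table~\ref{tab:results}) is that $\AGM$ is \emph{not} a tight surrogate for $\cover$ when $s$ is odd and $G$ is not a disedge. Concretely, for the $4$-cycle $G_0$ with $\Delta=2$ and $s=3$, one has $2^{\AGM(G_0, S)} = N^{2}$ for every $|S|=3$, yet $\cover(G_0, N, 2) = \Theta_n(N^{3/2})$. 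The obstruction to your construction is Singleton-type: a length-$n$ code of minimum distance $\Delta$ over alphabets satisfying $|\Dom(u)|\cdot|\Dom(v)| \le N$ for every edge $(u,v)$ of the cycle has at most $O_n(N^{s/2})$ codewords, and switching to CRT weights cannot escape this, since any assignment of primes obeying $q_u q_v \le N$ on every cycle edge forces the product of the smallest $s$ to be $O_n(N^{s/2})$. So the RS/CRT instance you would try to build at scale $2^{\AGM(G,S^\star)} = N^{(s+1)/2}$ simply does not exist, and your chain $\cover \ge \Omega_n(\text{code size}) \ge \Omega_n(2^{\AGM}) \ge \Omega_n(\joinUB)$ breaks at the second step.

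The repair the paper makes is to abandon $\AGM$ as the upper bound on $\joinUB$ and prove the sharper, instance-adaptive bound $\joinUB(G, N, s) \le O_n(N^{s/2})$ via the polymatroid bound $\PMB$. This is a substantial piece of work, not a corollary of Theorem~\ref{upperboundtheorem}: Lemma~\ref{lemma2} decomposes each worst-case instance into ``heavy'' pieces (vertices with some value of co-degree $\ge \sqrt{N}$ in an incident relation) and a ``light'' remainder, loads the light piece with genuine degree constraints of the form $(\ell, u, \sqrt{N})$, and then exhibits a feasible solution to the LP in~\eqref{eq:pmb} of value $\frac{s}{2}\log N$ for a specific choice of $S$ tied to a maximum matching. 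Your lower-bound construction (RS over $q=\Theta(\sqrt N)$, size $\Theta_n(N^{s/2})$) is then exactly what closes the gap---but only against the $\PMB$-based upper bound, not the $\AGM$ one. What your plan is missing is this refined upper bound on $\joinUB$; without it, the argument cannot close in the odd-$s$, non-disedge case.
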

This result, when combined with Theorem~\ref{thm:proj-cover}, is sufficient to conclude that $\joinUB(G, N, s)$ and $\cover(G, N, \Delta)$ are tight within a factor of $2^{O(n)}$. We prove this theorem in Section~\ref{sec:arity2proof} and present an overview here. We choose a subset $S \subseteq V$ such that $\joinUB(G, N, \s) \le \calB(G, N, \s)$ for some valid upper bound $\calB(G, N, \s)$. Then, we argue that $\cover(G, N, \Delta) \ge \frac{\calB(G, N, \s)}{2^{O(n)}}$ using a case-based analysis depending on the structure of $G$ and the value of $s$. In particular, for each case, these two arguments imply that the inequality~\eqref{eq:lowerbound} holds. 

\subsection{Proof of Theorem~\ref{lowerboundtheorem}}  \label{sec:arity2proof}
We present only an overview here. The complete proof follows from the proofs of five lemmas but we present only the case where the $\AGM$ bound is not tight. Details are in Appendix~\ref{sec:arity2lemmas}.
\begin{proof} [Proof Overview]
We perform a case-based analysis based on the structure of $G$ and the range of $s = n - \Delta + 1$ to prove this theorem. For each case, we prove~\eqref{eq:lowerbound} in two steps. We start by assuming that there exists a subset $S^{*} \subseteq V, |S^*| = s$ such that $\underset{R_e: |R_e| \le N, e \in E}{\max} 2^{\PMB(\q, S^*)}$ is within a constant factor of $\min_{S \subseteq V: |S| = s} \max_{R_e: |R_e| \le N, e \in E} 2^{\PMB(q, S)}$. Then, we provide an algorithm to pick $S^*$ that runs in $O(\poly(|G|))$ time (the latter fact will be used in our algorithmic results). Finally, we argue 
\begin{align*}
\joinUB(G, N, s) \le \max_{R_e: |R_e| \le N, e \in E} 2^{\PMB(\q, S^*)}, 
\end{align*}
followed by arguing
\begin{align*}
\cover(G, N, \Delta) \ge \frac{\max_{R_e: |R_e| \le N, e \in E}2^{\PMB(\q, S^*)}}{2^{O(n)}}. 
\end{align*}
This, when combined with Theorem~\ref{thm:proj-cover}, completes the proof.

We now discuss our case analysis briefly and start by noting that our results hold for very specific decompositions of $G$ (and not any decomposition). We decompose a given $G$ into three (specific) induced subgraphs $G_c$, $G_s$ and $G_t$ such that $|V(G)| = |V(G_s)| + |V(G_t)| + |V(G_c)| = n$. In our decomposition, we ensure that $G_s$ is a collection of {\em stars} of size greater than $1$ and $G_t$ is a collection of {\em singletons}. Further, we have (again through our decomposition) $V(G_c) = V(G) \setminus \{ V(G_s) \cup V(G_t)\}$ and $E(G_c) = \{ e \in E: e \subseteq V(G_c) \}$. Let $M$ be a maximum matching on $G$ and $|M|$ denote the total number of vertices in $M$ ($|M|$ is even). We define $n_I = |V(M) \setminus V(G_c)|$. We summarize our cases using $|M|$ and $n_I$ in Table~\ref{tab:results}. Note that our cases are exhaustive i.e., they cover all possible cases of $s \in [n]$ for every graph $G$. 

\begin{table*}[thbp!]
	{\small
	\centering
	{
		\hspace{0.5cm} {
	\begin{tabular}{| c | c | c | c |}
		\hline
		$n - \Delta + 1$ & $G$ & Is AGM tight? \\
		\hline
		\rowcolor{red!30} $= 1$ & - & Yes \\
		\hline
		\rowcolor{red!30} $\in [2, |M|]$, even &  -  & Yes \\
		\hline
		\rowcolor{green!50} $\in [3,|M| - 1]$, odd & not \textit{disedge} & No \\
		\hline
		\rowcolor{red!30} $\in [3,|M| - 1]$, odd & \textit{disedge} & Yes \\
		\hline
		\rowcolor{red!30} $\in [|M| + 1, |M| + n_I]$ & -  & Yes \\
		\hline
		\rowcolor{red!30}  $\in [|M| + n_I + 1, n]$ & - & Yes \\
		\hline
	\end{tabular}}
	\caption{The first column denotes the value of $n - \Delta + 1$ and the second column denotes constraints on $G$. `-' denotes no restrictions on $G$. Note that only for rows three and four, we will be imposing restrictions on $G$. The final column shows if the $\AGM$ bound is tight i.e., it is tight if $\max_{R_e, |R_e| \le N, e \in E} \PMB(\q = (G, \{R_e: e \in E\}), S) = \AGM(G, S)$ and not tight otherwise. We state only the result in row $3$ in the main paper and the remaining results are deferred to the Appendix~\ref{sec:arity2lemmas}.} \label{tab:results}
}
}
\end{table*}

Each row in Table~\ref{tab:results} denotes a specific case and we prove~\eqref{eq:lowerbound} using ideas stated in the beginning of this proof. To complete the proof, we need to come up with specific  decompositions of $G$ as stated above (where $G_s$ is a collection of stars of size greater than $1$ and $G_t$ is a collection of singletons). We present these details in Appendix~\ref{sec:g-decomp}.
\end{proof}
For the sake of brevity, we present only two results from Table~\ref{tab:results} here -- $(1)$ the third row where $\AGM$ bound is not tight and $(2)$ the fourth row where we present a non-trivial construction for the lower bound. The remaining results are in Appendix~\ref{sec:arity2lemmas}.
\begin{lemma} \label{lemma2}
Consider the case from row $3$ in Table~\ref{tab:results}. $G$ is not a disedge and $s = n - \Delta + 1 \in [3, |M|]$ is odd. Then, there exists a subset $S \subseteq V$ of size $s$ such that
\begin{equation} \label{eq:lemma2-1}
\joinUB(G, N, s) \le N^{\frac{s}{2}} \le O_{n}(\cover(G, N, \Delta)).
\end{equation}
\end{lemma}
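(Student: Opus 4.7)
The proof has two matching parts. Write $s = 2k + 1$ with $k \ge 1$; because $s$ is odd and $|M|$ is even, $s \le |M| - 1$, so $|M|/2 \ge k + 1$. The plan is to exhibit a good $S \subseteq V$ of size $s$, apply $\PMB$ on the induced subquery to get $\joinUB(G, N, s) \le N^{s/2}$, and then match this with a Reed--Solomon instance that forces $\cover(G, N, \Delta) \ge N^{s/2}/2^{O(n)}$.

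For the upper bound, since $G$ is not a disedge it contains a cherry (path of length two) $u_1 - v - u_2$; using the fact that $M$ is a maximum matching (so every non-matching edge has an endpoint in $V(M)$) one can pick such a cherry so that the set $\{v, u_1, u_2\}$ meets at most two edges of $M$---a short case analysis on whether $v \in V(M)$ or $v \notin V(M)$, and on whether $v$'s matching partner appears among $\{u_1, u_2\}$, suffices. Because $|M|/2 \ge k + 1$, we can then pick $M' \subseteq M$ consisting of $k - 1$ matching edges whose endpoints avoid $\{v, u_1, u_2\}$, and set $S := \{v, u_1, u_2\} \cup V(M')$, giving $|S| = s$. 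The induced subquery $\mathcal{Q}_S$ splits into (i) the $k - 1$ disjoint matching edges in $M'$, each contributing a factor $N$ to the $\PMB$ objective under weight $1$, and (ii) a three-vertex component on $\{v, u_1, u_2\}$ that carries the two cardinality constraints $|R_{(v, u_1)}|, |R_{(v, u_2)}| \le N$ together with a third pair constraint $|\pi_{\{u_1, u_2\}}(\cdot)| \le N$ (coming either from $R_{(u_1, u_2)}$ directly when $(u_1, u_2) \in E$, or otherwise derived from the original $\PMB$ inequalities of $\mathcal{Q}$ applied to the relations incident to $\{u_1, u_2\}$ outside $S$, in the spirit of the $4$-cycle analysis of Section~\ref{sec:results_ub}). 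Putting weight $1/2$ on each of these three pair constraints is feasible in~\eqref{eq:pmb} and yields an $N^{3/2}$ bound on the three-vertex piece; multiplying, $\max_{R_e} 2^{\PMB(\mathcal{Q}, S)} \le N^{k - 1}\cdot N^{3/2} = N^{s/2}$, so $\joinUB(G, N, s) \le N^{s/2}$ by Theorem~\ref{upperboundtheorem}.

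For the matching lower bound, I would plug in a Reed--Solomon instance. Choose a prime power $q = \Theta(\sqrt{N})$ with $q \ge n$ (Bertrand's postulate), set $\Dom(w) := \mathbb{F}_q$ for every $w \in V$, and let $\mathcal{C} \subseteq \mathbb{F}_q^n$ be the RS code of Definition~\ref{def:rs} with parameters $(n, q^s, \Delta)_q$. Because $s \ge 3 \ge |e|$ for every edge $e$, the RS code projects surjectively onto every pair of coordinates, so setting $R_e := \pi_e(\mathcal{C}) = \mathbb{F}_q^{|e|}$ gives $|R_e| \le q^2 \le N$ and the join output equals the full Cartesian product $\mathbb{F}_q^n$. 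The standard sphere-covering lower bound---Hamming balls of radius $\Delta - 1$ in $\mathbb{F}_q^n$ have volume $\sum_{i \le \Delta - 1} \binom{n}{i}(q - 1)^i \le 2^{O(n)} \cdot q^{\Delta - 1}$---forces every join cover to have size at least $q^n/(2^{O(n)} q^{\Delta - 1}) = q^s/2^{O(n)} = \Omega_n(N^{s/2})$, which gives the second inequality of~\eqref{eq:lemma2-1}.

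The main obstacle is deriving the third pair constraint on $\{u_1, u_2\}$ in the $\PMB$ LP when the edge $(u_1, u_2)$ is absent from $E$: the bound $|\pi_{\{u_1, u_2\}}(\cdot)| \le N$ then does not come from any single relation and must be extracted from the original cardinality constraints by combining them through the $\PMB$ inequalities of $\mathcal{Q}$. This is exactly the step that makes $\PMB$ strictly stronger than AGM on the $4$-cycle, and verifying that the same derivation goes through for every non-disedge $G$ is the only part of the proof that genuinely uses graph structure; it is handled by a short case analysis on the local neighbourhood of the cherry $u_1 - v - u_2$.
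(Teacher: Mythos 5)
Your lower bound is correct, and in fact takes a more careful route than the paper's own proof: you observe that with $R_e := \pi_e(\mathcal{C})$ and $s \ge 2$ the pairwise projections of the Reed--Solomon code fill all of $\mathbb{F}_q^2$, so that $\J = \mathbb{F}_q^n$ (the paper asserts $\mathcal{C} = \J$, which for these $R_e$'s can only hold when $s \ge n$), and you then invoke the sphere-covering bound to force every cover to have size at least $q^n / \left(2^{O(n)} q^{\Delta-1}\right) = \Omega_n(N^{s/2})$. That is the robust way to get $\cover(G,N,\Delta) \ge \Omega_n(N^{s/2})$ here.

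The upper bound, however, has a genuine gap, and it is exactly the one you flag. When $(u_1, u_2) \notin E$, the pair constraint $|\pi_{\{u_1,u_2\}}(\cdot)| \le N$ is not present in the $\PMB$ LP of Definition~\ref{def:dbpgs} because no relation guards it, and it cannot be \emph{derived} from the cardinality constraints: take $R_{(v,u_1)} = R_{(v,u_2)} = \{a\}\times[N]$ (a single heavy value $a$ for $v$), so both relations have size $N$ yet $|\pi_{\{u_1,u_2\}}(\J)| = N^2$. With only cardinality constraints the LP on the cherry collapses to $\AGM$ and is stuck at $N^2$; this is precisely the $4$-cycle obstruction of Section~\ref{sec:results_ub}, so ``handled by a short case analysis on the local neighbourhood'' does not close it. What the paper actually does is a heavy-light decomposition of the \emph{instance}, not of the LP: either some vertex $h$ has a value with $\ge \sqrt{N}$ matches in some incident relation --- in which case there can be at most $\sqrt{N}$ such heavy values, producing a new unary relation $R_h$ carrying a genuine cardinality constraint $(\emptyset,\{h\},\sqrt{N})$ --- or every value of every vertex is light, in which case the degree constraints $(\ell, u, \sqrt{N})$ are genuine and feed a chain in the LP, each step contributing a factor $\sqrt{N}$. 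The join output of the original instance is covered by the union of the $n+1$ resulting sub-instances, each with $\PMB \le N^{s/2}$. This instance-level decomposition is what makes the $\sqrt{N}$ factor appear at all, and it is the missing ingredient in your proposal.
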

\begin{proof}
Our proof proceeds in two steps. First, we argue
\begin{equation} \label{eq:lemma2-2}
\joinUB(G, N, s) \le N^{\frac{s}{2}},
\end{equation}
followed by arguing that
\begin{equation} \label{eq:lemma2-3}
\cover(G, N, \Delta) \ge \Omega_{n} \left(N^{\frac{s}{2}} \right).
\end{equation}
Note that these two inequalites would immediately imply~\eqref{eq:lemma2-1}. We begin with proving~\eqref{eq:lemma2-2}. In particular, if we prove
\begin{equation} \label{eq:lemma2-4}
\min_{S \subseteq V:|S|=s} \max_{R_e: |R_e| \le N, e \in E} 2^{\PMB(\q,S)} \le n \cdot N^{\frac{s}{2}},
\end{equation}
then~\eqref{eq:lemma2-2} follows from Theorem~\ref{upperboundtheorem}.  To prove~\eqref{eq:lemma2-4}, we claim that there exists a $S$ with size $s$ such that
\begin{equation} \label{eq:lemma2-5}
\PMB(\q, S) \le \log(n + 1) + \dfrac{s}{2} \cdot \log{N}.
\end{equation}
Assuming the above claim is true, we have
\begin{align*}
\min_{S \subseteq V:|S|=s} \max_{R_e: |R_e| \le N, e \in E} 2^{\PMB(\q,S)} \le N^{\frac{s}{2}},
\end{align*}
which follows from~\eqref{eq:lemma2-4}. We drop the multiplicative $n$ factor since our upper and lower bounds anyway have an exponential gap in $n$. In particular, this implies $\joinUB(G, N, s) \le n \cdot N^{\frac{s}{2}}$ from Theorem~\ref{upperboundtheorem}, proving~\eqref{eq:lemma2-2}.

We now prove our claim~\eqref{eq:lemma2-5} and consider two cases.
\begin{itemize}
\item{\textbf{Case $1$}: When there exists at least one value $v_h$ for a vertex $h \in V$ and $v_h \in \Dom(h)$ such that there exists an edge $e = (h, u) \in E$ and there are $\ge \sqrt{N}$ tuples $(v_h, v_u) \in R_e$. In particular, this implies there cannot be more than $\sqrt{N}$ such values in $\Dom(h)$ in all relations incident on $h$ have size at most $N$. We call $h$ a \textit{heavy} vertex.} 
\item{\textbf{Case $2$}: For every vertex $\ell \in V$, every edge $e = (\ell, u)  \in E$ incident on $\ell$ and every value $v_{\ell} \in \Dom(\ell)$, there are at most $\sqrt{N}$ tuples such that $(v_{\ell}, v_u) \in R_e$.}
\end{itemize}
We can decompose any instance as follows -- for every vertex $h \in V$, we check if it is a {\em heavy} vertex and construct a relation on $h$ (i.e., a self-loop) with entries $R_{h}  = \{v_h \in \Dom(h):|(v_h, \cdot) \in R_e|\ge \sqrt{N}, e \in E_h\}$, where $E_h = \{e: h \in e, e \in E\}$. In particular, for each value $v_h \in R_{h}$, there exists at least one relation $R_e$ incident on $h$ such that there are at least $\sqrt{N}$ tuples of the form $(v_h, \cdot) \in R_e$. We now update $R_{e}$ for every relation $e \in E$ incident on $h$ as follows -- $R_{e} \leftarrow R_{e} \setminus \{(v_h, \cdot) \in R_{e}: v_h \in R_{h} \}$. In particular, we remove all tuples in $R_e$ where $\pi_{h}(R_e) = v_h$ for every value $v_h \in R_h$. Let the updated instance be denoted by $\q_h = (G' = (V' = V, E' = (E \setminus\{e \in E:  e \ni h\}) \cup \{h\}), \{R_h\} \cup \{R_e: e \in E' \})$. 

Note that once this is done for every $h \in V$, we have $n$ instances of the form $\q_h$. Further, for each relation $R_e, e \in E$ (these denote the updated relations from the process of removing heavy vertices above) and each vertex incident on the relation, every value of the vertex satisfies the condition in \textbf{Case $2$}. Let this instance denoted by $\q' = (G = (V, E), \{R_e: e \in E\})$. Here, $R_e$s denote the updated relations as defined above and we claim that $\J \subseteq J_{Q'} \cup_{h \in V} J_{Q_{h}}$, where $J_{Q'} =\ \Join_{e \in E} R_e$ and $J_{Q_h} =\ \Join_{e \in E'} R_e$. We prove this claim in Appendix~\ref{sec:heavy-light-d}. Assuming it is true, there are $n + 1$ cases in total and for each case, we only need to pick a subset $S$ to prove
\[\PMB(\q_{h}, S) \le \dfrac{s}{2} \cdot \log{N} \text{ and } \PMB(\q', S) \le \dfrac{s}{2} \cdot \log{N}.\]
Finally, we compute $\sum_{h \in V} \PMB(\q_{h}, S) + \PMB(\q', S)$ (there are $n + 1$ of them), satisfying~\eqref{eq:lemma2-5}, as desired.

\textbf{Case $1$}: Our goal is to pick a subset $S$ with size $s$. Using Definition~\ref{defn:dc}, we can write the cardinality constraint $|R_h|$ on $h$ as $(\emptyset, h, \sqrt{N})$. We pick $S$ as follows -- $S = \{h\} \cup V(M')$, where $M' \subseteq M $ is a subset of edges where no edge is incident on $h'$ and $|M'| = s - 1$. The constraint $s< |M|$ ensures the existence such a $S$ and $S$ can be picked in $\poly{(|G|)}$ time. 

We now compute $\PMB(\q_h, S)$, which is given by
\[\frac{1}{2} \log(N) + \frac{|M'|}{2} \log(N) = \frac{1}{2} \log(N) + \frac{s - 1}{2} \log(N) = \frac{s}{2} \log(N).\]
If we argue that $\delta_{h | \emptyset} = 1$ and $\delta_{(u, v) | \emptyset} = 1$ for every $e = (u, v) \in M'$ is a feasible solution to the LP~\eqref{eq:pmb}, then the stated result follows. We prove this in Appendix~\ref{sec:heavy-case}. Since our argument is independent of $h$, the same $\PMB(\q_h, S)$ is true for every $h \in V$ and we have $n$ bounds in total. We now look at the second case.

\textbf{Case $2$}: Our goal to pick a subset $S$. Using Definition~\ref{defn:dc}, there are two degree constraints on each $e = (\ell, u) \in E$ -- $(\ell, u, \sqrt{N})$ and $(u, \ell, \sqrt{N})$. We pick $S$ in such a way that if any subset of vertices $S' \subseteq S$ belongs to the same connected component $\cc$ in $G$, then every vertex in $S'$ is reachable from every other vertex in $S'$ using only edges in $E_{S'} = \{e \in E: e \subseteq S'\}$. Further, if any subset of vertices $S' \subseteq S$ belongs to the same connected component $\cc$ in $G$, then $|S'| \ge 2$. We argue why such a pick is possible in Appendix~\ref{sec:s-pick}.

We now compute $\PMB(\q', S)$. For simplicity, we assume that all vertices in $S$ belong to the same connected component in $G$ and the set of edges $E_{S} = \{e \in E: e \subseteq S\}$ is a path of the form $\{ e_1 = (p_1, p_2), e_2 = (p_2, p_3), \dots, e_{s} = (p_{s - 1}, p_{s})\}$ where $p_i \in S$ for every $i \in [s]$. Note that $|R_{e_1}| \le N, |R_{e_2}| \le N$, whereas $|R_{e_1} \Join R_{e_2}| \le N^{3/2}$ since 
there are only at most $\sqrt{N}$ tuples of the form $\{(v_{p_2}, v_{p_3}) \in R_{e_2}: v_{p_3} \in \Dom(p_3)\}$ for a fixed $v_{p_2} \in \Dom(p_2)$. It turns out that we can generalize this observation to bound $\PMB(\q', S)$ by
\[\log(N) + \frac{s - 2}{2} \log(N) = \frac{s}{2} \log(N).\]
If we argue that $\delta_{e_1 = (p_1, p_2) | \emptyset} = 1$ with $p_1, p_2 \in S, e_1 \in E(S)$ and $\delta_{p_{i + 1} | p_{i}} = 1$ for every degree constraint $(p_{i}, p_{i + 1}, \sqrt{N})$ guarded by the relation $e_i = (p_{i}, p_{i + 1}) \in E_{S}, i \ge 2$ is a feasible solution to the LP~\eqref{eq:pmb}, then the stated result follows. We prove this in Appendix~\ref{sec:light-case}. It turns out that we can generalize this argument to the case when $E_{S}$ is a tree and when $S$ has multiple conneced components, we take the sum of the bounds obtained on each connected component. Due to space constraints, we defer these details to Appendix~\ref{sec:general-es}. 

We now compute $\PMB(\q, S)$, which is indeed the sum of the bounds obtained on $\PMB(\q_{h}, S)$ for every $v \in V$ and $\PMB(\q', S)$, given by $\log(n + 1) + \frac{s}{2} \log(N)$, proving~\eqref{eq:lemma2-4} as desired.

To complete the proof, we argue~\eqref{eq:lemma2-3}. As shown in Section~\ref{sec:lb}, we only need to construct an instance $\q = (G, \{R_e: e \in E, |R_e| \le N\})$ and a join cover $\J$. Let $q$ be the largest power of $2$ such that $q\le \sqrt{N}$. In particular, we have $q \ge \frac{\sqrt{N}}{2}$. We now instantiate a $\left(n, q^{(n - \Delta + 1)}, \Delta\right)_{q}$ RS code $\mathcal{C}$ over $\mathbb{F}_{q}$. For each edge $e\in E$, we set $R_e =\mathcal{C}_{e}$, where $\mathcal{C}_e \subseteq \mathbb{F}_{q} \times \mathbb{F}_{q} \le N$ implying that $|R_e| = |\mathcal{C}_e| \le q^2 \le N$, as desired. Note that $\mathcal{C} = \J =\ \Join_{e \in E} R_e$. Finally, we have 
\begin{align*}
\cover(G, N, \Delta) \ge |\mathcal{C}| = q^{n - \Delta + 1} = \Omega_{n}\left(N^{\frac{\s}{2}} \right),
\end{align*}
which proves~\eqref{eq:lemma2-3}, as required.
\end{proof}
We now prove the result corresponding to the fourth row in Table~\ref{tab:results}.
\begin{lemma} \label{lemma4}
Consider the case from row $4$ in Table~\ref{tab:results}. $G$ is a {\em disedge} and $s = \s \in [3, |M| - 1]$ is odd. Then, there exists a subset $S \subseteq V$ with size $s$ such that
\begin{equation} \label{eq:lemma4}
\joinUB(G, N, s) \le N^{\frac{s + 1}{2}} \le O_{n}(\cover(G, N, \Delta)).
\end{equation}
\end{lemma}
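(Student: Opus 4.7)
My plan is a two-step argument: upper bound $\joinUB(G,N,s)$ by choosing a good subset $S$ and invoking Theorem~\ref{upperboundtheorem}, and lower bound $\cover(G,N,\Delta)$ by constructing an explicit ``diagonal'' instance and applying sphere-covering in the induced weighted metric.

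Write the disedge as $k$ matching edges $e_1,\ldots,e_k$ and $n_s$ singletons, so $n=2k+n_s$ and $|M|=2k$. For the upper bound, since $s\le|M|-1=2k-1$ is odd, $(s-1)/2\le k-1$, so I can choose $S\subseteq V$ containing the $s-1$ vertices of $(s-1)/2$ complete matching edges together with one extra vertex $v^\ast$---a singleton if $n_s\ge 1$, else one endpoint of an unused matching edge. The AGM LP~\eqref{eq:ubs} on $(G,S)$ admits the feasible integral solution assigning weight $1$ to each of these $(s+1)/2$ relations that together cover $S$ (namely $x_{e_i}=1$ for each included matching edge and $x=1$ for the unique original edge or singleton covering $v^\ast$). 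Hence $\AGM(G,S)\le\frac{s+1}{2}\log N$, and Theorem~\ref{upperboundtheorem} yields $\joinUB(G,N,s)\le N^{(s+1)/2}$.

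For the lower bound, fix a prime $q\in[N/2,N]$ (Bertrand's postulate) and set $R_{e_i}=\{(x,x):x\in\mathbb{F}_q\}$ and $R_{s_j}=\mathbb{F}_q$; each has size $q\le N$. Since $G$ is a disedge the join is a Cartesian product, so every tuple of $\J$ has the form $(x_1,x_1,\ldots,x_k,x_k,w_1,\ldots,w_{n_s})$ with each reduced coordinate in $\mathbb{F}_q$, giving $|\J|=q^{k+n_s}$. The diagonal constraint forces the Hamming distance between two tuples of $\J$ to equal $2\delta_u+\delta_w$, where $\delta_u$ and $\delta_w$ count disagreements among the reduced edge and singleton coordinates respectively. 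Hence for every $u\in\J$,
\[
|B(u,\Delta-1)\cap\J|\;=\;\sum_{2\delta_u+\delta_w\le\Delta-1}\binom{k}{\delta_u}\binom{n_s}{\delta_w}(q-1)^{\delta_u+\delta_w}\;\le\;2^{O(n)}\cdot q^{M},
\]
where $M=\max\{\delta_u+\delta_w:2\delta_u+\delta_w\le\Delta-1,\ 0\le\delta_w\le n_s\}$. Since any join cover $\calU\subseteq\J$ must satisfy $|\calU|\cdot\max_u|B(u,\Delta-1)\cap\J|\ge|\J|$, the sphere-covering bound gives $\cover(G,N,\Delta)\ge q^{(k+n_s)-M}/2^{O(n)}$.

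The crux is evaluating the exponent $k+n_s-M$. Because $s\le 2k-1$ we have $\Delta\ge n_s+2$, so the maximizer of $M$ is $(\delta_w,\delta_u)=(n_s,\lfloor(\Delta-1-n_s)/2\rfloor)$ and $M=\lfloor(\Delta-1+n_s)/2\rfloor$. The odd parity of $s=2k+n_s-\Delta+1$ forces $\Delta+n_s$ to be even, so $\Delta-1+n_s$ is odd and $M=(\Delta+n_s-2)/2$; substituting,
\[
k+n_s-M\;=\;k+\frac{n_s-\Delta+2}{2}\;=\;\frac{n-\Delta+2}{2}\;=\;\frac{s+1}{2},
\]
whence $\cover(G,N,\Delta)\ge q^{(s+1)/2}/2^{O(n)}=\Omega_n(N^{(s+1)/2})$. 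The main obstacle is precisely this parity step: the construction and the sphere-covering bound are valid for every $s$, but only odd $s$ makes the floor evaluate to an integer matching $(s+1)/2$; for even $s$ the same argument would lose an extra factor of $\sqrt N$, exceeding the $2^{O(n)}$ slack in $O_n(\cdot)$. The hypothesis $s\le|M|-1$ is equally crucial, as it places us in the regime $\Delta\ge n_s+2$ where the maximizer of $M$ takes the above form.
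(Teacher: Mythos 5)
Your upper bound is essentially the paper's: pick $(s-1)/2$ matching edges and one extra vertex $v^\ast$ for $S$, then exhibit a feasible AGM/LP solution of value $\frac{s+1}{2}\log N$. (The paper phrases this through the $\PMB$ LP~\eqref{eq:pmb}, but for a disedge the degree-constraint machinery collapses to the same integral edge-cover you write down.)

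Your lower bound, however, takes a genuinely different and in some ways more self-contained route. The paper instantiates an RS code $\mathcal{C}$ on one endpoint of each matching edge, extends it over the singletons via Lemma~\ref{extendedrs}, duplicates coordinates to get a code $\mathcal{C}'$ of length $n$ and minimum distance $\Delta$ (the parity content of Claim~\ref{claim:delta-ge} makes $\Delta_s/2$ an integer so the RS construction exists), and then asserts $\cover(G,N,\Delta)\ge|\mathcal{C}'|$. You instead take the \emph{full} diagonal instance $R_{e_i}=\{(x,x)\}$, $R_{s_j}=\mathbb{F}_q$, for which $\J$ is a genuine Cartesian product of size $q^{k+t}$, and lower-bound the cover size by the Hamming/sphere-covering bound after computing the ball volume in the induced weighted metric $2\delta_u+\delta_w$. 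The parity of $s$ reappears in your argument in the exact same role (it makes $\Delta-t$ even so the floor in the exponent lands on $(\Delta+t-2)/2$), and the constraint $s\le|M|-1$ is used to guarantee $\Delta\ge t+2$, again mirroring the role of Claim~\ref{claim:delta-ge}. What your route buys: it dispenses with the RS/extended-code apparatus entirely and, more importantly, it directly lower-bounds the cover of the \emph{actual} join output $\J$. The paper's $\mathcal{C}'$ is a proper subset of $\J$ whenever $t\ge 1$ or $s<|M|-1$ (the edge and singleton projections of $\mathcal{C}'$ are all of $\{(x,x)\}$, respectively $\mathbb{F}_q$, so the join reconstitutes the full diagonal product), and minimum distance $\Delta$ of $\mathcal{C}'$ alone does not immediately imply that a $(\Delta-1)$-cover of $\J$ drawn from $\J$ must have size $\ge|\mathcal{C}'|$—one would need $\mathcal{C}'$ to have distance about $2\Delta$, or a volume argument like yours. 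So your sphere-covering computation is not only correct but actually supplies the justification the paper's final inequality implicitly relies on. The one thing worth making explicit in your writeup is the constraint $\delta_u\le k$: it is satisfied at your maximizer since $\delta_u=(\Delta-t-2)/2=(2k-s-1)/2\le k-2$ once $s\ge 3$, but it should be stated.
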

For proving this result, we state a few known results related to codes, which we will use in proving the lower bound. 
\begin{lemma} \label{extendedrs}
Given a $(n, q^{n - \Delta + 1}, \Delta)_{q}$ RS code $\mathcal{C}$, it can be extended by adding $t$ more evaluation points to obtain a $(n + t, q^{n - \Delta + 1}, \Delta + t)$ RS Code, assuming $q \ge n+t$.
\end{lemma}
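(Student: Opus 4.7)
The plan is to give the standard Reed--Solomon extension argument. Starting from the code $\mathcal{C}$ with distinct evaluation points $\alpha_1, \ldots, \alpha_n \in \mathbb{F}_q$, I would first exploit the hypothesis $q \ge n + t$ to select $t$ additional distinct evaluation points $\alpha_{n+1}, \ldots, \alpha_{n+t} \in \mathbb{F}_q \setminus \{\alpha_1, \ldots, \alpha_n\}$. Define the new encoding $E'(\mathbf{m}) = (f_m(\alpha_1), \ldots, f_m(\alpha_n), f_m(\alpha_{n+1}), \ldots, f_m(\alpha_{n+t}))$, where $f_m(X) \in \mathbb{F}_q[X]$ is the same polynomial of degree at most $n - \Delta$ used by $\mathcal{C}$ (see Definition~\ref{def:rs}). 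Let $\mathcal{C}'$ be the image of this map. By construction, each codeword of $\mathcal{C}'$ is a length-$(n+t)$ tuple over $\mathbb{F}_q$.

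Next, I would check the three parameters one by one. The block length is $n + t$ by inspection. For the size, observe that the message space is still $\mathbb{F}_q^{n - \Delta + 1}$ and the map $\mathbf{m} \mapsto f_m$ is injective; since $q \ge n + t > n - \Delta$, a nonzero polynomial of degree at most $n - \Delta$ cannot vanish on all $\alpha_i$'s, so $E'$ is injective as well. Hence $|\mathcal{C}'| = q^{n - \Delta + 1}$, matching the claimed size.

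The only substantive step is the minimum distance. Take two distinct messages $\mathbf{m} \ne \mathbf{m}'$ and the corresponding distinct polynomials $f_m, f_{m'}$, each of degree at most $n - \Delta$. The difference $f_m - f_{m'}$ is a nonzero polynomial of degree at most $n - \Delta$ in $\mathbb{F}_q[X]$, hence has at most $n - \Delta$ roots in $\mathbb{F}_q$. In particular it vanishes on at most $n - \Delta$ of the $n + t$ evaluation points, so $E'(\mathbf{m})$ and $E'(\mathbf{m}')$ agree in at most $n - \Delta$ coordinates and therefore satisfy
\[
\Dist(E'(\mathbf{m}), E'(\mathbf{m}')) \ge (n + t) - (n - \Delta) = \Delta + t.
\]
This establishes the minimum distance bound, completing the verification that $\mathcal{C}'$ is an $(n + t, q^{n - \Delta + 1}, \Delta + t)_q$ RS code.

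There is no real obstacle here; the only thing to be careful about is making sure the $t$ new evaluation points can actually be chosen as distinct elements outside $\{\alpha_1, \ldots, \alpha_n\}$, which is precisely what the hypothesis $q \ge n + t$ guarantees. Everything else reduces to the textbook fact that a nonzero degree-$d$ polynomial over a field has at most $d$ roots.
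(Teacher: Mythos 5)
Your proof is correct and is the standard Reed--Solomon extension argument: pick $t$ fresh evaluation points (possible since $q \ge n + t$), keep the same degree-$\le n - \Delta$ polynomial encoding, and bound the distance via the fact that a nonzero polynomial of degree at most $n - \Delta$ has at most $n - \Delta$ roots. The paper itself states Lemma~\ref{extendedrs} without proof, treating it as a known fact, so there is no paper proof to compare against; your write-up is exactly what a reader would expect the omitted argument to be.
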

\begin{lemma}\label{lemma:linearcode}
The Hamming weight $\wt(\mathbf{t})$ of a tuple $\mathbf{t}$ in a $(n, M, \Delta)$-code $\mathcal{C}$ is the number of non-zero elements in $\mathbf{t}$. The following is true:
\[\Delta = \min_{\mathbf{t}; \mathbf{t} \in \mathcal{C}, \mathbf{t} \neq \mathbf{0}}\wt(\mathbf{t}).\]
Recall that any pair of tuples in $\mathcal{C}$ differ in at least $\Delta$ positions.
\end{lemma}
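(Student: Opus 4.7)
The plan is to prove this by establishing two opposing inequalities, both of which crucially rely on $\mathcal{C}$ being a linear code (which both the Reed--Solomon and Chinese Remainder Theorem codes are, over $\mathbb{F}_q$ and over $\mathbb{Z}_M$ respectively). The argument is standard in coding theory but I will adapt it to the notation of the paper.

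First, I would establish $\min_{\mathbf{t} \in \mathcal{C}, \mathbf{t} \neq \mathbf{0}} \wt(\mathbf{t}) \ge \Delta$. Since $\mathcal{C}$ is linear, the all-zero tuple $\mathbf{0}$ is a codeword (e.g., it is the $\RS$-encoding of the all-zero message, or the $\CRT$-encoding of $0 \in \mathbb{Z}_M$). For any nonzero $\mathbf{t} \in \mathcal{C}$, applying $\Dist$ to the pair $(\mathbf{t}, \mathbf{0})$ gives $\Dist(\mathbf{t}, \mathbf{0}) = \wt(\mathbf{t})$, because the coordinates at which $\mathbf{t}$ and $\mathbf{0}$ disagree are precisely the nonzero coordinates of $\mathbf{t}$. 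By the defining property of an $(n, M, \Delta)$-code stated in the excerpt (any pair of tuples differ in at least $\Delta$ positions), this quantity is at least $\Delta$.

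Second, I would establish $\min_{\mathbf{t} \in \mathcal{C}, \mathbf{t} \neq \mathbf{0}} \wt(\mathbf{t}) \le \Delta$. Pick any pair of distinct codewords $\mathbf{c}, \mathbf{c'} \in \mathcal{C}$ achieving $\Dist(\mathbf{c}, \mathbf{c'}) = \Delta$ (such a pair exists since $\Delta$ is tight by the code's definition). By linearity, $\mathbf{c} - \mathbf{c'} \in \mathcal{C}$, and this difference is nonzero since $\mathbf{c} \neq \mathbf{c'}$. The Hamming weight of $\mathbf{c} - \mathbf{c'}$ is exactly the number of coordinates where $\mathbf{c}$ and $\mathbf{c'}$ disagree, namely $\Delta$. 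Thus the minimum nonzero weight is at most $\Delta$. Combining the two inequalities yields the stated equality.

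The main obstacle, though not computationally hard, is subtle: the statement as written refers to a generic $(n,M,\Delta)$-code, but the equality between minimum distance and minimum weight is really a property of \emph{linear} codes. So the proof must explicitly invoke linearity, justifying both closure under differences and containment of $\mathbf{0}$. For $\RS$ codes this is immediate because $f_{\mathbf{m}_1}(X) - f_{\mathbf{m}_2}(X) = f_{\mathbf{m}_1 - \mathbf{m}_2}(X)$ as polynomials in $\mathbb{F}_q[X]$, and evaluation is $\mathbb{F}_q$-linear. For $\CRT$ codes, one has to observe that coordinate-wise $\CRT(\mathbf{m}_1) - \CRT(\mathbf{m}_2) = \CRT(\mathbf{m}_1 - \mathbf{m}_2 \bmod M)$, where the subtraction on the $i$-th coordinate is taken modulo $q_i$; this is the $\mathbb{Z}_M$-linearity of the reduction maps. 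Once linearity is recorded, the two-inequality argument above closes the lemma.
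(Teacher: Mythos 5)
The paper itself offers no proof of this lemma---it is quoted as a ``known result'' from coding theory and invoked only once, in the proof of Lemma~\ref{lemma4}---so your task was essentially to supply the standard argument, and your two-inequality proof is exactly that: $\wt(\mathbf{t}) = \Dist(\mathbf{t},\mathbf{0}) \ge \Delta$ for nonzero codewords because $\mathbf{0}\in\mathcal{C}$, and conversely a distance-$\Delta$ pair $(\mathbf{c},\mathbf{c'})$ yields the nonzero codeword $\mathbf{c}-\mathbf{c'}$ of weight $\Delta$. You are also right that the statement is false for a generic $(n,M,\Delta)$-code and must be read as a statement about \emph{linear} codes; in the paper the lemma is only ever applied to the code $\mathcal{C'}$ of Lemma~\ref{lemma4}, obtained from an RS code by repeating coordinates and extending evaluation points, which is an $\mathbb{F}_q$-linear map of the message space, so your restriction covers every use the paper makes of the lemma.

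One concrete error in your aside, however: the CRT code of Definition~\ref{defn:crt} is \emph{not} closed under coordinatewise subtraction, because $M=\prod_{i=1}^{\s} q_i$ involves only the first $\s$ primes, while the encoding reduces modulo all $n$ of them. For a coordinate $i>\s$ the prime $q_i$ does not divide $M$, so $m\mapsto m\bmod q_i$ on $\mathbb{Z}_M$ is not a group homomorphism; e.g.\ with $q_1=2,q_2=3,q_3=5$, $M=6$, the messages $m_1=4,m_2=5$ give $(m_1-m_2)\bmod M=5$, whose third coordinate is $0$, whereas $(4\bmod 5)-(5\bmod 5)=4$. Hence your identity $\CRT(\mathbf{m}_1)-\CRT(\mathbf{m}_2)=\CRT(\mathbf{m}_1-\mathbf{m}_2\bmod M)$ fails in general, and the minimum-weight characterization does not transfer to CRT codes this way. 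This does not damage the lemma as the paper uses it (only RS-derived linear codes appear there); for CRT codes the distance guarantee is instead proved directly, by noting that two messages agreeing on a set of moduli whose product is at least $M$ must coincide, so distinct codewords agree in at most $\s-1$ coordinates.
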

We define $\Delta_s = (n_s + 1) - (n - \Delta + 1)$, where $n_s = |M|$ and claim the following, 
\begin{claim} \label{claim:delta-ge}
$\Delta_s$ is a positive even when $G$ is a {\em disedge} and $s = \s \in [3, |M| - 1]$ is odd.
\end{claim}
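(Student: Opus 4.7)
The plan is to unfold the definitions and observe that the claim reduces to a parity/range computation. Writing $n_s = |M|$ and $s = n-\Delta+1$, we have
\[
\Delta_s \;=\; (n_s + 1) - s \;=\; |M| + 1 - s .
\]
So the whole task is to verify that $|M|+1-s$ is (i) strictly positive and (ii) even, under the hypotheses that $s \in [3, |M|-1]$ is odd and $G$ is a disedge.

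For positivity, I would simply use the hypothesis $s \le |M|-1$, which gives $\Delta_s = |M|+1-s \ge 2 > 0$. For evenness, the key observation is that $|M|$ (the total number of vertices spanned by the maximum matching $M$) is even: by the paper's convention $|M|$ counts vertices in $M$, and a matching contributes exactly two vertices per edge, so $|M| \equiv 0 \pmod 2$. Combining this with the hypothesis that $s$ is odd, $|M|-s$ is odd and hence $\Delta_s = |M|+1-s$ is even.

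The disedge hypothesis on $G$ is not needed for the parity/positivity statement itself; it only enters to make this the case that actually appears in row~4 of Table~\ref{tab:results} and to ensure that the decomposition that defined $n_s = |M|$ in Section~\ref{sec:arity2proof} is well-posed. Since the claim is purely arithmetic once these definitions are in place, I do not anticipate any real obstacle, and the proof should fit in two lines: one for positivity using $s \le |M|-1$, and one for evenness using the parity of $|M|$ together with the parity of $s$.
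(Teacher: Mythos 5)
Your proposal is correct and matches the paper's own argument: both unfold $\Delta_s = (|M|+1) - s$, use that $|M|$ is even (the paper's convention counts vertices of $M$, two per matching edge) and $s$ is odd to get evenness, and use $s \le |M| - 1$ to get positivity. Your version is slightly more explicit about why $|M|$ is even and gives the quantitative bound $\Delta_s \ge 2$, but the substance is the same.
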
	
Assuming the above claim is true, we are ready to prove Lemma~\ref{lemma4}.
\begin{proof} [Proof of Lemma~\ref{lemma4}]
Our proof proceeds in two steps. First, we argue
\begin{equation} \label{eq:lemma4-1}
\joinUB(G, N, s) \le  N^{\frac{s + 1}{2}}
\end{equation}
followed by
\begin{equation} \label{eq:lemma4-2}
\cover(G, N, \Delta) \ge \Omega_{n}\left(N^{\frac{s + 1}{2}} \right).
\end{equation}
Note that these two inequalities would immediately imply~\eqref{eq:lemma4}. We begin with proving~\eqref{eq:lemma4-1}. In particular, if we prove
\begin{equation} \label{eq:lemma4-3}
\min_{S \subseteq V, |S| = s}  \max_{R_e: |R_e| \le N, e \in E} 2^{\PMB(\q, S)} \le N^{\frac{s + 1}{2}},
\end{equation}
then~\eqref{eq:lemma4-1} follows from Theorem~\ref{upperboundtheorem}. If we can show that there exists a $S$ of size $s$ such that
\begin{equation} \label{eq:lemma4-4}
2^{\PMB(\q, S)} \ge \frac{s + 1}{2} \cdot \log(N)
\end{equation}
then using Definition~\ref{def:dbpgs}, we have $\min_{S \subseteq V, |S| = s}  \max_{R_e: |R_e| \le N, e \in E} 2^{\PMB(\q, S)} \le N^{\frac{s + 1}{2}}$, which in turn implies $\joinUB(G, N, s) \le N^{\frac{s + 1}{2}}$ from Theorem~\ref{upperboundtheorem}, proving~\eqref{eq:lemma4-1}. 
	
We now prove~\eqref{eq:lemma4-4}. Towards this end, we pick our subset $S$. We pick as many edges as possible from $M$ and for each edge, we add its endpoints to $S$, until we are left with only one vertex $\{h\}$ to pick. We can now pick $\{h\}$ from any of the unpicked vertices and add it to $S$. The constraints $s < |M|$ and is odd ensure the existence of one such choice of $S$.
	
We compute $\PMB(\q, S)$, which is given by $\frac{s + 1}{2} \cdot \log(N)$ corresponding to a feasible solution $\delta_{e | \emptyset} = 1$ for every $e \in M, e \cap S \neq \emptyset$ and $\delta_{e | \emptyset} = 0$ for all the other edges $e \in E \setminus \{e \in M, e \cap S \neq \emptyset \}$. 
	
We now argue why this is a feasible solution. Note that there are constraints for every $v \in S$ of the form $\sum_{e \in E: v \ni e} \delta_{e | \emptyset} \ge 1$. We set $\delta_{e | \emptyset} = 1$ for every $e \in M, e \cap S \neq \emptyset$. This satisfies all the constraints above since for every $v \in S$, there is a unique edge $e \in M$ incident on $v$ (this follows from our picking algorithm where we pick only matching edges and the fact that $G$ is a disjoint union of a matching and singletons). For all the remaining edges, we can set $\delta_{e | \emptyset} = 0$ and we have a feasible solution.
	
To complete the proof, we prove~\eqref{eq:lemma4-2}. We only need to construct an instance $\q = (G, \{R_e: e \in E, |R_e| \le N\})$ and a join cover $\J$. Let $q$ be the largest power of $2$ such that $q \le N$. In particular, we have $q \ge \frac{N}{2}$. We now instantiate a $\left(\frac{n_s}{2}, q^{\frac{n_s - \Delta_s}{2} + 1}, \frac{\Delta_s}{2} \right)_{q}$ RS code $\mathcal{C}$ over $\mathbb{F}_{q}$. Note that we apply $\mathcal{C}$ on only one endpoint $v(e)$ for every edge $e = (u(e), v(e)) \in M$. Since $\mathcal{C}$ is a RS code, it can be extended over $|V(G_t)|$ vertices (by Lemma~\ref{extendedrs}). We now define a new code $\mathcal{C'}$:
\[
\mathcal{C'}(w) = 
\begin{cases}
\mathcal{C}(w) & \textit{ if } w \in V(G_t) \\
\mathcal{C}(v(e)) & \textit{ if } w = u(e) \textit{ or } w = v(e).\\
\end{cases}
\]
To calculate the distance of $\mathcal{C'}$, we start with a non-zero message $\myvec{m}$ before extension. We already know that $C(\myvec{m})$ has a Hamming weight $\frac{\Delta_{s}}{2}$ from our definition of $\mathcal{C}$. After extension, we get a Hamming weight of $\frac{\Delta_{s}}{2} + n_t$. Recall from our definition of $\mathcal{C'}$ that we end up duplicating only vertices in $G_s$. All vertices in $G_t$ are extended as-is from $\mathcal{C}$. As a result, for $\mathcal{C'}$, we have a Hamming weight of $2 \cdot \frac{\Delta_{s}}{2} + n_t$, resulting in a total Hamming distance of $\Delta = \Delta_s + n_t$ (using Lemma~\ref{lemma:linearcode}). Thus, $\mathcal{C'}$ is a $\left(n = n_s + n_t, q^{\frac{n_s - \Delta_s}{2} + 1}, \Delta \right)_{q}$ code.
	
For each edge $e \in M$, we set $R_e = \mathcal{C'}_{e}$, where $\mathcal{C'}_e \subseteq \{(i, i) | i \in \mathbb{F}_q\}$ implying that $|R_e| = |\mathcal{C}_e| \le N$. Observe that for every $e \in G_t$ (only `self-loops'), we set $R_e = \mathcal{C}_e$, where $\mathcal{C}_{e} \subseteq \{i | i \in \mathbb{F}_q\}$ implying that $|R_e| = |\mathcal{C}_e| \le N$. Finally, we get $\cover(G, N, \Delta) \ge |\mathcal{C}| = q^{\frac{n_s - \Delta_s}{2} + 1} = q^{\frac{n_s+n_t- \Delta}{2} + 1} = \Omega_{n}\left(N^{\frac{s + 1}{2}} \right)$, which proves~\eqref{eq:lemma4-2}, completing the proof.
\end{proof}
We conclude this section by proving Claim~\ref{claim:delta-ge}.
\begin{proof} [Proof of Claim~\ref{claim:delta-ge}]
We start by noting that $\s$ is odd and $n_{s}$ is even (since $n_s = |M|$). Note that this implies $n_s + 1$ is odd and thus we have $\Delta_{s} = (n_s + 1) - (n - \Delta + 1)$ is even. We still need to argue that $\Delta_{s} > 0$. For this, we only need to show $n_s = |M| > n - \Delta$, which follows from our initial assumption that $|M| > n - \Delta + 1$. 
\end{proof}
\section{General Hypergraphs $G$} \label{sec:gen-arity}
In this section, we extend our results to general hypergraphs $G$. We prove the following result.
\begin{theorem} \label{generalarityubtheorem}
For all hypergraphs $G$, $N$, $\Delta$, we have
\begin{equation} \label{eq:gubt1}
\joinUB(G, N, \s)  \le \min\left(O_{n}(N \cdot \cover(G, N, \Delta)^{10.37}), O_{n}(N^{O(\log(s))} \cdot \cover(G, N, \Delta)^{3.73})\right). 
\end{equation}
and 
\begin{equation} \label{eq:glb1}
\frac{\joinUB(G, N, \s)}{\cover(G, N, \Delta)} \ge N^{1 + \frac{1}{\bar{e}} - \delta}
\end{equation}
for some small constant $\delta > 0$.
\end{theorem}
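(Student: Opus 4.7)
The plan is to establish the upper bound~\eqref{eq:gubt1} and the lower bound~\eqref{eq:glb1} by independent constructions and arguments.

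For the upper bound~\eqref{eq:gubt1}, Theorem~\ref{thm:proj-cover} already gives $\cover(G, N, \Delta) \le \joinUB(G, N, \s)$, and my goal is the reverse direction with polynomial slack. The two-term minimum in~\eqref{eq:gubt1} suggests combining two complementary strategies whose strengths lie in different regimes. For the bound with exponent $10.37$ on $\cover$ and only a linear $N$ factor, I would use a direct lifting argument: given an arbitrary instance $\q$ with $|R_e| \le N$, take the subset $S$ attaining $\joinUB(G, N, \s)$, upper bound $|\pi_S(\J)|$ in terms of $\cover$ via a Hamming ball packing argument applied to an optimal packing $\mathcal{C} \subseteq \J$ (the same object used in the proof of Theorem~\ref{thm:proj-cover}), and then lift each projected tuple back to $\J$ at a cost of at most $N$ per missing attribute. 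The constant $10.37$ would emerge from a careful analysis of the LP relating degree constraints on projected relations to those on the original ones, combined with the AGM-type upper bound on the lifted tuples. For the bound with exponent $3.73$ and factor $N^{O(\log \s)}$, I would instead apply a dyadic/recursive scheme: partition $V$ into $O(\log \s)$ blocks of geometrically increasing size and iteratively apply the first-stage bound, paying an $N^{O(1)}$ factor per level while shrinking the exponent on $\cover$.

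For the lower bound~\eqref{eq:glb1}, the claim is that there exists a family of hypergraphs $G$ (together with $N$ and $\Delta$) achieving the stated ratio; this must be an explicit construction showing that our general proof technique (bounding $\cover$ via $\joinUB$) cannot be substantially improved. Given the appearance of $\bar e \approx 2.72$, I anticipate using the CRT codes from Definition~\ref{defn:crt}, with the domain sizes $q_1 < q_2 < \cdots < q_n$ taken to be the first $n$ primes. The construction would be: (i) fix a hypergraph $G$ of moderate arity (likely the simplest nontrivial case, e.g.\ a clique or matching-based structure), (ii) set $R_e = \pi_e(\mathcal{C})$ for a CRT code $\mathcal{C}$ of minimum distance at least $\Delta$, with $|R_e| \le N$ by appropriate scaling of the primes, (iii) show $\cover(G, N, \Delta) \le |\mathcal{C}| = \prod_{i \le \s} q_i$, and (iv) argue that for every subset $S$ of size $\s$ the projected join $\Join_{e \in E} \pi_{e \cap S}(R_e)$ is substantially larger than $|\mathcal{C}|$ via an AGM computation on $G_S$. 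The factor $\bar e$ should emerge from optimizing the choice of primes: by the prime number theorem the number of primes below $P$ is asymptotically $P/\ln P$, and balancing the code size $\prod q_i$ against the AGM bound on the projection leads to an optimum involving $\bar e$; the $-\delta$ absorbs the lower-order error from PNT.

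The main obstacle I anticipate is matching the specific numerical constants in both~\eqref{eq:gubt1} ($10.37$ and $3.73$) and~\eqref{eq:glb1} ($1 + 1/\bar e$). For the upper bound, the constants almost surely come from tightly analyzing the polymatroid LP under projection, and this will require a careful case split on the structure of $S$ and on how degree constraints behave when some attributes are missing; the recursive variant will further require controlling how errors compound across $O(\log \s)$ levels. For the lower bound, the hard part is constructing the instance so that simultaneously $(a)$ the CRT code $\mathcal{C}$ has small covering radius in the induced metric, making $\cover$ small, and $(b)$ every $\s$-subset of attributes forces a large projected join output, making $\joinUB$ large; these two requirements pull in opposite directions, and extracting the $\bar e$ constant precisely will require a careful primes-density computation to match the numerator and denominator exponents. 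I expect this lower-bound side to be the main technical hurdle.
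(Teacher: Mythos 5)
Your proposal misses the central mechanism the paper uses for both directions: reducing the whole question to a gap between two linear programs. The paper introduces $\LP_{\lb}(G,s)$ (a partial fractional vertex-packing LP whose dual feasible solutions convert, via CRT codes with primes $q_v \approx N^{y_v}$, into hard instances showing $\cover(G,N,\Delta) \ge \Omega_n(N^{\LP_{\lb}})$) and $\LP_{\ub}(G,s)$ (whose integral-$z$ variant exactly computes $\min_S \AGM(G,S)$, and hence upper bounds $\joinUB(G,N,s) \le N^{\LP^*_{\ub}}$). Both~\eqref{eq:gubt1} and~\eqref{eq:glb1} are then statements about the integrality/relaxation gap $\LP_{\ub}/\LP_{\lb}$.

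For the upper bound your ``direct lifting'' and ``dyadic recursion over blocks of $V$'' ideas do not appear anywhere in the paper and it is unclear how either would yield the stated exponents. What the paper actually does (Theorem~\ref{thm:ubga1}) is \emph{dependent randomized rounding}: scale an optimal fractional $\LP_{\lb}$ solution by a constant $c$, apply Srinivasan's negatively-correlated rounding to the edge variables, split $V$ into high-coverage vertices $V_1$ (deterministically covered after scaling) and low-coverage vertices $V_2$, and use a Chebyshev argument to show at least $s$ vertices are covered with positive probability. Optimizing $c$ subject to $(1-1/\bar e)c > 1$ and the Chebyshev failure probability being below $1$ produces $2c \le 10.37$; preprocessing away edges with $|e\cap V_2| \ge \eps s'$ over $O(\log s)$ rounds sharpens the constant to $3.73$ at the cost of the $N^{O(\log s)}$ factor. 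This is a fundamentally different argument from anything you describe.

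For the lower bound~\eqref{eq:glb1}, the Euler number does \emph{not} come from the prime number theorem or from optimizing CRT prime densities. The paper's Theorem~\ref{thm:lbga1} constructs a \emph{random hypergraph} $G'$ on $t$ base vertices plus $t d'$ private pendant vertices, where each edge samples base vertices i.i.d.\ with probability $d/n$, and then shows via Chernoff bounds (Lemma~\ref{lemma:set-cover-gap}) that for every $I$ of size $\alpha n/d$, $|\cup_{i\in I} E_i| \le (1 - \bar e^{-\alpha} + \eps)n$. The $\bar e$ is precisely the ``max-coverage'' constant $1 - e^{-\alpha}$ from random set systems, i.e., the partial-set-cover integrality gap, not a primes-density effect. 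Your intuition that the two requirements (small $\cover$, large $\joinUB$) pull against each other is correct, but the instance that certifies this tension is a set-cover gap instance, not a hand-tuned CRT code; the CRT machinery enters only later and generically (any dual solution to $\LP_{\lb}$ gives a CRT instance), so it cannot be the source of the $1 + 1/\bar e$ constant. As written, your plan would not produce the stated bounds.
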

This result, when combined with Theorem~\ref{upperboundtheorem}, is sufficient to conclude that $\joinUB(G, N, \s)$ and $\cover(G, N, \Delta)$ differ by at most a polynomial in $\cover(G, N, \Delta)$ (which in turn is a polynomial in $N$) and that a polynomial gap in $N$ is necessary. We prove this Theorem in Appendix~\ref{sec:garity-proof} and present an overview here. 

We start by defining two linear programs:
\begin{align*}
&  \LP_{\lb}(G, s) := \max{L} &  \LP_{\ub}(G, s) := \min{\sum_{e \in E} x_{e}} \\
& \text{s.t. } \sum_{v \in e} y_{v} \le 1, \text{ for all } e \in E &  \text{s.t. } \sum_{e \ni v}x_e \ge 1\text{ for all } v \in V  \\
& \sum_{v \in S: S \subseteq V, |S| = s} y_{v} \ge L & \sum_{v \in V} z_{v} \ge s \\
& y_{v} \ge 0, v \in V.  &  z_{v} \le 1, v \in V, x_e  \ge 0,  e \in E.
\end{align*}
For any $G$, $N$ and $\s$, we claim the following. Any feasible solution to the LP on the left (i.e., $\LP_{\lb}$) can be converted to an instance $\{R_e: e \in E\}$ with a join cover $\calU = \J =\ \Join_{e \in E} R_e$ such that 
\begin{align*}
\cover(G, N, \Delta) \ge |\calU| = \Omega_{n}\left(N^{\LP_{\lb}(G, \s)} \right).
\end{align*} 
The proof is in Appendix~\ref{sec:cover-lb}. 

Next, we argue that an integral version of LP (where $z_{v} \in \{0, 1\}$) on the right (i.e., $\LP_{\ub}$) computes $\AGM(G, S)$. We denote both this integral version (and with a slight abuse of notation, its objective value) by $\LP^*_{\ub}(G, \s)$. By defintion, we have $\LP^*_{\ub}(G, \s) \le \LP_{\ub}(G, \s)$. More formally, we prove
\begin{align*}
\joinUB(G, \s) & \leq O_{n}\left(N^{\LP^*_{\ub}(G, \s)}\right)\\
& = \min_{S \subseteq V: |S| = s} 2^{\AGM(G, S)}.
\end{align*}
The proof is in Appendix~\ref{sec:proj-ub}.

Given this setup, to prove~\eqref{eq:gubt1} and~\eqref{eq:glb1}, we need to obtain upper and lower bounds on the gap between $\LP_{\ub}(G, \s)$ and $\LP_{\lb}(G, \s)$. It turns out that this gap $\LP_{\ub} (G, \s)$ is exactly captured by the following -- the (dual of) $\LP_{\lb}(G, \s)$ computes a fractional edge covering that covers at least $\s$ vertices in $V$ fractionally and $\LP_{\ub}(G, \s)$ computes an optimal fractional edge covering that covers at least $\s$ vertices integrally. We start with the following result (which proves~\eqref{eq:gubt1}).
\begin{theorem} \label{thm:ubga1}
For all hypergraphs $G = (V, E)$ and $1 \le s \le n$, we have
\begin{enumerate} [label=(\ref{thm:ubga1}\alph*), itemsep=0pt, leftmargin=*]
\item $\LP_{\ub}(G, s) \le 10.37 \cdot \LP_{\lb}(G, s) + 1.$ \label{thm:ubga2}
\item $\LP_{\ub}(G, s) \le 3.73 \cdot \LP_{\lb}(G, s) + O(\log(s)).$ \label{thm:ubga3}
\end{enumerate}
\end{theorem}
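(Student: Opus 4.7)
The plan is to prove both bounds by LP-rounding arguments that convert a fractional optimum of $\LP_{\ub}(G, s)$ into an integer solution. A useful preliminary observation is that, by LP duality, the fractional $\LP_{\ub}(G, s)$ coincides with $\LP_{\lb}(G, s)$: eliminating the dual cap variables via $\mu_v = \max(0, L - y_v)$, the dual of $\LP_{\ub}(G, s)$ becomes $\max_{y, L \ge 0} \bigl[ sL - \sum_v \max(0, L - y_v) \bigr]$ subject to $\sum_{v \in e} y_v \le 1$ and $y \ge 0$, which equals $\max_y \min_{|S| = s} \sum_{v \in S} y_v$ by LP duality applied to the inner minimization. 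Hence the theorem is really a bound on the integrality gap in the $z$-variables, not on any duality gap.

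For part (\ref{thm:ubga2}), I would use threshold rounding combined with a cheap completion. Let $(x^*, z^*)$ be an optimal fractional solution with $\sum_e x^*_e = \LP_{\lb}(G, s)$. Fix a threshold $\alpha \in (0, 1)$ and set $S_0 = \{v : z^*_v \ge \alpha\}$. Since $\sum_{e \ni v} x^*_e \ge z^*_v \ge \alpha$ for every $v \in S_0$, scaling $x^*$ by $1/\alpha$ gives a valid fractional edge cover of $S_0$ at cost $\LP_{\lb}(G, s)/\alpha$. If $|S_0| \ge s$, pick any size-$s$ subset as $S$; otherwise, augment $S_0$ with vertices whose $z^*_v$ is small and charge the marginal covering cost via an extreme-point-sparsity argument on the LP relaxation. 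Optimizing $\alpha$ against the trade-off with augmentation yields the multiplicative constant $10.37$ and the additive $+1$.

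For part (\ref{thm:ubga3}), the plan is to replace the $1/\alpha$ scaling by randomized rounding that enjoys a logarithmic boost. Scale $x^*$ by a factor $c \log s$ for a suitable constant $c$ and independently include each edge $e$ with probability $\min(1, c \cdot x^*_e \log s)$. For each vertex $v$ with $\sum_{e \ni v} x^*_e$ bounded below by a fixed constant, the probability of non-coverage decays polynomially in $s$, so a union bound over the $s$ vertices in the chosen $S$ yields a valid cover with positive probability. Combining the multiplicative overhead with a tightened selection of $S$ gives a multiplicative constant near $\bar{e} + 1 \approx 3.73$, and the $O(\log s)$ additive term absorbs the logarithmic scaling slack together with the cost of completing $S$ to size exactly $s$.

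The main obstacle I expect is controlling the edge-cover cost contributed by augmented vertices --- those added to $S$ beyond the deterministically chosen $S_0$ --- without paying a factor polynomial in $n$ or $N$. Keeping the constants at $10.37$ and $\bar{e}+1$ will likely require a delicate charging argument leveraging either extreme-point sparsity of the LP relaxation or a layered randomized-rounding step, and pinning down these exact numerical constants is where most of the technical work will live.
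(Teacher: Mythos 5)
Your preliminary duality observation is correct and matches Lemma~\ref{lemma:galemma1} in the paper; you have also correctly identified that the theorem is a bound on the $z$-integrality gap of LP~\eqref{eq:primal-1}. However, the rounding plan has genuine gaps in both parts.

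For part~(\ref{thm:ubga2}), the threshold step ($S_0 = \{v : z^*_v \ge \alpha\}$ with scaling by $1/\alpha$) correctly handles the ``large-$z$'' vertices, but the difficulty is entirely concentrated in the residual vertices with $z^*_v < \alpha$, and you must cover $s - |S_0|$ of them within a constant additive budget. Your proposed extreme-point-sparsity argument does not give a bound of the needed form: counting tight constraints at a basic optimal solution of LP~\eqref{eq:primal-1} (using the fact from Claim~\ref{claimub1} that $z_v = \min\{\sum_{e \ni v} x_e, 1\}$ and $\sum_v z_v = s$ are tight) yields a lower bound on the number of tight vertex constraints, not an upper bound on the number of fractional $z_v$'s, and degeneracy blocks the obvious count. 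In particular, sparsity does not prevent $\Theta(s)$ vertices from all having $z_v$ just below the threshold, which would wreck any constant additive charge. What the paper uses in its place is dependent randomized rounding in the sense of Srinivasan (Theorem~\ref{drrthm1}): the key properties are (i) the rounded mass $\sum_e \tilde X_e$ lands in $\{\lfloor \sum_e \tilde x_e \rfloor, \lceil \sum_e \tilde x_e\rceil\}$, which is exactly where the additive $+1$ comes from (Lemma~\ref{lemma:ublemma1}), and (ii) negative correlation across edges, which is what makes the Chebyshev argument in Lemma~\ref{lemma:ublemma2} (with the variance bound using $|e \cap V_2| < s'$) go through. These are not cosmetic additions; independent or deterministic rounding cannot supply the second-moment control that the positive-probability argument requires.

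For part~(\ref{thm:ubga3}), the proposed rounding is simply wrong for the stated bound. Scaling every $x^*_e$ by $c \log s$ and rounding independently, then union bounding (the classical set-cover analysis), gives $\sum_e X_e = O(\log s) \cdot \LP_{\lb}(G, s)$ in expectation, i.e. a \emph{multiplicative} $O(\log s)$ factor. That is strictly weaker than the target $3.73 \cdot \LP_{\lb}(G, s) + O(\log s)$ whenever $\LP_{\lb}(G,s) = \omega(1)$, which is the generic case. An additive $O(\log s)$ cannot be recovered from a uniform multiplicative scaling of the LP solution. The paper instead keeps the same DRR scheme with a smaller constant $c$ and achieves the additive $O(\log s)$ by an explicit pre-processing loop: while some edge has $|e \cap V_2| \ge \eps s'$, force $X_e = 1$ and shrink $V_2$; this terminates in $O(\log s)$ rounds and those $O(\log s)$ forced edges are the additive term, while the stronger condition $|e \cap V_2| < \eps s'$ tightens the variance bound and lets $2c$ approach $\frac{2\bar{e}}{\bar{e}-1} \approx 3.73$. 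Without this preprocess-and-DRR structure there is no route to a multiplicative constant plus additive $O(\log s)$.
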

We prove two results to show a tradeoff between the multiplicative factor and an additive factor depending on $n$, combining which gives~\eqref{eq:gubt1} (the proof is Appendix~\ref{gaub}). Our proof proceeds by rounding the (dual of) $\LP_{\lb}(G, s)$ using a randomized dependent rounding algorithm to convert an optimal solution for $\LP_{\lb}(G, s)$ to a feasible solution for $\LP_{\ub}(G, s)$. 

Finally, we complement this result by showing the following (which proves~\eqref{eq:glb1}): 
\begin{theorem} \label{thm:lbga1}
There exists an instance $G' = (V', E')$ such that for $s = \Theta\left(\frac{n}{\log(n)} \right)$, we have
\[\dfrac{\LP_{\ub}(G, s)}{\LP_{\lb}(G, s)} \ge 1 + \frac{1}{\bar{e}} - \delta,\]
where $\delta$ is a small constant $> 0$.
\end{theorem}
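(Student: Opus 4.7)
The plan is to exhibit a random $k$-uniform hypergraph $G' = (V', E')$ that witnesses the claimed LP gap. I would take $|V'| = n$, each edge of size $k = \Theta(\log n)$, with $m = \Theta(n/k)$ edges drawn uniformly at random from $\binom{V'}{k}$; this matches the theorem's hypothesis $s = \Theta(n/\log n)$ and keeps the expected vertex degree constant. As a preliminary observation, by strong LP duality the LP relaxation of $\LP_{\ub}$ (read with the natural coupling $\sum_{e \ni v} x_e \ge z_v$) coincides with $\LP_{\lb}$; the claimed gap is therefore necessarily between the integer formulation $\LP^*_{\ub}(G', s) = \min_{S \subseteq V', |S| = s} \AGM(G', S)$ and $\LP_{\lb}(G', s)$.

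For the upper bound on $\LP_{\lb}(G', s)$: the uniform assignment $y_v = 1/k$ is feasible (each edge sums to $1$), giving sum of smallest $s$ entries equal to $s/k$, and a matching LP dual solution of the same value can be exhibited directly. For the lower bound on $\LP_{\ub}(G', s)$, fix any integer $s$-subset $S$ and note that $|e \cap S|$ is hypergeometric with mean $ks/n = \Theta(1)$. A Chernoff-plus-union-bound argument shows that with high probability $\max_{e \in E'} |e \cap S| = O(\log n / \log \log n)$ simultaneously over all $s$-subsets $S$. The LP-dual fractional matching inequality then gives that the fractional edge cover of $S$ is at least $|S|/\max_{e} |e \cap S| = \Omega(s \log \log n / \log n)$, uniformly in $S$.

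Combining the two sides, the ratio $\LP_{\ub}(G', s)/\LP_{\lb}(G', s) = \Omega(k \log \log n / \log n) = \Omega(\log \log n)$, which exceeds $1 + 1/\bar{e} - \delta$ for any fixed small $\delta > 0$ once $n$ is sufficiently large. The main obstacle is the uniform-over-subsets concentration step: the $\binom{n}{s}$-sized union bound requires Chernoff tails on hypergeometric random variables that are sharp enough to beat $n^s$ cases, which is standard but delicate to set up correctly. If one insists on hitting the exact constant $1 + 1/\bar{e}$ (rather than merely exceeding it), one would tune $k$ and $m$ so that the max-coverage barrier $1 - (1 - s/n)^k \to 1 - 1/\bar{e}$ becomes the binding constraint, with the additional $(1/\bar{e})s$ uncovered vertices forcing exactly a $(1/\bar{e}) \cdot s/k$ overhead on any integral cover.
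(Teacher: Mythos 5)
Your proposed argument, as primarily stated, is incorrect and the concluding $\Omega(\log\log n)$ gap cannot be right: Theorem~\ref{thm:ubga1} in the paper shows that $\LP_{\ub}(G,s) \le 10.37\cdot\LP_{\lb}(G,s) + 1$ for \emph{every} hypergraph, so the ratio $\LP_{\ub}/\LP_{\lb}$ is $O(1)$ whenever $\LP_{\lb} = \Omega(1)$ (which holds in your construction since $\LP_{\lb} = s/k = \Theta(n/\log^2 n) \to \infty$). A claimed gap of $\Omega(\log\log n)$ would flatly contradict that upper bound, so something in the derivation must fail.

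The concrete gap is in the uniformity claim that $\max_{e\in E'} |e\cap S| = O(\log n/\log\log n)$ simultaneously over all $s$-subsets $S$. Recall that $\LP_{\ub}(G,s)$ (in its integral form) is a \emph{minimum} over subsets $S$ of size $s$, so to lower-bound it you need a bound that holds for the adversarially chosen $S$. But with $s = \Theta(n/\log n)$ and $k = \Theta(\log n)$, the adversary can simply take $S$ to be a union of $\lfloor s/k\rfloor$ edges (plus a few spare vertices): then $\max_e |e\cap S| = k$, your lower bound of $|S|/\max_e|e\cap S|$ degenerates to $\approx s/k$, and the integral fractional edge cover of that $S$ is in fact $\approx s/k + O(1)$, matching $\LP_{\lb}$. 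The union-bound difficulty you flag is not merely delicate; the statement you want to union-bound over is false because the worst-case $S$ is not a random $S$ — it can be constructed \emph{from} the edges. No tail bound on hypergeometric intersections can rescue this.

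Your closing parenthetical — tuning $k, m$ so that the max-coverage barrier $1-(1-s/n)^k \to 1-1/\bar{e}$ becomes binding, with private/uncovered vertices forcing an extra $\tfrac{1}{\bar{e}}\cdot\tfrac{s}{k}$ overhead on any integral cover — is in fact the heart of the paper's actual proof, and should have been the main argument rather than an afterthought. The paper (Lemma~\ref{lemma:set-cover-gap}) constructs a random bipartite set-cover instance where each vertex appears in $\Theta(d)$ sets and each set has $\Theta(d)$ vertices with $d = \Theta(\log t)$, and — crucially — where every family of $\alpha t/d$ sets covers at most $(1-\tilde{e}^{-\alpha}+\eps)t$ vertices. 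It then \emph{augments} each edge with $\lfloor(1-1/\tilde{e})d\rfloor$ private vertices, which is exactly the mechanism that charges an integral pick by the $1-1/\tilde{e}$ wasted coverage. The fractional solution pays $t/d$ to cover everything; the integral side, after the $1-e^{-\alpha}$ max-coverage accounting, must pay $(1+1/\tilde{e}-O(\eps))t/d$. That calibration is what pins the constant at $1+1/\bar{e}$, and it is the part your proposal never actually carries out.
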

The proof is in Appendix~\ref{sec:galb}.		
\section{Algorithms} \label{sec:algos}
In this section, we present algorithms that solve the join covering problem approximately given $\q = (G, \{R_e: e \in E\})$, $N$ and $\Delta$. For the case when $G$ is a graph, we present an algorithm that computes a join cover $\calU \subseteq \J$ that can be off from $\cover(G, N, \Delta)$ by at most a factor of $2^{n}$. For general hypergraphs $G$, the join cover $\calU$ computed by our algorithm can be off from the optimal $\cover(G, N, \Delta)$ by a polynomial factor in $\cover(G, N, \Delta)$. Our algorithmic results rely on a search to decision reduction using the Boolean Conjunctive query problem on $\q$ (denoted by $\BCQ_{\q}$), which returns $1$ if $\J =\ \Join_{e \in E} R_e \neq \emptyset$ and $0$ otherwise for any instance $\q$. Note that if we compute a join cover $\calU$ on the instance $\q$, then we can solve $\BCQ_{\q}$ as well. In particular, if $\calU \neq \emptyset$, $\BCQ_{\q} = 1$ and $0$ otherwise. We consider the reverse direction, starting with the case when $G$ is a graph.
\begin{lemma} \label{lemma5.2}
For any $\q$ (where $G$ is a graph), $N$, $\Delta$ and $s = n - \Delta + 1$, given a Blackbox algorithm $\calA$ that solves the instance $\BCQ_{\q}$, there exists an algorithm $\calB_{\q}$ that computes a join cover $\calU$ such that 
\begin{align*}
\frac{|\calU|}{\cover(G, N, \Delta)} \le \frac{\joinUB(G, N, s)}{\cover(G, N, \Delta)} \le 2^{n}
\end{align*}
in time $O_{n}(N \cdot \cover(G, N, \Delta) \cdot \Time(\calA))$. Here, $\Time(\calA)$ is the runtime of $\calA$.
\end{lemma}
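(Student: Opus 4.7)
The plan is to implement a straightforward search-to-decision reduction suggested by Theorems~\ref{thm:proj-cover} and~\ref{lowerboundtheorem}. At a high level, the algorithm $\calB_{\q}$ (i) picks an $s$-subset $S^{\ast} \subseteq V$ realising (up to a $2^{O(n)}$ factor) the outer minimum in the definition of $\joinUB(G, N, s)$, (ii) computes the projected join $\mathcal{P} = \Join_{e \in E} \pi_{e \cap S^{\ast}}(R_e)$ with a worst-case optimal join algorithm, and (iii) lifts every tuple of $\mathcal{P}$ to a full tuple of $\J$ by invoking the oracle $\calA$ on successively restricted residual instances. The lifted tuples form the returned $\calU$.

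For step (i), inspection of the case analysis in the proof of Theorem~\ref{lowerboundtheorem} (cf.\ the rows of Table~\ref{tab:results}) shows that in every case the witness $S^{\ast}$ is produced by an explicit $\poly(|G|)$-time recipe (maximum matching, separation of heavy vs.\ light vertices, picking endpoints of matching edges, etc.), and the guarantee then gives $|\mathcal{P}| \le 2^{\PMB(\q, S^{\ast})} \le O_{n}(\cover(G, N, \Delta))$. For step (ii), any worst-case optimal join algorithm computes $\mathcal{P}$ in time essentially linear in $N \cdot |\mathcal{P}| \le O_{n}(N \cdot \cover(G, N, \Delta))$. For step (iii), given $\mathbf{t}_{S^{\ast}} \in \mathcal{P}$, I would fix the values of the $n - s = \Delta - 1$ attributes of $V \setminus S^{\ast}$ one at a time: for each such attribute $v$, iterate over the at most $N$ candidate values appearing in $\pi_{v}(R_{e})$ for some $e \ni v$, and call $\calA$ on the residual instance obtained by restricting every relation to tuples consistent with the current partial assignment to decide whether an extension still exists. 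This costs $O(n \cdot N \cdot \Time(\calA))$ per tuple of $\mathcal{P}$.

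Correctness follows from the same Hamming-distance observation that powers Theorem~\ref{thm:proj-cover}: for every $\mathbf{t} \in \J$ the projection $\pi_{S^{\ast}}(\mathbf{t})$ lies in $\mathcal{P}$ and therefore equals $\pi_{S^{\ast}}(\mathbf{t}')$ for some $\mathbf{t}' \in \calU$, so $\mathbf{t}$ and $\mathbf{t}'$ agree on all $s$ coordinates of $S^{\ast}$ and differ in at most $n - s = \Delta - 1 < \Delta$ positions; hence $\calU$ is a join cover. The size bound $|\calU| \le |\mathcal{P}| \le \joinUB(G, N, s)$ holds by construction, while the gap $\joinUB(G, N, s)/\cover(G, N, \Delta) \le 2^{n}$ is exactly the content of Theorem~\ref{lowerboundtheorem}. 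Summing the three contributions yields the stated runtime $O_{n}(N \cdot \cover(G, N, \Delta) \cdot \Time(\calA))$, the $\poly(|G|)$ overhead of step (i) being absorbed into $\Time(\calA)$ (which at least reads its input).

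The main obstacle is verifying that the witness $S^{\ast}$ promised by the proof of Theorem~\ref{lowerboundtheorem} is uniformly extractable in $\poly(|G|)$ time across all six rows of Table~\ref{tab:results}; this is a bookkeeping exercise on top of the already-algorithmic matching-based constructions of Lemmas~\ref{lemma2} and~\ref{lemma4} and their appendix counterparts. A minor formal point worth spelling out is that each oracle call in step (iii) is made on a residual $\BCQ$ instance defined on the same hypergraph $G$, with some $R_{e}$'s shrunk by restricting to tuples consistent with the partial assignment; this is a legitimate input to $\calA$ of size at most $N \cdot |E|$, so the per-call cost is indeed $\Time(\calA)$ up to lower-order factors.
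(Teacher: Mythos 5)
Your proposal takes essentially the same approach as the paper: pick the witness $S^{\ast}$ in $\poly(|G|)$ time via the case analysis behind Theorem~\ref{lowerboundtheorem} (this is exactly Corollary~\ref{pickslemma}), compute the projected join $\mathcal{P}=J_{Q,S^{\ast}}$ with a worst-case optimal join algorithm, and then lift each projected tuple to a full tuple of $\J$ one attribute at a time using $\calA$ on residual $\BCQ$ instances. Your formulation of the lifting step (keep exactly one completion per $\mathbf{p}\in\mathcal{P}$) is the cleanest way to maintain $|\calU|\le|\mathcal{P}|$ and matches the intent, and indeed tightens up, the paper's iterative filtering of $P_{i-1}\times\Dom(v)$.
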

We present the key ideas of $\calB_{\q}$ here and the detailed proof is in Appendix~\ref{sec:arity2algo}. We first compute $\calU_{S}$ such that $|\calU_{S}| \le \joinUB(G, N, s)$ in worst-case optimal time $O(\joinUB(G, N, s))$ using Algorithm $3$ in~\cite{ngo-survey}. We still need to construct to $\calU$ from $\calU_{S}$ and for this purpose, we make use of $\calA$. In particular, at each step, we pick a vertex $v \in V \setminus S$ and compute $\calU_{S \cup \{v\}} = \calU \times \Dom(v)$. Then, we argue that we can filter $\calU_{S \cup \{v\}}$ in such a way that $|\calU_{S \cup \{v\}}| \le |\calU|$ and the filtered $\calU_{S \cup \{v\}}$ is a join cover on attributes in $S \cup \{v\}$ where for every tuple $\mathbf{t} \in \J$, there exists a tuple $\mathbf{t'} \in \calU_{S \cup \{v\}}|$, we have $\Dist(\mathbf{t}, \mathbf{t'}) \ge \Delta - 1$. We can now argue that the final $\calU$ is a join cover by induction.

For general hypergraphs $G$, we present a weaker result.
\begin{lemma} \label{lemma5.3}
For general hypergraphs $G$, $N$, $\Delta$ and $s = n - \Delta + 1$, given a Blackbox algorithm $\calA$ that solves the instance $\BCQ_{\q}$, there exists an algorithm $\calB_{\q}$ that computes a join cover $\calU$ such that $\frac{|\calU|}{\cover(G, N, \Delta)} \le \frac{\joinUB(G, N, s)}{\cover(G, N, \Delta)} \le \min(O_{n}(N \cdot N \cdot \cover(G, N, \Delta)^{9.37} \cdot \Time(\calA)), O_{n}(N \cdot N^{O(\log(n - \Delta + 1))} \cdot \cover(G, N, \Delta)^{2.73} \cdot \Time(\calA)))$
in time $\min(O_{n}(N \cdot N \cdot \cover(G, N, \Delta)^{10.37} \cdot \Time(\calA)), O_{n}(N \cdot N^{O(\log(n - \Delta + 1))} \cdot \cover(G, N, \Delta)^{3.73} \cdot \Time(\calA)))$. Here, $\Time(\calA)$ is the runtime of $\calA$.
\end{lemma}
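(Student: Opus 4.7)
The plan is to lift the search-to-decision strategy of Lemma~\ref{lemma5.2} to general hypergraphs, substituting the polynomial combinatorial gap supplied by Theorem~\ref{generalarityubtheorem} for the $2^{n}$ factor used in the arity-two case. The algorithm $\calB_{\q}$ will proceed in two phases: first construct a projected cover $\calU_{S^{*}}$ on an optimally chosen set $S^{*}$ of $s$ attributes, then self-reduce each partial tuple of $\calU_{S^{*}}$ into a full tuple of $\J$ using $\calA$ as an oracle.

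For the first phase, I would identify $S^{*} \subseteq V$ with $|S^{*}| = s$ achieving, up to the $2^{O(n)}$ slack absorbed by $O_{n}$, the minimum of $\max_{R_{e}: |R_{e}| \le N, e \in E} 2^{\PMB(\q, S)}$; enumerating the $\binom{n}{s} = 2^{O(n)}$ candidates suffices. I would then materialize $\calU_{S^{*}} := \Join_{e \in E} \pi_{e \cap S^{*}}(R_{e})$ via a worst-case optimal join algorithm (e.g., NPRR or GenericJoin) in time $O(\joinUB(G, N, s))$, yielding $|\calU_{S^{*}}| \le \joinUB(G, N, s)$. Theorem~\ref{generalarityubtheorem} then gives $\joinUB(G, N, s) \le \min(O_{n}(N \cdot \cover(G, N, \Delta)^{10.37}), O_{n}(N^{O(\log s)} \cdot \cover(G, N, \Delta)^{3.73}))$, which will propagate into both the final size and the final runtime.

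For the second phase, I would extend $\calU_{S^{*}}$ to $\calU \subseteq \J$ by a standard self-reduction. For each $\mathbf{t}_{S^{*}} \in \calU_{S^{*}}$ and each $v \in V \setminus S^{*}$ processed in a fixed order, iterate over $d \in \Dom(v)$ and call $\calA$ on the sub-instance obtained by restricting every $R_{e}$ to be consistent with the partial assignment built so far; commit to the first $d$ for which $\calA$ returns $1$, and discard $\mathbf{t}_{S^{*}}$ entirely if no such $d$ exists at some step. Keeping at most one completion per $\mathbf{t}_{S^{*}}$ ensures $|\calU| \le |\calU_{S^{*}}|$. The cover property holds because for every $\mathbf{t} \in \J$ the projection $\mathbf{t}|_{S^{*}}$ lies in $\pi_{S^{*}}(\J) \subseteq \calU_{S^{*}}$, so the completion retained for $\mathbf{t}|_{S^{*}}$ differs from $\mathbf{t}$ only on the $n - s = \Delta - 1$ attributes of $V \setminus S^{*}$, and thus is within Hamming distance $\Delta - 1$ of $\mathbf{t}$. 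The total work is $O(|\calU_{S^{*}}| \cdot (n - s) \cdot N \cdot \Time(\calA))$, which simplifies to the claimed $\min$ expression once the two bounds on $\joinUB(G, N, s)$ are substituted.

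The main obstacle will be purely bookkeeping. First, one must verify that the sub-instances presented to $\calA$ are valid $\BCQ$ inputs whose size is comparable to the original instance, so that $\Time(\calA)$ applies without inflation; this is immediate since restricting each $R_{e}$ to tuples consistent with a partial assignment only shrinks it and can be precomputed in $O(N)$ time per call. Second, one must ensure that the $2^{O(n)}$ overheads incurred both in enumerating $S^{*}$ and in the combinatorial gap of Theorem~\ref{generalarityubtheorem} compose multiplicatively to $2^{O(n)}$ rather than compounding, so that the $O_{n}$ prefix in the final bound is preserved and the claimed runtime and size expressions emerge cleanly after a routine calculation.
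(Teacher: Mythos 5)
Your approach is the same as the paper's, which for Lemma~\ref{lemma5.3} simply re-uses the search-to-decision reduction of Lemma~\ref{lemma5.2}: pick an (approximately) optimal $S^{*}$, compute the projected cover $\calU_{S^{*}}$ with a worst-case-optimal join algorithm, and then complete each partial tuple coordinate-by-coordinate via calls to $\calA$, with Theorem~\ref{generalarityubtheorem} supplying the combinatorial gap in place of the $2^{O(n)}$ of the arity-two case. You are, however, more careful on one detail that genuinely matters: by retaining exactly one witness completion per partial tuple (greedy first success, which never dead-ends after the first affirmative oracle call) you obtain $|\calU| \le |\calU_{S^{*}}|$ unconditionally, whereas the paper's writeup of Lemma~\ref{lemma5.2} keeps \emph{all} extendable partials at each stage and justifies the size bound by asserting that each partial in $P_0$ has at most one extension in $\J$ --- a property that the Hamming-distance argument of Section~\ref{sec:results_ub} guarantees only for a packing $\calC$, not for $\J$ in general (consider $\J$ a Cartesian product across $S^{*}$ and $V \setminus S^{*}$). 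Your one-witness-per-partial bookkeeping closes that gap while producing the same stated runtime.
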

In fact, the same proof as that of Lemma~\ref{lemma5.2} works here.
\section*{Acknowledgments}
We are greatly indebted to Hung Ngo for helpful discussions at initial stages of this work. We thank Surajit Chaudhuri, Manas Joglekar, Chris R\'{e} and Muthian Sivathanu for useful discussions. This work was supported by NSF CCF-$171734$. SVMJ thanks MSR for hospitality, where a part of this work was done.

\bibliographystyle{plainurl}
\bibliography{atri,shi}
\onecolumn
\appendix
\section{Missing Details in Section~\ref{sec:overview}} \label{appendix:prelims}

\subsection{Proof of Theorem~\ref{thm:proj-cover}} \label{sec:proj-cover}
We state and prove a more detailed version of Theorem~\ref{thm:proj-cover} here. We begin by defining a related problem called the join packing problem. Given $G$, $N$ and $\Delta$, we would like to compute
\begin{equation}
\pack(G, N, \Delta) = \max_{R_e: |R_e| \le N, e \in E} |\calC|,
\end{equation}
where $\q = (G, \{R_e: e \in E\})$ and $\calC$ is the output of the join packing problem defined below.
\begin{problem}[Join Packing problem] \label{prob:packing}
Given $\q = (G, \{R_{e} : e \in E\})$ and $\Delta$, the goal is to output a $\calC$ such that
\[\calC = \argmax_{\calD: \calD \text{ is a join packing of } \J} |\calD|.\]
Here, a join packing is a subset $\calD \subseteq \J = \Join_{e \in E} R_e$ such that for any pair of distinct tuples $\mathbf{c}, \mathbf{c'} \in \calD$, $\Dist(\mathbf{c},\mathbf{c'}) \ge \Delta$, where $\Dist$ is defined by~\eqref{eq:hd}.   
\end{problem}
We are now ready to prove the theorem.
\begin{theorem} \label{thm:proj-cover-2}
For every $G$, $N$, $\Delta$ and $s = n - \Delta + 1$, we have
\begin{equation}
\cover(G,N,\Delta) \le \pack(G, N, \Delta) \le \joinUB(G, N, s).
\end{equation}
\end{theorem}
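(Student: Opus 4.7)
The plan is to prove the two inequalities separately, following the outline sketched in Section~\ref{sec:results_ub}.

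For $\cover(G,N,\Delta) \le \pack(G,N,\Delta)$, I would fix an arbitrary instance $\q = (G, \{R_e : e \in E\})$ with $|R_e| \le N$, and let $\calC$ be a largest join packing for that instance (as defined in Problem~\ref{prob:packing}). The key observation is that maximality forces $\calC$ to be a join cover: if some $\mathbf{t} \in \J$ had $\Dist(\mathbf{t}, \mathbf{c}) \ge \Delta$ for every $\mathbf{c} \in \calC$, then $\calC \cup \{\mathbf{t}\}$ would still be a packing, contradicting the maximality of $|\calC|$. Hence the smallest join cover $\calS$ for this instance satisfies $|\calS| \le |\calC|$, and taking the maximum over worst-case inputs on both sides yields the claimed inequality.

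For $\pack(G,N,\Delta) \le \joinUB(G,N,s)$ with $s = n - \Delta + 1$, the heart of the argument is a projection step. Fix an instance and a largest packing $\calC$, and fix any $S \subseteq V$ with $|S| = s$. Any two distinct tuples $\mathbf{c}, \mathbf{c'} \in \calC$ satisfy $\Dist(\mathbf{c}, \mathbf{c'}) \ge \Delta$, while $|V \setminus S| = \Delta - 1$, so $\mathbf{c}$ and $\mathbf{c'}$ must disagree on at least one coordinate inside $S$. Thus $\pi_S$ is injective on $\calC$ and $|\calC| = |\pi_S(\calC)|$. Moreover, since $\calC \subseteq \J$, unpacking definitions gives $\pi_S(\calC) \subseteq \Join_{e \in E} \pi_{e \cap S}(R_e)$ (each tuple of $\pi_S(\calC)$ restricts correctly onto $\pi_{e \cap S}(R_e)$ for every $e \in E$). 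This produces the pointwise bound
\begin{equation*}
|\calC| \le \left|\Join_{e \in E} \pi_{e \cap S}(R_e)\right| \quad \text{for every } S \subseteq V \text{ with } |S| = s.
\end{equation*}

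To finish, I would combine the pointwise bound with the standard max-min $\le$ min-max inequality to obtain
\begin{align*}
\pack(G,N,\Delta) &= \max_{R_e: |R_e| \le N} |\calC| \le \max_{R_e: |R_e| \le N} \min_{S \subseteq V: |S| = s} \left|\Join_{e \in E} \pi_{e \cap S}(R_e)\right| \\
&\le \min_{S \subseteq V: |S| = s} \max_{R_e: |R_e| \le N} \left|\Join_{e \in E} \pi_{e \cap S}(R_e)\right| = \joinUB(G,N,s).
\end{align*}
The only conceptually delicate step is the injectivity of $\pi_S$ on $\calC$, which is really the classical coding-theoretic fact that an $(n, M, \Delta)$-code can be punctured at any $\Delta - 1$ coordinates without collapsing codewords; beyond this Hamming-distance bookkeeping, no substantive new machinery is required.
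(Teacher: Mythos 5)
Your proposal is correct and follows essentially the same route as the paper's proof in Appendix~\ref{sec:proj-cover}: maximality of the packing $\calC$ forces it to be a cover (first inequality), and the Hamming-distance/puncturing observation makes $\pi_S$ injective on $\calC$, which together with $\pi_S(\calC) \subseteq \Join_{e \in E} \pi_{e \cap S}(R_e)$ and a max-min $\le$ min-max step yields the second inequality. Your write-up is, if anything, a bit more explicit than the paper about the injectivity of $\pi_S$ and the order of quantifiers, but there is no substantive difference in the argument.
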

We are now ready to prove Theorem~\ref{thm:proj-cover-2}.
\begin{proof} [Proof of Theorem~\ref{thm:proj-cover-2}]
We start with the first inequality:
\begin{equation} \label{eq:cover-pack}
\cover(G,N,\Delta) \le \pack(G, N, \Delta).
\end{equation}
For any instance $\q$, $N$ and $\Delta$, let $\calC$ be a solution to the join packing problem. Note that for any tuple $\mathbf{t} \in J$, there exists a tuple $\mathbf{c} \in \calC$ such that $\Dist(\mathbf{t}, \mathbf{c}) < \Delta$. Otherwise, we can add $\mathbf{t} \in \J$ to $\calC$ contradicting the fact that $\calC$ is the largest subset of $\J$. Hence, $\calC$ is a valid join cover (i.e., $\calU = \calC$). It follows that 
\[|\calS| =  \min_{\calU : \calU \text{ is a join cover of } \J} |\calU| \le |\calC|.\]
Note that this implies
\[\cover(G, N, \Delta) = \max_{R_e: |R_e| \le N, e \in E} |\calS| \le \max_{R_e: |R_e| \le N, e \in E} |\calC| = \pack(G, N, \Delta), \]
which in turn proves~\eqref{eq:cover-pack}.

We now turn our attention to the second inequality:
\[\pack(G, N, \Delta) \le \joinUB(G, N, s).\]
We prove this using a property of Hamming distance, which we describe here. Let $\calC$ be a join packing such that $|\calC| = \pack(G, N, \Delta)$. For a fixed subset $S$ of attributes with $|S| = s$, we define $\calC(S) = \{\mathbf{c}_{S}: \mathbf{c} \in \calC \}$, where $\mathbf{c}_{S}$ is $\mathbf{c}$ projected down to attributes in $S$. Then, we have that the tuples in $\calC(S)$ are pairwise distinct (note that $\calC(S)$ is a set) i.e., $\Dist(\mathbf{c}_{S}, \mathbf{c'}_{S}) = |\{ s' \in S :\mathbf{c}_{s'} \neq \mathbf{c'}_{s'}\}| \ge 1$. If not, there exists at least a pair of tuples $\mathbf{c}, \mathbf{c'} \in \calC$ such that $\Dist(\mathbf{c}, \mathbf{c'}) \le \Delta - 1$, which contradicts the assumption that $\calC$ is a packing. Since our argument is independent of the choice of $S$, we have
\[\pack(G, N, \Delta) \le \max_{R_e: e \in E, |R_e| \le N} \min_{S \subseteq V: |S| = s} |\calC(S)|.\]
Note that $\calC \subseteq \J$ implies $|\calC(S)| \le |J_{S}|$, where $|J_{S}| = |\pi_{e \cap S} R_{e}|$ for any $S : |S| = s$. Thus, we have
\[\max_{R_e: e \in E, |R_e| \le N} \min_{S \subseteq V: |S| = s} |\calC(S)| \le \min_{S \subseteq V: |S| = s} \max_{R_e: e \in E, |R_e| \le N} \left|\Join_{e \in E} \pi_{e \cap S} (R_e) \right| = \joinUB(G, N, s).\]  
\end{proof}
\section{Missing Results in Section~\ref{section2}}
\subsection{Proof of Theorem~\ref{dbpagmlemma}} \label{sec:pmb-agm}
\begin{proof}
Consider the case when there are only cardinality constraints in $\q$ and they are of the form $\deg(\emptyset, e, N)$ for every $e \in E$.  In this case, note that the LP to compute $\PMB(\q, S)$ can be rewritten as follows:
\begin{align*}
& \min \sum_{(\emptyset, e, N) \in \DC} \delta_{e | \emptyset} \log_{2}(N) \\
& \text{s.t. } \sum_{(\emptyset, e, N) \in \DC, v \in (S \cap e)} \delta_{e | \emptyset} \ge 1, v \in S \\
&  \delta_{e | \emptyset} \ge 0,  (\emptyset, e, N) \in \DC.
\end{align*}
Note that this is exactly the LP that computes $\AGM(G, S)$. In particular, by setting $x_e = \delta_{e | \emptyset}$ for every $e \in E$, we can recover~\eqref{eq:ubs}. Since $\AGM(G, S)$ is a valid upper bound on the log of output size of any join query with worst-case inputs (i.e., $|R_e| \le N$ for every $e \in E$), the stated claim follows. 
\end{proof}

\subsection{Union bound for $\PMB(\q, S)$} \label{dbpunionlemmap}
\begin{lemma} \label{dbpunionlemma}
	Let $\q_1 = (G_1, \{\pi_{V(G_1)} (R_{e}) : e \in E\}), \q_2 = (G_2, \{\pi_{V(G_2)} (R_{e}): e \in E\}), \dots , \q_k = (G_k, \{\pi_{V(G_k)} (R_{e}): e \in E\}),$ be decompositions of $\q$ such that $\cup_{i=1}^{k}V(G_i) = V$. We define $S_i = S \cap V(G_i)$ for every $i \in [k]$. Then, we have
	\begin{equation} \label{eq:pmbunion}
	\PMB(\q, S) \le \sum_{i = 1}^{k} \PMB(\q_i, S_i).
	\end{equation}
\end{lemma}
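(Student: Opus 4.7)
\medskip

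\noindent\textbf{Proof proposal for Lemma~\ref{dbpunionlemma}.}

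The plan is to exhibit a feasible solution to the LP~\eqref{eq:pmb} defining $\PMB(\q, S)$ whose objective value is at most $\sum_{i=1}^{k} \PMB(\q_i, S_i)$. Concretely, for each $i \in [k]$, let $\delta^{(i)}$ denote an optimal solution to the LP for $\PMB(\q_i, S_i)$, with variables $\delta^{(i)}_{Y|X}$ indexed by the degree constraints in the constraint set $\DC_i$ of $\q_i$. I would then define a candidate solution $\delta$ for the LP of $\PMB(\q, S)$ by aggregating:
\[
\delta_{Y|X} \;=\; \sum_{i \,:\, (X,Y,N_{Y|X}) \in \DC_i} \delta^{(i)}_{Y|X},
\]
for every degree constraint $(X,Y,N_{Y|X}) \in \DC$.

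The first step is a compatibility check between the constraint sets: any $(X,Y,N_{Y|X}) \in \DC_i$ guarded by $\pi_{V(G_i)}(R_e)$ satisfies $Y \subseteq V(G_i)$, so $\pi_Y(\pi_{V(G_i)}(R_e)) = \pi_Y(R_e)$, and hence the same triple is guarded by $R_e$ in $\q$. Thus every term appearing in the sum above corresponds to a legitimate variable of the $\q$-LP, and $\delta \ge 0$ as a sum of nonnegative quantities.

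Next I would check primal feasibility. Fix $v \in S$; since $\cup_{i} V(G_i) = V$, pick any $i$ with $v \in V(G_i)$, so that $v \in S_i$. Feasibility of $\delta^{(i)}$ for $\PMB(\q_i, S_i)$ yields
\[
\sum_{(X,Y,N_{Y|X}) \in \DC_i,\, v \in Y \setminus X} \delta^{(i)}_{Y|X} \;\ge\; 1.
\]
Each such triple also lies in $\DC$ by the compatibility step, so plugging the definition of $\delta_{Y|X}$ gives $\sum_{(X,Y,N_{Y|X}) \in \DC,\, v \in Y \setminus X} \delta_{Y|X} \ge 1$, as required. For the objective, by swapping summations,
\[
\sum_{(X,Y,N_{Y|X}) \in \DC} \delta_{Y|X} \log_2(N_{Y|X}) \;=\; \sum_{i=1}^{k} \sum_{(X,Y,N_{Y|X}) \in \DC_i} \delta^{(i)}_{Y|X} \log_2(N_{Y|X}) \;=\; \sum_{i=1}^{k} \PMB(\q_i, S_i),
\]
so $\PMB(\q, S)$, being a minimum, is bounded above by this value, establishing~\eqref{eq:pmbunion}.

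The only mildly subtle point, and the step I expect to need the most care, is the compatibility between $\DC_i$ and $\DC$: one must verify both that degree constraints valid for the projected relations $\pi_{V(G_i)}(R_e)$ remain valid for $R_e$ (which is clear when $Y \subseteq V(G_i)$), and that the LP for $\PMB(\q_i, S_i)$ is itself well-defined in the sense of the footnote to Definition~\ref{def:dbpgs} (acyclicity of its degree-constraint graph, which is inherited from that of $\q$ by restricting to $V(G_i)$). Everything else is a routine combination of primal feasibility and an exchange of sums.
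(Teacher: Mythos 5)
Your proof is correct and follows essentially the same approach as the paper's: take optimal solutions $\delta^{(i)}$ for each $\PMB(\q_i, S_i)$, aggregate them into a candidate for the LP defining $\PMB(\q, S)$, verify feasibility of every vertex constraint using the covering hypothesis $\cup_i V(G_i) = V$, and compare objectives by exchanging sums. You are a bit more explicit than the paper about two points it leaves implicit — that a degree constraint with $Y \subseteq V(G_i)$ guarded by $\pi_{V(G_i)}(R_e)$ is also guarded by $R_e$ (so the aggregated variables are legitimate), and that the LPs for $\PMB(\q_i, S_i)$ are well-defined — but this is the same core argument.
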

\begin{proof}
	Let $\DC$ be the set of degree constraints on $G$ of the form $(X, Y, N_{Y | X})$ and let $\DC_i$ be the set of degree constraints on $G_i$ of the form $(X_i, Y_i, N_{Y_i | X_i})$. For each $q_i$, there is a LP that computes $\PMB(q_i, S_i)$ (invoking~\eqref{eq:pmb}) given by:
	\begin{align*}
	& \min \sum_{(X_i, Y_i, N_{Y_i | X_i}) \in \DC_{i}}{\delta_{Y_i | X_i} \log_{2}(N_{Y_i | X_i})} \\
	& \text{s.t. } \sum_{(X_i, Y_i, N_{Y_i | X_i}) \in \DC_{i}, v \in S_i \cap (Y_i \setminus X_i)} \delta_{Y_i | X_i} \ge 1, v \in S_{i} \\
	&  \delta_{Y_i | X_i} \ge 0,  (X_i, Y_i, N_{Y_i | X_i}) \in \DC_{i}. \numberthis \label{eq:pmb-i}
	\end{align*}
	Given an optimal solution $\mathbf{\delta}_{i} = (\delta_{Y_i | X_i})_{(X_i, Y_i, N_{Y_i | X_i}) \in \DC_i}$ to $\PMB(\q_i, S_i)$ for each $i \in [k]$, we only need to argue that we can combine them to imply a feasible solution for the LP computing $\PMB(\q, S)$. In particular, this would prove~\eqref{eq:pmbunion}. Consider the set of constraints that need to be satisfied for each $i \in [k]$:
	\begin{align*}
	\sum_{(X_i, Y_i, N_{Y_i | X_i}) \in \DC_{i}, v \in S_i \cap (Y_i \setminus X_i)} \delta_{Y_i | X_i} \ge 1, v \in S_{i}.
	\end{align*}
	Note that we can rewrite it as follows:
	\begin{align*}
	\sum_{i = 1}^{k} \sum_{(X_i, Y_i, N_{Y_i | X_i}) \in \DC_{i}, v \in S \cap (Y_i \setminus X_i)} \delta_{Y_i | X_i} \ge 1, v \in S.
	\end{align*}
	In particular, for each $v \in S$, the following condition is true on $\DC$:
	\begin{align*}
	\sum_{(X, Y, N_{Y | X}) \in \DC, v \in S \cap (Y \setminus X)} \delta_{Y | X} \ge 1.	
	\end{align*}
	Thus, the constraints in the LP computing $\PMB(\q, S)$ are satisfied for every $v \in S$, proving~\eqref{eq:pmbunion} completing the proof.
\end{proof}

\section{Proof of Theorem~\ref{upperboundtheorem}} \label{sec:pmb-ub}
\begin{proof}
Recall the definition of $\joinUB(G, N, \s)$~\eqref{eq:free-vars}:
\[\min_{S \subseteq V: |S| = \s} \max_{R_e: e \in E, |R_e| \le N}  \left| \Join_{e \in E} \pi_{e \cap S}(R_e) \right|.\]
Since $\PMB(\q, S)$ is a valid upper bound on the log of output size of any join query for worst-case input relations $R_e$ satisfying $|R_e| \le N$ (by Lemma~\ref{dbpagmlemma}), we have
\[\max_{R_e: e \in E, |R_e| \le N} |\Join_{e \in E} {\pi_{e \cap S}(R_e)}| \quad \le \max_{R_e: |R_e| \le N, e \in E} 2^{\PMB(\q, S)} \le 2^{\AGM(G, S)}.\] 
Since the above inequality holds for any $S \subseteq V, |S| = \s$, this completes our proof. 
\end{proof}

\section{Missing Results in Section~\ref{section3}} \label{sec:ari}

\subsection{Missing Details in Proof of Lemma~\ref{lemma2}}

\subsubsection{Correctness of Our Heavy-Light Decomposition} \label{sec:heavy-light-d}
We claim that $\J \subseteq J_{\q'} \cup_{h \in V} J_{\q_{h}}$, where $J_{Q'} =\ \Join_{e \in E} R_e$ and $J_{Q_h} =\ \Join_{e \in E'} R_e$. Let $J'_{\q} \subseteq \J$ such that for every tuple $\mathbf{t} \in J'$ and for every vertex $v \in V$, all relations $R_{e}$ in $\q$ incident on $v$ have only at most $\sqrt{N}$ tuples with the value $\pi_{v}(\mathbf{t})$ for $v$. Since this is the exact criterion satisfied by all values of all vertices in $\q' = (G = (V, E), \{R_e: e \in E\})$ on all relations, we have that $J'_{\q} \subseteq J_{\q'}$. 

For each remaining tuple $\mathbf{t} \in \J \setminus J'_{\q}$, there is at least one vertex $h \in V$ such that, for at least one relation $R_{e}$ in $\q$ incident on $h$, there are at least $\sqrt{N}$ tuples with the value $\pi_{h}(\mathbf{t})$ for $h$ (this is done in the same order the {\em heavy} vertices were processed earlier). If we argue that $\mathbf{t} \in J_{\q_h} =\ \Join_{e \in E'}R_e$, where $\q_h = (G' = (V' = V, E' = (E \setminus\{e \in E: h \in e\}) \cup \{h\}), \{R_h\} \cup \{R_e: e \in E' \})$, then we would be done. We start by recalling that $\pi_{e}(\mathbf{t}) \in R_e$ for every $e \in E$ and show that $\pi_{e}(\mathbf{t}) \in R_e$ for every $e \in E'$. 
In particular, we have $\pi_{h}(\mathbf{t}) \in \pi_{h}(R_{e})$ for every $e \in E$ such that $h \in e$. In words, all relations incident on $h$ have a tuple with value $\pi_{h}(\mathbf{t})$ for $h$ in $E$. Note that this implies $\pi_{h}(\mathbf{t}) \in R_{h}$ (which follows from the definition of $R_{h}$). To summarize, since $E' \setminus \{h\} \subseteq E$, we have indeed shown that for each edge $e \in E'$, $\pi_{e}(\mathbf{t}) \in R_{e}$ and as a result, $\mathbf{t} \in J_{\q_{h}}$. This completes the proof.

\subsubsection{Upper bound for $\PMB(\q_h, S)$} \label{sec:heavy-case}
Recall that $\q_h = (G' = (V' = V, E' = (E \setminus\{e \in E: h \ni e\}) \cup \{h\}), \{R_h\} \cup \{R_e: e \in E' \})$ and $S = \{h\} \cup V(M')$, where $M' \subseteq M, |M'| = s - 1$ is a subset of edges where no edge is incident on $h'$. We need to argue two things -- $(1)$ the LP~\eqref{eq:pmb} gives a valid upper bound for $\PMB(\q_h, S)$ and $(2)$ the solution $\delta_{h | \emptyset} = 1$ and $\delta_{(u, v) | \emptyset} = 1$ for every $e = (u, v) \in M'$ is a feasible solution for the LP~\eqref{eq:pmb}. 

We start with $(1)$. Since we have only cardinality constraints in $\q_{h}$, the degree constraint graph $\mathcal{G} = (V = S, E = \{ x \rightarrow y : (x, y) \in X \times (Y \setminus X), \forall (X, Y, N_{Y | X}) \in \DC \})$ is acyclic since $X = \emptyset$ for every $(X, Y, N_{Y | X}) \in \DC$ (and as a result, $E(\mathcal{G}) = \emptyset$). Since $\mathcal{G}$ is acyclic, LP~\eqref{eq:pmb} is a valid upper bound for $\PMB(\q_h, S)$ (Section $5.1$ in~\cite{ngo-survey}).

We now prove $(2)$. Note that there are $|S|$ constraints to be satisfied one for each vertex in $S$, each one of the form -- $\sum_{e \in E': v \ni e} \delta_{e | \emptyset} \ge 1$ for every $v \in S$. We now argue that our solution satisfies all these constraints. For $h$, there is only one edge incident on $h$ (i.e., the self-loop on $h$), which we set $\delta_{h | \emptyset} = 1$. We now consider pairs of vertices $(u, v) \in S$ such that the edge $e = (u, v) \in M'$, which we set  $\delta_{(u, v) | \emptyset} = 1$. Thus, for each vertex in $S$, there is at least one edge $e \in E'$ such that $\delta_{e | \emptyset} = 1$, implying that our solution is feasible for $\PMB(\q, S)$. Since our argument is independent of $h$, it holds for every $q_{h}, h \in V$.  

\subsubsection{Algorithm for picking $S$ and Proof that $|S| = \s$} \label{sec:s-pick}
Recall that in this case, $G$ is not a {\em disedge} and $|S| = s = n - \Delta + 1 \in [3, |M|]$ is odd. Our goal is to pick $S$ of size $s$ and argue why such a pick is always possible.

To pick $S$, we use the Light Picking Algorithm $1$ (Algorithm~\ref{lightpickingalgorithm1}), whose summary we provide here. Recall that $G$ is not a \textit{disedge}, implying that there exists at least one connected component ($\cc$ hereon) of size $> 2$ in $G$. We pick as many connected vertices as possible from each $\cc$ in $G$ in decreasing order of size and keep track of the last $\cc$ that was picked with size $> 2$. If we have only one vertex remaining to be picked on a particular $\cc$ (say $g$), we remove any one leaf vertex from the last $\cc$ we picked (say $g'$) and go on to pick two connected vertices in $g$ (such a pick always exists in $G$ since $\s < |M| \le n$). It still needs to be argued that our algorithm picks exactly $s$ vertices, which we do after describing the algorithm.

\begin{algorithm}[hbtp!]
	\caption{Light Picking Algorithm $1$}
	\label{lightpickingalgorithm1}
	\begin{algorithmic}[1]
		\Require{$G$}
		\Ensure{$S$}
		\State $S \gets \emptyset, \ \lastpick \gets \emptyset$
		\State \textbf{for all} connected components $\cc \in G$, in decreasing order of size, \textbf{do}
		\State\hspace*{\algorithmicindent} $\topick  = \min{(s - |S|, |cc|)}$  \label{State:check-cc} 
		\State\hspace*{\algorithmicindent} \textbf{if} $\topick > 1$ \textbf{then} \label{State:topic-more-than-1}
		\State\hspace*{\algorithmicindent}\hspace*{\algorithmicindent} $T \gets \{ \text{any } \topick \text{ length path in } cc \}$
		\State\hspace*{\algorithmicindent}\hspace*{\algorithmicindent} $S \gets S \cup T, G \gets G \setminus T$
		\State\hspace*{\algorithmicindent}\hspace*{\algorithmicindent} \textbf{if} {$|T| > 2$} \textbf{then} $\lastpick \gets T$
		\State\hspace*{\algorithmicindent} \textbf{else}
		\State\hspace*{\algorithmicindent}\hspace*{\algorithmicindent} \textbf{if} $\topick = 1$ \textbf{then}
		\State\hspace*{\algorithmicindent}\hspace*{\algorithmicindent}\hspace*{\algorithmicindent}  $T \gets \{ \text{any one leaf vertex in } \lastpick\}$ \label{State-T-Gets}
		\State\hspace*{\algorithmicindent}\hspace*{\algorithmicindent}\hspace*{\algorithmicindent}  $S \gets S \setminus T$
		\State\hspace*{\algorithmicindent}\hspace*{\algorithmicindent}\hspace*{\algorithmicindent}  $T' \gets \{\text{endpoints of an edge in } \cc\}$ \label{State-Tp-Gets} 
		\State\hspace*{\algorithmicindent}\hspace*{\algorithmicindent}\hspace*{\algorithmicindent}  $S \gets S \cup T'$
		\State \hspace*{\algorithmicindent}\hspace*{\algorithmicindent} \textbf{return} $S$
	\end{algorithmic}
\end{algorithm}

We prove here that $|S| = \s$ after running Algorithm~\ref{lightpickingalgorithm1}. We know that $\lastpick$ is never empty when $\topick \le 1$ as there always exists a $\cc$ in $G$ with size greater than two ($G$ is not a \textit{disedge}) and we start our picking from one such $\cc$ having maximum size. As a result, we can always remove a vertex from $\lastpick$. The constraint $n - \Delta + 1 < |M| \le |V(G)|$ ensures that there will always be two directly connected vertices left to pick in the $\cc$ that we are currently processing. In particular, this implies our algorithm exactly picks $n - \Delta + 1$ and runs in $\poly{(|G|)}$ time. 

\subsubsection{Upper bound for $\PMB(\q', S)$} \label{sec:light-case}
Recall that $\q' = (G = (V, E), \{R_e: e \in E\})$ and $S$ is a single connected component. Further, the set of edges $E_{S} = \{e \in E: e \subseteq S\}$ is a path of the form $\{ e_1 = (p_1, p_2), e_2 = (p_2, p_3), \dots, e_{s} = (p_{s - 1}, p_{s})\}$ where $p_i \in S$ for every $i \in [s]$. In addition to cardinality constraints $(\emptyset, e = (\ell, u), N)$ on each $e \in E$, there are two degree constraints for every edge of the form $(\ell, u, \sqrt{N})$ and $(u, \ell, \sqrt{N})$.We need to argue two things -- $(1)$ the LP~\eqref{eq:pmb} gives a valid upper bound for $\PMB(\q', S)$ and $(2)$ the solution $\delta_{e_1 = (p_1, p_2) | \emptyset} = 1$ with $p_1, p_2 \in S, e_1 \in E(S)$ and $\delta_{p_{i + 1} | p_{i}} = 1$ for every degree constraint $(p_{i}, p_{i + 1}, \sqrt{N}) \in \DC$ guarded by the relation $e_i = (p_{i}, p_{i + 1}) \in E_{S}, i \ge 2$ is a feasible solution to the LP~\eqref{eq:pmb}. 

We start with $(1)$ first. We start by constructing the degree constraint graph $\mathcal{G}$. We set $V(\mathcal{G}) = S$ and build $E(\mathcal{G})$ as follows. For each $e_i = (p_{i}, p_{i + 1}) \in E_{S}, i \ge 1$, we add the edges $p_{i} \rightarrow p_{i + 1}$ and $p_{i + 1} \rightarrow p_{i}$ to $E(\mathcal{G})$ (following the degree constraints $(p_{i}, p_{i + 1}, \sqrt{N})$ and $(p_{i + 1}, p_{i}, \sqrt{N})$). It follows that $E(\mathcal{G})$ is cyclic. If we remove enough edges from $E(\mathcal{G})$ such that the resulting edgeset is -- $(a)$ acyclic and $(b)$ the output $\Join_{e \in E(\mathcal{G})}R_e$ is bounded and covers all vertices in $S$, then~\eqref{eq:pmb} is a valid upper bound for $(\q', S)$ (Section $5.1$ in~\cite{ngo-survey}). We argue $(a)$ first. We retain only the edges of the form $p_{i} \rightarrow p_{i + 1}$ in $E(\mathcal{G})$ for every $i \ge 2$ and remove the others. In particular, $E(\mathcal{G})$ is now a directed path from $p_{2}$ to $p_{|S|}$. We argue $(b)$ in the next paragraph.

We argue $(b)$ and $(2)$ together. Note that there are $|S|$ constraints to be satisfied one for each vertex in $S = \{p_i : i \in [|S|] \}$ -- $\sum_{e = (p_1, p_2)} \delta_{e | \emptyset} \ge 1$ for $p_1$ and $\sum_{e = (p_i, p_{i + 1})} \delta_{e | \emptyset}  + \delta_{p_{i} | p_{i + 1}} \ge 1$ for every $p_i: i \in [2, |S|]$. In our solution, we set $\delta_{e  = (p_1, p_2) | \emptyset} = 1$ and $\delta_{p_{i} | p_{i + 1}} = 1$ for every $p_i: \in [2, |S|]$. Note that this satisfies all the constraints defined above and we have that
\begin{align*}
\PMB(\q', S) & \le \delta_{e  = (p_1, p_2) | \emptyset}  \log(N)  + \sum_{i \in [2, s]} \delta_{p_{i} | p_{i + 1}} \log(\sqrt{N}) \\
& = \log(N)  + (s - 2) \log(\sqrt{N}) = \log(N) + \frac{s - 2}{2} \log(N) = \frac{s}{2} \log(N),
\end{align*}
completing the proof. 

\subsubsection{Generalization for other $E_{S}$} \label{sec:general-es}
Note that we made two assumptions on $E_{S}$ for proving an upper bound for $\PMB(q', S)$ -- $(1)$ $E_{S}$ is a path and $(2)$ $E_{S}$ is a single connected component. We remove these assumptions now.

We argue how to get rid of assumption $(1)$ first. We build $E(\mathcal{G}$ in the same way as in the previous section. We fix an edge $e = (u, v) \in E_{S}$, where $u, v \in S$. We now remove edges from $E(\mathcal{G})$ until becomes a forest with two directed trees rooted at $u$ and $v$ respectively. Such a construction is possible since for every $e = (u', v') \in E_{S}$, there are two edges of the form $u' \rightarrow v'$ and $v' \rightarrow u'$. We can now decompose these two trees into paths from roots to leaves. On each path $\{ e_1 = (p_1, p_2), e_2 = (p_2, p_3), \dots, e_{s'} = (p_{s' - 1}, p_{s'})\}$ where $p_i \in S$ for every $i \in [s']$ (note that $s' \le |S|$), we apply the same argument as in the previous Section~\ref{sec:light-case} to obtain an upper bound of $\frac{s'}{2} \log(N)$ on the path. Later, we can sum the bounds obtain from each path to get a bound of at most $\frac{|S|}{2} \log(N)$. 

Finally, we get rid of the assumption $(2)$. We decompose $E_{S}$ into connected components and for each connected component, we can apply the argument above and later sum the bounds obtained from each connected component. Since each connected component has at least two connected vertices, our arguments hold.

\subsection{$\PMB(\q, S)$ and $\AGM(G, S)$ for specific $G$ and $S$}
We provide a list of well-known results on $\AGM(G, S)$ for specific $G$ and $S = V$. All of them can be computed by invoking~\eqref{eq:ubs} and subsequently, using Lemma~\ref{dbpagmlemma}.

\begin{lemma} \label{agmcycle}
Given $\q = (G, \{R_e: e \in E\})$, where $G$ is a collection of vertex-disjoint cycles with $k$ vertices in total and the input relations have size at most $N$, we have
\[\max_{R_e: |R_e| \le N, e \in E} 2^{\PMB(\q, S)} = 2^{\AGM(G, S)} \le N^{\frac{k}{2}} \]
for $S = V$.
\end{lemma}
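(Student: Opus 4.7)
The plan is to prove the two parts of the claim separately: first the inequality $2^{\AGM(G,S)} \le N^{k/2}$ by exhibiting an explicit feasible solution to the AGM linear program, and then the equality $\max_{R_e : |R_e| \le N} 2^{\PMB(\q,S)} = 2^{\AGM(G,S)}$ by invoking Lemma~\ref{dbpagmlemma} together with the observation that only cardinality constraints are available in this worst-case setting.

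For the first part, suppose $G$ consists of vertex-disjoint cycles on $k_1, k_2, \ldots, k_c$ vertices with $\sum_{i=1}^c k_i = k$; then the number of edges is also $k$ (each cycle on $k_i$ vertices contributes $k_i$ edges). Consider the assignment $x_e = 1/2$ for every $e \in E$ in the AGM linear program~\eqref{eq:ubs}. Since $S = V$ and every vertex $v \in V$ has degree exactly $2$ in a union of cycles, we have $\sum_{e \ni v} x_e = 2 \cdot (1/2) = 1$, so all constraints of~\eqref{eq:ubs} are satisfied. The resulting objective value is $\sum_{e \in E} x_e \log_2 N = (k/2)\log_2 N$, which gives $\AGM(G, S) \le (k/2)\log_2 N$, i.e., $2^{\AGM(G,S)} \le N^{k/2}$.

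For the second part, note that the worst-case setting we are in provides no information about $R_e$ beyond $|R_e| \le N$, so the only degree constraints in $\DC$ are the cardinality constraints of the form $(\emptyset, e, N)$ for each $e \in E$. As observed in the proof of Lemma~\ref{dbpagmlemma} (Appendix~\ref{sec:pmb-agm}), in this case the LP~\eqref{eq:pmb} defining $\PMB(\q, S)$ becomes literally the LP~\eqref{eq:ubs} defining $\AGM(G, S)$ under the substitution $x_e = \delta_{e \mid \emptyset}$. In particular, $\PMB(\q, S)$ does not depend on the particular realization of the $R_e$'s (only on the cardinality bound $N$), so taking the maximum over $R_e$ with $|R_e| \le N$ is vacuous and yields $\max_{R_e : |R_e| \le N} \PMB(\q, S) = \AGM(G, S)$. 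Combined with the first part, this gives the stated bound.

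I do not anticipate any serious obstacle in this proof: the only subtlety is the observation that when only cardinality constraints are present the PMB LP coincides with the AGM LP, which is already established in Appendix~\ref{sec:pmb-agm}; the feasibility of $x_e = 1/2$ relies entirely on the $2$-regularity of a disjoint union of cycles, which is immediate.
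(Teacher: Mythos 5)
Your proposal is correct and follows exactly the route the paper indicates: it identifies an explicit feasible solution $x_e=1/2$ to the AGM LP~\eqref{eq:ubs}, which works because a vertex-disjoint union of cycles is $2$-regular and has exactly $k$ edges on $k$ vertices, and it invokes Lemma~\ref{dbpagmlemma} (the fact that with only cardinality constraints the LPs~\eqref{eq:pmb} and~\eqref{eq:ubs} coincide) to establish the equality between $\max_{R_e}2^{\PMB(\q,S)}$ and $2^{\AGM(G,S)}$. The paper itself only remarks that all the lemmas in that list ``can be computed by invoking~\eqref{eq:ubs} and subsequently, using Lemma~\ref{dbpagmlemma},'' so your write-up simply makes the intended argument explicit. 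One tiny imprecision worth noting: your statement that ``taking the maximum over $R_e$ is vacuous'' is true only under the reading that $\DC$ is fixed to the cardinality constraints; if $\DC$ were allowed to include stronger data-dependent degree constraints, the maximum would instead be \emph{attained} at a worst-case instance (e.g.\ a Cartesian-product instance admitting no nontrivial degree constraints), but the conclusion $\max_{R_e}\PMB(\q,S)=\AGM(G,S)$ is the same and matches the paper's treatment.
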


\begin{lemma} \label{agmstar}
Given $\q = (G, \{R_e: e \in E\})$, where $G$ is a collection of $t$ vertex-disjoint stars with $k$ vertices in total and the input relations have size at most $N$, we have
\[\max_{R_e: |R_e| \le N, e \in E} 2^{\PMB(\q, S)} = 2^{\AGM(G, S)} \le N^{k - t}\]
for $S = V$.
\end{lemma}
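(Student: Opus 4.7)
The plan is to compute $\AGM(G,V)$ directly by exhibiting a feasible primal solution to LP~\eqref{eq:ubs}, then invoke the proof of Lemma~\ref{dbpagmlemma} to get the equality with $\max_{R_e} 2^{\PMB(\q,S)}$. Since $G$ decomposes as $t$ vertex-disjoint stars, and the LP~\eqref{eq:ubs} separates across connected components (variables $x_e$ for edges in different stars appear in disjoint vertex constraints), we can bound $\AGM(G,V)$ by summing the optimal LP values on each star.

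Concretely, fix a star component with $k_i \ge 2$ vertices (one center $c_i$ and $k_i-1$ leaves), having $k_i - 1$ edges, each of the form $(c_i, \ell)$ for some leaf $\ell$. For each leaf $\ell$, the unique edge incident to $\ell$ is $(c_i, \ell)$, so the covering constraint $\sum_{e \ni \ell} x_e \ge 1$ forces $x_{(c_i,\ell)} \ge 1$. Setting $x_e = 1$ for every edge $e$ in this star is therefore feasible (the center constraint $\sum_{e \ni c_i} x_e = k_i - 1 \ge 1$ is also satisfied since $k_i \ge 2$), with objective contribution $(k_i - 1)\log_2 N$. Summing over the $t$ stars gives a feasible solution to the full LP with objective $\sum_{i=1}^t (k_i - 1) \log_2 N = (k - t)\log_2 N$, hence $\AGM(G,V) \le (k-t)\log_2 N$ and $2^{\AGM(G,V)} \le N^{k-t}$.

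For the equality $\max_{R_e : |R_e| \le N, e \in E} 2^{\PMB(\q,S)} = 2^{\AGM(G,S)}$, note that the instance $\q = (G, \{R_e : e \in E\})$ carries only the cardinality constraints $(\emptyset, e, N)$ for each $e \in E$ (no degree constraints are assumed). As shown in the proof of Lemma~\ref{dbpagmlemma} (Appendix~\ref{sec:pmb-agm}), in this regime the LP~\eqref{eq:pmb} for $\PMB(\q,S)$ is syntactically identical to the LP~\eqref{eq:ubs} for $\AGM(G,S)$ under the substitution $x_e = \delta_{e \mid \emptyset}$. Since this LP does not reference the relations themselves, $\PMB(\q,S) = \AGM(G,S)$ for every choice of $R_e$ with $|R_e| \le N$, and in particular the maximum over such $R_e$ also equals $2^{\AGM(G,S)}$.

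There is no real technical obstacle here; the proof is essentially the LP-feasibility argument in the previous paragraph. The only subtlety to be careful about is verifying that the feasible solution $x_e = 1$ is in fact optimal per star (which is not needed for the stated upper bound but is consistent with the equality claim), and ensuring the decomposition across vertex-disjoint stars is sound, both of which follow from the fact that each leaf has a unique incident edge.
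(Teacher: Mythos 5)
Your proof is correct and follows exactly the route the paper prescribes: the paper itself supplies no explicit argument for Lemma~\ref{agmstar}, stating only that the lemmas in that block ``can be computed by invoking~\eqref{eq:ubs} and subsequently, using Lemma~\ref{dbpagmlemma}.'' Your LP computation is right, and you actually prove slightly more than feasibility: since each leaf $\ell$ of a star with center $c_i$ has the unique incident edge $(c_i,\ell)$, the constraint $\sum_{e\ni\ell}x_e\ge 1$ forces $x_{(c_i,\ell)}\ge 1$, so $x_e=1$ on every star edge is the \emph{unique} optimum, giving $\AGM(G,V)=(k-t)\log_2 N$ exactly. One subtlety worth flagging, not against you but against both you and the paper: the equality $\max_{R_e}2^{\PMB(\q,S)}=2^{\AGM(G,S)}$ does not literally follow from Lemma~\ref{dbpagmlemma}, which proves only the $\le$ direction. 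Your justification (that with only cardinality constraints in $\DC$ the two LPs coincide syntactically, hence $\PMB(\q,S)$ is constant over the choice of $R_e$) is valid under the reading that $\DC$ is fixed to be just the size bounds. Under the more ambitious reading of Definition~\ref{def:dbpgs} in which $\DC$ includes all degree constraints guarded by the given $R_e$, $\PMB(\q,S)$ genuinely depends on $R_e$, and the equality would instead be established by exhibiting a worst-case instance achieving the AGM bound --- for each star take $R_{(c_i,\ell)}=\{1\}\times[N]$, so the join output on that star is $\{1\}\times[N]^{k_i-1}$ of size $N^{k_i-1}$, which forces $\PMB(\q,V)\ge(k_i-1)\log_2 N$ per star. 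Either way the lemma holds; your write-up is fine and consistent with the paper's level of rigor here.
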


\begin{lemma} \label{agmevenlengthpath}
Given $\q = (G, \{R_e: e \in E\})$, where $G$ is a collection of vertex-disjoint even length paths with $k$ vertices in total and the input relations have size at most $N$, we have
\[\max_{R_e: |R_e| \le N, e \in E} 2^{\PMB(\q, S)} = 2^{\AGM(G, S)} \le N^{\frac{k}{2}}\]
for $S = V$.
\end{lemma}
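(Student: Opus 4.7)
The plan is to combine Lemma~\ref{dbpagmlemma}, which yields $\max_{R_e} 2^{\PMB(\q,V)} \le 2^{\AGM(G,V)}$ for free, with (i) a direct LP computation showing $\AGM(G,V) \le (k/2)\log_{2} N$ and (ii) an explicit worst-case instance whose join output already has size $\Theta_{n}(N^{k/2})$. Since $\PMB(\q,V)$ is a valid upper bound on $|\J|$, item (ii) will force $\max_{R_e} 2^{\PMB(\q,V)} \ge \Theta_{n}(N^{k/2})$, collapsing all three quantities to a common value and simultaneously delivering the equality and the $\le N^{k/2}$ bound stated in the lemma.

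For the AGM upper bound, I would first exploit the fact that the LP~\eqref{eq:ubs} decomposes across the vertex-disjoint paths: variables $x_{e}$ on edges of different components appear in disjoint cover constraints, so $\AGM(G,V) = \sum_{i=1}^{t} \AGM(P_{i}, V(P_{i}))$, where $P_{1}, \ldots, P_{t}$ are the paths of $G$, with $|V(P_{i})| = 2m_{i}$ and $\sum_{i} 2m_{i} = k$. On a single even path $u_{1} - u_{2} - \cdots - u_{2m}$, the alternating assignment $x_{(u_{2j-1}, u_{2j})} = 1$ for $j \in \{1, \ldots, m\}$ (and $x_{e} = 0$ on the remaining $m-1$ edges) selects a perfect matching of the path, so every vertex is covered exactly once and the cover constraints hold with equality. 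The LP objective evaluates to $m \log_{2} N$, giving $\AGM(P_{i}, V(P_{i})) \le m_{i} \log_{2} N$; summing over $i$ yields $\AGM(G,V) \le (k/2)\log_{2} N$, which is the inequality in the lemma.

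For the matching lower bound, I would use the standard Cartesian-product construction. Set $|\Dom(v)| = \floor{\sqrt{N}}$ for every $v \in V$ and, for every edge $e = (u,v) \in E$, take $R_{e} = \Dom(u) \times \Dom(v)$, so $|R_{e}| = \floor{\sqrt{N}}^{2} \le N$ as required. On each path $P_{i}$, the join reduces to the Cartesian product $\prod_{v \in V(P_{i})} \Dom(v)$ of size $\floor{\sqrt{N}}^{2m_{i}}$; multiplying across vertex-disjoint components gives $|\J| = \floor{\sqrt{N}}^{k} = \Theta_{n}(N^{k/2})$. Because PMB is always a valid upper bound on $|\J|$, this instance certifies $\max_{R_{e}} 2^{\PMB(\q, V)} \ge \Theta_{n}(N^{k/2})$, which together with the AGM upper bound forces the equality $\max_{R_{e}} 2^{\PMB(\q, V)} = 2^{\AGM(G, V)}$.

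The argument is essentially bookkeeping; the only non-trivial ingredient is the even-length hypothesis, which is precisely what allows the alternating set of edges to form a perfect matching of each path and avoids having to add a fractional half-edge to cover a dangling endpoint. For odd-length paths one would incur an extra additive contribution of $\tfrac{1}{2}\log_{2} N$ per component, which is exactly what breaks the clean $N^{k/2}$ expression, so the even-length restriction is tight for the stated bound.
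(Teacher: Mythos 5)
Your proposal is correct and takes essentially the route the paper intends: the paper treats Lemma~\ref{agmevenlengthpath} as an immediate consequence of ``invoking~\eqref{eq:ubs} and subsequently, using Lemma~\ref{dbpagmlemma},'' and your argument makes that explicit. The decomposition of the AGM LP across vertex-disjoint components is valid (disjoint components induce disjoint constraint sets), and the alternating perfect matching on each even path is a feasible primal solution of objective $(k/2)\log_2 N$, giving $\AGM(G,V)\le (k/2)\log_2 N$; combined with Lemma~\ref{dbpagmlemma} this delivers the chain $\max_{R_e} 2^{\PMB(\q,V)} \le 2^{\AGM(G,V)} \le N^{k/2}$.

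One small caveat on the equality claim. Your Cartesian-product instance with $|\Dom(v)|=\floor{\sqrt{N}}$ certifies $\max_{R_e} 2^{\PMB(\q,V)} \ge \floor{\sqrt{N}}^{k} = \Theta_n(N^{k/2})$, which pins down the equality only up to the paper's $\Theta_n$ slack, not exactly. The cleaner way to obtain the exact equality $\max_{R_e} 2^{\PMB(\q,V)} = 2^{\AGM(G,V)}$ is the observation in the proof of Lemma~\ref{dbpagmlemma}: for a worst-case instance carrying only cardinality constraints, the PMB LP~\eqref{eq:pmb} and the AGM LP~\eqref{eq:ubs} are \emph{literally the same program}, so the two optimal values coincide. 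This is cosmetic --- the $\le N^{k/2}$ bound, which is what downstream arguments actually consume, is obtained correctly either way --- but it is worth noting that your lower-bound instance is doing double duty (bounding both $\PMB$ and the join output size), and only the LP-identity argument gives exact equality. Your closing remark about why the even-length hypothesis is needed (the alternating edges would leave a dangling endpoint on an odd path, costing an extra $\tfrac12\log_2 N$) is exactly right.
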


\begin{lemma} \label{agmsingleton}
Given $\q = (G, \{R_e: e \in E\})$, where $G$ is a collection of \textit{singletons} with $k$ vertices in total and the input relations have size at most $N$, we have
\[\max_{R_e: |R_e| \le N, e \in E} 2^{\PMB(\q, S)} = 2^{\AGM(G, S)} \le N^{k}\]
for $S = V$.
\end{lemma}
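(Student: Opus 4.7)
The plan is to reduce this to the cardinality-only case of $\PMB$ and then solve the resulting AGM LP directly, taking advantage of the trivial structure of singletons. Since the hypotheses specify only size bounds $|R_e|\le N$ on the input relations (no nontrivial degree constraints beyond cardinality), the degree-constraint set $\DC$ consists only of triples of the form $(\emptyset,e,N)$ for $e\in E$. By Lemma~\ref{dbpagmlemma}, in this setting the LP defining $\PMB(\q,S)$ coincides with the LP defining $\AGM(G,S)$ (via the substitution $x_e=\delta_{e\mid\emptyset}$), and the AGM bound is known to be tight for worst-case inputs~\cite{AGM}. This yields the equality $\max_{R_e:|R_e|\le N} 2^{\PMB(\q,S)}=2^{\AGM(G,S)}$ for free.

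It then remains to upper bound $\AGM(G,S)$ when $S=V$ and $G$ is a collection of $k$ singletons (self-loops). I would exhibit an explicit feasible solution to~\eqref{eq:ubs}: for each vertex $v\in V$, let $e_v$ be an arbitrary self-loop incident on $v$ (which exists, since the singleton on $v$ is such a self-loop by definition), set $x_{e_v}=1$, and set $x_e=0$ for all other edges. The covering constraint $\sum_{e\ni v}x_e\ge 1$ is then satisfied at every $v\in V$ because $x_{e_v}=1$ and $v\in e_v$. The objective value of this feasible solution is $\sum_{e}x_e\log_2 N \le k\log_2 N$, so $\AGM(G,V)\le k\log_2 N$, which gives $2^{\AGM(G,V)}\le N^{k}$ as desired.

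The proof is essentially immediate, so there is no real obstacle; the only point to be slightly careful about is the reduction from $\PMB$ to $\AGM$, which must invoke Lemma~\ref{dbpagmlemma} to ensure that the absence of nontrivial degree constraints collapses the two LPs. One could also give a direct combinatorial bound, observing that $|\mathbin{\Join}_{e\in E}R_e|\le\prod_{v\in V}|\Dom(v)\cap R_{e_v}|\le N^k$, and note that the bound is tight by taking each $R_{e}$ to be a full domain of size $N$ over its (single) attribute, so that the join is a Cartesian product of size exactly $N^k$; this matches the AGM value and confirms the equality claim without appealing to the general tightness result of~\cite{AGM}.
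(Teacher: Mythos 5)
Your proof is correct and follows the same route the paper intends: reduce $\PMB$ to $\AGM$ via Lemma~\ref{dbpagmlemma} (here trivially, since for arity-one relations the only guardable constraints are the cardinality ones) and then read off $\AGM(G,V)\le k\log_2 N$ from the obvious feasible solution to the LP~\eqref{eq:ubs}. The paper gives this lemma as a one-line pointer to~\eqref{eq:ubs} and Lemma~\ref{dbpagmlemma}; you have simply spelled out those routine details, and your closing Cartesian-product construction showing tightness is a harmless extra.
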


\subsection{Decomposition of $G$} \label{sec:g-decomp}
Recall that our goal here is to decompose any graph $G$ into three induced subgraphs $G_c$, $G_s$ and $G_t$ such that $|V(G)| = |V(G_s)| + |V(G_t)| + |V(G_c)| = n$. Here, $G_s$ is a collection of stars of size greater than $1$, $G_t$ is a collection of singletons and $G_c = (V(G_c) = V(G) \setminus \{ V(G_s) \cup V(G_t)\}, E(G_c) = \{ e \in E: e \subseteq V(G_c) \})$. $M$ denotes a maximum matching on $G$ and $|M|$ denote the total number of vertices in $M$ ($|M|$ is even). We define $n_{I} = |V(M) \setminus V(G_c)|$. We present an example of our decomposition here.

\begin{example} \label{ex:g-decomp-ex}
Consider a graph $G= (V = \{1, 2, 3, 4, 5, 6\}, E = \{(1, 2), (2, 3), (1, 3) (1, 4), (4, 5), (6, 6)\})$ from Figure~\ref{fig:example2}. In this case, we decompose $G$ as follows -- $G_{c} = (V(G_c) = \{1, 2, 3\}, E(G_c) = \{(1, 2), (2, 3), (1, 3)\}$, $G_{s} = (V(G_s) = \{4, 5\}, E(G_s) = \{(4, 5)\})$ and $G_{t} = (V(G_t) = \{6\}, E(G_t) = \{(6, 6)\})$. Further, we have $M = \{(1, 2), (4, 5)\}$, $|M| = 4$ and $n_I = |\{(4, 5)\}| = 2$.
\end{example}

\begin{figure}[!htbp]
\centering
\begin{tikzpicture}[myn/.style={circle, draw,inner sep=0.1cm,outer sep=2pt}]
\node[myn] (A) at (0, 0) {$1$};
\node[myn] (B) at (2, 0) {$2$};
\node[myn] (C) at (0, -2) {$3$};
\node[myn] (D) at (-2, 0) {$4$};
\node[myn] (E) at (-2, -2) {$5$};
\node[myn] (F) at (-4, -1) {$6$};
\draw[line width = 2mm, magenta] (A) -- (B) node [magenta, above left = 7.5pt, sloped = 90] (TextNode) {};
\draw[line width = 2mm, orange] (B) -- (C) node [orange, above right = 20pt, sloped = 90] (TextNode) {};
\draw[line width = 2mm, green] (A) -- (C) node [green,  above left = 9pt, sloped = 90] (TextNode) {};
\draw[line width = 2mm, blue] (A) -- (D) node [blue, above right = 7.5pt, sloped = 90] (TextNode) {};
\draw[line width = 2mm, cyan] (D) -- (E) node [cyan, above left = 9pt, sloped = 90] (TextNode) {};
\draw (F)  to[red,in=50,out=130,loop] (F);
\end{tikzpicture}
\caption{$G$ is defined as in Example~\ref{ex:g-decomp-ex} and has a cycle $(1, 2, 3)$, a star $(4, 5)$ and a singleton $(6)$.} \label{fig:example2}
\end{figure}
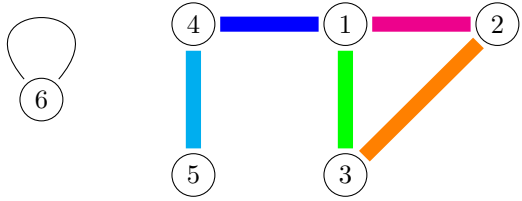

We start by restating the $\AGM(G, S)$ from~\eqref{eq:ubs}. 
\begin{equation} \label{eq:fecprimal}
\begin{alignedat}{1}
\min{\sum_{e}x_e} \\
\text{s.t } \sum_{v \ni e}x_e \ge 1, \text{for all } v \in S \\
x_e  \ge 0,  e \in E. \\
\end{alignedat}
\end{equation}
We now establish structural properties for the dual of \eqref{eq:fecprimal}, which will come handy while picking $S$ and start by stating the dual.
\begin{equation} \label{eq:fecdual}
\begin{alignedat}{2} 
\max \sum_{v}y_v \\	
\text{s.t. } \sum_{v \in e}y_{v} \le 1, \text{ for all } e\in E \\
y_v \ge 0, v \in S. \\
\end{alignedat}
\end{equation}
Unless stated otherwise, $S = V$. We now state the properties of this decomposition that we will be using in the proofs of our Lemmas in Table~\ref{tab:results}. Our goal in this section will be to prove these two corollaries. The following result will be used in our upper bound argument in Appendix~\ref{sec:row-5}.
\begin{corollary} \label{cor:matching-half}
The maximum matching $M$ of $G$ can be decomposed as $M = M_{c} \cup M_{s}$, where $M_{c}$ is a maximum matching on $G_c$ and $M_{s}$ is a maximum matching on $G_{s}$. For every subset of vertices $S \subseteq I$ (where $I$ is a maximum independent set of $G_{c}$) the graph $G_{ms}$ induced by $M_{c} \cup S$ has an optimal dual solution to LP~\eqref{eq:fecdual} on $G_{ms}$ with all vertices getting a value of $\frac{1}{2}$.
\end{corollary}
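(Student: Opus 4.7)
The proof splits naturally into two parts. For the decomposition $M = M_c \cup M_s$, the key is that $V(G_c), V(G_s), V(G_t)$ partition $V(G)$ and each of $G_c, G_s, G_t$ is an \emph{induced} subgraph, so every edge of $G$ lies entirely in one of the three pieces. Since $G_t$ consists only of self-loops and no matching edge is a self-loop, $M$ is the disjoint union of $M_c := M \cap E(G_c)$ and $M_s := M \cap E(G_s)$. Maximality of $M$ on $G$ forces $M_c$ and $M_s$ to be maximum on $G_c$ and $G_s$ respectively, since otherwise one of them could be augmented locally while the other is left untouched, producing a larger matching of $G$.

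For the half-integral dual assertion, let $V' := V(M_c) \cup S$ and let $G_{ms}$ be the subgraph of $G_c$ induced on $V'$. The assignment $y_v = 1/2$ for every $v \in V'$ is immediately feasible for LP~\eqref{eq:fecdual}: since $G_c$ has no self-loops, every edge of $G_{ms}$ has two distinct endpoints in $V'$, so $\sum_{v \in e} y_v = 1/2 + 1/2 = 1 \le 1$. The dual objective equals $|V'|/2$. To establish optimality I plan to invoke LP duality by exhibiting a primal (fractional edge cover) feasible solution to LP~\eqref{eq:fecprimal} on $G_{ms}$ of value $|V'|/2$. The natural candidate starts by placing $x_e = 1$ on every edge of $M_c$, which covers all of $V(M_c)$ at total cost $|V(M_c)|/2$. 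For each $v \in S \setminus V(M_c)$, Berge's theorem (applied to the maximum matching $M_c$ of $G_c$) guarantees that every neighbour of $v$ in $G_c$ lies in $V(M_c)$, and in particular $v$ has some neighbour $u \in V(M_c) \subseteq V'$. Assigning $x_{(v,u)} = 1/2$ and simultaneously decreasing the weight on the $M_c$-edge incident to $u$ from $1$ to $1/2$ preserves unit coverage at $u$ and its $M_c$-partner while covering $v$ to level $1/2$; this local swap adds exactly $1/2$ to the total weight per vertex of $S \setminus V(M_c)$, yielding the target cost $|V'|/2$.

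The delicate step, and the main obstacle, is executing these half-weight reroutings \emph{simultaneously} over all $v \in S \setminus V(M_c)$ without conflict: two such vertices must not both demand the same $M_c$-pair, and each reduction of an $M_c$-edge must leave its other endpoint still covered to level $1$. Equivalently, one must produce a fractional perfect matching of $G_{ms}$, which is the true content of the claim. I expect this to follow from the specific construction of $G_c$, $G_s$, and $G_t$ given above: by design $G_c$ contains no star and no singleton as a component, and the choice of $S$ as a subset of a maximum independent set $I$ of $G_c$ should interact with a Gallai-Edmonds style analysis of $G_c$ to guarantee the existence of vertex-disjoint $M_c$-alternating paths from the unmatched vertices of $S$ into $V(M_c)$. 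Distributing weight $1/2$ along each such alternating path produces the desired fractional perfect matching of $G_{ms}$, and LP duality then certifies that $y_v \equiv 1/2$ is a dual-optimal solution.
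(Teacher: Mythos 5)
Your proof of the decomposition $M = M_c \cup M_s$ rests on the assertion that because $V(G_c), V(G_s), V(G_t)$ partition $V(G)$ and each piece is induced, every edge of $G$ lies inside one piece. This is false: an induced subgraph on a vertex set $A$ only collects edges with \emph{both} endpoints in $A$, and nothing prevents $G$ from having edges with one endpoint in $V(G_c)$ and the other in $V(G_s)$. Indeed, in Example~\ref{ex:g-decomp-ex} the edge $(1,4)$ crosses from $G_c$ to $G_s$, so your claim already fails there. Consequently $M \cap E(G_c)$ and $M \cap E(G_s)$ need not exhaust $M$, and an arbitrary maximum matching of $G$ genuinely may use a cross edge. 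What the paper proves (Lemma~\ref{maximummatching}) is the weaker, correct statement: \emph{if} $M_c$ and $M_s$ are maximum matchings of $G_c$ and $G_s$, \emph{then} $M_c \cup M_s$ is a maximum matching of $G$ — i.e., there exists a maximum matching with the desired form. That argument crucially uses the half-integral dual values: a cross edge $(u,v)$ with $u \in V(G_c), v \in V(G_s)$ must have $y_u=\tfrac12, y_v = 0$, and Lemma~\ref{perfectmatching} (each $0$-vertex in $G_s$ is matched to a private $1$-neighbour) lets one ``flip'' any cross edge of a hypothetical better matching $M'$ back into $M_s$. Your route ignores the dual structure entirely and cannot survive cross edges.

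For the all-$\tfrac12$ dual optimality, you take a genuinely different tack from the paper: you attempt to certify optimality by constructing a primal fractional edge cover (in fact a fractional perfect matching of $G_{ms}$) of matching value, where the paper (Lemma~\ref{allhalflemmafinal}) instead runs an extremal contradiction argument — it assumes a strictly better dual, chooses among optima one with the most $\tfrac12$-valued vertices, finds a $0$-valued vertex $v \in I \setminus S$ via Lemma~\ref{allhalfoptimal}, and then uses the absence of augmenting paths for $M_c$ (Claim~\ref{claim1}) to build sets $V_0, V_1$ whose values can all be flattened to $\tfrac12$, contradicting the extremal choice. Your primal route is plausible but, as you yourself flag, the simultaneous rerouting/disjoint-alternating-path step is exactly the mathematical content that needs proving, and waving at Gallai--Edmonds without executing it leaves the argument incomplete. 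You would effectively need to re-derive something equivalent to the paper's augmenting-path argument in Claim~\ref{claim1} anyway — so the contradiction route is not just the paper's choice, it is where the real work lives under either framing.
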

Note that in Example~\ref{ex:g-decomp-ex}, we have $M_{c} = \{ 1, 2\}$, $M_{s} = \{4, 5\}$ and $I = \{3\}$. When $S = \emptyset$, $G_{ms} = M_{c}$ has an optimal dual solution with all vertices getting value of $\frac{1}{2}$. Finally, when $S = I$, we have $G_{ms} = G_{c}$ and has an optimal dual solution with all vertices getting a value of $\frac{1}{2}$. The following result will be used in our lower bound argument in Appendix~\ref{sec:row-6}.
\begin{corollary} \label{cor:g-decomp}
There exists an half-integral optimal dual solution to the LP~\eqref{eq:fecdual} for any graph $G$, where each vertex in $G_c$ gets a value of $\frac{1}{2}$, each vertex in $G_t \cup \{V(G_s) \setminus V(M)\}$ gets a value of $1$ and each edge $e = (u, v) \in M \setminus G_c$ has $(u = 0, v = 1)$ or $(u = 1, v = 0)$.
\end{corollary}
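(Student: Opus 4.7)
The plan is to prove Corollary~\ref{cor:g-decomp} in two stages: first, verify that the proposed half-integral assignment $y$ is feasible for LP~\eqref{eq:fecdual}, then pair it against a primal (fractional edge cover) solution of the same objective value and invoke LP duality.

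For feasibility, I would partition the edges of $G$ according to the decomposition $V = V(G_c) \cup V(G_s) \cup V(G_t)$ and check $\sum_{v \in e} y_v \le 1$ in each case. Edges with both endpoints in $V(G_c)$ contribute $\tfrac{1}{2}+\tfrac{1}{2}=1$; self-loops in $G_t$ contribute $1$; an edge inside a star of $G_s$ splits into a center and a leaf, and I would observe that the center always lies in $V(M) \setminus V(G_c)$ (a maximum matching of $G$ must pair off the center of any non-trivial star with one of its leaves), so the $(u=0,v=1)$/$(u=1,v=0)$ choice in the statement can and will be made to assign the center $0$ and the matched leaf $1$. Every unmatched leaf then lies in $V(G_s) \setminus V(M)$ and also receives $1$, so each star edge contributes $0+1=1$. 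Finally, the construction of the decomposition in Appendix~\ref{sec:g-decomp} is designed so that no edge crosses between the three parts, making this case analysis exhaustive.

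For optimality, summing the proposed dual gives
\[
\sum_v y_v \;=\; \tfrac{1}{2}|V(G_c)| \,+\, |V(G_t)| \,+\, \bigl(|V(G_s)| - t_s\bigr),
\]
where $t_s$ is the number of stars in $G_s$. I plan to exhibit a matching primal feasible $x$ for LP~\eqref{eq:fecprimal}: on $G_c$, invoke Corollary~\ref{cor:matching-half} with $S$ chosen so that the induced subgraph $G_{ms}$ coincides with $G_c$, yielding a fractional edge cover of $G_c$ of value $\tfrac{1}{2}|V(G_c)|$ by LP duality on $G_c$ alone; on each self-loop of $G_t$ set $x_e = 1$, contributing $|V(G_t)|$; on each star of $G_s$ set $x_e = 1$ on every edge incident to the center, covering every leaf once (and the center several times over), which contributes $|V(G_s)| - t_s$. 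Summing the three parts matches the dual value, so by weak duality both are optimal.

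The main obstacle I anticipate lies not in the duality bookkeeping above but in the three structural facts it silently rests on: (i) no edge of $G$ crosses between $V(G_c)$, $V(G_s)$ and $V(G_t)$; (ii) the maximum matching $M$ of $G$ restricts to maximum matchings $M_c$ on $G_c$ and $M_s$ on $G_s$; and (iii) each star in $G_s$ carries exactly one edge of $M_s$, incident on its center. These are properties of the specific decomposition algorithm and must be proved in Appendix~\ref{sec:g-decomp} en route to (and together with) Corollary~\ref{cor:matching-half}; once they are in place, the feasibility check and the primal construction above go through cleanly.
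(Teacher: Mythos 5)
Your plan diverges from the paper's proof in a structural way, and one of the structural facts you lean on is actually false. The paper does not reconstruct the dual solution from scratch and then certify optimality via a primal; instead it starts from an optimal half-integral BFDS $\mathbf{y}$ (which exists by Theorem~\ref{optimal theorem}), \emph{defines} $G_c$ to be the induced subgraph on the vertices with $y_v = \tfrac12$ (see Lemma~\ref{allhalfoptimal}), and then reads off the claimed structure for the remaining vertices using Lemmas~\ref{perfectmatching} and~\ref{maximummatching}. Optimality is built in from the start and no primal certificate is needed.

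The concrete gap in your version is claim (i): it is \emph{not} true that no edge of $G$ crosses between $V(G_c)$, $V(G_s)$ and $V(G_t)$. The paper's own running Example~\ref{ex:g-decomp-ex} has the edge $(1,4)$ with $1 \in V(G_c)$ and $4 \in V(G_s)$. So your feasibility case analysis is not exhaustive. For a crossing edge $(u,v)$ with $u \in V(G_c)$ (so $y_u = \tfrac12$), feasibility of $y_u + y_v \le 1$ fails unless $y_v \in \{0,\tfrac12\}$, i.e.\ unless the $G_s$-endpoint is a $0$-valued vertex (a matched star center). That cross-edges only touch such vertices is exactly the non-trivial structural property the paper gets \emph{for free} by defining $G_c$ from the optimal solution (a $\tfrac12$-vertex can only neighbor $0$- or $\tfrac12$-vertices, as noted inside the proof of Lemma~\ref{allhalfoptimal}); in your bottom-up construction it is an additional fact you must prove, and your proof proposal does not address it. Your observation that your argument "silently rests on" a list of structural facts is honest, but the list omits this one, and without it the feasibility argument collapses. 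Once you add and prove the correct cross-edge statement, the primal-dual bookkeeping you sketch (the $\tfrac12|V(G_c)|$ part via Corollary~\ref{cor:matching-half} applied with $S = I$, self-loops at weight $1$, and star edges at weight $1$) does give a matching primal and would yield optimality; it is a legitimate alternative route, just longer than the paper's.
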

Note that in Example~\ref{ex:g-decomp-ex}, we have that $G$ has an optimal dual solution, where each vertex in $G_{c}$ gets a value of $\frac{1}{2}$, the vertex $6$ in $G_{t}$ has a value of $1$ and the vertex $4$ gets value of $0$ and vertex $5$ gets a value of $1$ in $G_{s}$.  

As stated earlier, we will prove Corollaries~\ref{cor:g-decomp} and~\ref{cor:matching-half} in this section and we will proceed as follows. First, we claim the following:
\begin{theorem} \label{optimal theorem}
There exists a half-integral $\mathbf{y} =  (y_{v})_{v \in V}$ for LP~\eqref{eq:fecdual} such that $\mathbf{y}$ is an optimal BFDS.
\end{theorem}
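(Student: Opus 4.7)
The LP~\eqref{eq:fecdual} is the classical fractional vertex packing (independent set) relaxation of the graph $G$: maximize $\sum_v y_v$ subject to $y_u+y_v\le 1$ for every edge $e=(u,v)\in E$ and $y_v\ge 0$. It is a well-known result (Nemhauser--Trotter) that the constraint polytope of this LP has only half-integral extreme points, and hence every BFS, in particular every optimal one, is half-integral. My plan is to reprove this via a direct perturbation argument.

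The first step is to take any optimal BFS $y^*$ of LP~\eqref{eq:fecdual} (existence is guaranteed since in the paper's setting every vertex has at least one incident edge, so the LP is bounded). Assume for contradiction that $y^*$ is not half-integral, and partition $V=V_+\cup V_-\cup V_0$ where $V_+=\{v:y^*_v>\tfrac12\}$, $V_-=\{v:0<y^*_v<\tfrac12\}$, and $V_0$ contains the remaining vertices (those with $y^*_v\in\{0,\tfrac12,1\}$). By the contradiction hypothesis, $V_+\cup V_-\neq\emptyset$. For a sufficiently small $\epsilon>0$, define two candidate solutions $y^+$ and $y^-$ by $y^{\pm}_v=y^*_v\pm\epsilon$ for $v\in V_+$, $y^{\pm}_v=y^*_v\mp\epsilon$ for $v\in V_-$, and $y^{\pm}_v=y^*_v$ for $v\in V_0$.

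The heart of the argument is verifying feasibility of both $y^+$ and $y^-$. Non-negativity is automatic for small $\epsilon$ since $y^*_v>0$ on $V_-$ and $y^*_v\ge 0$ on $V_+\cup V_0$. For each edge $e=(u,v)$, if the original constraint is slack, i.e., $y^*_u+y^*_v<1$, the perturbed constraints hold for small $\epsilon$ by continuity. If $y^*_u+y^*_v=1$, I would perform a short case analysis on where $u$ and $v$ lie: if both are in $V_0$, then the sum is unchanged; if one endpoint, say $u$, is in $V_+$, then $y^*_u>\tfrac12$ forces $y^*_v<\tfrac12$, and moreover $y^*_v>0$ (otherwise $y^*_u=1\in V_0$, contradicting $u\in V_+$), so $v\in V_-$ and the $\pm\epsilon$ perturbations cancel, keeping the sum exactly $1$; the symmetric case $u\in V_-$ is analogous. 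Hence both $y^+$ and $y^-$ are feasible.

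Finally, $y^+\neq y^-$ because they differ by $2\epsilon$ on every vertex in $V_+\cup V_-\neq\emptyset$, yet $y^*=\tfrac12(y^++y^-)$, contradicting the fact that $y^*$ is an extreme point. Thus $V_+\cup V_-=\emptyset$, so $y^*$ is half-integral as claimed. The main obstacle is the feasibility check in the tight case $y^*_u+y^*_v=1$, and in particular ruling out the pathological subcase where one coordinate is $0$ and its neighbor is in $V_+$; the observation that this forces the neighbor's value to be exactly $1$ (hence in $V_0$, a contradiction) is what makes the perturbation go through.
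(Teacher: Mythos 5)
Your proof is correct and takes essentially the same approach as the paper. The paper first establishes (as Theorem~\ref{feasible theorem}) that \emph{every} basic feasible dual solution of LP~\eqref{eq:fecdual} is half-integral, using exactly the perturbation argument you give (opposite-sign $\epsilon$-shifts on coordinates above and below $\tfrac12$, cancellation on tight edges, and the extreme-point decomposition $y^* = \tfrac12(y^+ + y^-)$), and then deduces Theorem~\ref{optimal theorem} from the existence of an optimal BFDS; your version simply collapses these two steps by starting directly from an optimal basic feasible solution and running the identical argument.
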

We will use the following result to prove the above theorem.
\begin{theorem}\label{feasible theorem}
Every basic feasible dual solution (BFDS) $\mathbf{y} = (y_{v})_{v \in V}$ is half-integral i.e., $y_{v} \in \{0, \frac{1}{2}, 1\}$ for every $v \in V$.
\end{theorem}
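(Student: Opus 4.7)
My plan is to prove the contrapositive: if some coordinate of $\mathbf{y}$ lies outside $\{0, \tfrac{1}{2}, 1\}$, then $\mathbf{y}$ cannot be a BFDS. Define $F := \{v \in V : y_v \notin \{0, \tfrac{1}{2}, 1\}\}$ and suppose $F \neq \emptyset$; I will derive a contradiction from the defining property of a BFDS, namely that the active constraints have rank equal to $|V|$. Note that every $v \in F$ satisfies $y_v > 0$, so the non-negativity constraint at $v$ is not active.

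The first observation I will record is that no tight edge constraint crosses the boundary of $F$: if $(u, v) \in E$ is tight, so $y_u + y_v = 1$, and $v \notin F$, then $y_u = 1 - y_v \in \{0, \tfrac{1}{2}, 1\}$, hence $u \notin F$. Consequently the tight edges split into $E_H$ (both endpoints in $F$) and $E_O$ (both endpoints in $V \setminus F$), and the tight non-negativity constraints occur only at vertices in $V \setminus F$.

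Next I will decompose the rank of the active-constraint system according to this partition. The active-constraint vectors from $E_O$ and from the tight non-negativities are supported on $V \setminus F$, while those from $E_H$ are supported on $F$. Since these supports are disjoint, the overall rank equals the sum of the ranks on the two blocks; to attain the BFDS requirement of rank $|V|$, each block must saturate its respective bound of $|F|$ and $|V \setminus F|$. In particular, the edge-indicator vectors $\{\mathbf{1}_u + \mathbf{1}_v : (u, v) \in E_H\}$ must span $\mathbb{R}^F$. By the classical rank formula for the undirected vertex-edge incidence matrix, this rank equals $|F|$ minus the number of bipartite connected components of $(F, E_H)$, so I will conclude that every component of $(F, E_H)$ contains an odd cycle.

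To finish, I will pick any odd cycle $v_1, v_2, \ldots, v_{2k+1}, v_1$ in $(F, E_H)$. The tightness relations $y_{v_i} + y_{v_{i+1}} = 1$ along the cycle force $y_{v_1} = y_{v_3} = \cdots = y_{v_{2k+1}}$, and closing the cycle via $y_{v_{2k+1}} + y_{v_1} = 1$ yields $2 y_{v_1} = 1$, i.e.\ $y_{v_1} = \tfrac{1}{2}$, contradicting $v_1 \in F$. The main obstacle is the rank-counting step: it hinges on the clean boundary property that tight edges do not cross $F$, and I must also verify that self-loops at vertices of $F$ cannot sneak into $E_H$ — but a tight self-loop at $v$ forces $2 y_v = 1$, so $v \notin F$, ruling that out.
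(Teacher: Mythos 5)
Your proof is correct, and it takes a genuinely different route from the paper's. The paper proves the statement via the \emph{extreme-point characterization}: starting from a feasible $\mathbf{y'}$ with some $y'_i\notin\{0,\tfrac12,1\}$, it perturbs every such coordinate in $(0,\tfrac12)$ upward by $\eps$ and every coordinate in $(\tfrac12,1)$ downward by $\eps$ to get $\mathbf{y^+}$, and symmetrically for $\mathbf{y^-}$, then checks each edge case to show both remain feasible and $\mathbf{y'}=\tfrac12(\mathbf{y^+}+\mathbf{y^-})$, so $\mathbf{y'}$ is not extreme. You instead use the \emph{rank characterization} of a BFS: you isolate the non-half-integral set $F$, observe (as the paper does implicitly when ruling out tight edges with both endpoints $\ge\tfrac12$) that no tight edge or tight nonnegativity crosses out of $F$, decompose the active-constraint matrix into a block on $F$ and a block on $V\setminus F$, and argue that full rank on the $F$-block forces the unsigned incidence matrix of $(F,E_H)$ to have rank $|F|$, hence (by the classical formula $\mathrm{rank}=|F|-\#\{\text{bipartite components}\}$) every component of $(F,E_H)$ contains an odd cycle, along which tightness forces $y=\tfrac12$, a contradiction. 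Your argument is a bit heavier (it invokes the incidence-matrix rank formula and needs the self-loop side-check), but it is more structural: it pinpoints odd cycles of tight constraints as the mechanism forcing half-integrality, which is the graph-theoretic picture underlying the result, whereas the paper's perturbation argument is shorter and more elementary but somewhat case-heavy and does not expose this structure. Both proofs are sound and both hinge on the same boundary observation that tight edges cannot join a half-integral value to a non-half-integral one.
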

Assuming the above theorem is true, Theorem~\ref{optimal theorem} follows directly since there always exists an optimal solution for any linear program that is BFDS. We will prove Theorem~\ref{feasible theorem} at the end of this section and going forward, we will be working with an half-integral optimal dual solution $\mathbf{y}$ of LP~\eqref{eq:fecdual} for $G$ to extract some structural properties.

We will now state two related results, which when combined proves Corollary~\ref{cor:matching-half}. 
\begin{lemma} \label{allhalflemmafinal}
For every subset of vertices $S \subseteq I$, the graph $G_{ms}$ induced by $M \cup S$ has an optimal dual solution to LP~\eqref{eq:fecdual} on $G_{ms}$ with all vertices getting a value of $1/2$.
\end{lemma}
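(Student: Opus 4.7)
My plan is to establish optimality of the uniform $y_v=1/2$ assignment by LP duality, by exhibiting a primal-feasible fractional edge cover of $G_{ms}$ of total weight $|V(G_{ms})|/2$.

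The uniform assignment is dual-feasible: every edge constraint $y_u+y_v\le 1$ is met with equality, and the objective value is $|V(G_{ms})|/2$, so the LP optimum is at least $|V(G_{ms})|/2$. For the matching upper bound I will start the primal by putting $x_e=1$ on every $e\in M$. Since $V(M)\subseteq V(G_{ms})$ we have $M\subseteq E(G_{ms})$, and $M$ is still a maximum matching of $G_{ms}$ because any $G_{ms}$-augmenting path would also augment $M$ in $G$. These edges cover each vertex of $V(M)$ with weight exactly $1$ and contribute total weight $|V(M)|/2$; the remaining uncovered vertices are exactly $S\setminus V(M)\subseteq I\setminus V(M_c)$.

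For each such $w$ I will exploit the Gallai--Edmonds structure of $G_c$: $w$ lies in a factor-critical component $F_w$ of $G_c$, and every vertex of $V(F_w)\setminus\{w\}$ is matched by $M_c$ internally to $F_w$, so $V(F_w)\subseteq V(G_{ms})$ and every edge of $F_w$ lies in $E(G_{ms})$. A factor-critical graph admits a half-integral fractional perfect matching of weight $|V(F_w)|/2$ via its odd-ear decomposition (put $x=1/2$ on an odd cycle through $w$ and extend over later odd ears). Replacing the $M_c$-edges inside $F_w$ by this fractional perfect matching raises the contribution on $V(F_w)$ from $(|V(F_w)|-1)/2$ to $|V(F_w)|/2$ and now covers $w$ as well, a net gain of $\tfrac12$ per absorbed unmatched vertex. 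Summing, the total primal weight becomes $|V(M)|/2+|S\setminus V(M)|/2=|V(G_{ms})|/2$, matching the dual objective, so by weak duality the uniform $y_v=1/2$ dual is optimal.

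The main obstacle will be the bookkeeping in the substitution step: different unmatched vertices must lie in vertex-disjoint factor-critical components (which they do, by Gallai--Edmonds), and no $M$-edge may cross the boundary of any $V(F_w)$ (which is automatic, because the only $M$-edges inside $G_c$ are $M_c$-edges and these stay inside components of the decomposition). A cleaner alternative I will keep in reserve, in case the factor-critical construction becomes cumbersome, is to invoke Theorem~\ref{feasible theorem} to pick a half-integral optimal $\mathbf{y}^*$ with partition $V_0,V_{1/2},V_1$ and reduce the claim to the Hall-type inequality $|N_{G_{ms}}(T)|\ge|T|$ for every independent $T\subseteq V(G_{ms})$; that inequality in turn should follow from the maximality of $M$ via a standard alternating-tree argument applied to $T\setminus V(M)$.
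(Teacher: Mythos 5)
Your proof is correct, but it takes a genuinely different route from the paper's. The paper argues by contradiction: it assumes an optimal dual on $G_{ms}$ with value exceeding $|V(G_{ms})|/2$, picks one maximizing the number of $\tfrac12$-valued vertices, extends it to all of $G$, locates a vertex $v\in I\setminus S$ with $y_v=0$, and then uses an alternating-path argument (Claim~\ref{claim1}) to build sets $V_0,V_1$ of matched vertices whose values can be flipped to $\tfrac12$, contradicting the extremal choice. You instead certify optimality of the all-$\tfrac12$ dual directly by LP weak duality: you exhibit a primal-feasible fractional edge cover of $G_{ms}$ of total weight $|V(G_{ms})|/2$, built from $x_e=1$ on $M$ and, for each exposed $w\in S$, swapping the $M_c$-edges inside $w$'s Gallai--Edmonds component for a half-integral fractional perfect matching of that component. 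This is clean and handles the $S=I$ and $S\subsetneq I$ cases uniformly, at the cost of importing the Gallai--Edmonds structure theorem and the existence of fractional perfect matchings in factor-critical graphs, which is considerably heavier machinery than the paper's self-contained augmenting-path/extremality argument. One minor imprecision worth flagging: $F_w$ is a component of $G_c[D(G_c)]$, the $D$-set of the Gallai--Edmonds decomposition of $G_c$, not a component of $G_c$ itself; the argument is unaffected since $F_w$'s edges are still edges of $G_c$ and $V(F_w)\setminus\{w\}\subseteq V(M_c)$. Your fallback route via a Hall-type inequality and Theorem~\ref{feasible theorem} is closer in spirit to what the paper actually does.
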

\begin{lemma} \label{maximummatching}
If $M_c$ is a maximum matching on $G_c$ and $M_s$ is a maximum matching on $G_s$, $M = M_c \cup M_s$ is a maximum matching on $G$.
\end{lemma}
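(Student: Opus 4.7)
The plan is to prove the statement via Berge's theorem: a matching $M$ is maximum if and only if $G$ admits no $M$-augmenting path. First I would verify that $M = M_c \cup M_s$ is a valid matching in $G$ with $|M| = |M_c| + |M_s|$, which is immediate because the decomposition gives $V(G_c) \cap V(G_s) = \emptyset$, and $M_c \subseteq E(G_c)$, $M_s \subseteq E(G_s)$ use only vertices of their respective parts. The $M$-unmatched vertices split into three classes: (i) vertices of $V(G_c) \setminus V(M_c)$, (ii) vertices of $V(G_s) \setminus V(M_s)$ (the unmatched leaves of the stars forming $G_s$), and (iii) all vertices of $V(G_t)$, since the only edges of $G_t$ are self-loops and hence cannot be matching edges.

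Next, assume toward a contradiction that some $M$-augmenting path $P$ exists in $G$, with endpoints drawn from the three classes above. I would proceed by case analysis based on which parts of the decomposition the vertices of $P$ visit. If every vertex and edge of $P$ lies in $G_c$, then $P$ is also an $M_c$-augmenting path in $G_c$, contradicting the maximality of $M_c$ on $G_c$. Symmetrically, if $P$ lies entirely within $G_s$, then $P$ is an $M_s$-augmenting path in $G_s$, contradicting the maximality of $M_s$ on $G_s$ (noting also that a star's maximum matching has size exactly one, so no augmentation is possible within a single star anyway). It remains to handle the case in which $P$ uses at least one edge crossing between two of $V(G_c), V(G_s), V(G_t)$.

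The main obstacle, and the technical heart of the argument, is ruling out this cross-edge case. This relies on the specific decomposition constructed in Appendix~\ref{sec:g-decomp}: the rule that places a vertex $v$ into $V(G_t)$ is that $v$'s only incident edges in $G$ are self-loops, so no edge of $G$ joins $V(G_t)$ to $V(G_c) \cup V(G_s)$ and hence $P$ cannot touch $V(G_t)$. Likewise, the rule that extracts the stars forming $G_s$ (together with their centers being saturated by $M_s$) ensures that the extra leaves in $V(G_s) \setminus V(M_s)$ have no edge in $G$ to $V(G_c)$; otherwise the decomposition would have absorbed that leaf into $G_c$. This eliminates alternating paths that begin at a type-(ii) or type-(iii) unmatched vertex and step into $V(G_c)$, and a symmetric argument eliminates alternating paths that begin in $V(G_c)$ and step into $V(G_s)$ via a non-$M$ edge incident to a matched star vertex (since any such step cannot be continued alternately without re-entering $V(G_c)$ through a forbidden edge). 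Thus only Cases A and B remain, and both give a contradiction, establishing that $M = M_c \cup M_s$ is a maximum matching of $G$.
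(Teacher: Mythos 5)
Your approach is genuinely different from the paper's. The paper does not use Berge's theorem at all: it assumes a larger matching $M'$ exists, observes via the half-integral dual solution $\mathbf{y}$ to LP~\eqref{eq:fecdual} that any edge of $M'$ crossing between $V(G_c)$ and $V(G_s)$ must join a vertex with $y_u = 1/2$ to a vertex with $y_v = 0$ (since $y_u + y_v \le 1$ forbids everything else), and then invokes Lemma~\ref{perfectmatching} to ``flip'' each such cross edge to the corresponding $M_s$-edge, showing $|M'| \le |M|$. That is a counting argument on the dual LP, not an augmenting-path argument.

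Two of your justifications are off, and one of them is a real gap. First, the decomposition is \emph{not} defined by an iterative absorption rule; it is defined by the fractional values in the half-integral optimal dual solution $\mathbf{y}$ (vertices with $y_v = \tfrac12$ form $V(G_c)$, the rest split into $G_s$ and $G_t$). The correct reason an unmatched leaf of $G_s$ has no edge to $V(G_c)$ is that the leaf carries $y_v = 1$, every vertex of $G_c$ carries $y_u = \tfrac12$, and $y_u + y_v = \tfrac32 > 1$ would violate a constraint of LP~\eqref{eq:fecdual}; ``would have been absorbed'' is not a statement about this decomposition. Second, and more substantively: when you argue that a path stepping from $V(G_c)$ into a star center $v$ (which has $y_v = 0$, so this edge is permitted) and then along the $M_s$-edge to its matched leaf $w$ ``cannot be continued alternately without re-entering $V(G_c)$ through a forbidden edge,'' you have identified the wrong obstruction. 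The leaf $w$ has $y_w = 1$, so it is indeed barred from $V(G_c)$, but nothing you said rules out $w$ having a non-$M$ edge to a \emph{different} star center $z$ with $y_z = 0$ — that edge satisfies $y_w + y_z = 1$ and is not forbidden by the LP. What actually kills the continuation is the structural fact that $G_s$ is a vertex-disjoint union of stars, so $w$ has degree $1$ in $G_s$ (its only neighbor is its own center $v$); you never invoke this. Without explicitly appealing to the star structure (which is exactly the content that Corollary~\ref{cor:g-decomp}, Lemma~\ref{perfectmatching}, and the decomposition in Appendix~\ref{sec:g-decomp} supply), your case analysis does not close, because an alternating path could in principle hop from leaf to foreign center to its matched leaf and so on, and you would then need a further argument that it cannot terminate at an unmatched vertex.
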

Finally, to prove Corollary~\ref{cor:g-decomp}, we will need the following result as well.
\begin{lemma} \label{perfectmatching}
In $G_s$, every vertex $v$ having a value $y_v = 0$ can be uniquely matched to a vertex $u$ having a value $y_u = 1$, directly connected to it.
\end{lemma}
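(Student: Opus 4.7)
The plan is to reason component-by-component in $G_s$: each connected component of $G_s$ is a star of size at least two by construction of the decomposition, and since the edges of LP~\eqref{eq:fecdual} never cross between connected components of $G$, the restriction of the optimal half-integral $\mathbf{y}$ (from Theorem~\ref{optimal theorem}) to any such star is itself an optimal half-integral dual for that star's sub-LP. So it suffices to analyze a single star with center $c$ and $k \ge 1$ leaves $\ell_1, \ldots, \ell_k$, and then take the disjoint union of the local matchings produced.

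On one star, I would enumerate the possible values. The constraint $y_c + y_{\ell_i} \le 1$ together with half-integrality gives $y_{\ell_i} \in \{0, 1/2, 1\}$ with $y_{\ell_i} \le 1 - y_c$, and local optimality forces each $y_{\ell_i}$ to attain the largest half-integral value at most $1 - y_c$ (otherwise we could increase $y_{\ell_i}$ alone without affecting any other constraint, strictly improving the objective). Hence every leaf has the same value and $(y_c, y_{\ell_i})$ lies in $\{(0,1),(1/2,1/2),(1,0)\}$. Comparing the star's objective contributions $k$, $(k+1)/2$, and $1$ respectively, the configuration $y_c = 1$ is ruled out whenever $k \ge 2$, while for $k = 1$ (an isolated edge) all three configurations tie. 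In every star the set of $0$-vertices thus has size at most one, and whenever non-empty its unique element is adjacent to at least one vertex of value $1$ (any of the $k$ leaves if $y_c = 0$; the center if $k = 1$ and $y_c = 1$).

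Given this structure, I would finish by assigning each $0$-vertex to an arbitrarily chosen $1$-neighbor (for example $\ell_1$ when $y_c = 0$). Taking the union across the vertex-disjoint components of $G_s$ yields an injection from $\{v \in V(G_s) : y_v = 0\}$ into $\{u \in V(G_s) : y_u = 1\}$ along edges of $G_s$, which is precisely the uniqueness claim. The main obstacle to handle carefully is ruling out $y_c = 1$ on a star with $k \ge 2$ leaves, for in that configuration all $k$ leaves would be $0$-vertices competing for the single $1$-neighbor $c$ and uniqueness would fail; global optimality of $\mathbf{y}$ is exactly what excludes this case, so the argument really does need to invoke Theorem~\ref{optimal theorem} and not merely half-integral feasibility.
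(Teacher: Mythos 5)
Your argument rests on the premise that each star of $G_s$ is a connected component of $G$, so that the restriction of $\mathbf{y}$ to one star is itself an optimal dual for that star's sub-LP and a leaf's value can be raised ``without affecting any other constraint.'' That premise is false. The decomposition places a vertex in $G_s$ precisely because $y_v \in \{0,1\}$ and the induced subgraph on those vertices forms stars, but such a vertex can still have edges in $G$ to vertices of $G_c$ (the $\tfrac12$-valued vertices); see the paper's own Example~\ref{ex:g-decomp-ex}, where $4 \in V(G_s)$ is adjacent to $1 \in V(G_c)$. A constraint $y_{\ell_i} + y_w \le 1$ with $y_w = \tfrac12$ caps $y_{\ell_i}$ at $\tfrac12$, so raising a leaf ``alone'' can be infeasible, and the conclusion that all leaves of a star share the same value does not follow. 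For the same reason the comparison of the three objective contributions $k$, $(k+1)/2$, $1$ is between alternatives that are not necessarily all feasible once the cross-boundary edges are counted, and the case $y_c = 1$ cannot be discharged by this local swap. (Additionally, the configuration $(y_c, y_{\ell_i}) = (\tfrac12,\tfrac12)$ would put those vertices in $G_c$, not $G_s$, so it is not one of the configurations the given $\mathbf{y}$ can realize on $G_s$.)

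The paper's proof works globally rather than star-by-star. It first observes that a $0$-vertex with no $1$-neighbor can be raised to $\tfrac12$ (feasible against \emph{all} of $G$, including $G_c$, since every neighbor has value $\le \tfrac12$), contradicting optimality. It then applies Hall's theorem to the bipartite graph between the $0$-vertices and their $1$-neighbors: if some set $U$ of $0$-vertices had a $1$-neighborhood $V$ with $|V| < |U|$, all of $U \cup V$ could be set to $\tfrac12$, again feasibly against all edges of $G$, strictly increasing the objective. It is precisely the checking of these exchange moves against the full constraint set of $G$ that your componentwise reduction elides, and that checking is where the content of the lemma lives.
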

We are now ready to prove Corollary~\ref{cor:g-decomp}.
\begin{proof} [Proof of Corollary~\ref{cor:g-decomp}]
The proof follows from the results we proved earlier in this section. From Lemma~\ref{allhalflemmafinal}, it follows that each vertex in $G_c$ gets a value of $\frac{1}{2}$. Note that by half-integrality, all the other vertices in $V(G) \setminus V(G_c)$ get either a $0$ or $1$. Since each vertex in $G_t$ is a singleton, each vertex gets a value $1$. Using Lemma~\ref{perfectmatching}, we have in $G_s$ that every vertex $v$ having a value $y_v = 0$ can be uniquely matched to a vertex $u$ having a value $y_u = 1$. Since $M = M_c \cup M_s$, where $M_c$ is a maximum matching on $G_c$ and $M_s$ is a maximum matching on $G_s$ by Lemma~\ref{maximummatching}, each $e = (u, v) \in M \setminus G_c$ has $(u = 0, v = 1)$ or $(u = 1, v = 0)$. Finally, note that all other vertices in $\{V(G_s) \setminus V(M)\}$ can be assigned a value of $1$. 
\end{proof}
Our remaining task in this section is to prove the above results (that we assumed to be true so far) one-by-one. 

\subsubsection{Proof of Lemma~\ref{allhalflemmafinal}}
We split this proof into two cases, when $S = I$ and $S \subset I$. We first state and argue the former,  which is fairly straightforward. The latter, while fairly intuitive, needs a more technically involved argument and to the best of our knowledge, was not explicitly before.
\begin{lemma} \label{allhalfoptimal}
Let $\mathbf{y}$ be an half-integral optimal basic feasible dual solution to LP~\eqref{eq:fecdual}. Consider the subgraph $G_c$ of $G$ in which every vertex $u \in V(G_c)$ has $y_{u} = \frac{1}{2}$. Then, there exists an optimal dual solution for $G_c$, where all vertices $u \in V(G_c)$ get a value of $\frac{1}{2}$. 
\end{lemma}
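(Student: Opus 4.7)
The plan is to prove this by a local-swap argument: I will assume for contradiction that the all-$1/2$ assignment $\mathbf{z}$ on $V(G_c)$ is \emph{not} optimal for the LP~\eqref{eq:fecdual} restricted to $G_c$, take a strictly better feasible solution $\mathbf{z}^*$ on $G_c$, and splice it back into $\mathbf{y}$ on the rest of $G$ to contradict the optimality of $\mathbf{y}$ on $G$.

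The key structural observation that makes the splicing work is the following: every vertex $w \in V(G) \setminus V(G_c)$ that is adjacent (in $G$) to some vertex $u \in V(G_c)$ must satisfy $y_w = 0$. Indeed, by half-integrality (Theorem~\ref{feasible theorem}) we have $y_w \in \{0,1\}$, and if $y_w = 1$ then the edge $(u,w)$ would give $y_u + y_w = \tfrac12 + 1 = \tfrac32 > 1$, violating feasibility of $\mathbf{y}$. So the ``boundary'' of $V(G_c)$ in $G$ is surrounded by zero-valued vertices under $\mathbf{y}$.

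Given any feasible $\mathbf{z}^*$ for LP~\eqref{eq:fecdual} on $G_c$, define $\mathbf{y}^* \in \mathbb{R}^{V}$ by $y^*_v = z^*_v$ for $v \in V(G_c)$ and $y^*_v = y_v$ otherwise. I will check the three types of edges of $G$: edges inside $G_c$ are handled by feasibility of $\mathbf{z}^*$; edges entirely outside $V(G_c)$ are handled by feasibility of $\mathbf{y}$; and a crossing edge $(u,w)$ with $u \in V(G_c), w \notin V(G_c)$ has $y^*_u + y^*_w = z^*_u + 0 \le 1$ by the observation above together with $z^*_u \le 1$ (which follows from the feasibility of $\mathbf{z}^*$ on $G_c$, using any edge incident to $u$ inside $G_c$, or the trivial bound from the non-negativity/edge constraints; for isolated $u \in V(G_c)$ a tiny separate argument reduces to the fact that $y_u = \tfrac12$ is already optimal for the singleton LP). Since $\sum_{v \in V(G_c)} y_v = |V(G_c)|/2$, any $\mathbf{z}^*$ with $\sum_{v \in V(G_c)} z^*_v > |V(G_c)|/2$ would yield $\sum_v y^*_v > \sum_v y_v$, contradicting optimality of $\mathbf{y}$ for $G$.

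The main (and essentially only) obstacle is confirming the boundary observation rigorously in the presence of self-loops and multi-edges in $G$ and ruling out pathological isolated vertices of $V(G_c)$; once that is cleanly done, the splicing argument is immediate. I therefore expect the proof to be short, with most of the effort spent making the boundary claim robust to the hypergraph-to-graph collapsing conventions adopted earlier in the paper.
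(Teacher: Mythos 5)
Your proof is correct and takes essentially the same route as the paper's: assume a strictly better feasible dual solution on $G_c$, establish the boundary observation (by half-integrality and feasibility of $\mathbf{y}$, every vertex outside $V(G_c)$ adjacent to $V(G_c)$ must have value $0$), splice the improved solution into $\mathbf{y}$ on $V(G_c)$, and contradict optimality of $\mathbf{y}$ for $G$. One detail to fix in the parenthetical: $y_u = \frac{1}{2}$ is \emph{not} optimal for the singleton LP (the optimum there is $1$); the correct resolution is that a $u \in V(G_c)$ with no edge inside $G_c$ simply cannot exist, since by the boundary observation all its neighbors have $y_w = 0$, so $y_u$ could be raised to $1$, contradicting optimality of $\mathbf{y}$ on $G$.
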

\begin{proof} [Proof of Lemma~\ref{allhalfoptimal}]
The proof is by contradiction. We start by assuming that there exists a better optimal dual solution $\mathbf{y'}$ for $G_c$ such that $\sum_{i = 1}^{|V(G_c)|}y'_i > \frac{|V(G_c)|}{2}$. The constraints of~\eqref{eq:fecdual} imply that any vertex $u$ such that $y_{u} = 1/2$ can only be directly connected to vertices $w$ such that $y_{w} \in \{0, 1/2\}$. If $y_{w} = 1/2$, $w$ will already be a part of $G_c$. Since $\mathbf{y}$ is half-integral, every vertex $v \in V \setminus V(G_c)$ such that there exists at least one edge $e = (u, v) \in E \setminus E(G_c), u \in V(G_c)$ has $y_{v} = 0$. In particular, this implies that we can replace $\mathbf{y}_{V(G_c)}$ by $\mathbf{y'}$ (i.e., we can set $y_{v} = y'_{v}$ for every $v \in V(G_c)$), without violating any constraints. Note that this implies 
\[\sum_{i = 1}^{|V(G)|}y_i = \sum_{v \not \in V(G_c)} y_i + \sum_{v \in V(G_c)} y_i > \sum_{v \not \in V(G_c)} y_i + \frac{|V(G_c)|}{2}.\]
Thus, we can obtain a $\mathbf{y}$ with a strictly greater value, resulting in a contradiction $\mathbf{y}$ is an optimal dual solution for $G$.
\end{proof}
To obtain the structural properties of vertices $v$ having $y_v = 1/2$ in $\mathbf{y}$, we start by assuming that $G_c$ is a single connected component. Then, there always exists a maximum matching $M$ for $G_c$ that can be computed in $\poly(|G_c|)$ time such that $M$ is a single connected component. Since the vertices in the independent set of $G_c$ do not have any direct edges between them, it should be possible to traverse between any pair of vertices in $G_c$ using the graph $G_m$ induced by $M$. We define $I = V(G_c) \setminus V(G_m)$ as the independent set of vertices and $n_I = |V(I)|$.

\begin{proof} [Proof of Lemma~\ref{allhalflemmafinal}]
When $S = I$, we already know by Lemma~\ref{allhalfoptimal} that there exists an optimal dual solution where all vertices get a value of $1/2$. For the rest of this proof, we assume that $S \subset I$. 
	
We start by assuming that $G_{ms}$ has an optimal dual solution $\mathbf{y}_{ms}$ with a value greater than $(|M| + |S|)/2$. Among all possible optimal dual solutions, $\mathbf{y}_{ms}$ is picked as the one having the maximum number of vertices set to a value $1/2$. Note that $\mathbf{y}_{ms}$ can be extended to all of $G$ in at least one way, without violating any of its constraints by retaining the half-integrality of the solution. One such trivial extension is when all vertices in $V(G) \setminus V(M \cup S)$ get a value of $0$. Among all such feasible half-integral extensions, we pick the one that has the maximum value. Let's call the finally chosen solution $\mathbf{y}$. We now show that there will be at least one vertex $v \in I \setminus S$ such that $y_{v} = 0$.
	
The proof is by contradiction. We start by assuming there exists no vertex $v \in I \setminus S$ such that $y_{v} = 0$. Note that this implies for every vertex $v \in I \setminus S$, $y_{v} \in \{1/2, 1\}$. We already know that the optimal dual solution for $M \cup S$ is greater than $(|M| + |S|)/2$. This, when combined with the fact that for every $v \in I \setminus S : y_{v} \in \{1/2, 1\}$, implies that the optimal dual solution for $M \cup I$ (which is all of $V(G_c)$) is greater than $(|M| + |I|)/2$. Note that we have ended up contradicting Lemma~\ref{allhalfoptimal}. Thus, we have shown that there should exist at least one vertex $v \in I \setminus S$ such that $y_{v} = 0$. We claim the following.
	
\begin{claim} \label{claim1}
If $v \in I \setminus S$ and $y_{v} = 0$, there exists a subset of matching edges in $M$ with the {\em left} side being $V_{1}$ and {\em right} side being $V_{0}$ such that:
\begin{itemize}
\item{For every $u \in V_{1}$, $y_{u}=1$.}
\item{For every $w \in V_{0}$, $y_{w}=0$. }
\item{All neighbors $x$ of $v$ such that $y_{x} = 1$ are contained in $V_{1}$.}
\item{All neighbors $x$ of every $w \in V_{0}$ such that $y_{x} = 1$ are contained in $V_{1}$.}
\end{itemize}
\end{claim}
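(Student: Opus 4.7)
The plan is to construct $V_1$ and $V_0$ by an alternating breadth-first search rooted at $v$ that pivots between value-$1$ neighbors and their $M$-partners. Initialize $V_1^{(0)}$ to be the set of value-$1$ neighbors of $v$ in $G$, let $V_0^{(0)}$ be the $M$-partners of $V_1^{(0)}$, and iterate: $V_1^{(i+1)}$ is the set of value-$1$ neighbors of $V_0^{(i)}$ not already included, and $V_0^{(i+1)}$ is the set of their $M$-partners. Terminate when the process saturates and set $V_1=\bigcup_i V_1^{(i)}$ and $V_0=\bigcup_i V_0^{(i)}$.

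The first thing I will verify is that the construction never falls off the matching, i.e., every vertex placed in $V_1$ actually has an $M$-partner to contribute to $V_0$. For $V_1^{(0)}$ this is immediate: the neighbors of $v\in I\setminus S$ lie in $V(G_c)$ by the decomposition, and since $I$ is an independent set in $G_c$, none of those neighbors can lie in $I$; hence all of them lie in $V(M)$. Once this is established, the first two properties of the claim are essentially by definition: every $u\in V_1$ has $y_u=1$ by selection, and every $w\in V_0$ is the $M$-partner of some $u\in V_1$, so the constraint $y_u+y_w\le 1$ of LP~\eqref{eq:fecdual}, combined with $y_w\ge 0$, forces $y_w=0$. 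The third property, that value-$1$ neighbors of $v$ are in $V_1$, is built into the definition of $V_1^{(0)}$.

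The main obstacle is the fourth property, that every value-$1$ neighbor of every $w\in V_0$ must in fact lie in $V(M)$, and moreover in $V_1$. I plan to prove this by contradiction from the maximality of $M$ as a matching on $G_c$. Suppose some $w\in V_0$ has a neighbor $x$ with $y_x=1$ that does not lie in $V_1$. The only way $x\notin V(M)$ is $x\in I$, because $x\in V(G_c)$ (by the same decomposition argument as above applied to $w\in V(G_c)$). Chaining back through the BFS gives a walk $v,x_1,w_1,x_2,w_2,\dots,x_k,w_k=w,x$ in which every $(x_i,w_i)$ is a matching edge of $M$ and every $(v,x_1)$, $(w_i,x_{i+1})$, $(w_k,x)$ is a non-matching edge of $G_c$. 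Taking $x_1,\dots,x_k$ to be a shortest BFS witness makes this walk into a simple alternating path whose two endpoints $v$ and $x$ are both unmatched in $M$. Flipping matching and non-matching edges along this augmenting path produces a matching of size $|M|+1$ in $G_c$, contradicting the maximality of $M$.

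The subtleties I expect to have to handle carefully are: ensuring the constructed alternating path is simple, so that the flip indeed yields a valid larger matching (this follows from taking the shortest BFS witness and from the disjointness of successive layers $V_1^{(i)}$ and $V_0^{(i)}$ by definition); handling the degenerate case $k=0$, where $w=w_k$ is already in $V_0^{(0)}$ and the path is just $v,x_1,w,x$; and the structural fact, which I will use from the decomposition, that every edge incident to a vertex of $V(G_c)$ stays within $V(G_c)$, so that the whole augmenting path lives inside $G_c$ where $M$ is a maximum matching.
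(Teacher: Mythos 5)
Your proof takes essentially the same route as the paper: an alternating BFS from $v$ that pivots between value-$1$ neighbors and their $M$-partners, with the augmenting-path / maximum-matching contradiction ruling out an unmatched value-$1$ neighbor; the paper states this more tersely, and the extra details you supply (well-definedness of the layering, simplicity via a shortest BFS witness) are consistent with it. The one caveat is that the ``structural fact'' you invoke — that every edge incident to $V(G_c)$ stays within $V(G_c)$ — is \emph{not} a consequence of the decomposition (edges from $V(G_c)$ to $V(G_s)$ or $V(G_t)$ are perfectly allowed); what actually keeps the argument inside $G_c$, for you and for the paper alike, is that the claim's ``neighbors'' are implicitly restricted to the subgraph $G_{ms}$ (or at least $G_c$), which is all that the enclosing Lemma~\ref{allhalflemmafinal} needs when it resets $V_0\cup V_1$ to $1/2$.
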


Assuming that the above claim is true, we can set all the nodes in $V_{1}$ and $V_{0}$ to a value $1/2$ without violating any constraints. This is true since every neighbor $u$ of every node $w \in V_{0}$ such that $y_{u} = 1$ is already contained in $V_{1}$ and if the value of any node $u \in V_{1}$ is changed from $1$ to $1/2$, it does not affect its neighbors (as they are already set to $0$ or $1/2$). Note that our new solution $\mathbf{y'}_{ms}$ is feasible and has a strictly greater number of vertices with a value $1/2$ than the original optimal dual solution $\mathbf{y}$ and $\sum_{i=1}^{|M \cup S|} y'_i = \sum_{i = 1}^{|M \cup S|}y_{{ms}_i}$. This is a contradiction to our choice of $\mathbf{y}_{ms}$. This implies no such vertex $v \in I \setminus S : y_{v} = 0$ can exist, which in turn proves that there can never be a better optimal dual solution to $M \cup S$ than all vertices getting a value of $1/2$.
	
Finally, we prove Claim~\ref{claim1} algorithmically. Initially, we add all neighbors $u$ of $v$ such that $y_{u} = 1$, to $V_{1}$. Note that given $V_{1}$, the corresponding $V_{0}$ is automatically defined using $M$. We now make a crucial observation that any neighbor $u$ of $V_{0}$ such that $y_{u} = 1$ has to be a node in $M$. To see why this is true, let's assume for now that $u \in I$. In that case, we would end up with an augmenting path between $v$ and $u$ i.e., a path between two unmatched vertices ($u$ and $v$) in which edges belong alternatively to the matching and not to the matching. A matching is considered maximal if and only if it does not have any augmenting path. Since a maximum matching is also a maximal one, $u$'s presence in $I$ would contradict the fact that $M$ is a maximum matching. As a result, we can add all such $u$'s to $V_{1}$ (if they are not already present) and keep repeating this process. Our algorithm will always terminate as it will visit each node in $G_c$ at most once, in the worst case. In particular, we obtain the sets $V_{0}$ and $V_{1}$ with the properties we specified in the statement of Claim~\ref{claim1}.
\end{proof}

\subsubsection{Proofs of Lemma~\ref{maximummatching} and Lemma~\ref{perfectmatching}}
To prove Lemma~\ref{maximummatching}, we use Lemma~\ref{perfectmatching}, which we prove next.
\begin{proof} [Proof of Lemma~\ref{maximummatching}]
We start by assuming that there exists a better maximum matching $M'$ on $G$. Note that the additional matching edges can exist in $M'$ only between vertices $u \in V(G_c)$ and $v \in V(G_s)$ such that $y_{u} = 1/2$ and $y_{v} = 0$. (This follows from the constraints of~\eqref{eq:fecdual} and the fact that $M = M_c \cup M_s$.) We have already shown in Lemma~\ref{perfectmatching} that for every $v \in V(G_s \cup G_t) : y_{v} = 0$, there exists a corresponding uniquely mapped vertex $w \in V(G_s \cup G_t) : y_{w} = 1$. As a result, we can {\em flip} each edge in $M' \setminus M$ to a corresponding edge in $M_s$. This concludes that $|M'| \le |M|$, resulting in a contradiction that $M'$ can never be a better matching than $M$.
\end{proof}
\begin{proof} [Proof of Lemma~\ref{perfectmatching}]
The proof is by contradiction. Towards this end, we start by assuming that such a unique matching does not exist. This can happen in two ways:
\begin{itemize}
\begin{item}
When a vertex $v$ such that $y_{v} = 0$ is not directly connected to any other vertex $u : y_u = 1$, we can always assign $y_v = 1/2$ without violating any constraint of~\eqref{eq:fecdual}. This results in a better dual solution $\mathbf{y'}$ (where all other vertices except $v$ have the same value as in $\mathbf{y}$), contradicting the optimality of $\mathbf{y}$.  
\end{item}
\begin{item}
There exists a set of vertices $U$ such that for every vertex $u \in U: y_{u} = 0$ and its corresponding $1$-neighbor set (i.e. the set of vertices $V$ where every vertex $v \in V : y_{v} = 1$ and $(u, v) \in E$) satisfies $|U| > |V|$. (Otherwise by Hall's Theorem, there exists a required matching.) Without loss of generality, we can remove all the edges $e = (u_1, u_2)$ such that $u_1 \in U$ and $u_2 \in U$. (Note that even if $y_{u_1} = 1/2$ and $y_{u_2} = 1/2$, these edges can still be removed as they do not violate any constraints. We will be using this observation later.) Further, there can be no edges $e = (v_1, v_2)$ such that $v_1 \in V$ and $v_2 \in V$ as that would violate the constraints of~\eqref{eq:fecdual}.  It follows that any vertex $v$ in $V$ can only be directly connected to a vertex $u$ in $U$ and for one such $e = (v, u)$, $y_u$ and $y_v$ can be set to $1/2$ without violating any constraints, provided the process is repeated for all such $e$s.
			
Since $|U| > |V|$, we can set all vertices in vertex-sets $U$ and $V$ to a value of $1/2$ without violating constraints as shown in the previous paragraph. This would result in a better solution $\mathbf{y'}$  such that $\sum_{v \in (U \cup V)} y'_v = (|U| + |V|)/2 > |V|$, contradicting the optimality of $y$ since we can replace $\mathbf{y}_{V(G_s)}$ by $\mathbf{y'}$ (i.e. replace the value $y_v$ of every vertex $v \in V(G_s)$ by $y'_v$), without violating any constraints. Note that $\sum_{v \in V(G)}y_v$ strictly increases resulting in a contradiction that it is an optimal dual solution for $G$ and establishes our claim that $|U| \le |V|$.
\end{item}
\end{itemize}
As a result, we have shown that every vertex $v : y_{v} = 0$ will have an unique matching candidate $u: y_{u} = 1$, completing our proof. 
\end{proof}

\subsubsection{Proof of Theorem~\ref{feasible theorem}}
We would like to note that this is a standard result. However, we prove it here for the sake of completeness.
\begin{proof} [Proof of Theorem~\ref{feasible theorem}]
The proof is by contradiction and uses the fact that a BFDS cannot be expressed as a convex combination two other feasible dual solutions. 
	
We start by assuming that there exists a BFDS $\mathbf{y'}$ for the LP~\eqref{eq:fecdual} for $G$ containing a value $y'_{i} \notin \{0, 1/2, 1\}$. It follows that $G$ can never have an edge $e \in E$ with endpoints $(i, j)$ such that $y'_i \ge 1/2$ and $y'_j > 1/2$ as it would violate the constraint $y'_{i} + y'_{j} \le 1$ on $e$, resulting in an infeasible solution.
	
We now pick a really small value $\epsilon > 0$ such that there are two solutions $\mathbf{y^+}$ and $\mathbf{y^-}$ to the LP, where:
	
\[
y^+_i = 
\begin{cases}
y'_i + \epsilon & \textit{if $y'_i \in (0, 1/2)$} \\
y'_i - \epsilon & \textit{if $y'_i \in (1/2, 1)$} \\
y'_i & otherwise
\end{cases}
\numberthis \label{eq:y1} \]
	
\[
y^-_i = 
\begin{cases}
y'_i - \epsilon & \textit{if $y'_i \in (0, 1/2)$} \\
y'_i + \epsilon & \textit{if $y'_i \in (1/2, 1)$} \\
y'_i & otherwise
\end{cases}.	
\numberthis \label{eq:y2}\]
	
It follows from the definitions of $\mathbf{y^+}$ and $\mathbf{y^-}$ that for each $i \in V$, $y'_i = (y^+_i + y^-_i)/2$. We now argue that $\mathbf{y^+}$ and $\mathbf{y^-}$ are both feasible solutions for the LP~\eqref{eq:fecdual}.	
	
\textbf{$\mathbf{y^+}$ is a feasible solution: } Recall that $\mathbf{y'}$ is a feasible solution to LP~\eqref{eq:fecdual}. Note that this implies for every $e = (i, j) \in V$, we have $y'_i + y'_j \le 1$. We now consider each case in~\eqref{eq:y1}.  When $y'_i \in (0, 1/2)$ and $y'_j \in (0, 1/2)$, we have $y^+_i = y'_i + \epsilon + y'_j + \epsilon = y'_i + y'_j + 2 \epsilon$ with $y'_i < 1/2, y'_j < 1/2$. If we choose $\epsilon$ to be small enough, we can ensure that $y^+_{i} \le 1$. When $y'_i \in (1/2, 1)$ and $y'_j \in (1/2, 1)$, we have $y^+_i = y'_i - \epsilon + y'_j - \epsilon = y'_{i} + y'_{j} - 2 \epsilon$. Note that such an edge cannot exist in $E$ by definition. Finally, we consider the case when $y'_i \in (0, 1/2)$ and $y'_j \in (1/2, 1)$ or vice-versa, we have $y^+_i = y'_i + \epsilon + y'_j -\epsilon = y'_i + y'_j $. Note that $y'_{i} + y'_{j} \le 1$ by definition of $\mathbf{y'}$. Thus, we have shown that $\mathbf{y^+}$ is a feasible solution to the LP~\eqref{eq:fecdual}. Note that we can use a similar argument to show that $\mathbf{y^-}$ is a feasible solution as well.
	
We have shown that $\mathbf{y'}$ can be expressed as a convex combination of two feasible dual solutions and as a result, it cannot be a BFDS. Therefore, a solution that is not half integral cannot be a BFDS.
\end{proof}

\section{Missing Cases in Table~\ref{tab:results}} \label{sec:arity2lemmas}
We start by restating Table~\ref{tab:results} below. Recall that we proved only the case corresponding to only rows $3$ and $4$ in the main paper. We prove all the remaining results here.
\begin{table*}[thbp!]
	{\small
		\centering
		{
			\hspace{0.5cm} {
				\begin{tabular}{| c | c | c |}
					\hline
					$n - \Delta + 1$ & $G$ & Is AGM tight?\\
					\hline
					\rowcolor{red!30} $= 1$ & - & Yes \\
					\hline
					\rowcolor{red!30} $\in [2, |M|]$, even &  -  & Yes \\
					\hline
					\rowcolor{green!50} $\in [3,|M| - 1]$, odd & not \textit{disedge} & No \\
					\hline
					\rowcolor{red!30} $\in [3,|M| - 1]$, odd & \textit{disedge} & Yes \\
					\hline
					\rowcolor{red!30} $\in [|M| + 1, |M| + n_I]$ & -  & Yes \\
					\hline
					\rowcolor{red!30}  $\in [|M| + n_I + 1, n]$ & - & Yes \\
					\hline
			\end{tabular}}
			\caption{The first column denotes the value of $n - \Delta + 1$ and the second column denotes constraints on $G$. `-' denotes no restrictions on $G$. The final column shows if the $\AGM$ bound is tight i.e., it is tight if $\max_{R_e, |R_e| \le N, e \in E}\PMB(\q = (G, \{R_e: e \in E\}), S) = \AGM(G, S)$ and not tight otherwise. The resulst in rows $3$ and $4$ were proved in the main paper and we prove the remaining results here.} \label{tab:results-app}
		}
	}
\end{table*}
We would like to note here that the $\AGM$ bound is tight for all the below arguments.
\subsection{Row $1$ of Table~\ref{tab:results-app}} \label{app:remaining-rows}
\begin{lemma} \label{lemma1}
Consider the case from row $1$ in Table~\ref{tab:results}. $G$ is any graph and $s = n - \Delta + 1 = 1$. Then, there exists a subset $S \subseteq V$ with size $1$ such that
\begin{equation} \label{eq:lemma1}
\joinUB(G, N, 1) \le  N \le 2 \cdot \cover(G, N, n).
\end{equation}
\end{lemma}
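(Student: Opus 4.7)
The plan is to establish the two inequalities $\joinUB(G,N,1) \le N$ and $N \le 2 \cdot \cover(G,N,n)$ separately, using a trivial choice of $S$ for the upper bound and a diagonal construction for the lower bound.

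For the upper bound, I would fix any vertex $v \in V$ that is incident to some edge of $G$ and set $S = \{v\}$. Then for every $e \in E$ with $v \in e$, the projection $\pi_{e\cap S}(R_e) = \pi_v(R_e) \subseteq \Dom(v)$ has size at most $|R_e| \le N$; for edges $e$ with $v \notin e$, the projection lives on the empty attribute set and in the worst case contributes the trivial one-tuple relation, which does not increase the join size. Choosing any $e_0 \ni v$, the full join is contained in $\pi_v(R_{e_0})$, so $\left|\Join_{e\in E}\pi_{e\cap S}(R_e)\right| \le N$, proving $\joinUB(G,N,1) \le N$ via the definition \eqref{eq:free-vars}.

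For the lower bound I would construct a specific worst-case instance and invoke the strategy from Section~\ref{sec:lb}. For every edge $e = (u,w) \in E$ (allowing self-loops $u=w$), set $R_e = \{(i,i) : i \in [N]\}$ (or $\{i : i \in [N]\}$ if $e$ is a self-loop), so $|R_e| \le N$. Let $C_1,\ldots,C_k$ be the connected components of $G$; then $\J = \prod_{j=1}^{k} \J_{C_j}$, where $\J_{C_j} = \{i\cdot\vone_{|C_j|} : i \in [N]\}$ is the ``diagonal'' tuple set in component $j$. I would show that any join cover $\calU \subseteq \J$ for $\Delta = n$ must have size at least $N$ as follows. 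For a cover, every $\mathbf{t} \in \J$ must agree in at least one position with some $\mathbf{t'} \in \calU$, which on this diagonal instance is equivalent to agreeing in some \emph{component}. Let $A_j \subseteq [N]$ be the set of diagonal labels appearing in the component-$j$ part of tuples of $\calU$. If $|A_j| < N$ for every $j$, pick $i_j \in [N] \setminus A_j$; the tuple $\mathbf{t} = (i_1 \vone, \ldots, i_k \vone) \in \J$ fails to agree in any position with any element of $\calU$, a contradiction. Hence some $|A_{j^*}| = N$, which forces $|\calU| \ge N$, giving $\cover(G,N,n) \ge N \ge N/2$.

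The main conceptual step is the pigeonhole argument that a cover must saturate the set of diagonal labels in at least one connected component; the per-edge size constraint $|R_e| \le N$ is automatic from the diagonal construction, and the upper bound is essentially immediate once we use $S = \{v\}$. The mild slack (the factor of $2$ in the statement) is absorbed for free by this construction, which in fact yields $\cover(G,N,n) \ge N$.
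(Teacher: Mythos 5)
Your proposal is correct and uses the same construction as the paper (the diagonal/repetition instance with $S=\{v\}$ a single vertex for the upper bound), but your lower-bound argument is actually more careful than the paper's in one respect. The paper instantiates $R_e=\{(i,i):i\in\mathbb{F}_q\}$ and then asserts $\mathcal{C}=\J$, concluding that since $\mathcal{C}$ is a code with minimum distance $n$, the smallest join cover is $\J$ itself. This equality holds only when $G$ is connected; when $G$ has $k\ge 2$ connected components, $\J$ is the $k$-fold product of per-component diagonals, which is strictly larger than $\mathcal{C}$ and is \emph{not} a distance-$n$ code (two tuples can agree on a whole component), so the paper's stated reasoning does not directly apply. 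Your pigeonhole argument on the label sets $A_j$ per component correctly handles this general case and still gives $\cover(G,N,n)\ge N$, which is even a little stronger than the paper's $N/2$ (you avoid the power-of-two restriction since the repetition code needs no field structure). Two small notes: for the upper bound, the paper routes through the $\PMB$ LP ($\delta_{e'|\emptyset}=1$ for one edge $e'\ni v$), whereas you argue directly on projections — both are fine and yield the same bound; and your assumption that some edge is incident to $v$ is harmless, since in this paper's model every vertex carries at least a self-loop relation.
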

\begin{proof}
Our proof proceeds in two steps. First, we argue
\begin{equation} \label{eq:lemma1-1}
\joinUB(G, N, 1) \le  N 
\end{equation}
followed by
\begin{equation} \label{eq:lemma1-2}
\cover(G, N, n) \ge  \frac{N}{2}.
\end{equation}
Note that these two inequalities would immediately imply~\eqref{eq:lemma1}. We begin with proving~\eqref{eq:lemma1-1}. In particular, if we prove
\begin{equation} \label{eq:lemma1-3}
\min_{S \subseteq V, |S| = 1}  \max_{R_e: |R_e| \le N, e \in E} 2^{\PMB(\q, S)} \le N,
\end{equation}
then~\eqref{eq:lemma1-1} follows from Theorem~\ref{upperboundtheorem}. If we can show that there exists a $S$ with $|S| = 1$ such that
\begin{equation} \label{eq:lemma1-4}
\PMB(\q, S) \le \log(N),
\end{equation}
then that would imply $\min_{S \subseteq V, |S| = 1}  \max_{R_e: |R_e| \le N, e \in E} 2^{\PMB(\q, S)} \le N$ using~\eqref{eq:lemma1-3}. Note that this implies $\joinUB(G, N, 1) \le  N$ from Theorem~\ref{upperboundtheorem}, proving~\eqref{eq:lemma1-1}.

We now prove~\eqref{eq:lemma1-4}. Towards this end, we pick our subset $S$. Pick any arbitrary vertex $v \in V$ and let $S=\{v\}$. We compute $\PMB(\q, S)$, which is upper bounded by $\log(N)$ with a feasible solution $\delta_{e' | \emptyset} = 1$ for any one relation $e' \in E$ incident on $h$ and $\delta_{e | \emptyset} = 0$ for all other edges in $E$. Recall that $N_{e' | \emptyset}, N_{e | \emptyset} \le N$ for every $e, e' \in N$ (since all relation sizes are upper bounded by $N$).

Next, we argue why this is a feasible solution. Since $S = \{h\}$, note that there is only one constraint to be satisfied -- $\sum_{e \in E: h \ni e} \delta_{e | \emptyset} \ge 1$. Let $e'$ be one such edge in $\{e \in E: h \ni e\}$. Hence, we can set $\delta_{e' | \emptyset} = 1$ and $\delta_{e | \emptyset} = 0$ for all $e \in E$ to obtain a feasible solution. This is sufficient to argue $\PMB(\q, S)$ is upper bounded by $\log(N)$.

To complete the proof, we prove~\eqref{eq:lemma1-2}. As shown in Section~\ref{sec:lb}, we only need to construct an instance $\q = (G, \{R_e: e \in E, |R_e| \le N\})$ and a join cover $\J$. Let $q$ be the largest power of $2$ such that $q\le N$. In particular, we have $q\ge N/2$. We now instantiate a $(n, N, \Delta)_{q}$ RS code $\mathcal{C}$ over $\mathbb{F}_{q}$. In particular, $\mathcal{C}$ is of the form $\{ (i)_{i \in [n]} : i \in \mathbb{F}_{q} \}$. For each edge $e\in E$, we set $R_e =\mathcal{C}_{e}$, where $\mathcal{C}_e = \{(i, i) | i \in \mathbb{F}_q\}$ implying that $|R_e| = |\mathcal{C}_e|= q \le N$, as desired. Note that $\mathcal{C} = J =\ \Join_{e \in E} R_e$. Finally, we have 
\begin{align*}
\cover(G, N, \Delta) \ge |\mathcal{C}| \ge \frac{N}{2}, 
\end{align*}
which proves~\eqref{eq:lemma1-2}, completing the proof.
\end{proof}

\subsection{Row $2$ of Table~\ref{tab:results-app}}
\begin{lemma} \label{lemma11}
Consider the case from row $2$ in Table~\ref{tab:results}. $G$ is any graph and $s = \s \in [2, |M|]$ is even. Then, there exists a subset $S \subseteq V$ of size $s$ such that
\begin{equation} \label{eq:lemma3}
\joinUB(G, N, s) \le N^{\frac{s}{2}} \le O_{n}(\cover(G, N, \Delta)).
\end{equation}
\end{lemma}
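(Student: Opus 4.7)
The plan is to follow the same two-step template used in Lemma~\ref{lemma2} and Lemma~\ref{lemma4}: first exhibit a subset $S$ of size $s$ and a feasible dual solution witnessing $\joinUB(G,N,s) \le N^{s/2}$, then construct an explicit Reed--Solomon-based instance witnessing $\cover(G,N,\Delta) \ge \Omega_{n}(N^{s/2})$. Since $s$ is even and $s \le |M|$, the maximum matching $M$ contains at least $s/2$ edges, so both steps are substantially cleaner than in the odd cases; there is no need to pick a ``leftover'' singleton vertex, and no need to duplicate coordinates in the code.

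For the upper bound I would pick any $s/2$ edges $e_1,\dots,e_{s/2}$ from $M$ and let $S = \bigcup_{i=1}^{s/2} V(e_i)$; because matching edges are vertex-disjoint, $|S|=s$. Setting $\delta_{e_i\mid\emptyset}=1$ for $i\in[s/2]$ and $\delta_{e\mid\emptyset}=0$ for every other $e\in E$ is feasible for the PMB linear program~\eqref{eq:pmb} on $(\q,S)$, because each $v\in S$ is incident to exactly one selected matching edge, and the degree-constraint graph is empty (only cardinality constraints are used), so Definition~\ref{def:dbpgs} applies directly. Evaluating the objective gives $\PMB(\q,S)\le (s/2)\log N$, and Theorem~\ref{upperboundtheorem} then yields $\joinUB(G,N,s)\le N^{s/2}$. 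The set $S$ is computable in $\poly(|G|)$ time from any maximum matching, keeping this case consistent with the algorithmic bookkeeping of Section~\ref{sec:algos}.

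For the lower bound I would reuse the construction from Lemma~\ref{lemma2} verbatim. Let $q$ be the largest prime power with $q\le\lfloor\sqrt{N}\rfloor$, so that $q\ge\sqrt{N}/2$; under the standard assumption $N\gg n$ we have $q\ge n$, so an RS code of length $n$ is available. Instantiate the $(n,q^{s},\Delta)_{q}$ Reed--Solomon code $\calC$ over $\mathbb{F}_q$ via Definition~\ref{def:rs}, and for every $e\in E$ set $R_e=\pi_e(\calC)$. Then $|R_e|\le q^{2}\le N$ and $\calC\subseteq \J=\Join_{e\in E}R_e$. Because any two distinct codewords of $\calC$ differ in at least $\Delta$ positions, $\calC$ is a join packing of distance $\ge\Delta$, and Theorem~\ref{thm:proj-cover-2} (or direct inspection) gives
\[
\cover(G,N,\Delta)\ge |\calC|=q^{s}\ge \bigl(\sqrt{N}/2\bigr)^{s}=\Omega_{n}\bigl(N^{s/2}\bigr).
\]
Chaining with the upper bound delivers~\eqref{eq:lemma3}.

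I do not anticipate a real obstacle in this case. The only mildly delicate points are (i) confirming that $s\le |M|$ alone (without structural assumptions on $G$) already guarantees $s/2$ disjoint matching edges, which is immediate since $M$ has $|M|/2$ edges, and (ii) verifying that the slack between $q$ and $\sqrt{N}$ is absorbed by the $\Omega_{n}$ notation of Section~\ref{section2}, which tolerates multiplicative factors of $2^{O(n)}$. In essence, evenness of $s$ removes the need for the padding tricks in Lemma~\ref{lemma4} and the heavy/light decomposition of Lemma~\ref{lemma2}; the matching picks up all vertices of $S$ cleanly on the upper-bound side, and the unmodified RS code suffices on the lower-bound side.
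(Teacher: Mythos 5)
Your argument matches the paper's proof essentially verbatim: the paper also picks $s/2$ arbitrary matching edges for $S$, sets $\delta_{e\mid\emptyset}=1$ on those edges (feasible because each $v\in S$ lies on exactly one chosen matching edge), reads off $\PMB(\q,S)\le \frac{s}{2}\log N$, and invokes Theorem~\ref{upperboundtheorem}; and for the lower bound the paper instantiates the same $\left(n,q^{\s},\Delta\right)_q$ RS code with $q\approx\sqrt N$, sets $R_e=\pi_e(\mathcal C)$, and concludes $\cover(G,N,\Delta)\ge \Omega_n(q^{s})=\Omega_n(N^{s/2})$.

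One small caution: Theorem~\ref{thm:proj-cover-2} gives $\cover\le\pack\le\joinUB$, so citing it cannot by itself yield the lower bound $\cover(G,N,\Delta)\ge|\mathcal C|$; that chain is an upper bound on $\cover$. What actually justifies the lower bound is the ``direct inspection'' you gesture at: since $\mathcal C$ is an MDS code of minimum distance $\Delta$, any single tuple of $\J$ agrees with a codeword on at least $s=n-\Delta+1$ coordinates, so a Hamming ball of radius $\Delta-1$ contains at most $\sum_{i\ge s}\binom{n}{i}\le 2^n$ codewords, and hence any join cover has size at least $|\mathcal C|/2^n$. (The paper's own phrasing ``$\mathcal C = \J$'' is also not literally correct for all graphs $G$ — the join of projections of an MDS code of dimension $\ge 2$ onto arity-two edges can be strictly larger than $\mathcal C$ — but the $\Omega_n$ slack absorbs this, as you implicitly allow by writing $\mathcal C\subseteq\J$.) With that citation corrected, your proof is the paper's proof.
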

\begin{proof}
Our proof proceeds in two steps. First, we argue
\begin{equation} \label{eq:lemma3-1}
\joinUB(G, N, s) \le  N^{\frac{s}{2}} 
\end{equation}
followed by
\begin{equation} \label{eq:lemma3-2}
\cover(G, N, \Delta) \ge \Omega_{n}(N^{\frac{s}{2}}).
\end{equation}
Note that these two inequalities would immediately imply~\eqref{eq:lemma3}. We begin with proving~\eqref{eq:lemma3-1}. In particular, if we prove
\begin{equation} \label{eq:lemma3-3}
\min_{S \subseteq V, |S| = s}  \max_{R_e: |R_e| \le N, e \in E} 2^{\PMB(\q, S)} \le N^{\frac{s}{2}},
\end{equation}
then~\eqref{eq:lemma3-1} follows from Theorem~\ref{upperboundtheorem}. If we can show that there exists a $S$ of size $s$ such that 
\begin{equation} \label{eq:lemma3-4}
2^{\PMB(\q, S)} \le \frac{s}{2} \cdot \log(N),
\end{equation}
then using Definition~\ref{def:dbpgs}, we have $\min_{S \subseteq V, |S| = s}  \max_{R_e: |R_e| \le N, e \in E} 2^{\PMB(\q, S)} \le N^{\frac{s}{2}}$, which in turn implies $\joinUB(G, N, s) \le N^{\frac{s}{2}}$ from Theorem~\ref{upperboundtheorem}, proving~\eqref{eq:lemma3-1}. 

We now prove~\eqref{eq:lemma3-4}. Towards this end, we pick our subset $S$. Pick any $\frac{s}{2}$ edges from $M$ and for each edge, add its endpoints to $S$. The constraint $s \le |M|$ ensures the existence of one such choice of $S$. We compute $\PMB(\q, S)$, which is given by $\frac{s}{2} \cdot \log(N)$ corresponding to a feasible solution $\delta_{e | \emptyset} = 1$ for every $e \in M, e \subseteq S$ and $\delta_{e | \emptyset} = 0$ for all the other edges $e \in E \setminus \{e \in M, e \subseteq S \}$. 

We now argue why this is a feasible solution. Note that there are constraints for every $v \in S$ of the form $\sum_{e \in E: v \ni e} \delta_{e | \emptyset} \ge 1$. We set $\delta_{e | \emptyset} = 1$ for every $e \in M, e \subseteq S$. This satisfies all constraints above since for every $v \in S$, there is a unique edge $e \in M$ incident on $v$ (this follows from our picking algorithm where we pick only matching edges). For all the remaining edges, we can set $\delta_{e | \emptyset} = 0$ and we have a feasible solution.

To complete the proof, we prove~\eqref{eq:lemma3-2}. As shown in Section~\ref{sec:lb}, we only need to construct an instance $\q = (G, \{R_e: e \in E, |R_e| \le N\})$ and a join cover $\J$. Let $q$ be the largest power of $2$ such that $q\le \sqrt{N}$. In particular, we have $q \ge \frac{\sqrt{N}}{2}$. We now instantiate a $\left(n, q^{(n - \Delta + 1)}, \Delta\right)_{q}$ RS code $\mathcal{C}$ over $\mathbb{F}_{q}$. For each edge $e\in E$, we set $R_e =\mathcal{C}_{e}$, where $\mathcal{C}_e \subseteq \mathbb{F}_{q} \times \mathbb{F}_{q} \le N$ implying that $|R_e| = |\mathcal{C}_e| \le q^2 \le N$, as desired. Note that $\mathcal{C} = J =\ \Join_{e \in E} R_e$. Finally, we have $\cover(G, N, \Delta) \ge |\mathcal{C}| = q^{n - \Delta + 1} = \Omega_{n}\left(N^{\frac{\s}{2}} \right)$, which proves~\eqref{eq:lemma3-2}, completing the proof.
\end{proof}

\subsection{Row $5$ of Table~\ref{tab:results-app}} \label{sec:row-5}
\begin{lemma} \label{lemma12}
Consider the case from row $5$ in Table~\ref{tab:results}. $G$ is any graph and $s = \s \in [|M| + 1, |M| + n_{I}]$. Then, there exists a subset $S \subseteq V$ with size $s$ such that
\begin{equation} \label{eq:lemma5}
\joinUB(G, N, s) \le N^{\frac{s}{2}} \le O_{n}(\cover(G, N, \Delta)).
\end{equation}
\end{lemma}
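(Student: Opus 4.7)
The plan is to produce matching bounds on $\joinUB(G,N,s)$ and $\cover(G,N,\Delta)$ via two independent arguments: an LP-duality upper bound exploiting the graph decomposition $G = G_c \cup G_s \cup G_t$ from Appendix~\ref{sec:g-decomp}, and a sphere-covering lower bound on a Cartesian-product instance.

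For the upper bound, I would choose $S = V(M) \cup I'$, where $I' \subseteq I$ is any subset of the maximum independent set of $G_c$ with $|I'| = s - |M|$. Since $s - |M| \in [1, n_I]$, such an $I'$ exists, and $|S| = s$. Using Lemma~\ref{maximummatching} to write $M = M_c \cup M_s$, the vertex set $S$ decomposes into $V(M_c) \cup I' \subseteq V(G_c)$ and $V(M_s) \subseteq V(G_s)$, with no edges between the two parts (since $V(G_c)$ and $V(G_s)$ are vertex-disjoint by construction). On the first part, Corollary~\ref{cor:matching-half} gives an optimal dual solution assigning value $1/2$ to every vertex of $G_{ms}$, so by LP duality its fractional edge cover has value $(|M_c|+|I'|)/2$. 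On the second part, weighting each edge of $M_s$ by $1$ gives a fractional edge cover of value $|M_s|/2$. Summing the two $\PMB$ bounds via Lemma~\ref{dbpunionlemma} and invoking Lemma~\ref{dbpagmlemma} together with Theorem~\ref{upperboundtheorem} yields
\[\joinUB(G,N,s) \le N^{(|M_c|+|I'|)/2 + |M_s|/2} = N^{(|M|+|I'|)/2} = N^{s/2}.\]

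For the lower bound, I would use a volume argument on a Cartesian-product instance. Let $q$ be the largest prime power with $q \le \sqrt{N}$, so $q \ge \sqrt{N}/2$; set $\Dom(v) = \mathbb{F}_q$ for every $v \in V$ and $R_e = \prod_{v \in e} \Dom(v)$ for every $e \in E$. Then $|R_e| \le q^{|e|} \le q^2 \le N$, so the instance is valid, and the join output is $\J = \mathbb{F}_q^n$ with $|\J| = q^n$. Any join cover $\calU \subseteq \J$ must have its Hamming balls of radius $\Delta-1$ cover all of $\J$, and one such ball has at most $\sum_{i=0}^{\Delta-1}\binom{n}{i}(q-1)^{i} \le \Delta \cdot 2^n \cdot q^{\Delta-1}$ tuples. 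Therefore
\[|\calU| \ge \frac{q^n}{\Delta \cdot 2^n \cdot q^{\Delta-1}} = \frac{q^s}{\Delta \cdot 2^n} = \Omega_n(q^s) = \Omega_n(N^{s/2}),\]
which gives $\cover(G,N,\Delta) \ge \Omega_n(N^{s/2})$. Combining the two bounds completes the proof.

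The main obstacle is the upper bound: one must verify that precisely the extra $s-|M|$ vertices can be drawn from the independent set $I$ of $G_c$ (which is exactly what the regime $s \in [|M|+1,|M|+n_I]$ allows), and then correctly patch together the half-integral dual solution on $G_c$ from Corollary~\ref{cor:matching-half} with the matching $M_s$ on $G_s$. Once $S$ is chosen in this way, both arguments reduce to standard computations --- LP duality for the upper bound, and a Hamming-ball volume estimate for the lower bound.
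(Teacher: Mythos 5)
Your upper bound follows essentially the same route as the paper: pick $S = V(M) \cup I'$ with $I' \subseteq I$ of size $s - |M|$ (the regime $s \in [|M|+1, |M|+n_I]$ is exactly what makes this possible), split $S$ into the $G_c$-part and the $G_s$-part, apply Corollary~\ref{cor:matching-half} to get a half-everywhere dual on the first part, weight the matching edges $M_s$ by $1$ on the second, and combine via Lemma~\ref{dbpunionlemma}. One small inaccuracy: your parenthetical justification ``no edges between the two parts (since $V(G_c)$ and $V(G_s)$ are vertex-disjoint by construction)'' does not follow --- vertex-disjoint induced subgraphs of $G$ can still have edges of $G$ running between them (such edges are just not in either $E(G_c)$ or $E(G_s)$). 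Fortunately the argument never needs this claim: Lemma~\ref{dbpunionlemma} only requires that the vertex sets of the pieces cover $S$, so the overstatement is harmless.

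Your lower bound, however, takes a genuinely different route from the paper's. The paper re-uses the Reed--Solomon construction from Lemma~\ref{lemma11}: it instantiates a $(n, q^{n-\Delta+1}, \Delta)_q$ RS code over $\mathbb{F}_q$ with $q \ge \sqrt{N}/2$, sets $R_e = \mathcal{C}_e$, and observes that since any two codewords of $\J = \mathcal{C}$ have Hamming distance $\ge \Delta$, the only join cover is $\J$ itself, giving $\cover \ge q^{n-\Delta+1} = \Omega_n(N^{s/2})$. You instead take the trivial instance $R_e = \prod_{v\in e}\Dom(v)$ with $|\Dom(v)| = q \le \sqrt{N}$ for all $v$, so $\J = \mathbb{F}_q^n$, and apply the sphere-covering bound: the Hamming ball of radius $\Delta-1$ has size at most $\Delta \cdot 2^n \cdot q^{\Delta-1}$, so any cover needs $\ge q^n / (\Delta \cdot 2^n \cdot q^{\Delta-1}) = \Omega_n(q^s) = \Omega_n(N^{s/2})$ points. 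Both arguments yield the identical bound up to the $2^{O(n)}$ slack the $\Omega_n$ notation absorbs. Your version is more elementary --- it requires no algebraic construction, only a volume estimate --- and you do not even need $q$ to be a prime power. The paper's RS-code approach, on the other hand, is what generalizes to the harder cases elsewhere in Table~\ref{tab:results} (e.g. row $4$, where the cover needs to avoid being on a full product space, and row $6$, which needs the CRT variant to handle unequal alphabet sizes); here in row $5$ the two techniques happen to coincide in strength.
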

\begin{proof}
Our proof proceeds in two steps. First, we argue
\begin{equation} \label{eq:lemma5-1}
\joinUB(G, N, s) \le  N^{\frac{s}{2}} 
\end{equation}
followed by
\begin{equation} \label{eq:lemma5-2}
\cover(G, N, \Delta) \ge \Omega_{n}(N^{\frac{s}{2}}).
\end{equation}
Note that these two inequalities would immediately imply~\eqref{eq:lemma5}. We begin with proving~\eqref{eq:lemma5-1}. In particular, if we prove
\begin{equation} \label{eq:lemma5-3}
\min_{S \subseteq V, |S| = s}  \max_{R_e: |R_e| \le N, e \in E} 2^{\PMB(\q, S)} \le N^{\frac{s}{2}},
\end{equation}
then~\eqref{eq:lemma5-1} follows from Theorem~\ref{upperboundtheorem}. If we can show that there exists a $S$ with size $s$ such that 
\begin{equation} \label{eq:lemma5-4}
2^{\PMB(\q, S)} \le \frac{s}{2} \cdot \log(N),
\end{equation}
then using Definition~\ref{def:dbpgs}, we have $\min_{S \subseteq V, |S| = s}  \max_{R_e: |R_e| \le N, e \in E} 2^{\PMB(\q, S)} \le N^{\frac{s}{2}}$, which in turn implies $\joinUB(G, N, s) \le N^{\frac{s}{2}}$
from Theorem~\ref{upperboundtheorem}, proving~\eqref{eq:lemma5-1}. By definition, we have $n_I \ge 0$. 
	
We now prove~\eqref{eq:lemma5-4}. Towards this end, we pick our subset $S$. Add $V(M)$ to $S$. For the remaining vertices, choose any subset $S' \subseteq (V(G_c) \setminus V(M))$ such that $|S'| = s - |M|$.  The constraint $|M| + 1 \le s \le |M| + n_I$ ensures the existence of one such choice of $S$. 

Using Corollary~\ref{cor:matching-half}, it follows that $M$ can be decomposed as $M_{c} \cup M_{s}$, where $M_{c}$ is a maximum matching on $G_{c}$ and $M_{s}$ is a maximum matching on $G_{s}$. We now consider two vertex-disjoint subgraphs -- $(1)$ the subgraph $G'_{c} = (V_{c} = \{S \cap V(G_c)\} , E_{c} = M_{c} \cup \{e \in E(G_c): e \subseteq S\})$ and $(2)$ the subgraph $G'_{s} = (V_{s} = \{S \cap  V(G_s)\}, E_{s} = \{e \in E: e \subseteq V_{s}\})$. It follows that $|V_{c}| + |V_{s}| = s$. We have two query instances $\q_{c} = (G'_{c}, \{R_e: e \in E_{c} \})$ and $\q_{s} = (G'_{s}, \{R_e: e \in E_{s} \})$. Since $V_c \cap V_s \neq \emptyset$, we have (by Lemma~\ref{dbpunionlemma})
\[\PMB(\q, S) \le \PMB(\q_{c}, V_c) + \PMB(\q_{s}, V_{s}). \]

We compute $\PMB(\q_{c}, V_c)$ first. By Corollary~\ref{cor:matching-half}, there exists an optimal dual solution for $\q_{c}$ such that $y_{v} = \frac{1}{2}$ for every $v \in V_{c}$. Since it is an optimal dual solution, it follows that $\AGM(G_{c}, V_{c}) \le \frac{|V_c|}{2} \log(N)$. Using Theorem~\ref{dbpagmlemma}, we have 
\[\PMB(q_{c}, V_{c}) \le \AGM(G_{c}, V_{c}) \le \frac{|V_c|}{2} \cdot \log(N). \] 

We compute $\PMB(\q_{s}, S)$, which is upper bounded by $\frac{|V_s|}{2} \cdot \log(N)$ corresponding to a feasible solution $\delta_{e | \emptyset} = 1$ for every $e \in M \cap  E_{s}$ and $\delta_{e | \emptyset} = 0$ for all the other edges $E_{s} \setminus M$.  

We now argue why this is a feasible solution. Note that there are constraints for every $v \in V_s$ of the form $\sum_{e \in E_{s}, v \ni e} \delta_{e | \emptyset} \ge 1$. We set $\delta_{e | \emptyset} = 1$ for every $e \in M \cap E_{s}$. This satisfies all the constraints above since for every $v \in S$, there is a unique edge $e \in M \cap E_{s}$ incident on $v$ (this follows from our picking algorithm where we pick only matching edges in $V_{s}$). For all the remaining edges $E_{s} \setminus M$, we can set $\delta_{e | \emptyset} = 0$ and the solution still remains feasible.

Thus, we have
\[\PMB(\q, S) \le \PMB(\q_{c}, V_c) + \PMB(\q_{s}, V_{s}) \le \frac{|V_c|}{2} \cdot \log(N) + \frac{|V_s|}{2} \cdot \log(N) = \frac{s}{2} \cdot \log(N), \]
, which proves~\eqref{eq:lemma5-4}, as desired.

To complete the proof, we prove~\eqref{eq:lemma5-2}. We use the same instance and arguments as in the proof of  Lemma~\ref{lemma11} -- using a $\left(n, q^{(n - \Delta + 1)}, \Delta\right)_{q}$ RS code $\mathcal{C}$ over $\mathbb{F}_{q}$ with $q \ge \frac{\sqrt{N}}{2}$. This completes the proof.
\end{proof}

\subsection{Row $6$ of Table~\ref{tab:results-app}} \label{sec:row-6}
\begin{lemma} \label{lemma7}
Consider the case from row $6$ in Table~\ref{tab:results}. $G$ is any graph and $s = \s  \in [|M| + n_I + 1, n]$. Then, there exists a subset $S \subseteq V$ with size $s$ such that
\begin{equation} \label{eq:lemma6}
\joinUB(G, N, s) \le N^{s - \frac{|M| + n_I}{2}} \le O_{n}(\cover(G, N, \Delta)).
\end{equation}
\end{lemma}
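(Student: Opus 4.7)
The plan is to follow the two-step template used throughout Section~\ref{section3}. First, we would show $\joinUB(G, N, s) \le N^{s - (|M|+n_I)/2}$ by invoking Theorem~\ref{upperboundtheorem}: exhibit a subset $S\subseteq V$ of size $s$ and a feasible solution to the LP~\eqref{eq:pmb} of value at most $(s - (|M|+n_I)/2)\log N$. The $S$ we would pick consists of all ``cheap'' vertices $V(M)\cup I$ (where $I$ is the independent set of $G_c$, whose size is governed by $n_I$ through the decomposition of Appendix~\ref{sec:g-decomp}), padded by $s - |M| - n_I$ additional ``expensive'' vertices drawn from $V(G_s)\setminus V(M_s)$ (unmatched star leaves) and $V(G_t)$ (singletons). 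Since $s \le n$ and the cheap vertices have total count $|M|+n_I$ under the chosen decomposition, such a pick always exists in the row~$6$ range.

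The feasible LP solution has three contributions. For the $V(G_c)\cap S$ coordinates we use the fractional edge cover certified by Corollary~\ref{cor:matching-half}, contributing weight $\frac{|M|}{2}\log N$. For the $V(M_s)$ coordinates we set $\delta_{e\mid\emptyset}=1$ on each edge of $M_s$, contributing $\frac{n_I}{2}\log N$. For each of the $s - |M| - n_I$ padded vertices we set $\delta_{e\mid\emptyset}=1$ on a single incident edge, contributing $\log N$ per vertex. Summing the three pieces yields $\bigl(\frac{|M|}{2}+\frac{n_I}{2}+s - |M| - n_I\bigr)\log N = \bigl(s - \frac{|M|+n_I}{2}\bigr)\log N$, which establishes the upper bound via Theorem~\ref{upperboundtheorem}.

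For the matching lower bound $\cover(G, N, \Delta)\ge \Omega_n(N^{s-(|M|+n_I)/2})$, we would construct an explicit instance by gluing two Reed--Solomon codes, mirroring the strategy of Lemma~\ref{lemma4}. On the ``matched'' part of $G$ (the edges of $M_c$ together with one coordinate per matched pair of $M_s$) we instantiate an RS code over $\mathbb{F}_q$ with $q \le \sqrt{N}$; each matching edge's relation can then be set to $\mathcal{C}_e$ with $|R_e|\le q^2\le N$, and each matched pair in $M_s$ is encoded by duplicating a single coordinate (exactly as in Lemma~\ref{lemma4}). On each unmatched star leaf and each singleton we apply Lemma~\ref{extendedrs} to add a new evaluation point, which both enlarges the codeword count and boosts the minimum distance by one. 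Tuning the initial RS distance and the number of extended coordinates so that the total minimum distance is exactly $\Delta$ (analyzed via Lemma~\ref{lemma:linearcode}), the resulting code has $\Omega_n(N^{s-(|M|+n_I)/2})$ codewords and every $R_e$ satisfies $|R_e|\le N$, so setting $R_e=\mathcal{C}_e$ witnesses the lower bound.

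The main obstacle we anticipate is the parameter bookkeeping in Step~2: choosing the starting RS distance and the number of RS extensions so that the total distance equals $\Delta$ while the final codeword count reaches $\Omega_n(N^{s-(|M|+n_I)/2})$, all while respecting the per-edge cardinality $|R_e|\le N$. Coordinating the matched-pair duplication idea of Lemma~\ref{lemma4} with the more elaborate $G_c$ structure (which in row~$6$ need not be a disedge) is the most delicate step, and we expect to handle it by splitting the code into a $G_c$-block and a separate block spanning $G_s\cup G_t$, taking a product, and then carefully adjusting distances so that the product distance is at least $\Delta$.
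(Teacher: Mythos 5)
Your upper-bound argument is essentially the paper's: pick $S = V(M)\cup I$ padded with $s - |M| - n_I$ additional vertices, split $\PMB(\q, S)$ over the $G_c$ block and the $G_s\cup G_t$ block via Lemma~\ref{dbpunionlemma}, charge $\frac{1}{2}\log N$ per vertex in $V(G_c)\cup V(M_s)$ using Corollaries~\ref{cor:matching-half} and~\ref{cor:g-decomp}, and $\log N$ per padded vertex. Your per-piece weights are mislabeled (the $G_c$ block costs $\frac{|V(G_c)|}{2}\log N$, not $\frac{|M|}{2}\log N$, and the $V(M_s)$ block costs $\frac{|V(M_s)|}{2}\log N$, not $\frac{n_I}{2}\log N$), but the sum of the two pieces is $\frac{|M|+n_I}{2}\log N$ either way, so the total comes out right; this part is fine modulo the bookkeeping.

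Your lower-bound plan has a genuine, unfixable gap. Any single RS code over $\mathbb{F}_q$, however you extend evaluation points or duplicate coordinates, ends up as a length-$n$ code inside $\mathbb{F}_q^n$; if it has minimum distance $\Delta$, its size is at most $q^{n-\Delta+1}=q^s$. With $q\le\sqrt{N}$ (which you need so that the $M_c$ edges, each consuming two RS coordinates, keep $|R_e|\le q^2\le N$), the code has at most $N^{s/2}$ codewords. But in row~$6$ we are in the regime $s>|M|+n_I$, so the target $N^{\,s-(|M|+n_I)/2}$ strictly exceeds $N^{s/2}$: no choice of RS parameters can produce a large enough instance. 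Raising $q$ toward $N$ breaks the $M_c$ cardinality constraints, and your Cartesian-product fallback only guarantees $\min(d_1,d_2)$, not $d_1+d_2$, so it cannot recover the loss. The missing ingredient --- the one the paper supplies --- is a code with \emph{coordinate-dependent} alphabet sizes tracking the half-integral optimal dual of Corollary~\ref{cor:g-decomp}: $q_v=O(1)$ on the $y_v=0$ side of each $M_s$ edge, $q_v=\Theta(\sqrt{N})$ for every $v\in V(G_c)$, and $q_v=\Theta(N)$ on the $y_v=1$ side of each $M_s$ edge, on singletons, and on the padded vertices. This is exactly what a CRT code (Definition~\ref{defn:crt}) gives: distance $n-s+1=\Delta$ is automatic, each $|R_e|\le N$ because the dual satisfies $\sum_{v\in e}y_v\le1$, and the codeword count is the product of the $s$ smallest $q_v$'s, which works out to $\Omega_n\bigl(N^{\,s-(|M|+n_I)/2}\bigr)$. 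RS codes suffice in Lemmas~\ref{lemma2} and~\ref{lemma4} precisely because those rows never mix all three dual values $\{0,\frac12,1\}$; row~$6$ mixes all three, which is why the paper switches to CRT here.
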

\begin{proof}
Our proof proceeds in two steps. First, we argue
\begin{equation} \label{eq:lemma6-1}
\joinUB(G, N, s) \le  N^{s - \frac{|M| + n_I}{2}} 
\end{equation}
followed by
\begin{equation} \label{eq:lemma6-2}
\cover(G, N, \Delta) \ge \Omega_{n}\left(N^{s - \frac{|M| + n_I}{2}} \right).
\end{equation}
Note that these two inequalities would immediately imply~\eqref{eq:lemma6}. We begin with proving~\eqref{eq:lemma6-1}. In particular, if we prove
\begin{equation} \label{eq:lemma6-3}
\min_{S \subseteq V, |S| = s}  \max_{R_e: |R_e| \le N, e \in E} 2^{\PMB(\q, S)} \le N^{s - \frac{|M| + n_I}{2}},
\end{equation}
then~\eqref{eq:lemma6-1} follows from Theorem~\ref{upperboundtheorem}. If we can show that there exists a $S$ with size $s$ such that 
\begin{equation} \label{eq:lemma6-4}
2^{\PMB(\q, S)} \le \left(s - \frac{|M| + n_I}{2} \right) \cdot \log(N),
\end{equation}
then we have $\min_{S \subseteq V, |S| = s}  \max_{R_e: |R_e| \le N, e \in E} \tilde{e}^{\widehat{\DBP}(\q, S)} \le N^{s - \frac{|M| + n_I}{2}}$ using~\eqref{eq:lemma6-3}, which in turn implies $\joinUB(G, N, s) \le N^{s - \frac{|M| + n_I}{2}}$ from Theorem~\ref{upperboundtheorem}, proving~\eqref{eq:lemma6-1}.
	
We now prove~\eqref{eq:lemma6-4}. Towards this end, we pick our subset $S$. We start by picking $S$ - we first add $V(M)$ to $S$, followed by vertices from $V(G_c) \setminus V(M)$. For the remaining picks, we can choose any remaining vertex from $V \setminus V(G_c) \setminus V(M)$. The constraint $s > |M| + n_I$ ensures the existence of one such choice of $S$.

Using Corollary~\ref{cor:matching-half}, it follows that $M$ can be decomposed as $M_{c} \cup M_{s}$, where $M_{c}$ is a maximum matching on $G_{c}$ and $M_{s}$ is a maximum matching on $G_{s}$. We now consider two vertex-disjoint subgraphs -- $(1)$ the subgraph $G_{c} = (V(G_c), E(G_c))$ and $(2)$ the subgraph $G'_{s} = (V_{s} = \{S \cap  (V(G_s) \cup V(G_t))\}, E_{s} = \{e \in E: e \subseteq (V_{s} \cup V(G_t))\})$. It follows that $|V(G_c)| + |V_{s}| = \s$. We have two query instances $\q_{c} = (G_{c}, \{R_e: e \in E(G_c) \})$ and $\q_{s} = (G'_{s}, \{R_e: e \in E_{s} \})$. Since $V_c \cap V_s \neq \emptyset$, we have (from Lemma~\ref{dbpunionlemma})
\[\PMB(\q, S) \le \PMB(\q_{c}, V(G_c)) + \PMB(\q_{s}, V_{s}). \]

We compute $\PMB(\q_{c}, V(G_c))$ first. By Corollary~\ref{cor:g-decomp}, there exists an optimal dual solution for $\q_{c}$, where $y_{v} = \frac{1}{2}$ for every $v \in V(G_c)$. Since it is an optimal dual solution, it follows that $\AGM(G_{c}, V(G_c)) \le \frac{|V(G_c)|}{2} \log(N)$. Using Theorem~\ref{dbpagmlemma}, we have 
\[\PMB(q_{c}, V(G_c)) \le \AGM(G_{c}, V(G_c)) \le \frac{|V(G_c)|}{2} \cdot \log(N). \] 

We now compute $\PMB(\q_{s}, V_{s})$. By Corollary~\ref{cor:g-decomp}, there exists an optimal dual solution for $\q_{s}$ where for every $e = (u, v) \in M_{s}$, either $y_{u} = 0, y_{v} = 1$ (or) $y_{u} = 1, y_{v} = 0$. For all the remaining vertices in $V_{s} \setminus V(M_{s})$, we have $y_{v} = 1$. Since it is an optimal dual solution, it follows that $\AGM(G_{s}, V_{s}) \le \left(\frac{|V_s|}{2} + s - |V(G_c)| - |V_s| \right)\log(N)$. Using Theorem~\ref{dbpagmlemma}, we have 
\[\PMB(q_{s}, V_{s}) \le \AGM(G_{s}, V_{s}) \le \left(\frac{|V_s|}{2} + s - |V(G_c)| - |V_s| \right)\log(N).\]

Thus, we have
\begin{align*}
\PMB(\q, S) & \le \PMB(\q_{c}, V_c) + \PMB(\q_{s}, V_{s}) \\
 & \le  \frac{|V(G_c)|}{2} \cdot \log(N) + \left(\frac{|V_s|}{2} + s - |V(G_c)| - |V_s| \right)\log(N) \\ 
& = \left(s - \frac{|V(G_c)| + |V_s|}{2} \right) \cdot \log(N) = \left(s - \frac{|M| + n_{I}}{2} \right) \log(N), \\.
\end{align*}
where the last equality follows from the fact that $|V(G_c)| = |V(M_c)| + n_I$, $|V_s| = |V(M_s)|$ and $V(M) = |V(M_c)| + |V(M_s)| = |M|$.

To complete the proof, we prove~\eqref{eq:lemma6-2}. We only need to construct an instance $\q = (G, \{R_e: e \in E, |R_e| \le N\})$ and a join cover $\J$. We construct a CRT code (Definition~\ref{defn:crt}) from the AGM optimal dual solution~\eqref{eq:fecdual} as follows. Using Theorem~\ref{optimal theorem}, there always exists an optimal dual solution $\mathbf{y} = (y_{v})_{v \in V}$ such that $y_{v} \in \{0, \frac{1}{2}, 1\}$ for every $v \in V$ (i.e., a half integral optimal dual solution). We start with the assignment of $q_{v}$s for every $v \in V$. For every vertex having $y_{v} = 0$, we assign a unique prime of order $O(1)$ for $q_{v}$ and for every vertex having $y_{v} = 1$, we assign a prime of order $\Theta{(N)}$ for $q_{v}$. Note that all the remaining vertices have $y_{v} = \frac{1}{2}$ and we assign a unique prime of order $\Theta{(\sqrt{N})}$ for $q_{v}$. We define $\mathbf{q} = (q_{v})_{v \in V}$.

We instantiate a $\left(n,  M, n - \Delta + 1\right)_{\mathbf{q}}$-CRT code $\mathcal{C}$ with $M = \Omega_{n}\left(N^{n - \Delta + 1 - \frac{|M| + n_I}{2}} \right)$. For each edge $e = (u, v) \in E$, the optimal dual solution ensures that $y_{u} + y_{v} \le 1$. Note that this implies $|\mathcal{C}_e| \le O(N)$
for every $e \in E$. We set $R_{e} = \mathcal{C}_{e}$ for every $e \in E$ and it follows that $|R_{e}| \le N$. 
Further, we have $\mathcal{C} = \J =\ \Join_{e \in E}  R_e$. Since $M = \Omega_{n}\left(N^{n - \Delta + 1 - \frac{|M| + n_I}{2}} \right)$, we have $\cover(G, N, \Delta) \ge M$, proving~\eqref{eq:lemma6-2} as desired. 

To complete the proof, we still need to argue $M = \Omega_{n}\left(N^{n - \Delta + 1 - \frac{|M| + n_I}{2}} \right)$. By definition, the size $M$ is the product of the smallest $n - \Delta + 1$ $q_v$s. We now use  Corollary~\ref{cor:g-decomp} to identify these smallest $q_{v}$s. We start with the $\frac{|V(M) \setminus V(G_c)|}{2}$ vertices with $y_{v} = 0$. Then, we have $|V(G_c)|$ vertices with $y_{v} = \frac{1}{2}$ and finally, there are $\s - |V(G_c)| - \frac{|V(M) \setminus V(G_c)|}{2}$ vertices with $y_{v} = 1$. This gives us \[M \ge \Omega_{n} \left(O(1)^{\frac{|V(M) \setminus V(G_c)|}{2}} \cdot N^{\frac{|V(G_c)|}{2}} \cdot N^{\s - |V(G_c)| - \frac{|V(M) \setminus V(G_c)|}{2}} \right).\]
Finally, we have
\[M \ge \Omega_{n} \left(N^{\s - \frac{|V(G_c)|}{2} - \frac{|V(M) \setminus V(G_c)|}{2}} \right) = \Omega_{n} \left(N^{\s - \frac{|M| + n_{I}}{2}}  \right).\]
\end{proof}

\subsection{Summary of Results}
Note that for all the cases in Table~\ref{tab:results-app}, we always picked a $S \subseteq V$ in $\poly(|G|)$ time such that $\joinUB(G, N,\Delta) \le O_{n}(\cover(G, N, \Delta))$. Since these lemmas together cover all simple graphs $G$ and $\Delta \in [1, n]$, the below result follows.
\begin{corollary} \label{pickslemma}
For all simple graphs $G$ and every $\Delta \in [1, n]$, there exists a picking algorithm $A$ that picks a subset $S \subseteq V: |S| = n - \Delta + 1$ in $\poly(|G|)$ time, returning a $\joinUB(G, N, |S|)$ that can potentially be off from the optimal $\cover(G, N, \Delta)$ bound by a factor of $2^{n}$ (note that in the database setting $n \ll N$ and typically treated as a constant).
\end{corollary}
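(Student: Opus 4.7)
The plan is to assemble $A$ by dispatching to the six picking rules implicit in Lemmas~\ref{lemma1}, \ref{lemma11}, \ref{lemma2}, \ref{lemma4}, \ref{lemma12}, and~\ref{lemma7}, one per row of Table~\ref{tab:results-app}. Those rows partition every pair $(G, s = n-\Delta+1)$ according to (a) how $s$ compares with $|M|$ and $|M|+n_I$, (b) the parity of $s$, and (c) whether $G$ is a \emph{disedge}. Since each lemma already produces an $S \subseteq V$ of size $s$ with $\joinUB(G, N, s) \le O_n(\cover(G, N, \Delta))$, the corollary reduces to verifying that (i) the case test and (ii) each individual pick run in $\poly(|G|)$ time.

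First I would precompute the structural ingredients used by every row: a maximum matching $M$ of $G$, the decomposition $G = G_c \cup G_s \cup G_t$ of Appendix~\ref{sec:g-decomp}, and the integer $n_I = |V(M) \setminus V(G_c)|$. Maximum matching on a general graph is polynomial-time via Edmonds' algorithm, and identifying $G_s$ (stars of size $>1$), $G_t$ (singletons), and $G_c$ (the remainder) is a direct pass over connected components. With these in hand, determining which row applies reduces to range and parity comparisons on $s$ against $|M|$, $|M|+n_I$, and $n$, plus an edge-structure test for the disedge property, all $\poly(|G|)$.

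Next I would invoke the matching row's construction. Each is manifestly efficient: Lemma~\ref{lemma1} selects any single vertex; Lemmas~\ref{lemma11} and~\ref{lemma4} greedily take $\lfloor s/2 \rfloor$ edges from $M$ (plus at most one leftover vertex in the odd case); Lemma~\ref{lemma2} invokes Algorithm~\ref{lightpickingalgorithm1}, whose analysis there already certifies $\poly(|G|)$ runtime; and Lemmas~\ref{lemma12} and~\ref{lemma7} take $V(M)$ and pad from $V(G_c) \setminus V(M)$ or from $V \setminus V(G_c) \setminus V(M)$. Composing preprocessing with dispatch, $A$ runs in $\poly(|G|)$ overall, and the combinatorial gap stays at $O_n(1) = 2^{O(n)}$, matching the advertised $2^n$ factor.

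The main obstacle is purely bookkeeping: confirming exhaustiveness of the six cases. Using that $|M|$ is even, I would check that the ranges $\{1\}$, $\{\text{even } s \in [2,|M|]\}$, $\{\text{odd } s \in [3,|M|-1]\}$, $[|M|+1, |M|+n_I]$, and $[|M|+n_I+1, n]$ cover $[1,n]$ without overlap, and that the odd-$s$ range is fully handled by the disedge/not-disedge split in rows three and four. The one subtle point is that Lemmas~\ref{lemma2} and~\ref{lemma4} impose a strict $s < |M|$ requirement, so I must confirm that $s = |M|$ and $s = |M|+1$ fall into the even-$s$ and $[|M|+1,|M|+n_I]$ rows respectively; both follow from the parity of $|M|$ and the definition of $n_I \ge 0$.
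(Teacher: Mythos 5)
Your proposal takes essentially the same route as the paper: the paper's own proof is a one-paragraph observation that the case lemmas of Table~\ref{tab:results-app} are exhaustive over all simple graphs and $\Delta \in [1,n]$, that each lemma's pick of $S$ runs in $\poly(|G|)$ time, and that each yields $\joinUB(G,N,s) \le O_n(\cover(G,N,\Delta))$; you are simply making the dispatch and the precomputation (maximum matching, the $G_c/G_s/G_t$ decomposition, $n_I$) explicit. One small slip in your final bookkeeping: when $n_I = 0$ row~5's range $[|M|+1, |M|+n_I]$ is empty, so $s = |M|+1$ lands in row~6 rather than row~5 as you assert; this does not break exhaustiveness (row~6 then starts at $|M|+1$) but the sentence as written is not quite right.
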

The above corollary implies the following.
\begin{corollary}
For all simple graphs $G$ and every $\Delta \in [1, n]$, the bounds $\joinUB(G, N,\Delta)$ and $\cover(G, N, \Delta)$ are tight within a factor of $2^{n}$, where $n$ is a constant.
\end{corollary}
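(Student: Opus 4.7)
The plan is to observe that the corollary is a direct consequence of the two bounds already established in opposite directions: an upper bound on $\cover(G,N,\Delta)$ in terms of $\joinUB(G,N,\s)$ from Theorem~\ref{thm:proj-cover}, and an upper bound on $\joinUB(G,N,\s)$ in terms of $\cover(G,N,\Delta)$ that is implicit in the preceding Corollary~\ref{pickslemma}. Once both directions are in hand, the factor-$2^n$ closeness is immediate, so the proof is essentially a one-paragraph chaining argument.

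First I would invoke Theorem~\ref{thm:proj-cover} to get the easy direction
\[
\cover(G,N,\Delta) \;\le\; \joinUB(G,N,\s),
\]
which is already sharp up to constants and costs nothing. The more substantive direction goes the other way: I would appeal to Corollary~\ref{pickslemma}, which exhibits a specific $S\subseteq V$ with $|S|=\s$ (produced in polynomial time) witnessing
\[
\joinUB(G,N,\s) \;\le\; O_n\!\bigl(\cover(G,N,\Delta)\bigr).
\]
Unpacking the $O_n$ notation (a multiplicative $2^{O(n)}$ factor, as fixed in Section~\ref{section2}), this yields $\joinUB(G,N,\s)/\cover(G,N,\Delta)\le 2^n$ up to lower-order factors.

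To justify Corollary~\ref{pickslemma} itself, I would note that it is a case-wise consequence of Lemmas~\ref{lemma1}, \ref{lemma11}, \ref{lemma2}, \ref{lemma4}, \ref{lemma12}, and \ref{lemma7}, which together partition the parameter range according to the rows of Table~\ref{tab:results-app} and handle every simple graph $G$ and every $\Delta\in[1,n]$ exhaustively. Each of those lemmas provides (i) a subset $S$ of size $\s$ selectable in $\poly(|G|)$ time and (ii) a matching combinatorial upper bound of the form $\joinUB(G,N,\s)\le O_n(\cover(G,N,\Delta))$.

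Combining the two chained inequalities gives
\[
\cover(G,N,\Delta) \;\le\; \joinUB(G,N,\s) \;\le\; 2^n \cdot \cover(G,N,\Delta),
\]
which is exactly the tightness statement. There is no real obstacle here, since all the work has already been done in proving the earlier lemmas; the only thing to be careful about is the treatment of the $O_n(\cdot)$ notation, making sure the hidden constants in $2^{O(n)}$ are correctly absorbed so that the final multiplicative gap is stated cleanly as $2^n$ (with lower-order $\poly(n)$ slack swept under the same $O_n$ convention used throughout the paper).
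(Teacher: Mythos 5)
Your proof is correct and follows the same route as the paper: the corollary is just the chaining of Theorem~\ref{thm:proj-cover} (giving $\cover \le \joinUB$) with Corollary~\ref{pickslemma} (giving $\joinUB \le O_n(\cover)$, which itself is the case-wise aggregation of the lemmas in Table~\ref{tab:results-app}). The paper states it as an immediate implication without spelling it out, and your one-paragraph chaining argument is precisely the intended justification, including the correct reading of the $\joinUB$ argument as $s=n-\Delta+1$ rather than the typo $\Delta$ in the corollary's statement.
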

We can convert our results in Corollary~\ref{pickslemma} to algorithms for computing a join cover, which we present in Section~\ref{sec:algos}.
\section{Missing Details in Section~\ref{sec:algos}}
\subsection{Proof of Lemma~\ref{lemma5.2}} \label{sec:arity2algo}
\begin{proof}
Given an instance $\q$ where all the input $R_e$s have size at most $N$, $\calA$ tells us if $\BCQ_{\q}$ is $1$ or $0$ (i.e., $J = \Join_{e \in E} R_{e}$ is empty or not). For any graph $G$, by Corollary~\ref{pickslemma}, we can always pick a $S$ with size $n - \Delta  + 1$ in $O(\poly(|G|))$ time such that 
\[\joinUB(G, N, \s \le O_{n}(\cover(G, N, \Delta)).\]
Recall that 
\[\joinUB(G, N, \s) \le \min_{S: |S| = n - \Delta + 1} \max_{R_e: |R_e| \le N, e \in E} 2^{\PMB(\q, S)}\]
from Theorem~\ref{upperboundtheorem}. Further, from the proof of Theorem~\ref{lowerboundtheorem} we have that 
\begin{equation} \label{eq:dbp-cover}
\min_{S: |S| = n - \Delta + 1} \max_{R_e: |R_e| \le N, e \in E} 2^{\PMB(\q, S)} \le O_{n}(\cover(G, N, \Delta)).
\end{equation}
Using Algorithm $3$ in~\cite{ngo-survey} (which is a worst-case optimal join algorithm), we can obtain a candidate $J_{Q, S}$ such that $|J_{Q, S}| = \min_{S: |S| = n - \Delta + 1} 2^{\PMB(\q, S)}$ in time $O(|J_{Q, S}|)$ (ignoring polylog factors in $N$ and a factor of $2^{n}$). In particular, using~\eqref{eq:dbp-cover}, we have
\begin{equation} \label{eq:cover-bound}
\dfrac{|J_{Q, S}|}{\cover(G, N, \Delta)} \le 2^{n}.
\end{equation}

We describe our algorithm $\calB$ here. The input to $\calB$ is $P_{0} = J_{Q, S}$ and the expected output is a join cover $\calS$ such that $|\calS| = |J_{Q, S}|$. We now consider the remaining vertices in $v \in V(G) \setminus S$ one-by-one, computing $P_{1} \gets P_{0} \times \Dom(v)$ for the currently chosen vertex $v$. Then, we remove all tuples $\mathbf{p} \in P_{1}$ such that there exists no tuple $\mathbf{t} \in \J$ with $\pi_{S \cup \{v\}}\mathbf{t} = \mathbf{p}$ for some $\mathbf{p} \in P_1$. One obvious way to do this is to compute $\J$ beforehand and do this filtering. Naturally, we would like to do something better (i.e., without computing the entire $J$).

We use the algorithm $\calA$ for this purpose. Recall that $\calA$ tells us whether $\BCQ_{\q}$ is $1$ or $0$. We now modify the input instance $\q$ as follows -- for each $\mathbf{p} \in P_1$, we project down all the input $R_e$s to values in $\mathbf{p}$. In particular, we update $R_e$ for every $e \in E$ as follows:
\[R'_e = \{\mathbf{t} : \pi_{(S \cup \{v\}) \cap v(e)} (\mathbf{t}) = \pi_{(S \cup \{v\}) \cap v(e)} (\mathbf{p})\}.\]
The new input for $\calA$ is $\q = (G, \{R'_e : e \in E\})$ and $N$. The new output of $\calA$ can be interpreted as follows -- $A$ now tells if there exists a $\J$ such that $\pi_{S \cup \{v\}} (J) = \mathbf{p}$. Note that this computation can be performed in time $O(\Time(\calA))$. If $J_{Q, \mathbf{p}}$ is not empty, we retain this tuple $\mathbf{p}$ in $P_1$. Otherwise, we remove it from $P_1$. We now claim that by the end of $\Delta - 1$ steps, $\calB$ returns a join cover $\calS= P_{\Delta - 1}$. We still need to argue the runtime and correctness of $\calB$, which we do one-by-one.

We argue that our algorithm takes $O(N \cdot |J_{Q, S}| \cdot \Time(\calA))$ time, where $\Time(\calA)$ denotes the runtime of algorithm $\calA$. Towards this end, we start by claiming that after the $i$-th step of our algorithm, we have $|P_{i}| \le |J_{Q, S}|$. Assuming that our claim is true, at each step of computing $P_{i} = P_{i - 1} \cdot \Dom(v)$, we would be performing only $|P_{i - 1}| \cdot N$ calls to $\calA$, as all the vertex-domain sizes are (effectively) upper bounded by $N$ as well. Since we run our algorithm for $\Delta - 1$ steps, our overall runtime is $O(N \cdot  |J_{Q, S}| \cdot \Time(D))$, ignoring a constant factor of $\Delta$. 

To complete the proof, we need to argue two things -- $(1)$ after each step $i: 1 \le i \le \Delta - 1$ of the algorithm, $|P_{i}| \le N \cdot |J_{Q, S}|$ and $(2)$ $\calB$ returns a valid join cover upon completion. We start with $(1)$ first. Recall that in the beginning of our algorithm, we set $P_{0} = J_{Q, S}$, where $|J_{Q, S}| = \min_{S: |S| = n - \Delta + 1} 2^{\PMB(\q, S)}$. Since $P_{0}$ is a set, all pairs of tuples are pairwise distinct. Further, for every $\mathbf{p} \in P_{0}$ of length $|S| = n - \Delta + 1$, there exists at most one extension to a tuple $\mathbf{t} \in J$ of length $n$. This follows from the distance property (of Hamming distance) discussed in Section~\ref{sec:results_ub}. Note that each step, we compute $P_{i} = P_{i - 1} \times \Dom(v)$ for a unique vertex $v \in V \setminus S$. Since $P_{i}$ satisfies the same property as $P_{0}$ as well, we have $|P_{i}| \le N \cdot |J_{Q, S}|$ for every $i \in [\Delta - 1]$. We now argue $(2)$. The output of $\calB$ is $P_{i}$. By construction, we have that for every tuple $\mathbf{t} \in \J$, there exists at least tuple $\mathbf{p} \in P_{i}$ such that $\Dist(\mathbf{t}, \mathbf{p}) \ge \Delta$. Thus, we can set $\calU = P_{\Delta - 1}$ returning a join cover in time $O(N \cdot |J_{Q, S}| \cdot \Time(\calA))$.
\end{proof}
The same proof applies for Lemma~\ref{lemma5.3} as well.
\newcommand{\Var}{{\mathrm{Var}}}
\newcommand{\newalpha}{c}

\section{Missing Details in Section~\ref{sec:gen-arity}} \label{sec:garity-proof}

Our main goal in this section is to prove Theorem~\ref{generalarityubtheorem}. We start by arguing that we can always lower bound $\cover(G, N, \Delta)$ by $\Omega_{n}(N^{\LP_{\lb}(G, \s)})$, where $\LP_{\lb}(G, \s)$ is the objective value of the linear program denoted by $\LP_{\lb}$ (details in Appendix~\ref{sec:cover-lb}). We then argue that we can always upper bound $\joinUB(G, N, \s)$ by $N^{\LP_{\ub}(G, \s)}$, where $\LP_{\ub}(G, \s)$ is the objective value of the linear program that computes $\AGM(G, S)$ (details in Appendix~\ref{sec:proj-ub}). Finally, we reason about the gap between $\LP_{\ub}(G, \s)$ and $\LP_{\lb}(G, \s)$ -- we prove an upper bound on this gap in Appendix~\ref{gaub} (which in turn proves~\eqref{eq:gubt1}) and a lower bound on this gap in Appendix~\ref{sec:galb}. This is sufficient to complete the proof.

\subsection{Lower Bound for $\cover(G, N, \Delta)$ using $\LP_{\lb}(G, \s)$} \label{sec:cover-lb}
We claim the following:
\begin{claim} \label{claim:gaclaim1}
Given any hypergraph $G$, $N$ and $\Delta$, we have
\[\cover(G, N, \Delta) \ge \Omega_{n}\left(N^{\LP_{\lb}(G, \s)} \right).\]
\end{claim}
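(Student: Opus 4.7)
The plan is to convert an optimal $y^{\ast} \in \mathbb{R}_{\ge 0}^{V}$ of $\LP_{\lb}(G, \s)$ into an explicit instance via the Chinese Remainder Theorem (CRT) construction of Definition~\ref{defn:crt}, and then to lower-bound every join cover of the resulting join output using the MDS property of CRT codes. Reading $\LP_{\lb}(G, \s)$ directly, its optimum equals $\max_{y} \min_{S \subseteq V,\ |S| = \s} \sum_{v \in S} y_{v}$ subject to $\sum_{v \in e} y_{v} \le 1$ for every $e \in E$ and $y_{v} \ge 0$. So any optimal $y^{\ast}$ has $y^{\ast}_{v} \in [0,1]$, every edge has $y^{\ast}$-weight at most $1$, and the sum of the $\s$ smallest coordinates of $y^{\ast}$ is at least $L := \LP_{\lb}(G, \s)$.

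To build the instance, for each $v \in V$ I would choose a distinct prime $q_{v} = \Theta(\max(N^{y^{\ast}_{v}}, 1))$; Bertrand's postulate combined with the prime number theorem guarantees that pairwise-distinct primes exist in the required dyadic windows (small distinct primes being used for the $O(n)$ vertices with $N^{y^{\ast}_{v}} = O(1)$), with a one-time factor of $2^{O(n)}$ that will be absorbed into the $\Omega_{n}$ notation. I then form the $(n, |\mathcal{C}|, \Delta)_{(q_{v})}$-CRT code $\mathcal{C}$ of Definition~\ref{defn:crt} with $|\mathcal{C}| = \prod_{v \in S^{\ast}} q_{v}$, where $S^{\ast} \subseteq V$ consists of the $\s$ vertices carrying the $\s$ smallest primes; applying the LP constraint at $S^{\ast}$ gives $|\mathcal{C}| \ge \Omega_{n}(N^{L})$. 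Finally I set $R_{e} := \pi_{e}(\mathcal{C})$ for each $e \in E$; because $|R_{e}| \le \prod_{v \in e} q_{v} = O_{n}(N^{\sum_{v \in e} y^{\ast}_{v}}) \le O_{n}(N)$, a single $O_{n}(1)$ rescale of the primes makes $|R_{e}| \le N$ hold literally, and by construction $\mathcal{C} \subseteq \J$.

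The final step is to show that every join cover $\calU \subseteq \J$ of this instance satisfies $|\calU| \ge \Omega_{n}(|\mathcal{C}|) = \Omega_{n}(N^{L})$, from which the stated bound $\cover(G, N, \Delta) \ge \Omega_{n}(N^{\LP_{\lb}(G, \s)})$ follows. Because $\mathcal{C}$ has minimum distance $\Delta = n - \s + 1$, any two distinct codewords agree on at most $\s - 1$ coordinates, so for every $S \subseteq V$ with $|S| = \s$ the restriction of $\pi_{S}$ to $\mathcal{C}$ is injective. For every $\mathbf{c} \in \mathcal{C} \subseteq \J$ there exists a cover tuple $\mathbf{u}(\mathbf{c}) \in \calU$ with $\Dist(\mathbf{c}, \mathbf{u}(\mathbf{c})) < \Delta$, so $\mathbf{c}$ and $\mathbf{u}(\mathbf{c})$ agree on some size-$\s$ set $S(\mathbf{c}) \subseteq V$; for any fixed pair $(\mathbf{u}, S)$ with $\mathbf{u} \in \calU$ and $|S| = \s$, the injectivity of $\pi_{S}$ on $\mathcal{C}$ forces at most one $\mathbf{c}$ to have $(\mathbf{u}(\mathbf{c}), S(\mathbf{c})) = (\mathbf{u}, S)$, yielding $|\mathcal{C}| \le |\calU| \cdot \binom{n}{\s}$. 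The main obstacle is exactly this counting step, since in general $\J$ properly contains $\mathcal{C}$ and a single cover tuple can a priori hit several codewords at once; the MDS/injectivity property of CRT codes is precisely what caps this multiplicity by $\binom{n}{\s} \le 2^{n}$, a factor that is absorbed into $\Omega_{n}$. The prime-selection step is a routine technicality.
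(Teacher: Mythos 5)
Your construction step — converting an optimal $y^*$ of $\LP_{\lb}(G, s)$ into a CRT code via per-vertex primes of magnitude $\Theta(N^{y^*_v})$, setting $R_e = \pi_e(\mathcal{C})$, and using the LP constraints $\sum_{v \in e} y^*_v \le 1$ to keep $|R_e| = O_n(N)$ — is exactly the paper's (Appendix~\ref{sec:cover-lb}). Where you diverge is the final step that extracts $\cover(G,N,\Delta) \ge \Omega_n(|\mathcal{C}|)$. The paper simply asserts $\mathcal{C} = \J$; if that identity held, $\J$ would itself be a join packing of minimum distance $\Delta$, hence its own unique smallest join cover, and the bound would follow at once. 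But setting $R_e = \pi_e(\mathcal{C})$ only ensures $\mathcal{C} \subseteq \J$, and the reverse inclusion is false in general — already for a path on three vertices the join of the edge-wise projections of a CRT code contains non-codewords — so the paper's finishing step as written is incomplete. Your counting argument — charging each $\mathbf{c} \in \mathcal{C}$ to a pair $(\mathbf{u}(\mathbf{c}), S(\mathbf{c}))$ with $\mathbf{u}(\mathbf{c}) \in \calU$ covering $\mathbf{c}$ and $S(\mathbf{c})$ a size-$s$ agreement set, and then invoking MDS injectivity of $\pi_{S}$ on $\mathcal{C}$ to conclude the charging is one-to-one, hence $|\mathcal{C}| \le |\calU|\binom{n}{s} \le 2^n|\calU|$ — is exactly the step the paper needs but does not supply; the $2^n$ loss is absorbed into $\Omega_n$. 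In short: same construction, but your more careful finish is not merely an alternative, it is a correction.
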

\begin{proof}
To prove this inequality, we only need to construct an instance $\q = (G, \{R_e: e \in E, |R_e| \le N\})$ and a join cover $\J$. We construct a CRT code (Definition~\ref{defn:crt}) from the optimal dual solution to the following linear program:
\begin{align*}
& \max{L} \\
& \text{s.t. } \sum_{v \in e} y_{v} \le 1, \text{for all } e \in E \\
& \sum_{v \in S: S \subseteq V, |S| = s} y_{v} \ge L \\
& y_{v} \ge 0, v \in V. \numberthis \label{eq:crt-1}
\end{align*}
Let $\mathbf{y} = (y_{v})_{v \in V}$ be an optimal solution to the above LP and $L(\mathbf{y})$ be the objective value. For each $v$, we define $q_{v}$ as a unique prime of order $\Theta(N^{y_{v}})$ and $\mathbf{q} = (q_{v})_{v \in V}$.

We now instantiate a $\left(n,  M, n - \Delta + 1\right)_{\mathbf{q}}$-CRT code $\mathcal{C}$ with $M = \Omega_{n}\left(N^{\LP_{\lb}(G, \s)} \right)$, where $\LP_{\lb}(G, \s) = L(\mathbf{y})$. For each edge $e \in E$, the optimal dual solution ensures that $\sum_{v \in e} y_{v} \le 1$. Note that this implies $|\mathcal{C}_e| \le O(N)$ for every $e \in E$. We set $R_{e} = \mathcal{C}_{e}$ for every $e \in E$ and it follows that $|R_{e}| \le N$. Further, we have $\mathcal{C} = \J =\ \Join_{e \in E}  R_e$. Since the size $M$ of $\mathcal{C}$ is determined by the smallest $n - \Delta + 1$ values, we have
\begin{align*}
\cover(G, N, \Delta) \ge |\mathcal{C}| \ge  \Omega_{n}\left(N^{\LP_{\lb}(G, n - \Delta + 1)} \right), 
\end{align*}
where the second equation follows from the fact that 
\begin{align*}
\LP_{\lb}(G, \s) &= L(\mathbf{y}) \\
& \geq \min_{S: \subseteq V: |S| = s} \sum_{v \in S} y_{v} \quad \text{(from the LP)}. 
\end{align*}
This completes the proof.
\end{proof}

\subsection{Upper Bound for $\joinUB(G, N, \s)$ using $\LP^*_{\ub}(G, \s)$} \label{sec:proj-ub}
Our goal here is to prove the following result.
\begin{lemma} \label{lemma:galemma2}
\begin{align*}
\joinUB(G, N, \s) & \le N^{\LP^*_{\ub}(G, \s)} \\
& = 2^{\AGM(G, S)}.
\end{align*}
\end{lemma}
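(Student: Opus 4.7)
The plan is to apply the AGM bound to the induced (projected) query on each candidate subset $S$ of size $s$, and then minimize over $S$.

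First, I would fix any $S \subseteq V$ with $|S| = s$ and consider the induced query on $S$: its hypergraph is $G_S = (S, \{e \cap S : e \in E\})$ with relation $\pi_{e \cap S}(R_e)$ for each $e \in E$. Because projection can only reduce the number of tuples, each of these projected relations has size at most $|R_e| \le N$. The AGM bound of Atserias--Grohe--Marx, applied to $G_S$, then says that for every non-negative $(x_e)_{e \in E}$ satisfying $\sum_{e \ni v} x_e \ge 1$ for all $v \in S$, we have
\[
\Bigl|\Join_{e \in E} \pi_{e \cap S}(R_e)\Bigr| \;\le\; \prod_{e \in E} N^{x_e} \;=\; N^{\sum_{e} x_e}.
\]
Optimizing $(x_e)$ subject to these constraints is exactly the LP~\eqref{eq:ubs} defining $\AGM(G, S)$, so
\[
\max_{R_e: |R_e| \le N,\, e \in E} \Bigl|\Join_{e \in E} \pi_{e \cap S}(R_e)\Bigr| \;\le\; 2^{\AGM(G, S)}.
\]

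Second, I would take the minimum over $S \subseteq V$ with $|S| = s$ on both sides to obtain
\[
\joinUB(G, N, s) \;=\; \min_{S \subseteq V: |S| = s} \max_{R_e: |R_e| \le N,\, e \in E}\Bigl|\Join_{e \in E} \pi_{e \cap S}(R_e)\Bigr| \;\le\; \min_{S \subseteq V: |S| = s} 2^{\AGM(G, S)}.
\]

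Third, I would identify this right-hand side with $N^{\LP^*_{\ub}(G, s)}$. In the integer program $\LP^*_{\ub}(G, s)$, the binary variables $z_v$ pick out a subset $T = \{v : z_v = 1\}$ of size at least $s$, and the $x$-constraints reduce to $\sum_{e \ni v} x_e \ge 1$ for each $v \in T$; minimizing $\sum_e x_e$ for a fixed $T$ gives $\AGM(G, T)/\log N$. Since $\AGM(G, T)$ is monotone in $T$ (adding a vertex can only tighten the covering constraints), the optimum is attained at some $T$ with $|T| = s$, yielding $N^{\LP^*_{\ub}(G, s)} = \min_{S: |S| = s} 2^{\AGM(G, S)}$, which completes the chain.

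I do not anticipate a serious obstacle: the argument is a direct application of the AGM bound followed by an LP equivalence. The only mild point of care is that when two edges $e \ne e'$ satisfy $e \cap S = e' \cap S$, one may collapse coincident projected edges and use the smaller of the two projected relations, which only strengthens the bound and does not affect the stated inequality.
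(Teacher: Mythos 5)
Your proof is correct and takes essentially the same route as the paper: apply the AGM bound to the projected query $G_S$ for each $S$ of size $s$, minimize over $S$, and then identify $N^{\LP^*_{\ub}(G,s)}$ with $\min_{|S|=s}2^{\AGM(G,S)}$ by reading the integer variables $z_v$ as selecting a subset and using monotonicity of $\AGM(G,\cdot)$. Your monotonicity observation is a slightly cleaner way to collapse the $\ge s$ constraint down to $|T|=s$ than the paper's optimality-structure argument, but the underlying content is the same.
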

We start with the following LP.
\begin{align*}
& \min{\sum_{e\in E}x_e} \\
& \text{s.t. } \sum_{e \ni v}x_e \ge z_{v}, \text{ for all } v \in V \\
& \sum_{v \in V}z_{v} \ge \s \\ 
& z_{v} \le 1, v \in V \\
& x_e  \ge 0,  e \in E. \numberthis \label{eq:primal-1}
\end{align*}
Let the objective value of this LP be denoted by $\LP_{p}(G,\s)$. We argue that~\eqref{eq:primal-1} is the dual of LP~\eqref{eq:crt-1}.
\begin{lemma} \label{lemma:galemma1}
For any hypergraph $G$, $N$ and $\Delta$, we have
\[\LP_{p}(G, \s) = \LP_{\lb}(G, \s).\]
\end{lemma}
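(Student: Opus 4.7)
The plan is to establish the identity by LP duality, after first rewriting the exponentially many $s$-subset constraints in $\LP_{\lb}$ as a single continuous constraint over a well-understood polytope. The key observation is that the family of constraints $\sum_{v\in S} y_v \ge L$ (one per $s$-subset $S$) is equivalent to the single inequality $L \le \min_{|S|=s}\sum_{v\in S}y_v$, and this discrete minimum equals the continuous LP minimum $\min_{z\in Z}\sum_v y_v z_v$ over the hypersimplex $Z = \{z \in [0,1]^V : \sum_v z_v = s\}$, since the vertices of $Z$ are exactly the indicator vectors of $s$-subsets of $V$ and a linear function attains its extrema on a polytope at a vertex.

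With this reformulation in hand, $\LP_{\lb}(G, s)$ becomes $\max_{y \in Y}\min_{z \in Z}\sum_v y_v z_v$, where $Y = \{y \ge 0 : \sum_{v\in e} y_v \le 1 \text{ for all } e \in E\}$. Both $Y$ (after, if necessary, discarding isolated vertices to ensure boundedness) and $Z$ are compact convex polytopes, and the objective is bilinear, so the minimax theorem applies and swaps the order to yield $\LP_{\lb}(G, s) = \min_{z \in Z}\max_{y \in Y}\sum_v y_v z_v$. The inner maximum is a fractional edge-cover-style LP whose standard dual is $\min\{\sum_e x_e : x \ge 0,\; \sum_{e \ni v} x_e \ge z_v \text{ for all } v\}$. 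Substituting back, $\LP_{\lb}(G, s) = \min\{\sum_e x_e : x\ge 0,\; z \in Z,\; \sum_{e \ni v} x_e \ge z_v \text{ for all } v\}$, which matches $\LP_p(G, s)$ once we note that $\sum_v z_v \ge s$ in $\LP_p$ can be tightened to equality at any optimum (lowering a slack coordinate of $z$ only loosens the $x$-constraints).

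The hardest part will be the technical bookkeeping: verifying the hypersimplex vertex characterization and justifying the minimax swap cleanly (both are standard but should be stated explicitly), plus reconciling the $\sum_v z_v \ge s$ in $\LP_p$ with the $\sum_v z_v = s$ arising naturally from the hypersimplex. If the minimax route proves awkward, a direct backup is to write the LP dual of $\LP_{\lb}$ explicitly with multipliers $\lambda_S \ge 0$ for each $s$-subset constraint and $x_e \ge 0$ for each edge constraint, then substitute $z_v := \sum_{S \ni v}\lambda_S$ to convert the dual into $\LP_p$ form, invoking the fact that any $z \in Z$ is a convex combination of $s$-subset indicator vectors (this is again the hypersimplex fact). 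This alternative is more mechanical but sidesteps minimax entirely.
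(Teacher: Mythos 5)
Your proposal is correct, and your main route (hypersimplex reformulation followed by a minimax swap and then edge-cover duality) is a genuinely different proof from the paper's. The paper instead takes the explicit LP dual of $\LP_{\lb}$, obtaining an intermediate LP (LP~\eqref{eq:oldprimal}) with one multiplier $w_S \ge 0$ per $s$-subset $S$, and then argues equivalence with $\LP_p$ via the substitution $z_v = \sum_{S \ni v} w_S$ in both directions --- this is exactly your ``direct backup.'' Both routes hinge on the same hypersimplex fact, that $\{z \in [0,1]^V : \sum_v z_v = s\}$ is the convex hull of the $s$-subset indicator vectors. Your main route uses it upfront, converting $\min_{|S|=s}\sum_{v\in S}y_v$ into a continuous minimum so that minimax applies; the paper's route (and your backup) needs it, less visibly, to justify that a feasible $z$ for $\LP_p$ can be matched by a nonnegative combination of $s$-subset indicators of total weight at least one --- a step the paper's write-up asserts without verifying that the requisite $w$ exists, so your version is actually the more careful of the two. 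Two small bookkeeping points if you flesh this out: with no isolated vertices, $Y$ is automatically compact (each $y_v \le 1$ via any edge containing $v$), so ordinary von~Neumann minimax for a bilinear objective over compact convex sets suffices and Sion's theorem is not needed; and the gap between the hypersimplex equality $\sum_v z_v = s$ and the inequality $\sum_v z_v \ge s$ in $\LP_p$ closes by the one-line observation the paper itself makes at the start of the proof of Lemma~\ref{lemma:galemma2} (at an optimum one may decrease $z$ to meet the constraint with equality, which only loosens the $x$-constraints).
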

Assuming that the above lemma is true, we now claim that by making the $z_{v}$s integral in LP~\eqref{eq:primal-1}, we have a LP that computes $\AGM(G, S)$. We are now ready to prove Lemma~\ref{lemma:galemma2}.
\begin{proof} [Proof of Lemma~\ref{lemma:galemma2}]
We start by restating the $\AGM(G, S)$ LP (from~\eqref{eq:ubs}):
\begin{align*}
& \min{\sum_{e \in E}x_e \log(N)} \\
& \text{s.t. } \sum_{e \ni v}x_e \ge 1, \text{ for all } v \in S \\
& x_e \ge 0,  e \in E. \numberthis \label{eq:agm}
\end{align*}
We claim that in LP~\eqref{eq:primal-1}, $\sum_{v \in V}z_{v} = \s$ at optimality. If $\sum_{v \in V} z_{v} > \s$, then we can decrease any $x_e$ variable continuously; to make sure that $\sum_{e \ni v} x_{e} \ge z_{v}$ is satisfied, we decrease the related $z_v$ variables. This will decrease the objective value. Thus, for an optimal solution to LP~\eqref{eq:primal-1}, we have $\sum_{v \in V} z_{v} = \s$. 

Further, we claim that at optimality for LP~\eqref{eq:primal-1}, $z_v = \min\{\sum_{e \ni v} x_{e}, 1\}$, for every $v \in V$. Note that if $z_v < \min\{\sum_{e \ni v} x_{e}, 1\}$ for some $v \in V$, we can make $z_v$ equal to $\min\{\sum_{e \ni v} x_{e}, 1\}$. Then, we would have $\sum_{v \in V} z_{v} > \s$. We can use the above procedure to decrease the objective value and as a result, for an optimal solution to LP~\eqref{eq:primal-1}, we have $z_v = \min\{\sum_{e \ni v} x_{e}, 1\}$, for every $v \in V$.

Based on the above claims, we will now argue that a version of LP~\eqref{eq:primal-1} when $z_{v}$s are integral computes $\AGM(G, S)$. To this end, we first write this specific integral version, whose objective value we denote by $\LP^*_{\ub}(G, \s)$.
\begin{align*}
& \min \quad \sum_{e\in E}x_e \\
& \text{s.t. } \sum_{e \ni v}x_e \ge 1, \text{ if } z_{v} = 1 \quad \forall v \in V \\
& \sum_{v \in V}z_{v} = \s \\ 
& z_v \in \{0, 1\}, v \in V \\
& x_e  \ge 0,  e \in E.  \numberthis \label{eq:agmgs}
\end{align*}
Given a feasible solution to LP~\eqref{eq:agm}, we can always convert it to a feasible solution to LP~\eqref{eq:primal-1} and the other way round. Thus, LP~\eqref{eq:primal-1} computes $\AGM(G, S)$ and since $2^{\AGM(G, S)}$ is a valid upper bound on $\joinUB(G, \s)$, the stated result follows. 
\end{proof}
Finally, we prove Lemma~\ref{lemma:galemma1}.
\begin{proof} [Proof of Lemma~\ref{lemma:galemma1}]
Consider the primal version of the LP~\eqref{eq:crt-1}:
\begin{align*}
& \min{\sum_{e\in E}x_e} \\
& \text{s.t } \sum_{e \ni v}x_e - \sum_{S \ni v}w_{S} \ge 0, \text{ for all } v \in V \\
& \sum_{S: S \subseteq V, |S| = k}w_{S} \ge 1 \\ 
& w_{S} \ge 0, S \subseteq V, |S| = \s \\
& x_e  \ge 0,  e \in E. \numberthis \label{eq:oldprimal}
\end{align*}
 We now replace $\sum_{S \ni v}w_{S}$ by $z_{v}: 0 \le z_{v} \le 1$ for all $v \in V$. Notice that 
 \begin{align*}
 \sum_{v \in V} z_{v} & \geq (\s) \cdot \sum_{S: S \subseteq V, |S| = \s} w_{S}\\
 &  \geq \s, 
 \end{align*}
 where the last inequality follows from the constraint $\sum_{S: S \subseteq V, |S| = \s} w_{S} \ge 1$. In particular, we have now reduced LP~\eqref{eq:oldprimal} to LP~\eqref{eq:primal-1}. We now do the reduction the other way round i.e., starting with LP~\eqref{eq:primal-1}:
\begin{align*}
& \min{\sum_{e\in E}x_e}\\
& \text{s.t, } \sum_{e \ni v}x_e \ge z_{v}, \text{ for all } v \in V \\
& \sum_{v \in V}z_{v} \ge \s \\ 
& z_{v} \le 1, v \in V. \\
& x_e  \ge 0,  e \in E.
\end{align*}
We can replace $z_{v} = \sum_{S \ni v} w_{S}$, where $S: S \subseteq V, |S| = \s$ for all $v \in V$. Notice that 
\begin{align*}
\sum_{S: S \subseteq V, |S| = \s} w_{S} \quad \geq \frac{\sum_{v \in V} z_{v}}{\s} \quad \geq 1, 
\end{align*}
where the last inequality follows from the constraint $\sum_{v \in V} z_{v} \ge \s$. In particular, we have reduced LP~\eqref{eq:primal-1} to LP~\eqref{eq:oldprimal} and shown that these LPs can be reduced to each other in both ways. Thus, $\LP_{p}(G, \s) = \LP_{\lb}(G, \s)$, completing the proof.
\end{proof}
We now define a useful notion of {\em cover}ing a vertex.
\begin{definition} [Cover]
A vertex $v \in V$ in the LPs \eqref{eq:primal-1} and \eqref{eq:agmgs} is considered {\em cover}ed iff $z_{v} = 1$.
\end{definition}

\subsection{Proof of Theorem~\ref{thm:ubga1}} \label{gaub}
In this section, we prove an upper bound for gap between $\LP_{\ub}(G, s)$ and $\LP_{\lb}(G, s)$.

We first prove Theorem~\ref{thm:ubga2} in three steps. First, we come up with a dependent randomized rounding (DRR hereon) algorithm $\mathcal{A}$ to round LP~\eqref{eq:primal-1}. Next, we show that this rounding is within constant multiplicative and additive factors and covers at least $s$ vertices with a non-zero probability. Finally, we show that the multiplicative factor converges to $10.37$ and the additive factor is $1$. We then sketch the proof for Theorem~\ref{thm:ubga3} based on the proof of Theorem~\ref{thm:ubga2} showing that the multiplicative factor converges to $3.73$ and the additive factor is $O(\log(s))$.

Before we describe our DRR algorithm $\mathcal{A}$, we use the following observation on $\LP$~\eqref{eq:agmgs}, which we argued earlier.
\begin{claim} \label{claimub1}
For a given optimal solution $\mathbf{x}$ to $\LP$~\eqref{eq:agmgs}, we have $\sum_{v \in V} z_{v} = s$, and $z_v = \min\{\sum_{e \ni v} x_{e}, 1\}$, for every $v \in V$. 
\end{claim}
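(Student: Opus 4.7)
The claim has two conjuncts, which I would establish in turn for LP~\eqref{eq:agmgs} directly by adapting the continuous exchange argument that the excerpt just gave for LP~\eqref{eq:primal-1} to the integer regime $z_v \in \{0,1\}$. The first conjunct, $\sum_{v \in V} z_v = s$, is an explicit equality constraint of LP~\eqref{eq:agmgs}, so every feasible --- and in particular every optimal --- solution satisfies it with no work.

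For the second conjunct, $z_v = \min\{\sum_{e \ni v} x_e, 1\}$, I would first dispose of the easy direction $z_v \le \min\{\sum_{e \ni v} x_e, 1\}$ at every feasible point of LP~\eqref{eq:agmgs}: the box constraint $z_v \le 1$ is built in, and the conditional constraint ``$\sum_{e \ni v} x_e \ge 1$ if $z_v = 1$'' is equivalent, on the integer domain $z_v \in \{0,1\}$, to the linear inequality $\sum_{e \ni v} x_e \ge z_v$ (both cases check out: when $z_v=0$ the RHS is $0 \le$ any nonneg sum, and when $z_v=1$ it is exactly the stated LP constraint); combining yields the upper bound.

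The reverse direction is the substantive step, and I would handle it by exhibiting a canonical optimum of LP~\eqref{eq:agmgs} satisfying the equality rather than asserting it for every optimum. Starting from any optimum $(x^*, z^*)$ of LP~\eqref{eq:agmgs}, suppose some $v$ violates the equality; by integrality of $z_v^*$ and the $\le$ direction this forces $z_v^* = 0$ together with $\sum_{e \ni v} x_e^* > 0$. If $\sum_{e \ni v} x_e^* \ge 1$, pick any $u$ with $z_u^* = 1$ (which exists since $s \ge 1$) and perform the $z$-swap $z_v \leftarrow 1$, $z_u \leftarrow 0$, leaving $x^*$ unchanged: this preserves the objective and the count $\sum_v z_v = s$, the new constraint at $v$ holds because $\sum_{e \ni v} x_e^* \ge 1$, and the constraint at $u$ becomes vacuous. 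If instead $0 < \sum_{e \ni v} x_e^* < 1$, I would select, among all optima of LP~\eqref{eq:agmgs}, a representative minimizing the potential $\Phi(x,z) = \sum_{v : z_v = 0} \sum_{e \ni v} x_e$; in that representative, any positive $x_e$ on an edge $e \ni v$ through an uncovered $v$ must be tight against $\sum_{e' \ni u} x_{e'} \ge 1$ for \emph{every} covered $u \in e$ (otherwise a small decrease of $x_e$ strictly reduces $\Phi$ without violating feasibility or the objective), and combining this with complementary slackness for the fractional edge cover subproblem on the covered set $S = \{u : z_u = 1\}$ would drive the residual weight at uncovered vertices to zero.

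The main obstacle I anticipate is making the swap in the first sub-case terminate so that \emph{all} vertices simultaneously satisfy the equality, rather than fixing one violator and creating another; a pathological example such as a single hyperedge $e = V$ with $s < n$ shows that a naive swap loop can cycle without ever achieving $z_v = \min\{\sum_{e \ni v} x_e, 1\}$ everywhere. I would control this with the lexicographic measure $\bigl(|\{v : z_v = 0,\ \sum_{e \ni v} x_e \ge 1\}|,\ \Phi(x,z)\bigr)$, and more robustly by first choosing $S$ to be an $s$-subset that minimizes the fractional edge cover number over all $s$-subsets of $V$; with that choice the second conjunct reduces to a structural property of the minimizing $S$, and either the iteration terminates or it identifies a strictly better $s$-subset, contradicting the choice of $S$. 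Pinning down this structural dependence on $S$ is where the real work will be.
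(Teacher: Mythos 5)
There is a genuine gap, and it starts with the reading of the statement. Although the claim's label says LP~\eqref{eq:agmgs}, the property being claimed is the one the paper ``argued earlier'' in the proof of Lemma~\ref{lemma:galemma2}, namely a property of the \emph{fractional} program~\eqref{eq:primal-1}: this is the LP that Algorithm~$\mathcal{A}$ rounds, and Lemma~\ref{lemma:ublemma2} applies the claim in the form $z_v = \sum_{e \ni v} x_e$ to vertices of $V_2$ with $0 < z_v < 1/\newalpha$, values that cannot occur in the $0/1$ program. The paper's proof is a two-step exchange argument at an optimum of~\eqref{eq:primal-1}: feasibility already gives $z_v \le \min\{\sum_{e \ni v} x_e, 1\}$; if $\sum_{v} z_v > s$, some $x_e > 0$ can be decreased continuously (lowering the dependent $z_v$'s while keeping $\sum_v z_v \ge s$), strictly decreasing the objective, so $\sum_v z_v = s$ at optimality; and if $z_v < \min\{\sum_{e \ni v} x_e, 1\}$ for some $v$, raising $z_v$ to that minimum makes $\sum_v z_v > s$, after which the same procedure again strictly decreases the objective, a contradiction.

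Your proposal instead targets the integral program~\eqref{eq:agmgs} literally, and your own ``pathological example'' shows why that route cannot be completed: with a single hyperedge $e = V$ and $s < n$, any optimum has $x_e = 1$, so $\min\{\sum_{e' \ni v} x_{e'}, 1\} = 1$ for \emph{every} vertex, while the equality constraint forces only $s < n$ of the $z_v$ to equal $1$. Hence the second conjunct fails for every optimum of~\eqref{eq:agmgs}, not just for a badly chosen one, and no canonical-optimum selection, potential function $\Phi$, lexicographic measure, or choice of an $s$-subset $S$ can repair it; the swap in your first sub-case necessarily recreates a violator, which is exactly the non-termination you observed. Moreover, even under your reading the argument is not a proof: the sub-case $0 < \sum_{e \ni v} x_e < 1$ and the termination of the swap loop are left open by your own admission. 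The correct move is to recognize the claim as the statement about~\eqref{eq:primal-1} and give the short exchange argument above (your observation that, over $z_v \in \{0,1\}$, the conditional constraint is equivalent to $\sum_{e \ni v} x_e \ge z_v$ is fine, but it is the relaxed constraint of~\eqref{eq:primal-1} that the exchange argument actually uses).
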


We now state a well-known result regarding dependence rounding.
\begin{theorem} [From~\cite{Srinivasan01}]\label{drrthm1}
Let $\tilde{x} \in [0, 1]^E$. Then, there exists a randomized procedure that outputs $\tilde{X}_{e} \in \{0, 1\}^E$, such that
\begin{itemize}
\item {$\mathbb{E}[\tilde{X_e}] = \tilde{x}_{e}$ for every $e \in E$.} 
\item {With probability $1$, we have $\sum_{e \in E}\tilde{X}_{e} \in \left\{\floor*{\sum_{e \in E}\tilde x_e}, \ceil*{\sum_{e \in E}\tilde x_e}\right\}$.}
\item {For every subset $E' \subseteq E$, we have 
\begin{align*}
\Pr[\forall e \in E', \tilde{X}_e = 0] \leq \prod_{e \in E}(1-\tilde x_e).
\end{align*}}
\end{itemize}
\end{theorem}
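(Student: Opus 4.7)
The plan is to realize the procedure as Srinivasan's iterative pair-rounding. Initialize $\tilde X^{(0)} = \tilde x$, and at each step $t$ pick any two coordinates $e_1, e_2 \in E$ whose current values $p = \tilde X^{(t)}_{e_1}$ and $q = \tilde X^{(t)}_{e_2}$ lie in the open interval $(0,1)$. Let $\alpha$ be the largest positive real with $(p+\alpha, q-\alpha) \in [0,1]^2$ and $\beta$ the largest with $(p-\beta, q+\beta) \in [0,1]^2$; by construction at least one coordinate saturates at $0$ or $1$ in each direction. With probability $\beta/(\alpha+\beta)$, set $(\tilde X^{(t+1)}_{e_1}, \tilde X^{(t+1)}_{e_2}) \leftarrow (p+\alpha, q-\alpha)$; otherwise set it to $(p-\beta, q+\beta)$, leaving every other coordinate unchanged. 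Each step eliminates at least one fractional coordinate, so after at most $|E|$ steps the vector lies in $\{0,1\}^E$; if $\sum_e \tilde x_e$ is not an integer, exactly one fractional coordinate survives, which is then rounded independently to $1$ with probability equal to its current value.

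For the marginal property, the choice of probabilities gives the conditional expectation $\tfrac{\beta}{\alpha+\beta}(p+\alpha) + \tfrac{\alpha}{\alpha+\beta}(p-\beta) = p$ (and symmetrically for $q$), so $\E[\tilde X^{(t+1)}_e \mid \tilde X^{(t)}] = \tilde X^{(t)}_e$ for every $e$, and iterating gives $\E[\tilde X_e] = \tilde x_e$. For the sum property, each pair-step alters $(\tilde X_{e_1}, \tilde X_{e_2})$ by $(+\alpha,-\alpha)$ or $(-\beta,+\beta)$, so $\sum_e \tilde X^{(t)}_e$ is invariant throughout; the optional final independent rounding of a lone fractional coordinate pushes the sum to $\floor*{\sum_e \tilde x_e}$ or $\ceil*{\sum_e \tilde x_e}$, as required.

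The main obstacle is the negative correlation property, which I will prove by showing that for any fixed $E' \subseteq E$ the process $M_t := \prod_{e \in E'}(1 - \tilde X^{(t)}_e)$ is a supermartingale. A pair-step on $(e_1, e_2)$ leaves $M_t$ unchanged when neither is in $E'$, and the already-established marginal property preserves the conditional expectation when exactly one of them is in $E'$. The only nontrivial case is $e_1, e_2 \in E'$: writing $u = 1-p$ and $v = 1-q$, a direct expansion yields
\begin{align*}
\E\bigl[(1-\tilde X^{(t+1)}_{e_1})(1-\tilde X^{(t+1)}_{e_2}) \,\big|\, \tilde X^{(t)}\bigr]
 &= \tfrac{\beta}{\alpha+\beta}(u-\alpha)(v+\alpha) + \tfrac{\alpha}{\alpha+\beta}(u+\beta)(v-\beta) \\
 &= uv - \alpha\beta \;\le\; uv,
\end{align*}
so the $\{e_1,e_2\}$-contribution to $M$ can only decrease in expectation while the other factors are fixed. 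Chaining this supermartingale inequality across all iterations and taking total expectation gives $\E\bigl[\prod_{e \in E'}(1-\tilde X_e)\bigr] \le \prod_{e \in E'}(1-\tilde x_e)$; since each factor $1-\tilde X_e$ ends up in $\{0,1\}$, the left-hand side equals $\Pr[\forall e \in E', \tilde X_e = 0]$, completing the proof.
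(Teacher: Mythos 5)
Your proof is correct, but note that the paper itself does not prove this statement: Theorem~\ref{drrthm1} is imported as a black box from the cited work of Srinivasan, so there is no internal proof to compare against. What you have written is essentially the standard argument from that source --- the iterative pair-rounding (pipage-style) procedure, preservation of marginals and of the coordinate sum at every step, and the supermartingale computation $\E[(u-\alpha)(v+\alpha)]\cdot\tfrac{\beta}{\alpha+\beta}+\E[(u+\beta)(v-\beta)]\cdot\tfrac{\alpha}{\alpha+\beta}=uv-\alpha\beta\le uv$ for the negative-correlation property --- and each step checks out, including the termination claim (each pair-step makes at least one of the two chosen coordinates integral) and the final handling of a lone fractional coordinate when $\sum_e\tilde x_e\notin\Z$. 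One point worth flagging: what you prove is $\Pr[\forall e\in E',\,\tilde X_e=0]\le\prod_{e\in E'}(1-\tilde x_e)$, with the product over $E'$, whereas the statement as printed has the product over all of $E$. The printed version is a typo (it is false already for $E=\{e_1,e_2\}$, $\tilde x_{e_1}=\tilde x_{e_2}=\tfrac12$, $E'=\{e_1\}$), and the paper only ever invokes the bound with the product over the relevant subset (the edges incident to a vertex $v$), so your version is the intended and correct one.
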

We now round the LP~\eqref{eq:primal-1} using Algorithm $\mathcal{A}$ (i.e., Algorithm $2$). We assume that $\newalpha  > 1$ is a parameter that will be decide later.
\begin{algorithm}[!htbp]
	\caption{Algorithm $\mathcal{A}$}
	\label{algorithm_a}
	\begin{algorithmic}[1]
		\Require{$\mathbf{x} = (x_e)_{e \in E}$}.
		\Ensure{$X \in [0, 1]^E, Z \in \{0, 1\}^V$}
		\State \textbf{for every} $e \in E$: let $\tilde x_e \gets \min\{c x_e, 1\}$
		\State Compute $(\tilde X_e)_{e \in E}$ by applying Theorem~\ref{drrthm1} to $\tilde x$. 
		\State \textbf{for every} $e \in E$:  Let $X_e \gets \max\{\tilde{X}_e, \min\{\newalpha \cdot x_e, 1\}\}$. 
		\State \textbf{for every} $v \in V$:  
		\State\hspace*{\algorithmicindent}   \textbf{if} $\sum_{e \in V: e \ni v}X_e \geq 1$ \textbf{then} $Z_v \gets 1$
		\State\hspace*{\algorithmicindent} \textbf{else} $Z_v \gets 0$ 
		\Return $X$ and $Z$
	\end{algorithmic}
\end{algorithm}
For $\mathbf{X} = (X_{e})_{e \in E}$ returned by Algorithm $\mathcal{A}$, the following is true:
\begin{lemma} \label{lemma:ublemma1} The $X$ vector returned by Algorithm $\mathcal{A}$ always has
	$\sum_{e \in E} X_e \leq 2c \sum_{e \in E}x_e + 1$.
\end{lemma}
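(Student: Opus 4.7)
The plan is to unpack the definition of $X_e$ and bound it termwise by a sum of two quantities, each of which we already know how to control.

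First, observe that by construction $X_e = \max\{\tilde{X}_e,\tilde{x}_e\}$ (using the fact that $\tilde{x}_e = \min\{c x_e, 1\}$ by Line $1$ of Algorithm $\mathcal{A}$). Since both $\tilde{X}_e\in\{0,1\}$ and $\tilde{x}_e\in[0,1]$ are non-negative, we have the pointwise bound $X_e \le \tilde{X}_e + \tilde{x}_e$. Summing this inequality over all $e\in E$ yields
\[
\sum_{e \in E} X_e \;\le\; \sum_{e \in E} \tilde{X}_e \;+\; \sum_{e \in E} \tilde{x}_e.
\]

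Next, I would control each of the two sums on the right separately. For the second sum, since $\tilde{x}_e = \min\{c x_e, 1\} \le c x_e$ for every $e$, we have $\sum_{e \in E} \tilde{x}_e \le c \sum_{e \in E} x_e$. For the first sum, the second property of the rounding procedure in Theorem~\ref{drrthm1} guarantees that with probability $1$, $\sum_{e \in E} \tilde{X}_e \le \lceil \sum_{e \in E} \tilde{x}_e \rceil \le \sum_{e \in E} \tilde{x}_e + 1 \le c \sum_{e \in E} x_e + 1$.

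Combining the two bounds gives $\sum_{e \in E} X_e \le 2c \sum_{e \in E} x_e + 1$ deterministically, as desired. There is no real obstacle here; the only subtlety is noticing that because the bound in Theorem~\ref{drrthm1} holds with probability $1$ (not merely in expectation), the final conclusion holds pointwise on every realization of the rounding and not only in expectation, which is why the lemma can be stated as a deterministic inequality about the vector the algorithm returns.
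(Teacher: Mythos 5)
Your proof is correct and follows essentially the same route as the paper's: decompose $X_e \le \tilde{X}_e + \tilde{x}_e$ (the paper writes the second term as $c x_e$ directly, you bound $\tilde{x}_e \le c x_e$ one line later — a cosmetic difference), then use the almost-sure rounding bound $\sum_e \tilde{X}_e \le \lceil \sum_e \tilde{x}_e\rceil$ from Theorem~\ref{drrthm1} and $\tilde{x}_e \le c x_e$. Your closing remark that the ``with probability $1$'' clause is what licenses the deterministic phrasing of the lemma is a nice explicit observation that the paper leaves implicit.
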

\begin{proof}
	Observe from Algorithm $\mathcal{A}$ that 
	\begin{align*}
		\sum_{e \in E} X_{e} &= \sum_{e \in E} \max\{\tilde{X}_e, \min\{\newalpha \cdot x_{e}, 1\}\}\\
		&\leq \sum_{e \in E}(\tilde X_e + \newalpha \cdot x_e) \\
		&\leq \ceil*{\sum_{e \in E}\tilde x_e} + \newalpha \sum_{e \in E}x_e
		\leq \sum_{e \in E}\tilde x_e + \newalpha \sum_{e \in E}x_e + 1\\
		&\leq c\sum_{e \in E}x_e + \newalpha \sum_{e \in E}x_e + 1\\
		&= 2 \cdot c \cdot \sum_{e \in E}x_e + 1,
	\end{align*}
	where the second inequality is by Theorem~\ref{drrthm1} and the third inequality is by the definition of $\tilde{x}_e$. 
\end{proof}
We still need to argue that $\sum_{v \in v} z_{v}\ge k$ happens with non-zero probability. To this end, we break $V$ into two parts -- $V_1$ is the subset of vertices in $v \in V$ such that $z_{v} \ge \frac{1}{\newalpha}$ and $V_2$ is the subset of vertices $v \in V$ such that $z_{v} \le \frac{1}{\newalpha}$.
\begin{claim} \label{claim4.9}
For every vertex $v \in V_{1}$, $Z_{v} = 1$. 
\end{claim}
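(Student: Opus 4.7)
The plan is to show $\sum_{e \ni v} X_e \ge 1$ for every $v \in V_1$; once this is established, the rule in the last three lines of Algorithm $\mathcal{A}$ sets $Z_v = 1$ by definition, which is exactly the claim. So the entire task reduces to a short deterministic inequality about the vector $\mathbf{X}$, bypassing any probabilistic considerations about $\tilde{\mathbf{X}}$.

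The key facts I will use are: (i) By Algorithm $\mathcal{A}$ (line~$3$), we have $X_e \ge \min\{c\cdot x_e,\,1\}$ for every $e \in E$, since $X_e$ is defined as a maximum that includes $\min\{c\cdot x_e,\,1\}$ as one of its arguments. (ii) By Claim~\ref{claimub1}, $z_v = \min\{\sum_{e \ni v} x_e,\,1\}$, and in particular the LP constraint $\sum_{e \ni v} x_e \ge z_v$ gives $\sum_{e \ni v} x_e \ge z_v \ge 1/c$ whenever $v \in V_1$.

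Given these, I would do a two-case analysis on whether the scaling in line~$1$ of Algorithm~$\mathcal{A}$ clips any hyperedge incident on $v$. First, if there exists $e \ni v$ with $c\cdot x_e \ge 1$, then by (i) we get $X_e \ge 1$ for this single hyperedge, hence $\sum_{e' \ni v} X_{e'} \ge 1$. Otherwise, $c\cdot x_e < 1$ for all $e \ni v$, so $\min\{c\cdot x_e,\,1\} = c\cdot x_e$ and (i) gives $X_e \ge c\cdot x_e$; summing over $e \ni v$ and using (ii) yields
\[
\sum_{e \ni v} X_e \;\ge\; c \sum_{e \ni v} x_e \;\ge\; c \cdot \frac{1}{c} \;=\; 1.
\]
In either case $\sum_{e \ni v} X_e \ge 1$, so Algorithm~$\mathcal{A}$ sets $Z_v = 1$, completing the proof.

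There is no real obstacle here: the statement is essentially the deterministic ``easy half'' of the analysis, designed so that the heavy LP-mass vertices are automatically covered and only the vertices in $V_2$ need to be handled via the randomized bound from Theorem~\ref{drrthm1}. The only subtlety worth flagging explicitly in the write-up is that (ii) uses Claim~\ref{claimub1} for the starting solution $\mathbf{x}$ to LP~\eqref{eq:primal-1}, not for the rounded vector.
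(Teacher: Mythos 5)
Your proof is correct and follows essentially the same argument as the paper: both split into the case where some $e \ni v$ already has $c\cdot x_e \ge 1$ (so $X_e = 1$), and the case where all incident $x_e < 1/c$ so the $\min$ is inactive, then sum $X_e \ge c\,x_e$ and invoke $\sum_{e\ni v} x_e \ge z_v \ge 1/c$ from Claim~\ref{claimub1}. Your version is in fact slightly more careful than the paper's, which writes $X_e = \max\{\tilde X_e, c\,x_e\}$ without the $\min$, but your case split makes that shortcut harmless, just as the paper's does implicitly.
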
 
\begin{proof}
Recall that $X_{e} = \max\{\tilde{X}_{e}, \newalpha \cdot x_{e}\}$ for every $e \in E$. Fix a vertex $v \in V_1$. If some edge $e \in E$ incident on $v$ has $x_e \geq \frac{1}{\newalpha}$, then $\sum_{e \in E: v \in e}X_e \geq 1$ and we have $Z_v = 1$; otherwise, we have
\[\sum_{e \ni v} X_e = \sum_{e \ni v} \max\{\tilde{X}_e, \newalpha \cdot x_{e}\} \geq \newalpha \sum_{e \ni v}x_e \ge c z_{v} \geq 1,\]
where the last inequality used the definition of $V_1$. Thus, we have $Z_v = 1$.
\end{proof}

Thus, the algorithm $\mathcal{A}$ \textit{covers} at least $|V_1|$ vertices since for each vertex $v \in V_1$, $Z_{v} = 1$. We now consider vertices in $V_2$. Let $s' = \sum_{v \in V_2}z_{v}$;  then we have $s' + |V_1| \geq s$ since every $v \in V_1$ has $z_v \leq 1$ and $\sum_{v \in V}z_v = s$ (by Claim~\ref{claimub1}).  Further, if some edge $e \in E$ has $|e \cap V_2| \geq s'$, then we could set $X_{e'} = \min{(\newalpha x_{e'}, 1)}$ for every other edge $e' \in E$ and then set $X_e = 1$ for the edge with $|e \cap V_2| \geq s'$. In this way, we covered at least $s$ vertices, with $\sum_{e \in E}X_e \leq \newalpha \sum_{e \in E}x_e + 1$; the theorem is proved i.e., the bound that we obtain from this case is no worse than the general one ($c \le 2c$). Thus, from now on, we assume every edge $e \in E$ has $|e \cap V_2| < s'$. 

We prove the following result for covering vertices in $V_2$.
\begin{lemma} \label{lemma:ublemma2}
	The following is true:
	\begin{enumerate}[label=(\ref{lemma:ublemma2}\alph*), itemsep=0pt, leftmargin=*]
		\item For every $v \in V_2$, we have $\mu_v := \E[Z_v] \geq c_1 z_v$, where $c_1 = \left(1 - \dfrac{1}{\tilde{e}} \right)\newalpha$. \label{property:ubprop3}
		\item Let $|Z| = \sum_{v \in V_2}Z_v$ denote the number of vertices in $V_2$ covered by $\mathcal{A}$. Then, there exists a choice of $c \ge 0$ such that $Pr[|Z| \ge k'] > 0$. \label{property:ubprop4}
	\end{enumerate}
\end{lemma}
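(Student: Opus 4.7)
For part \ref{property:ubprop3}, I would fix $v \in V_2$ and exploit the negative-correlation bound of Theorem~\ref{drrthm1} applied to the edge set $E' = \{e \in E : v \in e\}$. Since $v \in V_2$, we have $z_v \leq 1/\newalpha$; combined with Claim~\ref{claimub1} this gives $\sum_{e \ni v} x_e = z_v \le 1/\newalpha$, so each $\newalpha x_e \leq \newalpha z_v \leq 1$ and hence $\tilde{x}_e = \newalpha x_e$ for every $e \ni v$. The deterministic ``floor'' of $X_e$ alone contributes only $\newalpha z_v \leq 1$ to $\sum_{e \ni v} X_e$, so $Z_v = 1$ is forced precisely when some $\tilde X_e$ with $e \ni v$ equals $1$ (barring the boundary case $\newalpha z_v = 1$, which only helps). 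Theorem~\ref{drrthm1} then yields
\[\Pr\bigl[\forall\, e \ni v,\ \tilde X_e = 0\bigr] \;\leq\; \prod_{e \ni v}\bigl(1 - \newalpha x_e\bigr) \;\leq\; \exp\bigl(-\newalpha z_v\bigr),\]
so $\E[Z_v] \geq 1 - e^{-\newalpha z_v}$. Concavity of $t \mapsto 1 - e^{-t}$ on $[0,1]$ (with values $0$ and $1 - 1/\tilde e$ at the endpoints) gives the chord bound $1 - e^{-t} \geq (1 - 1/\tilde e)\,t$ for $t \in [0,1]$; substituting $t = \newalpha z_v \in [0,1]$ yields $\E[Z_v] \geq c_1 z_v$.

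For part \ref{property:ubprop4}, the plan is to combine linearity of expectation with the integrality of $|Z|$. Part \ref{property:ubprop3} gives $\E[|Z|] = \sum_{v \in V_2} \E[Z_v] \geq c_1 s'$, and choosing any $\newalpha$ with $\newalpha \geq \tilde e/(\tilde e - 1)$ forces $c_1 \geq 1$ and hence $\E[|Z|] \geq s'$. I would then argue by contradiction: if $\Pr[|Z| \geq s'] = 0$, then since $|Z|$ takes only non-negative integer values we would have $|Z| \leq \lceil s' \rceil - 1$ almost surely, whence $\E[|Z|] \leq \lceil s' \rceil - 1 < s'$ (using $\lceil s' \rceil - 1 < s'$ when $s'$ is non-integer and $\lceil s' \rceil - 1 = s' - 1 < s'$ when $s'$ is an integer), contradicting $\E[|Z|] \geq s'$.

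The main obstacle I expect is not in \ref{property:ubprop3}, which reduces to a clean application of Theorem~\ref{drrthm1} together with concavity, but rather in tuning \ref{property:ubprop4} well enough to deliver the constants $10.37$ in Theorem~\ref{thm:ubga2} and $3.73$ in Theorem~\ref{thm:ubga3}: the plain expectation-plus-integrality argument only asks for $\newalpha \geq \tilde e/(\tilde e - 1)$, and matching the paper's constants---especially the $O(\log s)$ additive slack in Theorem~\ref{thm:ubga3}---appears to require upgrading the argument with a Chernoff-type concentration bound that leverages the negative correlation of the $\tilde X_e$'s (and hence of the $Z_v$'s) provided by Theorem~\ref{drrthm1}, so that $|Z|$ exceeds $s'$ not merely in expectation but with quantifiable probability when $s'$ is large, while the small-$s'$ regime is absorbed by the additive term.
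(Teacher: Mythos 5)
Your proof of part (a) follows the paper's approach: the negative-correlation bound from Theorem~\ref{drrthm1} gives $\E[Z_v]\ge 1-e^{-\newalpha z_v}$ for $v\in V_2$, and your chord-under-concavity estimate for $1-e^{-t}$ on $[0,1]$ is the same lower bound the paper extracts from the monotonicity of $(1-e^{-ct})/t$; your explicit handling of the boundary case $\newalpha z_v=1$ is a welcome touch where the paper is terse. For part (b), however, your argument is genuinely different from, and simpler than, the paper's. The paper first reduces to the case $|e\cap V_2|<s'$ for every edge, bounds $\Var[|Z|]<c\,s'^2$, and applies Chebyshev to obtain $\Pr[|Z|<s']\le c/(c_1-1)^2$, which must be driven below $1$; this forces both $c_1>1$ strictly and $(c_1-1)^2>c$, whence $c\approx 5.18$ and the factor $2c\le 10.37$. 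Your first-moment-plus-integrality argument dispenses with the variance computation and the $|e\cap V_2|$ condition entirely, needing only $c_1\ge 1$, i.e.\ $c\ge\tilde{e}/(\tilde{e}-1)\approx 1.58$. Your closing worry is therefore misplaced: rather than falling short of the paper's constants, this \emph{improves} them. Taking $2c$ close to $2\tilde{e}/(\tilde{e}-1)<3.17$ and feeding that into Lemma~\ref{lemma:ublemma1} would give $\LP_{\ub}(G,s)\le 3.17\cdot\LP_{\lb}(G,s)+1$, which simultaneously dominates both parts of Theorem~\ref{thm:ubga1} (the factor $10.37$ of \ref{thm:ubga2} and the $3.73$-plus-$O(\log s)$ of \ref{thm:ubga3}) and makes the iterative edge-pruning in the proof of \ref{thm:ubga3} unnecessary. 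No Chernoff-type upgrade is required; the plain expectation argument already proves the lemma and yields a sharper downstream bound.
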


\begin{proof}
For any $v \in V_2$, note that $\mu_{v} = Pr[Z_v = 1]$. Since for $v \in V_2$, we have $\sum_{e \in E: v \in e}x_e < \frac1\newalpha$, we have that $Z_v = 1$ if and only if $\tilde X_e = 1$ for some $e \ni v$. Thus, $\mu_v = Pr[\exists e \ni v, \tilde x_e = 1]$.  Notice that every $e \ni v$ has $x_e \leq 1/\newalpha$; thus, we have $\tilde{x}_{e} = c \cdot x_{e}$ for every $e \ni v$. Using negative correlation, we have
	\begin{align*}
		\mu_{v} & = 1 - \Pr[v \text{ is not covered}] \\
		& \ge 1 - \prod_{e \ni v} (1 - \tilde{x}_e) \\
		& \geq 1-\exp(-\sum_{e \ni v}\tilde x_e) \\
		& =  1 - \exp(-c \cdot \sum_{e \ni v} x_{e}).
	\end{align*}
	Then, using the fact that $\sum_{e \ni v} x_{e} = z_{v}$ for every $v \in V_2$ (from Claim~\ref{claimub1}) yields
	\begin{align*}
		\mu_{v} \ge 1 - \exp(-c \cdot z_v).
	\end{align*}
	
Notice that $\frac{1-\exp(-ct)}{t}$ is an decreasing function for $t > 0$, and $z_v < 1/\newalpha$ for every $v \in V_2$, we have that $\frac{1-\exp(-cz_v)}{z_v} \geq \frac{1-\exp(-c/\newalpha)}{1/\newalpha}$. This implies that $\mu_v \geq \left(1-\dfrac{1}{\tilde{e}} \right)\newalpha z_v = c_1z_v$.
	
We finally prove property~\ref{property:ubprop4}. Using Chebyshev's inequality, we have
\begin{align*}
\Pr [|Z| \ge k'] \le \dfrac{Var[|Z|]}{(\mu - k')^2}. \numberthis \label{eq:cheby1}
\end{align*}
To prove this lemma, we first need to upper bound $Var(|Z|) = \mathbb{E}[|Z|^2] - (\mathbb{E}[|Z|])^2$ and we start by upper bounding $\mathbb{E}[|Z|^2]$. By linearity of expectation:
\begin{align*}
\mathbb{E}[|Z|^2] = \sum_{u, v \in V_2} \mathbb{E}[Z_u Z_v]
- \sum_{u \in V_2} \mathbb{E}[Z_u] \sum_{v \in V_2} \mathbb{E}[Z_v].
\end{align*}
Note that $u$ and $v$ can either be covered by an edge $e \in E$ containing both $\{u, v\}$  or containing exactly only one of them. Observe that in the latter case, $Z_{u}$ and $Z_{v}$ are independent. As a result, we can write the above equation as
\[\mathbb{E}[|Z|^2] \le \sum_{u, v \in V_2} \Pr [\exists e \supseteq \{u, v\}, \tilde X_e = 1] + \sum_{u, v \in V_2} \mathbb{E}[Z_u]\E [Z_v].\]
Applying $\Pr[\exists e \supseteq \{u, v\}, \tilde X_e = 1] \le \sum_{e \ni \{u, v\}} c \cdot x_e$ (since each edge is picked in Algorithm $\mathcal{A}$ with probability $\tilde{x}_{e} = \min{(1, c \cdot x_e)} \le c \cdot x_e$), we get
\begin{align*}
\mathbb{E}[|Z|^2] & \le \sum_{u, v \in V_2} \sum_{e \supseteq \{u, v\}} c \cdot x_e + \sum_{u, v \in V_2}   \mathbb{E}[Z_u] \mathbb{E}[Z_v] \\ 
& = \sum_{u, v \in V_2} \sum_{e \supseteq \{u, v\}} c \cdot x_e + \sum_{u, v \in V_2}\mu_{u} \mu_{v} \\
& = \sum_{u, v \in V_2} \sum_{e \supseteq \{u, v\}} c \cdot x_e + \mu^2.
\end{align*}
Notice that  $\sum_{u, v \in V_2} \sum_{e \supseteq \{u, v\}} c \cdot x_e = c \cdot \sum_{e \in E}x_{e} \cdot |e \cap V_2|^2$. We now substitute this in the above equation to get
\begin{align*}
\mathbb{E}[|Z|^2] & \le c \cdot \sum_{e \in E}x_{e} \cdot |e \cap V_2|^2 + \mu^2 \\
& < c \cdot s' \cdot \sum_{e \in E}x_{e} \cdot |e \cap V_2| + \mu^2, \numberthis \label{eq:cheby2}
\end{align*}
since $|e \cap V_2| < s'$. Finally, we observe that
\begin{align*}
\sum_{e \in E} x_{e} \cdot |e \cap V_2| = \sum_{v \in V_2} \sum_{e \ni v} x_{e} = \sum_{v \in V_2}z_v = s',
\end{align*}
where the second equality follows from Claim~\ref{claimub1}. Thus, $\mathbb{E}[|Z|^2] \le c \cdot s'^2 + \mu^2$, implying
\begin{align*}
\Var[|Z|] & = \mathbb{E}[|Z|^2] - (\mathbb{E}[|Z|])^2 \\ 
& < c \cdot s'^2 + \mu^2 - \mu^2  < c \cdot s'^2. 
\end{align*}
Equation~\eqref{eq:cheby1} now becomes
\begin{align*}
\Pr[|Z| \ge s'] & \le \dfrac{Var(|Z|)}{(\mu - s')^2}  \\
& < \dfrac{c \cdot s'^2}{(\mu - s')^2} \\
& \le \dfrac{c \cdot s'^2}{(c_1 \cdot s' - s')^2} \\
& = \dfrac{c}{(c_1 - 1)^2},
\end{align*}
where the final inequality follows from the fact that $\mu \ge c_1 \cdot s'$ for all $c_1 \ge 1$. We can always choose $\newalpha \ge 0$ so that $\frac{c}{(c_1 - 1)^2} < 1$, completing the proof.
\end{proof}
By Claim~\ref{claim4.9} and Lemma~\ref{lemma:ublemma2}, we can conclude that the output of Algorithm~$\mathcal{A}$ is a feasible solution for LP~\eqref{eq:primal-1}.

The final step is to determine the value of $2 \cdot c$.  Our goal here is to minimize $2c$ such that $c_1 > 1$ and $\dfrac{c}{(c_1 - 1)^2} < 1$, where $c_1 = (1 - \tilde{e}^{-1})c$.  It is not hard to see that $c$ can be made arbitrarily close to $\frac{2(1-1/\tilde{e}) + 1 + \sqrt{4 (1-1/\tilde{e}) + 1}}{2(1-1/\tilde{e})^2} <5.184$. Thus, we have $2c \leq 10.37$. Now, Lemma~\ref{lemma:ublemma1} proves Theorem~\ref{thm:ubga2}.

We now sketch the proof of Theorem~\ref{thm:ubga3}. The only difference with the proof of Theorem~\ref{thm:ubga2} is that in the proof, we shall guarantee that $|e \cap V_2| < \eps s'$ for every edge $e$, where $\eps$ is a small constant. Then, in the proof, we need to guarantee that $c_1 > 1$ and $\Var[|Z|] \leq \frac{\epsilon c}{(c_1-1)^2} < 1$.  The latter can be guaranteed by making $\eps$ small enough. Thus, we only need to guarantee that $c_1 = \left(1-\frac{1}{\tilde{e}} \right)c > 1$; thus, we can have $2c$ arbitrarily close to $\frac{2}/{1-\frac{1}{\tilde{e}}} = \frac{2 \cdot \tilde{e}}{\tilde{e}-1} < 3.72$. 

Now we show how to guarantee $|e \cap V_2| < \eps s'$ for every edge $e$. If we see an edge $e$ with $|e \cap V_2| \geq \eps s'$, we then choose the edge $e$ by letting $X_e = 1$, removing all vertices in $e \cap V_2$ from $V_2$ and updating $s'$ to $s' - |e \cap V_2|$.  We repeat this procedure until no such edge $e$ can be found. Notice that in each iteration we scale $s'$ by a factor of at most $(1-\eps)$, in at most $\log_{1+\eps}(s) = O(\log(s))$ iterations (for constant $\eps > 0$), the procedure will terminate. Moreover, after the procedure, we have $\sum_{v \in V_2}z_v \geq s'$. The remaining arguments follow. The number of edges $e$ for which we manually set $X_e$ to 1 is at most $O(\log(s))$, which leads to additive factor of $O(\log(s))$.

We are now ready to prove the upper bound~\eqref{eq:gubt1}. By Theorems~\ref{thm:ubga2} and~\ref{thm:ubga3}, we have shown that for any hypergraph $G$, $N$ and $\Delta$, the following is true:
\begin{align*}
\LP_{\ub}(G, \s) \le 10.37 \cdot \LP_{\lb}(G, \s) + 1 \numberthis \label{eq:finalub1}
\end{align*}
and
\begin{align*}
\LP_{\ub}(G, \s) \le 3.73 \cdot \LP_{\lb}(G, \s) + O(\log(\s))),  \numberthis \label{eq:finalub2}
\end{align*}
We can combine them to write:
\begin{align*}
\LP_{\ub}(G, s) \le \min\left(10.37 \cdot \LP_{\lb}(G, s) + 1,  3.73 \cdot \LP_{\lb}(G, s) + O(\log(s)) \right),
\end{align*}
where $s = \s$. We now raise both sides of the equation to the power of $N \ge 1$ to get
\begin{align*}
N^{\LP_{\ub}(G, s)} \le \min\left( N^{10.37 \cdot \LP_{\lb}(G, s) + 1}, N^{ 3.73 \cdot \LP_{\lb}(G, s) + O(\log(s))}\right).
\end{align*}
Using Claim~\ref{claim:gaclaim1}, we have 
\begin{align*}
N^{10.37 \cdot \LP_{\lb}(G, s)} \le O_{n}\left(\cover(G, N, \Delta)^{10.37} \right)
\end{align*}
and 
\begin{align*}
N^{3.73 \cdot \LP_{\lb}(G, s)} \le O_{n}\left(\cover(G, N, \Delta)^{3.73} \right).
\end{align*}
Finally, using Lemma~\ref{lemma:galemma2}, we have 
\begin{align*}
\joinUB(G, N, \s) & \le N^{\LP_{\ub}(G, \s)} \\ 
& \le \min\left( N \cdot \cover(G, N, \Delta)^{10.37}, N^{O(\log(n - \Delta + 1))} \cover(G, N, \Delta)^{3.73} \right).
\end{align*}

To complete the proof, we need to argue~\eqref{eq:glb1}, which would follow from Theorem~\ref{thm:lbga1}.  

\subsubsection{Proof of Theorem~\ref{thm:lbga1}} \label{sec:galb}
We will need the following set cover instance for our proof:
\begin{lemma} \label{lemma:set-cover-gap}
For a large enough constant $C > 0$ the following is true. Let $\epsilon > 0$ be a small enough constant.  Then for every large enough integer $n$, there is a graph $G = (V, E)$ ($V = [n]$, $E = \{E_1, E_2, \dots, E_n\}$) and an integer $d  = \ceil{\frac{C\ln n}{\eps^2}}$ such that:
\begin{enumerate}[label=(\ref{lemma:set-cover-gap}\alph*), itemsep=0pt, leftmargin=*]
\item For every $i \in [n]$, we have $|E_i| \leq (1+\eps)d$; \label{property:set-cover-s-degree}
\item For every $v \in V$, we have $\|\{{i \in [n]: v \in E_i}\}\| \geq (1-\eps)d$;  \label{property:set-cover-v-degree}
\item For every $\alpha \in  [0, 2]$ and every $I \subseteq V$ of size at most $\alpha  \frac{n}{d}$, we have $\left|\union_{i \in I}E_i \right| \leq (1-\tilde{e}^{-\alpha} + \epsilon) n$. \label{property:set-cover-gap}
\end{enumerate}
\end{lemma}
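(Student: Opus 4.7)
The plan is to use a probabilistic construction. Let $d = \lceil C \ln n / \eps^2 \rceil$ for a large constant $C$ to be chosen, and set $p := d/n$. Construct $E_1, \ldots, E_n$ independently by placing each vertex $v \in [n]$ into $E_i$ independently with probability $p$. I will verify all three properties hold simultaneously with positive probability, so such a $G$ exists. The main technical tool throughout is the multiplicative Chernoff bound combined with the union bound.

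First I would handle properties~\ref{property:set-cover-s-degree} and~\ref{property:set-cover-v-degree}, which are straightforward. For each fixed $i$, $|E_i|$ is a sum of $n$ independent Bernoulli$(p)$ variables with mean $d$, so $\Pr[|E_i| > (1+\eps)d] \leq \exp(-\Omega(\eps^2 d)) \leq n^{-10}$ for $C$ large enough. Symmetrically, for each fixed $v \in V$, the degree $|\{i : v \in E_i\}|$ has mean $d$, and $\Pr[\text{degree} < (1-\eps)d] \leq n^{-10}$. A union bound over the $n$ sets and $n$ vertices leaves both properties violated with probability at most $2n^{-9}$.

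The main obstacle is property~\ref{property:set-cover-gap}, because of the combined quantification over $\alpha$ and $I$. I would first reduce to a finite net: since $\left|\union_{i \in I} E_i \right|$ is monotone in $I$, it suffices to check the property for $|I| = \lceil \alpha n/d \rceil$ at a geometric net of $\alpha$-values of size $O(1/\eps)$ in $[0,2]$ and then absorb the discretization error into $\eps$. For a fixed $I$ of size $k := \lceil \alpha n/d \rceil$, each vertex is covered independently with probability $1 - (1-p)^k$, which equals $1 - e^{-\alpha} \pm O(\eps)$ for our parameter ranges. Hence $\E\bigl[\bigl|\union_{i \in I} E_i\bigr|\bigr] \le (1 - e^{-\alpha} + \eps/2)n$, and Chernoff gives
\[
\Pr\!\left[\left|\union_{i \in I} E_i\right| > (1 - e^{-\alpha} + \eps)n\right] \le \exp(-\Omega(\eps^2 n)).
\]
The number of choices of $I$ of size $k$ is at most $\binom{n}{k} \le (en/k)^k = (ed/\alpha)^{\alpha n/d}$, whose logarithm is $O\!\bigl(\tfrac{\alpha n}{d}\log(d/\alpha)\bigr) = O\!\bigl(\tfrac{\eps^2 n \log\log n}{\log n}\bigr)$. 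This is $o(\eps^2 n)$, so choosing $C$ large makes the failure probability summed over the $O(1/\eps)$ values of $\alpha$ and all corresponding $I$'s be at most, say, $1/2$.

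Combining the three bounds, the probability that the random construction fails any one of the three properties is strictly less than $1$ for $n$ large enough. Therefore a $G$ and $d$ with all three properties exist, which concludes the argument. The only genuinely delicate point is checking that the Chernoff exponent $\Omega(\eps^2 n)$ dominates the $\log \binom{n}{\alpha n/d}$ entropy term; this is exactly what forces the bound $d = \Theta(\log n/\eps^2)$, and why we cannot get away with a smaller $d$.
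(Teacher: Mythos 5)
Your proposal is correct and takes essentially the same route as the paper: a random construction in which each $E_i$ includes each vertex of $[n]$ independently with probability $d/n$, Chernoff bounds for properties (a) and (b), and a Chernoff-plus-union-over-$I$ argument for property (c), followed by a union over a discrete set of $\alpha$-values. The one genuine (though minor) technical difference is that you estimate the number of subsets by $\binom{n}{k}\le(en/k)^k$, which for $k=\alpha n/d$ gives an entropy term of order $\frac{\alpha n}{d}\log(d/\alpha)=O\bigl(\frac{\eps^2 n\log\log n}{\log n}\bigr)=o(\eps^2 n)$; this wins against the $\Omega(\eps^2 n)$ Chernoff exponent for all $\alpha\in[0,2]$ simultaneously once $n$ is large. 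The paper instead uses the cruder bound $\binom{n}{k}\le n^k$, whose logarithm is $\Theta(\eps^2 n)$ with a constant $\alpha/C$ in front, and must then perform a two-case comparison ($\alpha\le 1-\eps/2$ versus $\alpha>1-\eps/2$) and pick $C$ explicitly large to ensure the deviation exponent dominates. Your version avoids that case analysis and makes the choice of $C$ less delicate, at the cost of a slightly less elementary binomial estimate; otherwise the two proofs are the same.
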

Assuming such an instance exists, we prove Theorem~\ref{thm:lbga1}. 
\begin{proof}[Proof of Theorem~\ref{thm:lbga1}]
To this end, we first construct the lower bound instance to obtain the required gap. Let $\eps > 0$ be an arbitrary constant. Construct a graph $G = (V, E)$ ($|V| = |E| = t, E = \{E_1, E_2, \dots, E_t\}$) and integer $d$ using Lemma~\ref{lemma:set-cover-gap}. We now create a new graph $G' = (V', E')$, which is the instance that we be would working with for the remaining of this proof, as follows. We initially assign $V' = V$ (assuming that the vertices are indexed from $1$ to $t$) and $E' = E$. We add $td'$ more vertices to $V'$ (i.e. $|V'| = t + d't$), where $d' = \floor*{(1-1/\tilde{e})d}$. For each $E'_i \in E'$, we expand it by adding the first $d'$ vertices ({\em private} vertices of $E'_i$) that have been not picked by any other edge $E'_j \in E'$ in the range $[t + 1, t+d't]$. Let $k = \left(2- \frac{1}{\tilde{e}} \right)t$. We do this argument in two steps -- we first upper bound $\LP_{\lb}(G, \s)$ and then lower bound $\LP_{\ub}(G, \s)$. This is sufficient to lower bound $\frac{\LP_{\ub}(G, \s)}{\LP_{\lb}(G,  \s)}$.
	
To cover $k$ vertices fractionally, for each edge $E'_i \in E'$, we let $x_{E'_i} = \frac{1}{(1-\eps)d}$. Further, the number of sets in which each vertex $j \in [t]$ is contained in at least $(1 - \eps)d$ edges (by Property~\ref{property:set-cover-v-degree}). In particular, this implies $z_{v} = \sum_{E'_i \ni v}x_{E'_i} \ge 1$ for every $v \in [t]$. As a result, by picking all the $t$ hyperedges, $\sum_{v \in [t]}z_{v} \ge t$ and we can set $z_{v} = 1$ for all $v \in [t]$. For all $v \in (t, t+d't)$, we have $z_{v} = \frac{1}{(1-\eps)d}$ since they are incident to only one edge. Further, by picking the $t$ hyper-edges we have also ended up covering $\frac{td'}{(1-\eps)d} \geq (1-1/\tilde{e})t$ fractional vertices in $(t, t+d't]$. Thus, overall, we can cover $t + \left(1- \frac{1}{\tilde{e}} \right)t = k$ vertices and $\LP_{\lb}(G, \s) \le\frac{t}{(1 - \eps)d}$.
	
We now cover $k$ vertices integrally and show that we need at least $\left(1+\frac{1}{\tilde{e}} - 3\eps \right)\frac{t}{d}$ fractional edges. The proof is by contradiction. We start by assuming that there exists a pick with $\frac{\alpha t}{d}$ integral hyperedges and $\frac{\beta t}{d}$ fractional hyperedges such that $\alpha + \beta < 1+1/\tilde{e}-3\eps, \alpha > 0, \beta > 0$ to cover $k$ vertices integrally. Note that using $\frac{\alpha t}{d}$ integral edges we can cover $ \le (1 - \exp(-\alpha) + \eps)t$ vertices from $[t]$ (by Property~\ref{property:set-cover-gap}) and $\alpha \frac{t}{d} d'$ vertices from $(t, t + td']$. The $\beta\frac{ t}{d}$ fractional hyperedges can cover at most $\beta \frac{t}{d} (1 + \eps)d$ vertices from $[t]$ and no {\em private} vertex is covered by any of them. Thus, in total we have covered
\begin{align*}
(1-\exp(-\alpha) + \eps)t + \alpha \frac{t}{d} d' + \beta \frac{t}{d} (1+\eps)d \numberthis \label{eq:lbagm1}
\end{align*}
vertices. Substituting $d' = \floor*{(1 - 1/\tilde{e})d}$ in \eqref{eq:lbagm1}, we get
\begin{align*}
& (1-\exp(-\alpha) + \eps)t + \alpha(t/d) \ceil{(1 - 1/\tilde{e})d} + \beta(t/d) (1+\eps)d \\ 
& \leq  (1-\exp(-\alpha) + \eps)t + \alpha(t/d)(1 - 1/\tilde{e})d + \beta (t/d) (1+\eps)d. \numberthis \label{eq:lbagm2}
\end{align*}
We can take $(1 + \eps)$ common in \eqref{eq:lbagm2} to get
\begin{align*}
& (1+\eps)(1-\exp(-\alpha) + \alpha(1-1/\tilde{e}) + \beta)t + \eps t \\
& = (1+\eps)(1-\exp(-\alpha) + \alpha + \beta - \alpha/\tilde{e})t + \eps t \\
& \leq (1+\eps)(1-\exp(-\alpha)-\alpha/\tilde{e}+1+1/\tilde{e}-3\eps)t + \eps t,
\end{align*}
where the final inequality follows from the fact that $\alpha + \beta \le 1 + 1/\tilde{e} - 3\eps$. Observe that the derivative of $-\exp(\alpha)-\alpha/\tilde{e}$ over $\alpha$ is $\exp(-\alpha)-1/\tilde{e}$ and thus the bound is maximized when $\alpha = 1$. Thus, the number of vertices covered is at most 
\begin{align*}
(1+\eps)(1 - 1/\tilde{e} -1/\tilde{e} + 1 + 1/\tilde{e}-3\eps)t + \eps t &= ((1+\eps)(2-1/\tilde{e}-3\eps) + \eps)t \\
& < (2-1/\tilde{e})t \\
& = k,
\end{align*}
resulting in a contradiction i.e., we are not able to cover $k$ vertices. Finally, we have proved that we need 
\[\LP_{\ub} (G, \s) \ge (1 + 1/\tilde{e} - 3\eps) \frac{t}{d}.\]
Hence, we have
\begin{align*}
\frac{\LP_{\ub}(G, \s)}{LP_{\lb}(G, \s)} & \ge \dfrac{(1 + 1/\tilde{e} - 3\eps)}{(1 - \eps)} \\ 
&\ge 1 + 1/\tilde{e} - 4\eps \\
& = 1 + 1/\tilde{e} - \delta, 
\end{align*}
where the last equality follows from setting $\eps = \delta/4$. This completes the proof.
\end{proof}
To complete this section, we prove Lemma~\ref{lemma:set-cover-gap}.
\begin{proof} [Proof of Lemm~\ref{lemma:set-cover-gap}]
Observe that if $n$ is large enough, $d < n$. We now select the $n$ edges $E_1, E_2, \cdots, E_n$ by picking them randomly and independently i.e., for each $E_i$, we choose it by including each vertex of $[n]$ with probability $\dfrac{d}{n}$. Thus, the expected size of each edge $E_i$ is given by 
\begin{align*}
\mathbb{E}[|E_i|] & = \sum_{i = 1}^{n}\dfrac{d}{n} \\
& = d
\end{align*}
and the expected number of edges in which each vertex $v \in V$ is contained is
\begin{align*} 
\mathbb{E}[|\set{i \in [n]: v \in E_i}|] & = \sum_{i = 1}^{i = n}\dfrac{d}{n} \\
& = d
\end{align*}
(using linearity of expectation). By Chernoff bound, we have for every $i \in [n]$
\begin{align*}
\Pr[|E_i| > (1+\eps)d] & < \exp\left[-\frac{\eps^2d}{3} \right] \\
& \leq \exp \left[-\frac{C\ln n}{3} \right]  \\
& \leq \frac{1}{n^2}
\end{align*}
and
\begin{align*}
 \Pr[|\{{i \in [n]: v \in E_i}\}| < (1-\eps)d] & < \exp\left[-\frac{\eps^2d}{2} \right] \\
 & \leq \exp \left[-(\frac{C\ln n)}{2} \right]  \\
 & \leq \frac{1}{n^2}
\end{align*}
for every $v \in V$. In particular, using the union bound, we get
\begin{align*}
\Pr[|E_i| > (1 + \eps)d \text{ for at least one } i \in [n]] & \le \sum_{i = 1}^{n} \frac{1}{n^2} \\
& = \frac{1}{n}
\end{align*}
and
\begin{align*}
\Pr[|\{{i \in [n]: v \in E_i}\}| < (1-\eps)d \text{ for at least one } v \in V] & \le \sum_{i = 1}^{i = n} \frac{1}{n^2} \\
& = \dfrac{1}{n}.
\end{align*}  
Hence, the Properties~\ref{property:set-cover-s-degree} and~\ref{property:set-cover-v-degree} happen simultaneously with probability
\begin{align*}
& \Pr[(|E_i| \le (1 + \eps) \cdot d \text{ for every } i \in [n]) \text{ and } (|\{{i \in [n]: v \in E_i}\}| \geq (1-\eps)d  \text{ for every } v \in V)] \\
& \ge 1 - \dfrac{2}{n}.
\end{align*}
	
To prove Property~\ref{property:set-cover-gap}, it suffices to prove it for $\alpha \in [0, 2]$ such that $\frac{ \alpha n}{d}$ is an integer. Note that $|I|= \frac{\alpha \cdot n}{d}$ cannot be fractional. We fix such an $\alpha$ and a set $I \subseteq [n]$ of size $\frac{\alpha \cdot n}{d}$. Each vertex $v \in V$ is contained in $\union_{i \in I}E_i$ with probability
\begin{align*}
\mathcal{P}_{j} = 1 - \prod_{i \in I} Pr[v \notin E_i] = 1 - \left(1 - \frac{d}{n} \right)^{\frac{\alpha \cdot n}{d}}.
\end{align*}
Further, if $n$ is big enough, then $\frac{d}{n}$ is smaller than a constant $\gamma$, implying
\begin{align*}
(1 - d/n)^{n/d}  \geq \exp[-(1 + \eps/(2 \cdot \alpha))],
\end{align*}
since $\eps/(2 \alpha) > 0$ and $\lim_{n \rightarrow \infty} (1 - d/n)^{n/d} = \exp(-1)$. Thus, we have 
\begin{align*}
(1-d/n)^{\alpha \cdot (n/d)} & \geq \exp(-\alpha \cdot (1+ \eps/(2\alpha))) \\
& = \exp(- (\alpha + \eps/2)).
\end{align*}
It follows that $\mathcal{P}_{j} \le 1 - \exp(- (\alpha + \eps/2))$. 
It is important to note that every event of checking if a vertex $v$ is contained in $\union_{i \in I}E_i$ is independent. By Chernoff bound again, the probability that $\union_{i \in I}E_i$ contains more than $\left (1+\dfrac{\eps}{2} \right )  (1-\exp(-(\alpha + \eps/2)))n$ vertices i.e., 
\begin{align*}
\Pr \left[|\union_{i \in I}E_i| > \left (1+\dfrac{\eps}{2} \right ) (1-\exp(-(\alpha + \eps/2)))n \right] & < \exp \left(-(\eps/2)^2(1-\exp(-(\alpha + \eps/2)))\frac{n}{3} \right) \\ 
& = \exp \left(-\dfrac{\eps^2(1-\exp(-(\alpha+\eps/2)))n}{12} \right).
\end{align*}
Taking the union bound over all sets $I$ of size $\frac{\alpha n}{d}$, the probability that the event
\begin{align*}
\left|\cup_{i \in I}E_i \right| > \left(1+\frac{\eps}{2} \right) (1-\exp(-(\alpha + \eps/2)))n
\end{align*}
happens for some set $I \subseteq [n]$ of size $\alpha n/d$ is upper bounded by 
\begin{align*}
n^{\alpha n /d}\exp \left[-\frac{\eps^2(1-\exp(-(\alpha+\eps/2)))n}{12} \right] = \exp \left( \left(\frac{\alpha\ln n}{d} -\frac{\eps^2(1-\exp(-(\alpha + \eps/2)))}{12} \right)n \right).
\end{align*}
Substituting $d = \ceil{C \ln{n}/\eps^2}$, we get
\begin{align*}
\exp\left(\frac{\alpha\ln n}{\ceil{C \ln{n}/\eps^2}}n -\frac{\eps^2(1-\exp(-(\alpha + \eps/2)))}{12}n \right) 
&\leq \exp \left(\frac{\alpha}{C}\eps^2n - \frac{1-\exp(-(\alpha+\eps/2))}{12}\eps^2n \right) \\ 
&= \exp \left( \left(\frac{\alpha}{C} - \frac{1-\exp(-(\alpha+\eps/2))}{12} \right)\eps^2n \right). \numberthis \label{eq:lbeq1}
\end{align*}
We split $\alpha \in [0, 2]$ into two cases - $\alpha \in [0, 1 - \eps/2]$ and $\alpha \in (1-\eps/2, 2]$. We start with the first case, where we have
\begin{align*}
\exp \left( \left(\frac{\alpha}{C} - \frac{1-\exp(-(\alpha+\eps/2))}{12} \right)\eps^2n \right) 
& \leq \exp\left( \left(\frac{\alpha}{C} - \frac{(1-1/\tilde{e})(\alpha + \eps/2)}{12} \right)\eps^2n \right) \\
& = \exp \left(\frac{\alpha - (\alpha + \eps/2)}{12} (1 - 1/\tilde{e})\eps^2n \right) \\
& = \exp \left(-\frac{(1 - 1/\tilde{e}) \eps^3 n}{24} \right) \numberthis \label{eq:alphacase1} \\
& \leq \dfrac{1}{n^2}.
\end{align*}
Here, the first inequality follows from the fact that $\alpha + \eps/2 \le 1$, which in turn implies
\begin{align*}
1-\exp(-(\alpha+\eps/2)) \ge (1 - 1/\tilde{e}) (\alpha + \eps/2)
\end{align*}
and the first equation follows assuming $C \geq \frac{12}{1-1/\tilde{e}}$ and the others follows directly. The final inequality holds when $n$ is big enough.

Next, we consider the second case $\alpha \in (1-\eps/2, 2]$. We have
\begin{align*}
\exp \left( \left(\frac{\alpha}{C} - \frac{1-\exp(-(\alpha+\eps/2))}{12} \right)\eps^2n \right) 
& \leq \exp\left(\left(\frac{\alpha}{C} - \frac{1-\exp(-(\alpha+\eps/2))}{12}\right)\eps^2n \right)  \\
& \leq \exp \left( \left(\frac{2}{C} - \frac{1-1/\tilde{e}}{12} \right)\eps^2n \right) \\
& \leq \exp \left(-\frac{\tilde{e} - 1}{24\tilde{e}}\eps^2n \right) \\
& \leq \frac{1}{n^2}.
\end{align*}
Here, the first and second inequalities follow from the facts that $\alpha \le 2$ and $\alpha + \frac{\eps}{2} > 1$, which in turn implies $1 - \exp^{-(\alpha + \eps/2)} \ge (1 - 1/\tilde{e})$. The fourth final and final inequalities follow when $C$ and $n$ are big enough,
	
Now, we take the union bound over all $\alpha \in [0, 2]$ such that $\alpha d/n$ is an integer. Note that there can only be at most $n$ such $\alpha$'s. Hence, with probability at least $1-\frac1n$, for every $\alpha \in [0, 2]$ and every set $I \subseteq [n]$ of size at most $\alpha n/d$, $\union_{i \in I}E_i$ contains at most 
\begin{align*}
(1+\eps/2)(1-\exp(-(\alpha+\eps/2)))n  & \leq (1-\exp(-(\alpha+\eps/2)) + \eps/2)n \\
& = (1-\exp(-\alpha)+\eps/2 + (1-\exp(-\eps/2))\exp(-\alpha))n  \\
& \leq (1 - \exp(-\alpha) + \eps/2 + (\eps/2) \exp(-\alpha))n  \\
& \leq (1-\exp(-\alpha)+\eps/2+\eps/2)n \\
& = (1-\exp(-\alpha)+\eps)n
\end{align*}
elements. Here, the second inequality follows since $1 - \exp(-\eps/2) \leq \frac{\eps}{2}$ for any $\eps \ge 0$. The others follow pretty much directly. Overall, with probability at least $1-\frac{3}{n}$, Properties~\ref{property:set-cover-s-degree} to~\ref{property:set-cover-gap} hold.
\end{proof}
\end{document}